%
\documentclass[envcountsame,runningheads,orivec]{llncs}
\usepackage{graphicx,multirow,tabularx,makecell}
\usepackage{hyperref,microtype}
\usepackage[compact]{titlesec}
\usepackage{paralist}
%


\RequirePackage{symbolic-bisim}

\RequirePackage[disable,textsize=tiny]{todonotes}
\RequirePackage{mathpartir}
\RequirePackage{cleveref}
\crefname{section}{Sec.}{Sec.(s)}
\crefname{appendix}{Appendix}{Appendices}
\crefname{figure}{Fig.}{Fig.(s)}
\crefname{theorem}{Thm.}{Thm.(s)}
\crefname{proposition}{Prop.}{Prop.(s)}
\crefname{corollary}{Cor.}{Cor.(s)}
\crefname{lemma}{Lem.}{Lem.(s)}
\crefname{example}{Ex.}{Ex.(s)}
\crefname{remark}{Rem.}{Rem.(s)}

\usepackage{versions}
\excludeversion{exclproof}
\excludeversion{technical}
\newcommand\mytechnical[1]{}


\newcommand\cutout[1]{}

\begin{document}
\title{From Bounded Checking to Verification of Equivalence via Symbolic Up-to Techniques
\thanks{This publication has emanated from research supported in part by a grant from Science Foundation Ireland under Grant number 13/RC/2094\_2. For the purpose of Open Access, the author has applied a CC BY public copyright licence to any Author Accepted Manuscript version arising from this submission.}}
\titlerunning{From Bounded Checking to Verification of Equivalence}
%
%
\author{Vasileios Koutavas\inst{1}%
 \and
Yu-Yang Lin\inst{1}%
$^{\star}$%
\and
Nikos Tzevelekos\inst{2}%
}

\authorrunning{V. Koutavas, Y.-Y. Lin, N. Tzevelekos}
%

\institute{Trinity College Dublin, Ireland
\email{\{Vasileios.Koutavas,linhouy\}@tcd.ie} \and
Queen Mary University of London, UK
\email{nikos.tzevelekos@qmul.ac.uk}}

\maketitle              
\begin{abstract}
We present a bounded equivalence verification technique for higher-order programs with local state.
  This technique combines fully abstract \emph{symbolic environmental bisimulations} similar to symbolic game semantics, novel \emph{up-to techniques}, and lightweight \emph{state invariant annotations}.
  This yields an equivalence verification technique with no false positives or negatives. The technique is bounded-complete, in that all inequivalences are automatically detected given large enough bounds. Moreover, several hard equivalences are proved automatically or after being annotated with state invariants.
We realise the technique in a tool prototype called \Hobbit{} and benchmark it with an extensive set of new and existing examples. 
  \Hobbit{} can prove many classical equivalences including all Meyer and Sieber examples.

\keywords{Contextual equivalence  \and bounded model checking \and symbolic bisimulation \and up-to techniques \and operational game semantics.}
\end{abstract}

  \section{Introduction}
  \label{sec:intro}

Contextual equivalence 
is a relation over program expressions which guarantees that related expressions are interchangeable in any program context.
It encompasses verification properties like safety and termination.
It has attracted considerable attention from the semantics community (cf.~the 2017 Alonzo Church Award), and has found its main applications in
 the verification of cryptographic protocols~\cite{cryptoverif}, compiler correctness~\cite{700CC}
and regression verification~\cite{regressionverif,rvt,reve,symdiff}.

In its full generality, 
contextual equivalence is hard as it requires reasoning about the behaviour of all program contexts,
and becomes even more difficult in languages with higher-order features (e.g.\ callbacks) and local state.
Advances in bisimulations~\cite{KoutavasW06,EnvBisim07,BiernackiLP19}, logical relations~\cite{AhmedDR09,HurDNV12,JaberTabareau15} and game semantics~\cite{Laird07,MurawskiT16,Dimovski14,LinT20}
have offered powerful theoretical techniques for hand-written proofs of contextual equivalence in higher-order languages with state.
However, these advancements have yet to be fully integrated in verification tools for contextual equivalence in programming languages, especially in the case of bisimulation techniques. 
Existing tools~\cite{hector,coneqct,syteci} only tackle carefully delineated language fragments.

In this paper we aim to push the frontier further by  proposing a bounded model checking technique for contextual equivalence
for the entirety of a higher-order language with local state (\cref{sec:lang}).
This technique, realised in 
a tool called
\Hobbit,\footnote{Higher Order Bounded BIsimulation Tool (\Hobbit), {\scriptsize\url{https://github.com/LaifsV1/Hobbit}}.}
automatically detects inequivalent program expressions given sufficient bounds,
and proves hard equivalences automatically or semi-automatically.

Our technique
uses a labelled transition system (LTS) for open expressions
in order to express equivalence as a bisimulation.
The LTS
is symbolic both for
higher-order arguments (\cref{sec:lts}), similarly to symbolic game models~\cite{Dimovski14,LinT20} and derived proof techniques~\cite{BiernackiLP19,JaberTabareau15},
and
first-order ones (\cref{sec:sym}), following established techniques (e.g.~\cite{ClarkeEtal04}).
This enables the definition of a fully abstract \emph{symbolic environmental bisimulation}, the bounded exploration of which is the task of the \Hobbit\ tool.
Full abstraction guarantees that our tool finds all inequivalences given sufficient bounds, and only reports true inequivalences.
As is corroborated by our experiments, this makes \Hobbit\ a practical inequivalence detector, 
similar to traditional bounded model checking~\cite{ArminEtal99} 
which has been proved an effective bug detection technique in industrial-scale C code~\cite{ClarkeEtal04,CordeiroEtal19,SchrammelKBMTB15}.

However, while proficient in bug finding, bounded model checking can rarely prove the absence of errors, and in our setting prove an equivalence: a bound is usually reached before all---potentially infinite---program runs  are explored.
{Inspired by hand-written equivalence proofs, we address this challenge by proposing two key technologies:
new \emph{bisimulation up-to techniques}, and lightweight user guidance in the form of \emph{state invariant annotations}.}
Hence we increase significantly the number of equivalences proven by \Hobbit, including for example all classical equivalences due to Meyer and Sieber~\cite{MeyerS88}.

Up-to techniques~\cite{PousS11} are specific to bisimulation and concern the reduction of the size of bisimulation relations, oftentimes turning infinite transition systems into finite ones by focusing on a core part of the relation.
Although extensively studied in the theory of bisimulation, up-to techniques have not been used in practice in an equivalence checker. 
We specifically propose three novel up-to techniques: \emph{up to separation} and \emph{up to re-entry} (\cref{sec:up-to}), dealing with infinity in the LTS due to the higher-order nature of the language,
and \emph{up to state invariants} (\cref{sec:up-to-gen}),
dealing with infinity due to state updates.
Up to separation allows us to reduce the knowledge of the context the examined program expressions are running in, similar to a frame rule in separation logic.
%
Up to re-entry removes the need of exploring unbounded nestings of higher-order function calls under specific conditions.
Up to state invariants allows us to abstract parts of the state and make finite the number of explored configurations by introducing state invariant predicates in configurations.

State invariants are common in equivalence proofs of stateful programs,
both in handwritten (e.g.~\cite{KoutavasW06}) and tool-based proofs. In the latter they are expressed manually in annotations (e.g.~\cite{reve}) or automatically inferred (e.g.~\cite{syteci}).
In \Hobbit\ we follow the manual approach, leaving heuristics for automatic invariant inference for future work.
An important feature of our annotations is the ability to express relations between the states of the two compared terms, enabled by the up to state invariants technique.
This leads to finite bisimulation transition systems in examples where concrete value semantics are infinite state.

%



The above technology, combined with standard up-to techniques, transform \Hobbit\
from a bounded checker into an equivalence prover able to reason about infinite behaviour in a finite manner in
a range of examples,
including classical example equivalences (e.g.~all in \cite{MeyerS88}) and some
that previous work on up-to techniques would cannot algorithmically decide~\cite{BiernackiLP19} (cf.~\cref{ex:meyer-sieber-e6}).
We have benchmarked \Hobbit\ on examples from the literature and newly designed ones (\cref{sec:imp}). 
Due to the undecidable nature of contextual equivalence, up-to techniques are not exhaustive:
no set of up-to techniques is guaranteed to finitise all examples. 
Indeed there are a number of examples where the bisimulation transition system is still infinite and \Hobbit\ reaches the exploration bound.
For instance, \Hobbit\ is not able to prove examples with inner recursion and well-bracketing properties, which we leave to future work.
Nevertheless, our approach provides a contextual equivalence tool for a higher-order language with state that can prove many equivalences and inequivalences which previous work could not handle due to syntactic restrictions and other limitations (\cref{sec:comp}).


\paragraph{Related work}
Our paper marries techniques from environmental bisimulations up-to~\cite{KoutavasW06,EnvBisim07,PousS11,BiernackiLP19} with the work on fully abstract game models for higher-order languages with state~\cite{Laird07,Dimovski14,LinT20}.
The closest to our technique is that of Biernacki et al.~\cite{BiernackiLP19}, which introduces up-to techniques for a similar symbolic LTS to ours, albeit with symbolic values restricted to higher-order types,
resulting in infinite LTSs in examples such as \cref{ex:swap},
and with inequivalence decided outside the bisimulation by (non-)termination, precluding the use up-to techniques in examples such as \cref{ex:meyer-sieber-e6}.
Close in spirit is the line of research on logical relations~\cite{AhmedDR09,HurDNV12,JaberTabareau15} which provides a powerful tool for hand-written proofs of contextual equivalence.
Also related are the tools \tool{Hector}~\cite{hector} and \tool{Coneqct}~\cite{coneqct}, and \tool{SyTeCi}~\cite{syteci}, { based on game semantics and step-indexed logical relations respectively} (cf.\ \cref{sec:comp}).

\cutout{
To summarise the contributions of this work are:
\begin{inparaenum}[\itshape 1)\upshape]
  \item We present a bounded technique for contextual equivalence, and an accompanying prototype tool for a higher-order language with local state. Existing contextual equivalence techniques and tools work only for significantly smaller subsets of this language.
  \item
    The technique and tool provably give no false positives or negatives. 
\item Our tool can prove all inequivalences given enough time (and all the ones in our benchmark), and can also prove difficult equivalences (e.g.~all in \cite{MeyerS88}) through the use of novel and existing up-to techniques
  and simple user annotations for state invariants. 
\end{inparaenum}}

  \section{High-Level Intuitions}
  \label{sec:intuitions}
%
Contextual equivalence requires that two program expressions lead to the same observable result \emph{in any program context} these may be fed in. This quantification is hard to work with e.g.\ due to redundancy in program contexts.
Alternatively, we can translate programs into a semantic model that is \emph{fully abstract}, i.e.\ it assigns to program expressions the same denotation just if these are contextually equivalent. Thus doing, contextual equivalence is reduced to semantic equality. 

The semantic model we use is that of Game Semantics~\cite{Laird07}. We model programs as
formal interactions between two \emph{players}: a \emph{Proponent} (corresponding to the program) and an \emph{Opponent} (standing for any program context). Concretely, these interactions are
sets of traces produced from a Labelled Transition System (LTS),
the nodes and labels of which are called \emph{configurations} and \emph{moves} respectively. The LTS captures the interaction of the program with its environment, which is realised via function applications and returns: moves can be \emph{questions} (i.e.\ function applications) or \emph{answers} (returns), and belong to proponent or opponent. E.g.\ a program calling an external function will issue a proponent question, while the return of the external function will be an opponent answer. In the examples that follow, 
moves that correspond to the opponent shall be underlined.

 \begin{example}\label{ex:fst}
Consider the following expression of type $(\mathsf{unit}\to\mathsf{unit})\to\mathsf{int}$.

\centerline{
  $N =  \text{\lstinline{fun f -> f (); 0}}$
}\smallskip

\noindent
Evaluating $N$ leads to a function, call it $g$,  being returned (i.e.\ $g$ is $\lambda f.f();0$).
When $g$ is called with some input $f_1$, it will always return 0 but in the process it may call the external function $f_1$. The call to $f_1$ may immediately return or it may call $g$ again (i.e.\ reenter), and so on. The LTS for $N$ is as in \cref{fig:LTS} (top).
\end{example}

\begin{figure}[t]
  \begin{center}
\scalebox{0.7}{
\begin{tikzpicture}[node distance=2cm]
\node (s0) [pnode] {$N$};
\node (s1) [onode, right of=s0] {};
\node (s2) [pnode, right=1.75cm of s1] {$f_1();0$};
\node (s3) [onode, right=1.75cm of s2] {}; 
\node (s3b) [pnode, below=0.5cm of s2] {$0$};
\node (s4) [pnode, right=1.75cm of s3] 
	{$f_2();0$};
\node (s5) [onode, right=1.75cm of s4] 
	{};
\node (s5b) [pnode, below=0.5cm of s4] {$0$};
\node[draw=none] (end) [right=1.75cm of s5] {$\cdots$};

\draw [->] 
(s0) edge node[anchor=south]{$\pret{g}$} (s1)
(s1) edge node[anchor=south]{$\oapp{g}{f_1}$} (s2)
(s2) edge node[anchor=south]{$\papp{f_1}{()}$} (s3)
(s3) edge node[anchor=south]{$\oapp{g}{f_2}$} (s4)
(s3) edge node[anchor=east]{$\oret{()}$} (s3b)
(s4) edge node[anchor=south]{$\papp{f_2}{()}$} (s5)
(s5) edge node[anchor=east]{$\oret{()}$} (s5b)
(s3b) edge node[anchor=east]{$\pret{0}$} (s1)
(s5b) edge node[anchor=east]{$\pret{0}$} (s3)
(s5) edge node[anchor=south]{$\oapp{g}{f_3}$} (end)
;
\end{tikzpicture}}\smallskip
\hrule\smallskip
\scalebox{0.67}{
\begin{tikzpicture}[node distance=2cm]
\node (s0) [pnode] {$M$};
\node (s1) [onode, right of=s0] {};
\node (s2) [pnode, right=1.75cm of s1] {$x_1: 0$};
\node (s3) [onode, right=1.75cm of s2]
	{$x_1 : 0$};
\node (s3b) [pnode,below=0.5cm of s2] {$x_1:0$};
\node[draw=none] (s3c) [left=1.75cm of s3b] {$\cdots$};
\node (s4) [pnode, right=1.75cm of s3]
	{$x_1 : 0$\\$x_2: 0$};
\node (s5) [onode, right=1.75cm of s4]
	{$x_1 : 0$\\$x_2: 0$};
\node (s5b) [pnode,below=0.35cm of s4] {$x_1 : 0$\\$x_2: 0$};
\node[draw=none] (s5c) [left=1.75cm of s5b] {$\cdots$};
\node[draw=none] (end) [right=1.75cm of s5] {$\cdots$};

\draw [->] 
(s0) edge node[anchor=south]{$\pret{g}$} (s1)
(s1) edge node[anchor=south]{$\oapp{g}{f_1}$} (s2)
(s2) edge node[anchor=south]{$\papp{f_1}{()}$} (s3)
(s3) edge node[anchor=south]{$\oapp{g}{f_2}$} (s4)
(s3) edge node[anchor=east]{$\oret{()}$} (s3b)
(s4) edge node[anchor=south]{$\papp{f_2}{()}$} (s5)
(s5) edge node[anchor=east]{$\oret{()}$} (s5b)
(s5) edge node[anchor=south]{$\oapp{g}{f_3}$} (end)
(s3b) edge node[anchor=south]{$\pret{0}$} (s3c)
(s5b) edge node[anchor=south]{$\pret{0}$} (s5c)
;
\end{tikzpicture}
}\smallskip
\hrule\smallskip
\scalebox{0.6}{
\begin{tikzpicture}[node distance=2cm]
\node (s0) [pnode] {$N$};
\node (s1) [onode, right of=s0] {$C_1$};
\node (s2) [pnode, right=1.7cm of s1] {};
\node (s3) [onode, right=1.7cm of s2]
	{};
\node (s3b) [pnode,below=0.8cm of s2] {};
\node (s3c) [onode,left=1.7cm of s3b] {};
\node[draw=none] (s3d) [left=1.3cm of s3c] {};
\node[draw=none] (s4) [right=1.3cm of s3]
	{};

\draw [->] 
(s0) edge node[anchor=south]{$\pret{g}$} (s1)
(s1) edge node[anchor=south]{$\oapp{g}{f_1}$} (s2)
(s2) edge node[anchor=south]{$\papp{f_1}{()}$} (s3)
(s3) edge node[anchor=south]{$\oapp{g}{f_2}$} (s4)
(s3) edge node[anchor=east]{$\oret{()}$} (s3b)
(s3b) edge node[anchor=south]{$\pret{0}$} (s3c)
;
\draw [->,strike through] (s3c) -- node[anchor=south]{$\oapp{g}{f_2}$} (s3d);
\draw [->,strike through] (s3) -- node[anchor=south]{$\oapp{g}{f_2}$} (s4);
\end{tikzpicture}\vrule\quad
\begin{tikzpicture}[node distance=2cm]
\node (s0) [pnode] {$M$};
\node (s1) [onode, right of=s0] {$C_1$};
\node (s2) [pnode, right=1.7cm of s1] {$x_1: 0$};
\node (s3) [onode, right=1.7cm of s2]
	{$x_1 : 0$};
\node (s3b) [pnode,below=0.8cm of s2] {$x_1:0$};
\node (s3c) [onode,left=1.7cm of s3b] {$x_1:0$};
\node[draw=none] (s3d) [left=1.3cm of s3c] {};
\node[draw=none] (s4) [right=1.3cm of s3]
	{};

\draw [->] 
(s0) edge node[anchor=south]{$\pret{g}$} (s1)
(s1) edge node[anchor=south]{$\oapp{g}{f_1}$} (s2)
(s2) edge node[anchor=south]{$\papp{f_1}{()}$} (s3)
(s3) edge node[anchor=south]{$\oapp{g}{f_2}$} (s4)
(s3) edge node[anchor=east]{$\oret{()}$} (s3b)
(s3b) edge node[anchor=south]{$\pret{0}$} (s3c)
;
\draw [->,strike through] (s3c) -- node[anchor=south]{$\oapp{g}{f_2}$} (s3d);
\draw [->,strike through] (s3) -- node[anchor=south]{$\oapp{g}{f_2}$} (s4);
\end{tikzpicture}
}\vspace{-4mm}
\end{center}
\caption{Sample LTS's modelling expressions in Section~2. Note that in these diagrams we include in each configuration only part of their components, for better exposition.}\label{fig:LTS}
\end{figure}

Given two expressions $M,N$, checking their equivalence will amount to checking bisimulation equivalence of their (generally infinite) LTS's.
%
Our checking routine performs a bounded analysis that aims to either find a finite counterexample and thus prove inequivalence, or build a bisimulation relation that shows the equivalence of the expressions. 
The former case is easier as it is relatively rapid to explore a bisimulation graph up to a given depth.
The latter one is harder, as the target bisimulation can be infinite. 
To tackle part of this infinity, we use three novel \textit{up-to techniques} for environmental bisimulation. 

Up-to techniques roughly assert that if a core set of configurations in the bisimulation graph explored can be proven to be part of a relation satisfying a definition that is more permissive than standard bisimulation, then a superset of configurations
forms a proper bisimulation relation.
This has the implication that a bounded analysis can be used to explore a finite part of the bisimulation graph to verify potentially infinitely many configurations. As there can be no complete set of up-to techniques, the pertaining 
question is how useful they are in practice. In the remainder of this section we present the first of our up-to techniques, called \textit{up to separation},  via an example equivalence.
The intuition behind this technique comes from Separation Logic and amounts to saying that functions that access separate regions of the state can be explored independently. As a corollary, a function that manipulates only its own local references may be explored independently of itself, i.e. it suffices to call it once. 

\begin{example}\label{ex:sep}
Consider the following pair of expressions, where $N$ is from \cref{ex:fst}.

  \centerline{
  $M = \text{\lstinline{fun f -> ref x = 0 in f (); !x}}$
  \qquad\qquad
  $N =  \text{\lstinline{fun f -> f (); 0}}$
}\smallskip

\noindent
  The LTS corresponding to $M$ and $N$ are shown in \cref{fig:LTS} (middle and top). Regarding $M$, we can see that opponent is always allowed to reenter the proponent function $g$, which creates a new reference $x_n$ each time. This makes each configuration unique, which prevents us from finding cycles and thus finitise  the bisimulation graph.
Moreover, both the LTS for $M$ and $N$ are infinite because of the stack discipline they need to adhere to when O issues reentrant calls. 

  With separation, however, we could prune the two LTS's as in \cref{fig:LTS} (bottom). 
We denote the configurations after the first opponent call as $C_1$.   
Any opponent call after $C_1$ leads to a configuration which differs from $C_1$ either by a state component that is not accessible anymore and can thus be {separated}, or by a stack component that can be similarly separated. Hence, the LTS's that we need to consider are finite and thus the expressions are proven equivalent.
\end{example}


  \section{Language and Semantics}
  \label{sec:lang}
  \begin{figure*}[t]
  \footnotesize
  \[\begin{array}{r@{}r@{\,}c@{\,}l}
    \textsc{\Loc:} & l,k &
    \multicolumn{2}{l}{
      \hspace{3em}
    \textsc{\Var:}  x,y,z       
      \hspace{3em}
    \textsc{\Con:}  c 
    }\\
    \textsc{\Typ:} & T           & \mis & \Bool \mor \Int \mor \Unit \mor T \arrow T \mor T_1 * \ldots * T_n \\
      \textsc{\Exp: } & e,M,N           & \mis & v \mor (\vec e)\mor \arithop{\vec e} \mor \app e e \mor \cond{e}{e}{e} \mor \new l v e \mor \deref l \mor l \asgn e
 \mor \elet{(\vec x)}{e}{e} \\                                             
    \textsc{\Val:} & u,v       & \mis & c \mor x\mor \lam x e   \mor (\vec v)                                                                            \\ 
    \textsc{\EC:} & E        & \mis & \hole_T \morcondensed (\vec v,E,\vec e) \morcondensed \arithop{\vec v,E,\vec e} \morcondensed \app E e \morcondensed \app v E \morcondensed l \asgn E \morcondensed \cond E e e \morcondensed \elet{(\vec x)}{E}{e} \\
    \textsc{\Cxt:} & D           & \mis & \hole_{i,T} \mor e \mor (\vec D) \mor \arithop{\vec D} \mor \app D D \mor l \asgn D \mor \cond D D D \mor \lam x D
                    \\& & & \mor \new l D D  \mor \elet{(\vec x)}{D}{D}       \\
    \textsc{\St:} & s,t         & \in  & \Loc \overset{fin}{\rightharpoonup} \Val                                                               \\
  \end{array}\]
  \vspace{-1em}
  \[\begin{array}{l@{\;\,}l@{\;\,}ll}
    \redconf{s}{\arithop{\vec c}}    & \redbase & \redconf{s}{w}                      & \text{if } \mathop{op}^{\textsf{arith}}(\vec c) = w \\
    \redconf{s}{\app{(\lam[f] x e)}  v} & \redbase & \redconf{s}{e\sub{x}{v}\sub{f}{\lam[f] x e}}  \\
    \redconf{s}{\elet{(\vec x)}{(\vec v)}{e}} & \redbase & \redconf{s}{e\sub{\vec x}{\vec v}}  \\
    \redconf{s}{\new l v e}          & \redbase & \redconf{s\stupd{l}{v}}{e}          & \text{if } l \not\in\dom{s}\\
    \redconf{s}{\deref l}            & \redbase & \redconf{s}{v}                      & \text{if } s(l)=v \\
    \redconf{s}{l \asgn v}           & \redbase & \redconf{s\stupd{l}{v}}{()}   \\
    \redconf{s}{\cond{c}{e_1}{e_2}}  & \redbase & \redconf{s}{e_i}                    & \text{if } (c,i) \in \{(\true,1), (\false,2)\} \\
    \redconf{s}{E\hole[e]}           & \red     & \redconf{s'}{E\hole[e']}            & \text{if } \redconf{s}{e} \redbase \redconf{s'}{e'}
  \end{array}\]

  \hrule
  \caption{Syntax and reduction semantics of the language \lang.}\label{fig:lang}
\end{figure*}

We develop our technique for the language \lang, a simply typed lambda calculus with local state whose syntax and reduction semantics are shown in \cref{fig:lang}.
Expressions (\Exp) include the standard lambda expressions with recursive functions ($\lam x e$), together with location creation ($\new l v e$), dereferencing ($\deref l$), and assignment ($l \asgn e$), as well as standard base type constants ($c$) and operations ($\arithop{\vec e}$).
Locations are mapped to values, including function values, in a store (\St). We write \emptyS for the empty store
and let $\fl{\chi}$ denote the set of free locations in $\chi$.

The language \lang is simply-typed with typing judgements of the form $\typing{\Delta}{\Sigma}{e}{T}$, where $\Delta$ is a type environment (omitted when empty), $\Sigma$ a store typing and $T$ a value type (\Typ); $\Sigma_s$ is the typing of store $s$.
The rules of the type system are standard and omitted here (Appendix~\ref{sec:typing}). 
Values consist of boolean, integer, and unit constants, functions and arbitrary length tuples of values.
To keep the presentation of our technique simple we do not include reference types as value types, effectively keeping all locations local. 
Exchange of locations between expressions can be encoded using get and set functions. 
In \cref{ex:meyer-sieber-e6} we show the encoding of a classic equivalence with location exchange between expressions and their context.
Future work extensions to our technique to handle location types can be informed from previous work~\cite{Laird07,syteci}.

The reduction semantics is by small-step transitions between configurations containing a store and an expression, $\redconf{s}{e} \red \redconf{s'}{e'}$,
defined using single-hole evaluation contexts ($\EC$) over a base relation $\redbase$.
Holes $\hole_T$ are annotated with the type $T$ of closed values they accept, which we may omit to lighten notation.
Beta substitution of $x$ with $v$ in $e$ is written as $e\sub{x}{v}$.
We write $\redconf{s}{e}\trm$ to denote $\redconf{s}{e} \red^* \redconf{t}{v}$ for some $t$, $v$.
We write $\vec \chi$ to mean a syntactic sequence, and assume standard syntactic sugar from the lambda calculus.
In our examples we assume an ML-like syntax and implementation of the type system, which is also the concrete
syntax of \Hobbit.


We consider environments $\Gamma \in \Nat \xrightharpoonup{\mathsf{fin}} \Val$ which map natural numbers to closed values.
The concatenation of two such environments $\Gamma_1$ and $\Gamma_2$, written $\Gamma_1,\Gamma_2$ is defined when $\dom{\Gamma_1}\cap\dom{\Gamma_2}=\emptyset$.
We write $(\maps{i_1}{v_1}, \ldots, \maps{i_n}{v_n})$ for a concrete environment mapping $i_1, \ldots, i_n$ to $v_1, \ldots, v_n$, respectively.
When indices are unimportant we omit them and treat $\Gamma$ environments as lists.

General contexts $D$ contain multiple, non-uniquely indexed holes $\hole_{i,T}$, where $T$ is the type of value that can replace the hole. Notation $D\hole[\Gamma]$ denotes the context $D$ with each hole $\hole_{i,T}$ replaced with $\Gamma(i)$, provided that  $i\in\dom{\Gamma}$ and $\Sigma \vdash \Gamma(i):T$, for some $\Sigma$.
We omit hole types where possible and indices when all holes in $D$ are annotated with the same $i$. In the latter case we write $D\hole[v]$ instead of $D\hole[(\maps{i}{v})]$
and allow to replace all holes of $D$ with a closed expression $e$, written $D\hole[e]$.
We assume the Barendregt convention for locations, thus replacing context holes avoids location capture.
Standard contextual equivalence~\cite{Morris68} follows.

\begin{definition}[Contextual Equivalence]\label{def:cxt-equiv}
Expressions $\vdash e_1:T$ and $\vdash e_2:T$ are \emph{contextually equivalent}, written as $e_1 \cxteq e_2$, when for all contexts $D$ such that $\vdash D\hole[e_1]: \Unit$ and $\vdash D\hole[e_2]: \Unit$ we have
$
    \redconf{\emptyS}{D\hole[e_1]}\trm ~\text{iff}~
    \redconf{\emptyS}{D\hole[e_2]}\trm
$.
\end{definition}



  \section{LTS with Symbolic Higher-Order Transitions}
  \label{sec:lts}
    Our Labelled Transition System (LTS) has symbolic transitions for both higher-order and first-order transitions.
  For simplicity we first present our LTS with symbolic higher-order and concrete first-order transitions.
  We develop our theory and most up-to techniques on this simpler LTS.
  We then show its extension with symbolic first-order transitions and develop up to state invariants which relies on this extension.
%
  We extend the syntax with abstract function names $\alpha$:
$$
\textsc{\Val: } \quad u,v,w  \, \mis \, c \mor \lam x e\mor (\vec v) \mor \alpha
$$
We assume that $\alpha$'s are annotated by the type of function they represent, written $\alpha_{T\arrow T'}$, and omitted where possible;
$\an{\chi}$ is the set of abstract names in $\chi$.

\begin{figure*}[t] 

  \[\begin{array}{@{}lllll@{}}
    \irule![PropApp][propappf]{
      (D,\Gamma') \in \ulpatt(v)
    }{\nbox{
      \conf{A}{\Gamma}{K}{s}{E[\app {\alpha} v]}
      \trans{\lpropapp{\alpha}{D}} 
      \conf{A}{\Gamma,\Gamma'}{E\hole,K}{s}{\noe}
    }}
    \\
    \irule![PropRet][propretf]{
      (D,\Gamma') \in \ulpatt(v)
    }{
      \conf{A}{\Gamma}{K}{s}{v}
      \trans{\lpropret{D}}
      \conf{A}{\Gamma,\Gamma'}{K}{s}{\noe}
    }
    \\
     \irule![OpApp][opappf]{
       \nbox{
       \Sigma_s \vdash \Gamma(i) : T \arrow T'
       \\\text{and }
        (D,\vec \alpha) \in \ulpatt(T)
       \\\text{and }
        \app {\Gamma(i)} {D\hole[\vec \alpha]} \funred e
      }
    }{
      \conf{A}{\Gamma}{K}{s}{\noe} \trans{\lopapp{i}{D\hole[\vec \alpha]}} \conf{A\uplus\vec\alpha}{\Gamma}{K}{s}{e}
    }
    \\
    \irule![OpRet][opretf]{
        (D,\vec \alpha) \in \ulpatt(T)
    }{\\
      \conf{A}{\Gamma}{E\hole_T,K}{s}{\noe} \trans{\lopret{D\hole[\vec\alpha]}}
      \conf{A\uplus\vec\alpha}{\Gamma}{K}{s'}{E\hole[D[\vec\alpha]]}
    }
    \\
    \irule![Tau][tau]{
      \redconf{s}{e} \red \redconf{s'}{e'}
    }{
      \conf{A}{\Gamma}{K}{s}{e} \trans{\tau} \conf{A}{\Gamma}{K}{s}{e'}
    }
    \\
    \irule![Response][dummy]{
      \eta \not= \lterm
    }{
      C \trans{\eta} \botconf
    }
    \\
    \irule![Term][term]{}{
      \conf{A}{\Gamma}{\noe}{s}{\noe} \trans{\lterm} \botconf
    }

  \end{array}\]
  \hrule
  \caption{The Labelled Transition System.}\label{fig:lts}
\end{figure*}

We define our LTS (shown in \cref{fig:lts}) by opponent and proponent call and return transitions, based on Game Semantics~\cite{Laird07}.
Proponent transitions are the moves of an expression interacting with its context.
Opponent transitions are the moves of the context surrounding this expression.
These transitions are over proponent and opponent configurations $\conf{A}{\Gamma}{K}{s}{e}$ and $\conf{A}{\Gamma}{K}{s}{\cdot}$, respectively.
In these configurations:
\begin{itemize}
  \item $A$ is a set of abstract function names been used so far in the interaction;
  \item $\Gamma$ is an environment indexing proponent functions known to opponent;\footnote{thus, $\Gamma$ is encoding the environment of Environmental Bisimulations (e.g.~\cite{KoutavasW06})}
  \item $K$ is a stack of proponent continuations, created by nested proponent calls;
  \item $s$ is the store containing proponent locations;
  \item $e$ is the expression reduced in proponent configurations; $\hat e$ denotes $e$ or $\cdot$.
\end{itemize}
In addition, we introduce a special configuration $\botconf$ which is used in order to represent expressions that cannot perform given transitions (cf.~Remark~\ref{rem:bots}). We let a \emph{trace} be a sequence of app and ret moves (i.e.\ labels), as defined in  \cref{fig:lts}.


For the LTS to provide a fully abstract model of the language, it is necessary that functions which are passed as arguments or return  values from proponent to opponent be abstracted away, as the actual syntax of functions is not directly observable in \lang. This is achieved by deconstructing such values $v$ to:
\begin{itemize}
\item an \emph{ultimate pattern} $D$ (cf.~\cite{LassenL07}), which is a context obtained from $v$ by replacing each function in $v$ with a distinct numbered hole; together with
\item an environment $\Gamma$ whose domain is the indices of these holes, and $D\hole[\Gamma]=v$.
\end{itemize}
We let $\ulpatt(v)$ contain all such pairs $(D,\Gamma)$ for $v$;
e.g.:
$
  \ulpatt((\lambda x. e_1, 5)) = \{(\,(\hole_i, 5),\,[\maps{i}{\lambda x. e_1}]\,)
  \where \text{for any } i\}.
$

Ultimate pattern matching is extended to types through the use of symbolic function names:
$\ulpatt(T)$ is the largest set of pairs $(D,{\vec \alpha})$ such that $\vdash D\hole[{\vec \alpha}] : T$, where $\vec\alpha$ is an environment with indices omitted, and $D$ does not contain functions.

In  \cref{fig:lts}, proponent application and return transitions (\iref{propappf}, \iref{propretf}) use ultimate pattern matching for values and accumulate the functions generated by the proponent in the $\Gamma$ environment of the configuration, leaving only their indices on the label of the transition itself.
Opponent application and return transitions (\iref{opappf}, \iref{opretf}) use ultimate pattern matching for types to generate opponent-generated values which can only contain abstract functions.
This eliminates the need for quantifying over all functions in opponent transitions but still includes infinite quantification over all base values. Symbolic first-order values in \cref{sec:sym} will obviate the latter.

At opponent application the following preorder performs a beta reduction when opponent applies a concrete function.
This technicality is needed for soundness.

\begin{definition}[\funred]
  For application $\app v u$ we write $\app v u \funred e$ to mean
  $e = \app \alpha u$, when $v=\alpha$; and $e=e'\sub{x}{u}\sub{f}{\lam x e}$, when $v=\lam x e'$.
\end{definition}
In our LTS, $C$ ranges over configurations and $\eta$ over transition labels;
$\wtrans{\eta}$ means ${\trans{\tau}}{}^*$, when $\eta=\tau$, and $\wtrans{\tau}\trans{\eta}\wtrans{\tau}$ otherwise.
Standard weak (bi-)simulation follows.
\begin{definition}[Weak Bisimulation]
  Binary relation \bisim{R} is a \emph{weak simulation} when
  for all $C_1 \bisim*{R} C_2$ and $C_1 \trans{\eta} C_1'$, there exists $C_2'$
  such that $C_2 \wtrans{\eta} C_2'$ and $C_1' \bisim*{R} C_2'$.
  If \bisim{R}, $\bisim{R}^{-1}$ are weak simulations then
  \bisim{R} is a \emph{weak bisimulation}.
  Similarity $(\simil)$ and bisimilarity $(\bisimil)$ are the largest weak 
  simulation and bisimulation, respectively.
  \defqed
\end{definition}


\begin{remark}\label{rem:bots}
  Any proponent configuration that cannot match a standard bisimulation transition challenge can trivially respond to the challenge by transitioning into $\botconf$ by the \iref{dummy} rule in \cref{fig:lts}.
By the same rule, this configuration can trivially perform all transitions except a special termination transition, labelled with $\lterm$.
%
However, regular configurations that have no pending proponent calls ($K=\cdot$), can perform the special termination transition (\iref{term} rule), signalling the end of a \emph{complete trace}, i.e.\ a completed computation.
This mechanism allows us to encode complete trace equivalence, which coincides with contextual equivalence~\cite{Laird07}, as bisimulation equivalence.
In a bisimulation proof,
  if a proponent configuration is unable to match a bisimulation transition with a regular transition, it can still transition to \botconf where it can simulate every transition of the other expression, apart from \trans{\lterm} leading to a complete trace. 
%

Our mechanism for treating unmatched transitions has the benefit of enabling us to use the standard definition of bisimulation over our LTS.
This is in contrast to previous work \cite{BiernackiLP19,JaberTabareau15}, where termination/non-termination needed to be proven independently or baked in the simulation conditions.
More importantly, our approach  allows us to use bisimulation up-to techniques even when 
one of the related configurations diverges, which is not possible in previous symbolic LTSs \cite{Laird07,JaberTabareau15,BiernackiLP19}, and is necessary in examples such as \cref{ex:meyer-sieber-e6}.
\end{remark}
\begin{technical} 
\begin{lemma}
  Suppose $e_1$, $e_2$ are doubly closed expressions. Then
  \[
  \conf{\emptyA}{\emptyG}{\emptyK}{\emptyS}{e_1} \simil \conf{\emptyA}{\emptyG}{\emptyK}{\emptyS}{e_2}
  \qquad\text{iff}\qquad
  \conf{\emptyA}{\emptyG}{\emptyK}{\emptyS}{\lam \_ e_1} \simil \conf{\emptyA}{\emptyG}{\emptyK}{\emptyS}{\lam \_ e_2}
  \]
\end{lemma}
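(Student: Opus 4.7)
The plan is to establish each direction by constructing an explicit candidate (bi)simulation and appealing to coinduction. The key observation is that the thunk configurations have two forced opening moves before any interesting behaviour arises: a proponent return move emitting ultimate pattern $(\hole_1,[1\mapsto\lam \_ e_i])$ (rule PropRet), followed by the opponent move $\lopapp{1}{()}$, which by the $\funred$ preorder unfolds on each side to the body $e_i$.

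For the $(\Leftarrow)$ direction, I start from a simulation witnessing similarity of the thunks and walk through these forced opening moves: the proponent return is matched on both sides (with the two bindings $\Gamma(1)=\lam \_ e_1$ and $\Gamma(1)=\lam \_ e_2$), and the subsequent $\lopapp{1}{()}$ step transitions both sides to configurations of the form $\conf{\emptyA}{[1\mapsto \lam \_ e_i]}{\emptyK}{\emptyS}{e_i}$, still related by $\simil$. I would then establish a $\Gamma$-shrinking auxiliary lemma: if $R$ is a simulation and $(C_1,C_2)\in R$ share an index $j$ in both $\Gamma$'s, then the pair obtained by deleting $j$ from both sides is in some simulation too. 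This holds because removing $\Gamma$ entries strictly shrinks opponent's transitions (fewer functions available via OpApp), while proponent moves --- determined by $\hat e^i$, $K$ and $s$ --- are unaffected, so every challenge from the smaller configurations matches one from the larger, whose simulation response projects back. Applying shrinkage at index $1$ yields the required $\conf{\emptyA}{\emptyG}{\emptyK}{\emptyS}{e_1}\simil \conf{\emptyA}{\emptyG}{\emptyK}{\emptyS}{e_2}$.

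For the $(\Rightarrow)$ direction, let $H$ denote the assumed similarity of the bodies. I would construct a candidate $R^{\star}$ containing the thunk pair plus every pair of the form
\[
\bigl(\conf{A}{\Gamma_1,[j\mapsto \lam \_ e_1]}{K}{s_1}{\hat e^1},\; \conf{A}{\Gamma_2,[j\mapsto \lam \_ e_2]}{K}{s_2}{\hat e^2}\bigr)
\]
whose $j$-deflations lie in $\simil$. After the forced opening transitions from the thunk pair, the first reachable $R^{\star}$-pair has empty-$\Gamma$ deflations related by $H$. Verification of the simulation condition splits on the challenge: transitions that do not invoke index $j$ act identically on the $j$-deflation (proponent moves and other opponent moves), so the simulation of the deflated pair supplies a response that re-inflates to a response in $R^{\star}$; the opponent thunk-call $\lopapp{j}{()}$ instead spawns a fresh execution of $e_i$ alongside the pre-existing $(A,\Gamma_i,K,s_i)$ on each side.

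The main obstacle is this last re-entry case: the resulting deflated configurations $\conf{A}{\Gamma_i}{K}{s_i}{e_i}$ are not the empty-context ones directly covered by $H$. Here I would exploit the double-closedness of $e_i$ --- no free variables, no free locations, and (by Barendregt convention) fresh $\new l\,v\,e$ allocations disjoint from $\dom{s_i}$ --- to argue that the new run of $e_i$ is fully independent of the ambient $(A,\Gamma_i,K,s_i)$. This is precisely the separation principle intuited in \cref{sec:intuitions} (\cref{ex:sep}) and formalized later as up-to-separation; composing $H$ with an untouched ambient context is the crux of the formal verification, and the step where the subsequent up-to-separation technique pays its way.
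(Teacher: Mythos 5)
Your overall route coincides with the paper's: both directions hinge on walking through the two forced opening moves, the backward direction is discharged by a $\Gamma$-shrinking principle (the paper's \emph{up to weakening}, rule \iref{uptoweakening}, applied to the singleton relation $\{(\conf{\emptyA}{\emptyG}{\emptyK}{\emptyS}{e_1},\conf{\emptyA}{\emptyG}{\emptyK}{\emptyS}{e_2})\}$ together with $\utgc{}$ and $\constf{\simil}$), and the forward direction's re-entry case is resolved by separation. Two caveats. First, in your shrinking lemma the claim that ``every challenge from the smaller configurations matches one from the larger'' is not literally true for proponent moves: the deflated configuration may emit $\lpropret{D}$ reusing the freed index $j$, which the inflated one cannot; this is repaired by closing under index permutations (the paper's $\utweak{}\progress\utweak{}\comp\utperm{}$), a bookkeeping step you omit. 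You should also note that a $\botconf$-response to the opening opponent move is harmless, since $\botconf\simil C$ for every $C$ and $\simil$ is a preorder.

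Second, and more substantively: in the forward direction your candidate $R^{\star}$ (pairs whose $j$-deflations lie in $\simil$) is \emph{not} closed under the re-entry challenge $\lopapp{j}{()}$ as a plain simulation. The post-call deflations $\conf{A}{\Gamma_i}{K}{s_i}{e_i}$ are obtained by superposing the body run $\conf{\emptyA}{\emptyG}{\emptyK}{\emptyS}{e_i}$ on the ambient $\conf{A}{\Gamma_i}{K}{s_i}{\noe}$, and deducing their similarity from the similarity of the two components is not a general property of $\simil$ --- it is precisely the content of \cref{lem:utsepconj-sound}, whose proof is nontrivial. You name up to separation as ``the crux'' but your proof is framed as ordinary coinduction on $R^{\star}$; as written, the re-entry case therefore has a hole that can only be filled by either (a) restructuring the argument as the paper does, i.e.\ exhibiting the \emph{singleton} pair as a bisimulation up to $(\utsepconj{}\sqcup\utgc{}\sqcup\constf{\simil})^{\omega}$ and invoking the already-proved soundness of those techniques (which also makes your larger closure $R^{\star}$ unnecessary), or (b) re-proving the separation composition property inline, which duplicates \cref{sec:uptosep-proof}. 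With repair (a) your proposal becomes the paper's proof.
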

\begin{proof}
  We start by showing that
  \[
  \conf{\emptyA}{\emptyG}{\emptyK}{\emptyS}{\lam \_ e_1}
  \simil
  \conf{\emptyA}{\emptyG}{\emptyK}{\emptyS}{\lam \_ e_2}
  \quad\text{iff}\quad
  \conf{\emptyA}{\maps{i}{\lam \_ e_1}}{\emptyK}{\emptyS}{\noe}
  \simil
  \conf{\emptyA}{\maps{i}{\lam \_ e_2}}{\emptyK}{\emptyS}{\noe}
\]
  This is straightforward because 
  $\conf{\emptyA}{\emptyG}{\emptyK}{\emptyS}{\lam \_ e_k} \trans{\lpropret{j}}
  \conf{\emptyA}{\maps{j}{\lam \_ e_k}}{\emptyK}{\emptyS}{\noe}$ ($k=1,2$ and any $j$)
  is the only non-\iref{dummy} transition from this configuration and the resulting configurations are related in
  $\utperm{\simil}$.
  It thus suffices to show
  \[
  \conf{\emptyA}{\emptyG}{\emptyK}{\emptyS}{e_1} \simil \conf{\emptyA}{\emptyG}{\emptyK}{\emptyS}{e_2}
  \qquad\text{iff}\qquad
  \conf{\emptyA}{\maps{i}{\lam \_ e_1}}{\emptyK}{\emptyS}{\noe}
  \simil
  \conf{\emptyA}{\maps{i}{\lam \_ e_2}}{\emptyK}{\emptyS}{\noe}
  \]

  We prove the left-to-right direction 
  by constructing
  \[
      \bisim R \defeq \{(\conf{\emptyA}{\maps{i}{\lam \_ e_1}}{\emptyK}{\emptyS}{\noe},
            \conf{\emptyA}{\maps{i}{\lam \_ e_2}}{\emptyK}{\emptyS}{\noe})\}
  \]
  and showing \bisim{R} is a weak bisimulation up to separation, garbage collection,
  and similarity; i.e.,
\[
  \bisim{R} \progress (\utsepconj{} \sqcup \utgc{}
  \sqcup \constf{\simil})^{\omega}(\bisim R)
\]
  Let $C_1 \bisim*{R} C_2$ and $C_1 \trans{\eta} C_1'$.
  A transition from $C_1$ derived by the \iref{dummy} rule can be easily matched with the same transition from $C_2$ and the residual configurations are related in $\constf{\simil}(\bisim{R})$.
  The only other possibilities for the transition is to be derived from the rules \iref{opappc} and \iref{opappf}.
  Both cases are almost identical and we thus show only the latter where
  $\eta=\lopapp{i}{\alpha}$ and 
  $C_1' = \conf{\alpha}{\maps{i}{\lam \_ e_2}}{\emptyK}{\emptyS}{e_1}$.
  This transition can be matched by
  $C_2 \trans{\lopapp{i}{\alpha}} \conf{\alpha}{\maps{i}{\lam \_ e_2}}{\emptyK}{\emptyS}{e_2}=C_2'$.
  Note that $\alpha\not\in\an{e_1,e_2}$ as they are closed expressions.
  To relate the configurations requires the separation and garbage collection up-to techniques:
\[
  \begin{array}{@{}cl@{}}
    \multicolumn{2}{@{}c@{}}{
  C_3=\conf{\empty}{\emptyG}{\emptyK}{\emptyS}{e_1}
    \mathrel{\constf{\simil}(\bisim R)}
  \conf{\alpha}{\emptyG}{\emptyK}{\emptyS}{e_2}=C_4
    \qquad
  C_5=\conf{\emptyA}{\maps{i}{\lam \_ e_1}}{\emptyK}{\emptyS}{\noe}
    \bisim*{R}
  \conf{\emptyA}{\maps{i}{\lam \_ e_2}}{\emptyK}{\emptyS}{\noe}=C_6
}
  \\
    \timplies&
    C_3 \sepconj C_5=
  \conf{\emptyA}{\maps{i}{\lam \_ e_2}}{\emptyK}{\emptyS}{e_1}
    \utsepconj*{(\constf{\simil}\sqcup\utId{})(\bisim{R})}
  \conf{\emptyA}{\maps{i}{\lam \_ e_2}}{\emptyK}{\emptyS}{e_2}
    =C_4 \sepconj C_6
  \\
    \timplies&
    C_1'=
  \conf{\alpha}{\maps{i}{\lam \_ e_2}}{\emptyK}{\emptyS}{e_1}
    \utgc*{\utsepconj*{(\constf{\simil}\sqcup\utId{})(\bisim{R})}}
  \conf{\alpha}{\maps{i}{\lam \_ e_2}}{\emptyK}{\emptyS}{e_2}
  =C_2'
  \end{array}
\]

In the right-to-left direction we show the relation
      $\bisim{S}\defeq\{(\conf{\emptyA}{\emptyG}{\emptyK}{\emptyS}{e_1}, \conf{\emptyA}{\emptyG}{\emptyK}{\emptyS}{e_2})\}$
  to be a simulation up to weakening, garbage collection and similarity; i.e.,
  $\bisim{S} \progress (\utweak{}\sqcup\utgc{}
  \sqcup \constf{\simil})^\omega(\bisim R)$.
\end{proof}
\end{technical} 

\begin{definition}[Bisimilar Expressions]
Expressions $\vdash e_1: T$ and $\vdash e_2: T$ are bisimilar, written $e_1 \bisimil e_2$, when 
  $\conf{\emptyA}{\emptyG}{\emptyK}{\emptyS}{e_1} \bisimil
  \conf{\emptyA}{\emptyG}{\emptyK}{\emptyS}{e_2}$.
  \defqed
\end{definition}


\begin{theorem}[Soundness and Completeness]\label{theorem:SC}
  $e_1 \bisimil e_2$ iff $e_1 \cxteq e_2$ (see proof in \cref{sec:sound-complete,sec:soundness,sec:completeness}).
\end{theorem}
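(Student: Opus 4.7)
The plan is to sandwich both directions via an intermediate notion of complete trace equivalence, which I would define as holding between $e_1, e_2$ when the initial configurations $\conf{\emptyA}{\emptyG}{\emptyK}{\emptyS}{e_1}$ and $\conf{\emptyA}{\emptyG}{\emptyK}{\emptyS}{e_2}$ produce the same set of $\lterm$-terminated traces. I would first show that $\bisimil$ coincides with complete trace equivalence on such initial configurations, and then that complete trace equivalence coincides with $\cxteq$.

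The equivalence between $\bisimil$ and complete trace equivalence turns on the $\botconf$ mechanism of Remark~\ref{rem:bots}. The \iref{dummy} rule makes every configuration able to match every non-$\lterm$ transition, so any two configurations with the same complete-trace sets are automatically bisimilar: whenever one side takes a move the other cannot match, it responds into $\botconf$, which still refuses $\lterm$ exactly where the original configuration did. The converse, that bisimilar configurations have equal complete traces, is immediate. A small technical step is to check that proponent transitions are essentially deterministic modulo $\tau$, so that no branching behaviour is lost in going from $\bisimil$ to the linear-time trace equivalence.

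For soundness ($\bisimil \Rightarrow \cxteq$), I would prove an adequacy lemma: $\redconf{\emptyS}{D\hole[e]}\trm$ iff there exists a complete trace of $\conf{\emptyA}{\emptyG}{\emptyK}{\emptyS}{e}$ compatible with $D$ interpreted as an opponent strategy that realises the right functions from $\Gamma$ and fills in the abstract names $\alpha$ with concrete opponent functions drawn from $D$. The proof proceeds by induction on the reduction of $D\hole[e]$, with case analysis on whether a step is internal to $e$'s part, internal to $D$, or crosses the $D$-$e$ interface (a proponent application/return from $e$ into $D$, or an opponent application/return from $D$ into $e$ using one of the shared functions or abstract names). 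Once this lemma is in place, equal complete-trace sets yield identical termination behaviour in every $D$, hence $\cxteq$.

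For completeness ($\cxteq \Rightarrow \bisimil$), I would argue by contrapositive via definability: given a complete trace $t$ witnessing the failure of complete trace equivalence, construct a context $D_t$ such that $D_t\hole[e_1]\trm$ but not $D_t\hole[e_2]\trm$ (or symmetrically). The context $D_t$ keeps a counter in a fresh reference recording the current position in $t$, and instantiates each opponent-issued abstract name by a concrete function that, when invoked, checks its arguments against the expected moves in $t$, advances the counter, and diverges on any deviation; proponent-supplied functions that populate $\Gamma$ are stored in auxiliary references and re-invoked at the right points, while base-type values appearing in opponent moves are hard-coded as literals in $D_t$ (admissible because this restricted LTS symbolises only higher-order arguments). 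Termination is signalled when the counter reaches the end of $t$. The main obstacle lies precisely here: the construction must faithfully reproduce the stack discipline imposed by the $K$-component of configurations when $t$ contains nested proponent calls, and must respect simple typing for the ultimate-pattern decompositions $(D,\Gamma')$ appearing along $t$; the correctness proof then composes with the adequacy lemma to yield $e_1 \not\cxteq e_2$, completing the contrapositive.
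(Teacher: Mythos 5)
Your proposal follows the paper's route: both factor the theorem through complete trace equivalence, establish that $\bisimil$ coincides with equality of complete traces via the $\botconf$ mechanism of Remark~\ref{rem:bots} together with determinism of the LTS, and then connect complete traces to contextual equivalence. The only difference is that the paper's main proof obtains the second leg (complete trace equivalence iff $\cxteq$) by citing the full abstraction theorem of Laird~\cite{Laird07}, whereas you sketch a direct adequacy-plus-definability argument for it --- which is essentially what the paper's appendix does anyway in its alternative direct soundness and completeness proofs (\cref{sec:soundness,sec:completeness}), so this is a matter of presentation rather than of substance.
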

As a final remark, the LTS presented in this section is finite state only for a small number of trivial equivalence examples, such as the following one.
\begin{example}
  The following two implementations of conjunction ($\Bool * \Bool \arrow \Bool$)
  have a finite transition system according to the rules of \cref{fig:lts}.

  $M = \text{\lstinline{fun xy -> let (x,y) = xy in if x then y else false}}$

  $N =  \text{\lstinline{fun xy -> let (x,y) = xy in x && y}}$
\end{example}
However, even simple modifications to this example, such as allocation of a fresh location within one of the functions,
leads to infinite state transition systems. The following section addresses multiple sources of infinity in the transition systems through bisimulation up-to techniques.


  \section{Up-to Techniques}
  \label{sec:up-to}
  
We start by the definition of a sound up-to technique.
\begin{definition}[Weak Bisimulation up to $f$]
  \bisim{R} is a \emph{weak simulation up to $f$} when
  for all $C_1 \bisim*{R} C_2$ and $C_1 \trans{\eta} C_1'$, there is $C_2'$
  with $C_2 \wtrans{\eta} C_2'$ and $C_1' \mathrel{f(\bisim*{R})} C_2'$.
  If \bisim{R}, $\bisim{R}^{-1}$ are weak simulations up to $f$ then
  \bisim{R} is a \emph{weak bisimulation up to $f$}.
\end{definition}
\begin{definition}[Sound up-to technique]
  A function $f$ is a \emph{sound up-to technique} when for any $\bisim{R}$ which is a simulation up to $f$ we have $R \subseteq (\simil)$.
\end{definition}

\Hobbit\ employs the standard techniques: 
up to identity, up to garbage collection, up to beta reductions and up to name permutations
(see Appendix~\ref{app:simple}). Here we present two novel up-to techniques: up to separation and up to reentry.

\subsubsection{Up to Separation}

Our experience with \Hobbit\ has shown that one of the most effective up-to techniques for finitising bisimulation transition systems is the novel \emph{up to separation}
which we propose here.
The intuition of this technique is that if different functions operate on disjoint parts of the store, they can be explored in disjoint parts of the bisimulation transition system.
Taken to the extreme, a function that does not contain free locations can be applied only once in a bisimulation test as two copies of the function will not interfere with each other, even if they allocate new locations after application.
To define up to separation we need to define a separating conjunction for configurations.

\begin{definition}[Stack Interleaving]
  Let $K_1$, $K_2$ be lists of evaluation contexts from \EC (\cref{fig:lang}); we define the interleaving operation $K_1 \interleave[\vec k] K_2$ inductively,
  and write $K_1 \interleave K_2$ to mean $K_1 \interleave[\vec k] K_2$ for unspecified $\vec k$. We let \
$\emptyK \interleave[\emptyK] \emptyK = \emptyK$ \  and:
\begin{align*}
  E_1,K_1 \interleave[(1,\vec k)] K_2 &= E_1,(K_1 \interleave[\vec k] K_2) 
  &\quad
    K_1 \interleave[(2,\vec k)] E_2,K_2 &= E_2,(K_1 \interleave[\vec k] K_2)\,.
  \end{align*}
\end{definition}

\begin{definition}[Separating Conjuction]
  Let
  $C_1 =\conf{A_1}{\Gamma_1}{K_1}{s_1}{\hat e_1}$ and
  $C_2 =\conf{A_2}{\Gamma_2}{K_2}{s_2}{\hat e_2}$ be well-formed configurations.
  We define:
  \begin{itemize}
    \item$
    C_1 \sepconj[1][\vec k] C_2 \defeq \conf{A_1\cup A_2}{\Gamma_1,\Gamma_2}{K_1\interleave[\vec k] K_2}{s_1, s_2}{\hat e_1}
    $ when $\hat e_2 = \noe$
  \item$
    C_1 \sepconj[2][\vec k] C_2 \defeq \conf{A_1\cup A_2}{\Gamma_1,\Gamma_2}{K_1\interleave[\vec k] K_2}{s_1, s_2}{\hat e_2}
    $ when $ \hat e_1 = \noe$
  \end{itemize}
  provided
  $\dom{s_1}\cap\dom{s_2} = \emptyset$.
  We let $C_1 \sepconj C_2$ denote $\exists i,\vec k \st C_1 \sepconj[i][\vec k] C_2$.
\end{definition}

The function $\utsepconj{}$ provides the up to separation technique; it is defined as:
%
  \[\begin{array}{@{}c@{}}
    \irule*[UpTo$\sepconj$][uptosepconj]{
      C_1
      \bisim*{R}
      C_2
      \\
      C_3
      \bisim*{R}
      C_4
    \mytechnical{
      \\\\
      \dom{C_1.\Gamma}=\dom{C_2.\Gamma}
      \\\\
      \dom{C_3.\Gamma}=\dom{C_4.\Gamma}
    }
    }{
      C_1 \sepconj[i][\vec k] C_3 \utsepconj*{\bisim{R}} C_2 \sepconj[i][\vec k] C_4
    }
    \quad
    \irule*[UpTo${\sepconj}\bot_L$][uptosepconjbotl]{
      C_1 \bisim*{R} \botconf
      \\
      C_3 \bisim*{R} C_4
    }{
      C_1 \sepconj C_3 \utsepconj*{\bisim{R}} \botconf
    }
    \quad
    \irule*[UpTo${\sepconj}\bot_R$][uptosepconjbotr]{
      C_1 \bisim*{R}  C_2
      \\
      C_3 \bisim*{R}\botconf
    }{
      C_1 \sepconj C_3 \utsepconj*{\bisim{R}} \botconf
    }
   \end{array}\]
Its soundness follows by extending \cite{PousS11,PousCompanion} with 
a more powerful proof obligation
\begin{technical}
  $\utsepconj{}\comp{\constf{\gfp\WP}}\progress(\utsepconj{}\cup g)^\omega\comp{\constf{\gfp\WP}}$
  where $g=\utId{}\sqcup\utperm{}$
\end{technical}
  (see \cref{sec:uptosep-proof}).
  
\begin{lemma}\label{lem:utsepconj-sound}
  Function \utsepconj{} is a sound up-to technique.
\end{lemma}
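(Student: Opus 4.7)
The plan is to follow the companion/compatibility approach of Pous and Sangiorgi, showing that $\utsepconj{}$ lies below the companion of the weak simulation functional $\WP$ and is therefore sound. The direct obligation $\utsepconj{}\progress\utsepconj{}$ is too weak; I would instead establish the strengthened progress obligation noted in the excerpt,
\[
  \utsepconj{}\comp{\constf{\gfp\WP}} \progress (\utsepconj{}\cup g)^{\omega}\comp{\constf{\gfp\WP}}\,,
\]
with $g=\utId{}\sqcup\utperm{}$. Conjoining $\constf{\gfp\WP}$ on both sides lets the argument carry along pairs already known to be bisimilar, and the iteration allows re-applying $\utsepconj{}$ modulo identity and name permutation after matching. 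This strengthening is necessary because a single transition out of $C_1 \sepconj C_3$ need not preserve the clean syntactic shape of a separating conjunction (for instance, an opponent move may inject new names or values into one half while leaving the other untouched), so separation must be re-established after matching.

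Next I would unpack the definition: assume $C_1 \bisim R C_2$ and $C_3 \bisim R C_4$ with $\bisim R$ a simulation up to $\utsepconj{}$, and a challenge $C_1 \sepconj[i][\vec k] C_3 \trans{\eta} C'$. I would then proceed by case analysis on the derivation. For $\tau$-steps via \iref{tau}, the redex lies entirely in one component, because only one of $\hat e_1,\hat e_2$ is non-empty and the stores are disjoint; the matching weak step is supplied by the simulation on that side. Proponent moves \iref{propappf} and \iref{propretf} are analogous, with the new proponent functions extending the $\Gamma$ of the active component. Opponent moves \iref{opappf} and \iref{opretf} are selected either by the index $i$ or by the top of the interleaved stack $K_1\interleave[\vec k]K_2$, both of which reside in a determined half; the matching transition from $C_2 \sepconj[i][\vec k] C_4$ comes from the assumption on that half, while the untouched half is absorbed by $\constf{\gfp\WP}$ on the right. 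The \iref{term} case forces both stacks empty and both expression slots $\noe$, so both halves terminate simultaneously. The $\botconf$ rules \iref{uptosepconjbotl} and \iref{uptosepconjbotr} are discharged uniformly via $\botconf \trans{\eta} \botconf$ from \iref{dummy}.

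The main obstacle I expect is the opponent case, for two related reasons. First, the matching transition from $C_2 \sepconj C_4$ is only guaranteed weakly, so I have to verify that the intermediate $\tau$-steps on the active half do not leak across the boundary into the untouched half; this relies on the disjointness $\dom{s_1}\cap\dom{s_2}=\emptyset$ carried by $\sepconj$ and on the Barendregt convention on locations, so that freshly allocated cells under one half cannot clash with the other. Second, fresh abstract names $\vec\alpha$ generated by the opponent are added to the common $A_1\cup A_2$, and the two halves may have picked distinct internal choices for such names along their respective simulation witnesses; this is precisely why the outer obligation iterates through $g=\utId{}\sqcup\utperm{}$, which realigns independently-chosen fresh names across the two sides. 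Once preservation of separation and consistent freshness are both verified, the standard Pous--Sangiorgi companion argument applied to the strengthened obligation above yields $\utsepconj{}(\bisim R)\subseteq(\simil)$, and hence soundness of $\utsepconj{}$.
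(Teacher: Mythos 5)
Your proposal follows essentially the same route as the paper's proof: the identical strengthened obligation $\utsepconj{}\comp{\constf{\gfp\WP}}\progress(\utsepconj{}\cup g)^{\omega}\comp{\constf{\gfp\WP}}$ with $g=\utId{}\sqcup\utperm{}$, the same case analysis on the separating-conjunction rules and then on the transition (with the redex localised to one component by disjointness of stores and the interleaving of stacks), and the same use of permutations to realign freshly chosen names along the weak matching steps. One small correction to your unpacking: after composing with $\constf{\gfp\WP}$ the component pairs are related by $\simil$ itself, not merely by an arbitrary $\bisim{R}$ that is a simulation up to $\utsepconj{}$; it is precisely this that makes the matching step on the active half return residuals again related by $\simil$, so the argument closes as in the paper.
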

\begin{exclproof}
  By \cref{lem:prfs}~(\ref{lem:prfs-3}) it suffices to show
  $\utsepconj{}\comp{\constf{\gfp\WP}}\progress(\utsepconj{}\cup g)^\omega\comp{\constf{\gfp\WP}}$,
  where $g=\utId{}\sqcup\utperm{}$.
  We need to show
  $\utsepconj{}\comp{\constf{\gfp\WP}}\comp\WP(\bisim{R}) \subseteq 
  {\WP}\comp (\utsepconj{}\cup g)^\omega\comp{\constf{\gfp\WP}}(\bisim{R})$.
  Because $\constf{\gfp\WP}$ is the constant function mapping its argument to $(\simil)$, we need to show
  $\utsepconj{}(\simil) \subseteq 
  {\WP}\comp (\utsepconj{}\cup g)^\omega(\simil)$.
  We proceed by induction on the derivation of $C_1 \utsepconj*{\simil} C_2$.

  Let $C_1\utsepconj*{\simil}C_2$, 
  and $C_1\trans{\eta}C_1'$.
  If $C_1\trans{\eta}C_1'$ is produced by rule \iref{dummy} then the proof is trivial as $C_2$ can perform the same transition and $C_1'=\botconf \utId*{\simil} \botconf=C_2'$. We thus consider only non-\iref{dummy} transitions from $C_1$.
  We proceed by case analysis on the derivation
  $C_1\utsepconj*{\simil}C_2$. There are three cases: \iref{uptosepconj}, \iref{uptosepconjbotl} and \iref{uptosepconjbotr}.
  We only show the proof for the first case; the last two cases are handled in a similar but simpler fashion.

  W.l.o.g. we let $i=1$ and we have
  \begin{align*}
    C_1 &= C_3 \sepconj[i][\vec k] C_5 = \conf{A_3\cup A_5}{\Gamma_3,\Gamma_5}{K_1\interleave[\vec k]K_3}{s_1, s_3}{\hat e_3}\\
    C_2 &= C_4 \sepconj[i][\vec k] C_6 = \conf{A_3\cup A_5}{\Gamma_4,\Gamma_6}{K_2\interleave[\vec k]K_4}{s_2, s_4}{\hat e_4}\\
    C_3 &= \conf{A_3}{\Gamma_3}{K_3}{s_3}{\hat e_3} \simil \conf{A_3}{\Gamma_4}{K_4}{s_4}{\hat e_4} = C_4
    &\dom{\Gamma_3}&=\dom{\Gamma_4} \\
    C_5 &= \conf{A_5}{\Gamma_5}{K_5}{s_5}{\hat e_5} \simil \conf{A_5}{\Gamma_6}{K_6}{s_6}{\hat e_6} = C_6
    &\dom{\Gamma_5}&=\dom{\Gamma_6}
  \end{align*}
  We proceed by cases on the transition $C_1\trans{\eta}C_1'$.

  {\flushleft
  \iref{propappc},
  \iref{propappf},
  \iref{propretc},
  \iref{propretf},
  \iref{tau}:}
  In all proponent transitions the proof is similar. We show the case \iref{propappf}.
  In this case $e_3=E\hole[\app{\alpha} v]$ and
  \begin{align*}
    C_1 = \conf{A_3\cup A_5}{\Gamma_3,\Gamma_5}{K_3\interleave[\vec k]K_5}{s_3, s_5}{E\hole[\app\alpha v]}
    &\trans{\lpropapp{\alpha}{i}}
    \conf{A_3\cup A_5}{\Gamma_3,\Gamma_5,\maps{i}{v}}{K_3\interleave[\vec k]K_5}{s_3, s_5}{\noe}
    =C_1'
    \\&\qquad\qquad\text{ and }
    i\not\in\dom{\Gamma_3,\Gamma_5}
    \\
    C_3 = \conf{A_3}{\Gamma_3}{K_3}{s_3}{E\hole[\app\alpha v]}
    &\trans{\lpropapp{\alpha}{i}}
    \conf{A_3}{\Gamma_3,\maps{i}{v}}{K_3}{s_3}{\cdot} = C_3'
  \end{align*}
  Moreover, $C_1'=C_3'\sepconj[1][\vec k]C_5$.
  By $C_3 \simil C_4$ there are two possibilities:
  \begin{itemize}
    \item $C_4 \wtrans{\lpropapp{\alpha}{i}} \botconf$ and $C_3' \simil \botconf$.
      The proof in this case is completed by $C_2' \trans{\lpropapp{\alpha}{i}} \botconf$ and
      $C_1 \utsepconj*{\simil} \botconf$ by either rule \iref{uptosepconjbotl} or \iref{uptosepconjbotr}.

    \item
    $C_4 = \conf{A_3}{\Gamma_4}{K_4}{s_4}{\hat e_4}
    \wtrans{\tau}\trans{\lpropapp{\alpha}{i}}
    \conf{A_3}{\Gamma_4,\maps{i}{v_4}}{K_4}{s_4'}{\noe} = C_4'$
      and
    $C_3' \simil C_4'$. By
    $\dom{\Gamma_3,\Gamma_5}=\dom{\Gamma_4,\Gamma_6}$ and \cref{lem:lts-taus} we derive:
    \[
      C_2 = \conf{A_3\cup A_5}{\Gamma_4,\Gamma_6}{K_4\interleave[\vec k]K_6}{s_4, s_6}{\hat e_4}
      \wtrans{\lpropapp{\alpha}{i}}
      \conf{A_3\cup A_5}{\Gamma_4,\Gamma_6,\maps{i}{v_4}}{K_4\interleave[\vec k]K_6}{s_4', s_6}{\noe}
      =C_2'
    \]
   Moreover,
   $C_2'=C_4'\sepconj[1][\vec k]C_6$ and
   $C_1' \utsepconj*{\simil} C_2'$.
  \end{itemize}

{\flushleft
\iref{opappc},\iref{opappf}:}
We show the case for \iref{opappf}. In this case $\eta=\lopapp{i}{\alpha}$, and  $\hat e_3=\noe$, and by \cref{lem:simil-properties}, $\hat e_4=\noe$.
Moreover $i\in\dom{\Gamma_3}$ or $i\in\dom{\Gamma_5}$.
If it is the former, then $C_3 \trans{\lopapp{i}{\alpha}} C_3'$ and $C_1'=C_3' \sepconj[1][\vec k] C_5$, otherwise
$C_5 \trans{\lopapp{i}{\alpha}} C_5'$ and $C_1'=C_3 \sepconj[2][\vec k] C_5'$. Moreover by the simulation we can show that
$C_4 \wtrans{\lopapp{i}{\alpha}} C_4'$ or
$C_6 \wtrans{\lopapp{i}{\alpha}} C_6'$, respectively.
In both cases $C_2 \wtrans{\lopapp{i}{\alpha}} C_2'$ and $C_2'=C_4'\sepconj[1][\vec k] C_6$ or
 $C_2'=C_4\sepconj[2][\vec k] C_6'$, respectively.
 Therefore $C_1' \utsepconj*{\simil} C_2'$.

{\flushleft
\iref{opretc},\iref{opretf}:}
We show the cases for \iref{opretf}. Here $\vec k$ is either $(1,\vec k')$ or $(2, \vec k')$. We consider the former case, the latter is symmetric. In this case $K_3 = E, K_3'$ and $K_3\interleave[\vec k]K_5= E,(K_3'\interleave[\vec k'] K_5)$. Therefore we have:
  \begin{align*}
    C_1 = \conf{A_3\cup A_5}{\Gamma_3,\Gamma_5}{E,(K_3'\interleave[\vec k'] K_5)}{s_3, s_5}{\noe}
    &\trans{\lopret{\alpha}}
    \conf{A_3\cup A_5\uplus \alpha}{\Gamma_3,\Gamma_5}{K_3'\interleave[\vec k']K_5}{s_3, s_5}{E\hole[\alpha]}
    =C_1'
    \\&\qquad\qquad\text{ and }
    \alpha\not\in\dom{A_3,A_5}
    \\
    C_3 = \conf{A_3}{\Gamma_3}{K_3}{s_3}{\noe}
    &\trans{\lopret{\alpha}}
    \conf{A_3\uplus\alpha}{\Gamma_3}{K_3}{s_3}{\cdot} = C_3'
  \end{align*}
  By $C_3 \simil C_4$ there are two possibilities:
  \begin{itemize}
    \item $C_4 \wtrans{\lopret{\alpha}} \botconf$ and $C_3' \simil \botconf$.
      The proof in this case is completed by $C_2' \trans{\lopret{\alpha}} \botconf$ and
      $C_1 \utsepconj*{\simil} \botconf$ by either rule \iref{uptosepconjbotl} or \iref{uptosepconjbotr}.

    \item
      $C_4 = \conf{A_3}{\Gamma_4}{K_4}{s_4}{\noe}
    \wtrans{\tau}\trans{\lopret{\alpha}}
      \conf{A_3\uplus\alpha}{\Gamma_4,\maps{i}{v_4}}{K_4'}{s_4'}{E_4\hole[\alpha]} = C_4'$
      and
    $C_3' \simil C_4'$
      and $K_4= E_4,K_4'$.
      Thus $K_4\interleave[\vec k]K_6=K_4\interleave[1,\vec k']K_6=E_4,(K_4'\interleave[\vec k']K_6)$.
      By \cref{lem:lts-taus} we derive:
    \[
      C_2 = \conf{A_3\cup A_5}{\Gamma_4,\Gamma_6}{E_4,(K_4'\interleave[\vec k']K_6)}{s_4, s_6}{\noe}
      \wtrans{\lpropapp{\alpha}{i}}
      \conf{A_3\cup A_5\uplus\alpha}{\Gamma_4,\Gamma_6}{K_4'\interleave[\vec k']K_6}{s_4', s_6}{E_4\hole[\alpha]}
      =C_2'
    \]
   Moreover,
   $C_2'=C_4'\sepconj[1][\vec k]C_6$ and
   $C_1' \utsepconj*{\simil} C_2'$.
  \end{itemize}

{\flushleft
\iref{tau}:}
In this case we have 
  \begin{align*}
    C_1 = C_3 \sepconj[1][\vec k] C_5
    &\trans{\tau}
    C_3' \sepconj[1][\vec k] C_5 = C_1'
    \\
    C_3 &\trans{\tau} C_3'
  \end{align*}
  By $C_3 \simil C_4$ we have
    $C_4 \wtrans{\tau} C_4'=\botconf$
    or
    $C_4 \wtrans{\tau} C_4'\not=\botconf$
    and
    $C_3' \simil C_4'$.
    In the former case the proof is completed by $C_2 \trans{\tau}\botconf$ and
    $C_1' \utsepconj*{\simil} \botconf$ by rule \iref{uptosepconjbotl}.
    In the latter case, by \cref{lem:lts-permute-weak},
  there exists $\pi_{l4}$ such that $C_4 \wtrans{\tau} C_4'\pi_{l4}$ and
  $(\fl{C_4'\pi_{l2}}\sdif\fl{C_4})\cap \fl{C_6} = \emptyset$.
  We have $C_3'\utperm*{\simil}C_4'\pi_{l4}$. Moreover we derive
  $C_2 = C_4 \sepconj[1][\vec k] C_6 \wtrans{\tau} C_4'\pi_{l4} \sepconj[1][\vec k] C_6 = C_2'$
  and $C_1' \utsepconj*{\utperm{\simil}} C_2'$.

{\flushleft \iref{term}:}
In this case we have $K_3=K_5=\emptyK$ and $\hat e_3=\hat e_5=\noe$. Therefore $C_3 \trans{\lterm} \botconf$ and
 $C_5 \trans{\lterm} \botconf$. Therefore by $C_3 \simil C_4$ and $C_5 \simil C_6$, we have
 $C_4 \trans{\lterm} \botconf$ and $C_6 \trans{\lterm} \botconf$. Therefore $K_4=K_6=\emptyK$ and
 $\hat e_4 =\hat e_6=\noe$, and thus $C_2 \trans{\lterm}\botconf$. The resulting \botconf configurations are related by $\utId{\simil}$.
\end{exclproof}
Many example equivalences have a finite transition system when using up to separation in conjunction with the simple techniques of the preceding section.

\begin{example}
  The following is a classic example equivalence from Meyer and Sieber \cite{MeyerS88}.
  The following expressions
  are equivalent at type $(\Unit\arrow\Unit)\arrow\Unit$.\smallskip
  
  \centerline{
  $M = \text{\lstinline{fun f -> ref x = 0 in f ()}}$
  \qquad\qquad
  $N =  \text{\lstinline{fun f -> f ()}}$
}\smallskip

\noindent
  After initial application of the function by the opponent, the proponent calls \lstinline{f}, growing the stack $K$ in the two configurations.
  At that point the opponent can apply the same functions again. The LTS of
  this example is thus infinite because $K$ can grow indefinitely. It is
  additionally infinite because the opponent can keep applying the initial function applications even after these return. However, if we apply the
  up-to separation technique immediately after the first opponent application,
  the $\Gamma$ environments become empty, and thus no second
  application of the same functions can happen. The LTS thus becomes trivially
  small. Note that no other up to technique is needed here.
\end{example}
\begin{example}\label{ex:bohrb}
  This example is due to Bohr and Birkedal~\cite{BohrB06} which includes a non-synchronised divergence.

    \begin{tabular}{ll}
      $M =\;$
      &\begin{minipage}[t]{0.7\textwidth}\vspace{-1.3em} \begin{lstlisting}[boxpos=t]
fun f -> 
  ref l1 = false in ref l2 = false in
  f (fun () -> if !l1 then _bot_ else l2 := true);
  if !l2 then _bot_ else l1 := true
\end{lstlisting}
    \end{minipage}
\\
      $N =\;$ 
      &\begin{minipage}[t]{0.7\textwidth}\vspace{-1.3em} \begin{lstlisting}[boxpos=t]
fun f -> f (fun () -> _bot_)
\end{lstlisting}
    \end{minipage}
    \end{tabular}
\\
Note that \lstinline{_bot_} is a diverging computation.
    This is a hard example to prove using environmental bisimulation even with up to techniques; requiring
    quantification over contexts within the proof. However, with up-to separation after the opponent
    applies the two functions, the $\Gamma$ environments are emptied, thus leaving only one application of $M$ and $N$ that needs to be explored by the bisimulation.
    Applications of the inner function provided as argument to \lstinline{f}
    only leads to a small number of reachable configurations. \Hobbit\ can indeed
    prove this equivalence. 
\end{example}

\subsubsection{Up to Proponent Function Re-entry}
\label{sec:reentry}

\begin{figure*}[t] 

  \[\begin{array}{@{}c@{}}
    \irule*[UpToReentry][uptoreentrance]{
      C_1 = \conf{A}{\Gamma_1}{K_1}{s_1}{\cdot}
      \bisim*{R}
      \conf{A}{\Gamma_2}{K_2}{s_2}{\cdot} = C_2
      \\\\
      \forall \vec \eta, C, A', \Gamma_1', \Gamma_2', s_1', s_2'.~\nbox{\big[
      (\lopapp{i}{\_} \not\in \{\vec \eta\} \text{ and }
      \\\conf{A}{\Gamma_1}{\emptyK}{s_1}{\cdot}
      \trans{\lopapp{i}{C}}\trans{\vec \eta}\asymp 
      \conf{A'}{\Gamma_1'}{\emptyK}{s_1'}{\cdot}
      \text{ and } 
      \\\conf{A}{\Gamma_2}{\emptyK}{s_2}{\cdot}
      \trans{\lopapp{i}{C}}\trans{\vec \eta}\asymp 
      \conf{A'}{\Gamma_2'}{\emptyK}{s_2'}{\cdot}
      \\\text{ implies } 
      \Gamma_1'=\Gamma_1
      \text{ and } 
      \Gamma_2'=\Gamma_2
      \text{ and } 
      s_1 = s_1'
      \text{ and } 
      s_2 = s_2'\big]}
      \\\\
      C_1 \trans{\lopapp{i}{C}}\trans{\vec \eta'}\trans{\lopapp{i}{C'}} \conf{A'}{\Gamma_1}{K_1',K_1}{s_1}{e_1'}
      \\\\
      C_2 \trans{\lopapp{i}{C}}\trans{\vec \eta'}\trans{\lopapp{i}{C'}} \conf{A'}{\Gamma_2}{K_2',K_2}{s_2}{e_2'}
    }{
      \conf{A'}{\Gamma_1}{K_1',K_1}{s_1}{e_1'}
      \utreentry*{\bisim{R}}
      \conf{A'}{\Gamma_2}{K_2',K_2}{s_2}{e_2'}
    }
    \quad
   \end{array}\]
  \hrule
  \caption{Up to Proponent Function Re-entry (omitting rules for $\bot$-configurations).}\label{fig:utreentry}
\end{figure*}

The higher-order nature of \lang and its LTS allows infinite nesting of opponent and proponent calls. 
Although up to separation avoids those in a number of examples, here we present a second novel up-to technique, which we call \emph{up to proponent function re-entry} (or simply, up to re-entry).
This technique has connections to the induction hypothesis in the definition of environmental bisimulations in \cite{KoutavasW06}.
However up to re-entry is specifically aimed at avoiding nested calls to proponent functions, and it is designed to work with our symbolic LTS.
In combination with other techniques this eliminates the need to consider configurations with unbounded stacks $K$ in many classical equivalences, including those in~\cite{MeyerS88}.

Up to re-entry is realised by function $\utreentry{}$ in \cref{fig:utreentry}.
The intuition of this up-to technique is that if the application of related functions at $i$ in the $\Gamma$ environments has no potential to change the local stores 
(up to garbage collection)
or increase the $\Gamma$ environments, then there are no additional observations to be made by nested calls to the $i$-functions.
Soundness follows similarly to up-to separation.
\begin{technical}
Soundness follows again by  showing
  $\utreentry{}\comp{\constf{\gfp\WP}}\progress(\utreentry{}\cup g)^\omega\comp{\constf{\gfp\WP}}$
  where $g=\utId{}\sqcup\utperm{}\sqcup\utgc{}$.
\end{technical}

\begin{technical}
\begin{lemma}
  Function $\utreentry{}$ is a sound up-to technique.
\end{lemma}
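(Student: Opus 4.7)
The plan is to mirror the soundness argument for $\utsepconj{}$ given in \cref{lem:utsepconj-sound}, reducing soundness of $\utreentry{}$ to the progress obligation
\[
\utreentry{}\comp{\constf{\gfp\WP}} \progress (\utreentry{}\cup g)^{\omega}\comp{\constf{\gfp\WP}}
\]
with $g = \utId{}\sqcup\utperm{}\sqcup\utgc{}$, using the companion/respectfulness framework of~\cite{PousS11,PousCompanion}. After unfolding $\constf{\gfp\WP}$ to $\simil$, this becomes the requirement that, for every pair $C_1' \utreentry*{\simil} C_2'$ and every challenge $C_1' \trans{\eta} C_1''$, there exists a weak matching $C_2' \wtrans{\eta} C_2''$ with $C_1'' \mathrel{(\utreentry{}\cup g)^{\omega}(\simil)} C_2''$.

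First I would unpack the data witnessing $C_1' \utreentry*{\simil} C_2'$ via rule \iref{uptoreentrance}: a base pair $C_1 = \conf{A}{\Gamma_1}{K_1}{s_1}{\cdot} \simil \conf{A}{\Gamma_2}{K_2}{s_2}{\cdot} = C_2$, the invariance hypothesis that any complete single call of $\Gamma_j(i)$ from the empty-stack flattening of $C_j$ restores $\Gamma_j$ and $s_j$, and explicit weak traces $C_j \wtrans{\lopapp{i}{C}} \wtrans{\vec\eta'} \wtrans{\lopapp{i}{C'}} C_j'$ ending in $\conf{A'}{\Gamma_j}{K_j',K_j}{s_j}{e_j'}$. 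I would then case-split on the rule deriving the challenge $C_1' \trans{\eta} C_1''$, handling \iref{dummy} trivially since $C_2'$ can mimic the transition into $\botconf$.

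For non-\iref{dummy} rules the strategy is to propagate the base simulation $C_1 \simil C_2$ through the common trace: iterated weak matching yields some $C_2^{\star}$ with $C_2 \wtrans{\lopapp{i}{C}}\wtrans{\vec\eta'}\wtrans{\lopapp{i}{C'}} C_2^{\star}$ and $C_1' \simil C_2^{\star}$. Since $\tau$-reduction is deterministic in the LTS and the explicit labels of the trace are rigid, $C_2^{\star}$ coincides with the $C_2'$ fixed by the rule up to $\tau$-closure, up to permutation of the abstract names freshly introduced by the opponent during $\vec\eta'$, and up to garbage collection of names introduced by $C$ that are no longer reachable after the first $i$-call completes. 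Each challenge is therefore matched by connecting $C_2'$ to $C_2^{\star}$ via $\utgc{}$ and $\utperm{}$, using $C_1' \simil C_2^{\star}$ to produce the matching weak transition from $C_2^{\star}$, and finally transporting the residual back through the same $g$-closures so that it lands in $g^{\omega}(\simil)$.

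The principal obstacle will be the opponent-return subcase where $\eta$ fully unwinds the pushed stack $K_1'$ and pops into a frame of the original $K_1$. At that instant the configurations have effectively returned to the base pair's stack shape, and invariance~(2) is what ensures that $\Gamma$ and the store are still those of $C_1$ and $C_2$; so the resulting pair, once the abstract names introduced during $\vec\eta'$ are aligned and garbage collected, is an $\lopret$-successor of a $\simil$-related pair, with residual in $g^{\omega}(\simil)$. Discharging this case requires a careful name-freshness accounting and appeals to closure of the LTS under permutations of abstract names, in the same spirit as the name-permutation argument used in the proof of \cref{lem:utsepconj-sound}. Once that subcase is handled, the residuals of all cases sit in $g^{\omega}(\simil) \subseteq (\utreentry{}\cup g)^{\omega}(\simil)$, establishing progress and hence the soundness of $\utreentry{}$.
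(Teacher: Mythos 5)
Your proposal takes essentially the same route as the paper: the paper's entire argument for this lemma is to record the proof obligation $\utreentry{}\comp{\constf{\gfp\WP}}\progress(\utreentry{}\cup g)^\omega\comp{\constf{\gfp\WP}}$ with $g=\utId{}\sqcup\utperm{}\sqcup\utgc{}$ and note that the details follow ``similarly to up-to separation,'' which is exactly the obligation and closure set you identify. Your case analysis, the use of determinism of the LTS to align the weakly matched configuration with the one fixed by the rule, and the appeal to the invariance premise at the point where the nested call returns are a faithful elaboration of that intended argument.
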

\end{technical}

In \Hobbit\ we require the user to flag the functions to be considered for the up to re-entry technique.
This annotation is later combined with state invariant annotations, as they are often used together.
Below is an example where the state invariant needed is trivial and up to separation together with up to re-entry are sufficient to finitise the LTS and thus prove the equivalence.

\begin{example}\label{ex:reentry}
  ~\\
  \centerline{
  $M = \text{\lstinline{ref x = 0 in fun f -> f (); !x}}$
  \qquad\qquad
  $N =  \text{\lstinline{fun f -> f (); 0}}$
}\smallskip

\noindent
This is like \cref{ex:sep} except the reference in $M$ is created outside of the function body. The LTS for this is as follows. Labels $\langle \bullet; !x_1 \rangle$ are continuations.\vspace{-2mm}
\begin{center}
\scalebox{0.7}{
\begin{tikzpicture}[node distance=2cm]
\node (s0) [pnode] {$M$};
\node (s1) [onode, right of=s0] {$x_1 : 0$};
\node (s2) [pnode, right=1.5cm of s1] {$x_1 : 0$};
\node (s3) [onode, right=1.5cm of s2,label=north:$\langle \bullet; !x_1 \rangle$] 
	{$x_1 : 0$};
\node[draw=none] (s3b) [below=0.8cm of s3] {$\cdots$};
\node (s4) [pnode, right=1.5cm of s3,label=north:$\langle \bullet; !x_1 \rangle$] 
	{$x_1 : 0$};
\node (s5) [onode, right=1.5cm of s4,label=north:$\langle \bullet; !x_1 \rangle ; \langle \bullet; !x_1 \rangle$] 
	{$x_1 : 0$};
\node[draw=none] (s5b) [below=0.8cm of s5] {$\cdots$};
\node[draw=none] (end) [right=1.5cm of s5] {$\cdots$};

\draw [->] 
(s0) edge node[anchor=south]{$\pret{g}$} (s1)
(s1) edge node[anchor=south]{$\oapp{g}{f_1}$} (s2)
(s2) edge node[anchor=south]{$\papp{f_1}{()}$} (s3)
(s3) edge node[anchor=south]{$\oapp{g}{f_2}$} (s4)
(s3) edge node[anchor=east]{$\oret{()}$} (s3b)
(s4) edge node[anchor=south]{$\papp{f_2}{()}$} (s5)
(s5) edge node[anchor=east]{$\oret{()}$} (s5b)
(s5) edge node[anchor=south]{$\oapp{g}{f_3}$} (end)
;
\end{tikzpicture}
}\vspace{-4mm}
\end{center}
Again, the opponent is allowed to reenter $g$ as before. With up-to reentry, however, the opponent skips nested calls to $g$ as these do not modify the state.\vspace{-2mm}
\begin{center}
\scalebox{0.7}{
\begin{tikzpicture}[node distance=2cm]
\node (s0) [pnode] {$M$};
\node (s1) [onode, right of=s0,label=south:$s_1$] {$x_1 :0$};
\node (s2) [pnode, right=1.5cm of s1] {$x_1: 0$};
\node (s3) [onode, right=1.5cm of s2,label=north:$\langle \bullet; !x_1 \rangle$] 
	{$x_1 = 0$};
\node[draw=none] (s3b) [below=0.8cm of s3] {};
\node (s4) [pnode, right=1.5cm of s3,label=north:$\langle \bullet; !x_1 \rangle$] 
	{$x_1 : 0$};
\node (s5) [onode, right=1.5cm of s4,label=south:$s_1$] 
	{$x_1 : 0$};
\node[draw=none] (s5b) [right=1.5cm of s5] {~~~};

\draw [->] 
(s0) edge node[anchor=south]{$\pret{g}$} (s1)
(s1) edge node[anchor=south]{$\oapp{g}{f_1}$} (s2)
(s2) edge node[anchor=south]{$\papp{f_1}{()}$} (s3)
(s3) edge node[anchor=south]{$\oret{()}$} (s4)
(s4) edge node[anchor=south]{$\pret{0}$} (s5)
;
\draw [->,strike through] (s3) -- node[anchor=east]{$\oapp{g}{f_2}$} (s3b);
\draw [->,strike through] (s5) -- (s5b);
\end{tikzpicture}
}\vspace{-2mm}
\end{center}
\end{example}

  \section{Symbolic First-Order Transitions}
  \label{sec:sym}

We extend \lang constants (\Con) with a countable set of symbolic constants ranged over by $\kappa$.
We define symbolic environments 
$
    \sig \mis \cdot \mor (\kappa \frown e),\sig
$,
where $\frown$ is either $=$ or $\not=$, and $e$ is an arithmetic expression over constants,
and interpret them as conjunctions of (in-)equalities, with the empty set interpreted as $\top$.

\begin{definition}[Satisfiability]
  Symbolic environment $\sig$ is \emph{satisfiable} if there exists an \emph{assignment} $\modelM$, mapping the symbolic constants of $\sig$ to actual constants, such that $\modelM\sigma$ is a tautology; we then write 
  $\modelM \vDash \sig$.
\end{definition}

We extend reduction configurations with a symbolic environment $\sig$, written as
$\symredconf{\sig}{s}{e}$. These constants are implicitly annotated with their type.
We modify the reduction semantics from \cref{fig:lang} to consider symbolic constants:
$$
\begin{array}{l@{\;\,}l@{\;\,}ll}
    \symredconf{\sig}{s}{\arithop{\vec c}}    & \redbase & \symredconf{\sig \land ({\kappa}={\mathop{op}(\vec c)})}{s}{\symval w}                     & \text{if } \kappa \text{ fresh} \\
    \symredconf{\sig}{s}{\cond {\symval c} {e_1} {e_2}}          & \redbase & \symredconf{\sig \land ({\symval c}={\true})}{s}{e_1}         & \text{if } {\sig \land  ({\symval c}={\true}}) \text{ is sat.} \\
    \symredconf{\sig}{s}{\cond {\symval c} {e_1} {e_2}}          & \redbase & \symredconf{\sig \land  ({\symval c}={\false})}{s}{e_2}         & \text{if } {\sig \land  ({\symval c}={\false}}) \text{ is sat.}\\
\end{array}
$$
All other reduction semantics rules carry the $\sig$.
The LTS from \cref{sec:lts} is modified to operate over configurations of the form $\sig \vdash C$ or $\cdot\vdash\botconf$.
We let $\symval C$ range over both forms of configurations.
All LTS rules for proponent transitions simply carry the $\sig$; rule \iref{tau} may increase $\sig$ due to the inner reduction.
Opponent transitions generate fresh symbolic constants, instead of actual constants:
labels \lopapp{i}{D\hole[\vec \alpha]} and \lopret{D\hole[\vec\alpha]} in rules \iref{opappf} and \iref{opretf} of \cref{fig:lts}, respectively, contain $D$ with symbolic, instead of concrete constants.
We adapt (bi-)simulation as follows.

\begin{definition}
  Binary relation \bisim{R} on symbolic configurations is a \emph{weak simulation} when
  for all $\symval C_1 \bisim*{R} \symval C_2$ and $\symval C_1 \trans{\eta_1} \symval C_1'$,
  there exists $\symval C_2'$
  such that $\symval C_2 \wtrans{\eta_2} \symval C_2'$ and 
  $\symval C_1' \bisim*{R} \symval C_2'$ and
  $(\symval C_1'.\sigma, \symval C_2'.\sigma)$ is sat.\ and
  for all $\modelM \models (\symval C_1'.\sigma, \symval C_2'.\sigma)$
  it is $\delta\eta_1 = \delta\eta_2$.
\end{definition}

\begin{lemma}
  $(\sigma_1 \vdash C_1) \simil (\sigma_2 \vdash C_2) \text{~iff
  for all~} \delta \models \sigma_1,\sigma_2 \text{~we have~} \delta C_1 \simil \delta C_2$.
\end{lemma}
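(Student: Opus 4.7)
The plan is to prove both directions by constructing candidate (bi)simulation relations and exploiting a tight correspondence between symbolic transitions and their instantiations. Before tackling either direction, I would establish an adequacy lemma relating the two LTSs: (i) \emph{soundness}—if $(\sigma\vdash C)\trans{\eta}(\sigma'\vdash C')$ symbolically and $\delta'\models\sigma'$, then $\delta' C\trans{\delta'\eta}\delta' C'$ concretely; and (ii) \emph{completeness}—if $\delta\models\sigma$ and $\delta C\trans{\eta}C''$ concretely, then there exist a symbolic transition $(\sigma\vdash C)\trans{\eta'}(\sigma'\vdash C')$ and an extension $\delta'\supseteq\delta$ with $\delta'\models\sigma'$, $\delta' C'=C''$ and $\delta'\eta'=\eta$. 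The soundness direction is straightforward by induction on the symbolic transition rules (internal branches of \texttt{if} and \texttt{arith op} reductions just specialise to the $\delta'$-chosen branch; opponent transitions have their fresh $\kappa$'s in the label mapped by $\delta'$ to the fresh concrete values). Completeness is similar but needs to account for arithmetic operations and conditionals, where we must introduce fresh $\kappa$'s constrained to the values they took concretely.

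For the forward ($\Rightarrow$) direction, fix $\delta\models\sigma_1,\sigma_2$ and consider
\[
\mathcal{R}_\delta \;=\; \bigl\{\,(\delta' C_1',\,\delta' C_2')\;\bigm|\;(\sigma_1'\vdash C_1')\simil(\sigma_2'\vdash C_2'),\;\delta'\supseteq\delta,\;\delta'\models\sigma_1',\sigma_2'\bigr\}.
\]
Given a concrete challenge $\delta' C_1'\trans{\eta}D$, completeness yields a symbolic transition $(\sigma_1'\vdash C_1')\trans{\eta'}(\sigma_1''\vdash C_1'')$ and an extension $\delta''\supseteq\delta'$ with $\delta''\eta'=\eta$ and $\delta'' C_1''=D$. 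Using $(\sigma_1'\vdash C_1')\simil(\sigma_2'\vdash C_2')$, we obtain a weak symbolic response on the right whose post-condition $(\sigma_1'',\sigma_2'')$ is satisfiable and forces $\delta''\eta_1'=\delta''\eta_2'$ for every model. Instantiating that response by $\delta''$ via soundness gives the required concrete response, and the residuals again sit in $\mathcal{R}_\delta$.

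For the backward ($\Leftarrow$) direction, define
\[
\mathcal{R}'\;=\;\bigl\{\,((\sigma_1'\vdash C_1'),(\sigma_2'\vdash C_2'))\;\bigm|\;\forall\delta'\models\sigma_1',\sigma_2'.\;\delta' C_1'\simil\delta' C_2'\bigr\}
\]
and show it is a symbolic simulation. Given a symbolic challenge $(\sigma_1'\vdash C_1')\trans{\eta_1'}(\sigma_1''\vdash C_1'')$, pick any $\delta''\models\sigma_1''$ (which exists by satisfiability of the post-condition) and use soundness to obtain the concrete challenge $\delta'' C_1'\trans{\delta''\eta_1'}\delta'' C_1''$. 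By hypothesis $\delta'' C_1'\simil\delta'' C_2'$, so there is a concrete response $\delta'' C_2'\wtrans{\delta''\eta_1'}D'$; completeness lifts it back to a symbolic weak transition $(\sigma_2'\vdash C_2')\wtrans{\eta_2'}(\sigma_2''\vdash C_2'')$ with a suitable extension of $\delta''$. The residuals then belong to $\mathcal{R}'$ by re-running the argument for every $\delta$ that satisfies $(\sigma_1'',\sigma_2'')$.

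The main obstacle is the backward direction, specifically proving that $\mathcal{R}'$ closes under every model of the post-condition, not just the witness $\delta''$ we picked. The subtlety is that two different models may force the concrete LTS down different branches (e.g.\ opposite sides of a conditional), so the same symbolic response on the right cannot be produced from a single concrete witness. The fix is to quantify inside $\mathcal{R}'$: for each $\delta\models(\sigma_1'',\sigma_2'')$ we obtain a concrete response, and we must argue that the symbolic response produced by completeness can be chosen uniformly—this follows because the determinism of \iref{propappf}, \iref{propretf}, \iref{opappf}, \iref{opretf} on the symbolic side means that different $\delta$'s only disagree on the later $\sigma$-growth from internal $\tau$-branches, which are absorbed by the weak transition relation. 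A secondary technical point is handling fresh opponent names/constants consistently across $\delta$ and its extensions; Barendregt-style renaming of $\alpha$'s and $\kappa$'s is sufficient.
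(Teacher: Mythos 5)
The paper does not actually spell out a proof of this lemma: the only supporting material it provides are the two (themselves unproved) appendix lemmas, \emph{Soundness of Symbolic Semantics} and \emph{Completeness of Symbolic Semantics}, which are precisely the adequacy lemma you set up at the start. So your overall route --- prove the two transfer lemmas between symbolic and concrete transitions, then close each direction with an explicitly constructed candidate (bi)simulation --- is the intended one, and your forward direction is the standard argument carried out correctly.

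The one place where your argument is thinner than it needs to be is the closure of $\mathcal{R}'$ in the backward direction, and your proposed fix does not quite work as stated. The symbolic simulation definition only demands \emph{one} weak response whose combined post-condition is satisfiable; you obtain that response by instantiating with a single witness $\delta''$, and then must show the residual pair is back in $\mathcal{R}'$, i.e.\ that $\delta C_1'' \simil \delta C_2''$ for \emph{every} $\delta$ satisfying $(\sigma_1'',\sigma_2'')$. Your claim that differing models ``only disagree on the later $\sigma$-growth from internal $\tau$-branches, which are absorbed by the weak transition relation'' fails when the right-hand expression branches on a symbolic constant shared with the left (e.g.\ an opponent-supplied $\kappa$): two models of $(\sigma_1'',\sigma_2'')$ can then drive the concrete right-hand side to genuinely different successor configurations, so the single symbolic successor $(\sigma_2''\vdash C_2'')$ produced from $\delta''$ need not have concretely similar residuals under the other models. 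To close this you either need to strengthen the candidate relation (e.g.\ quantify over the responses per model and argue that the satisfiable symbolic branches partition the models, giving one symbolic response per branch, each of which separately lands in $\mathcal{R}'$), or observe that the lemma as stated is only as strong as the symbolic simulation definition permits --- which requires making explicit that the constraint accumulated along the chosen weak response already determines the branch, so that every $\delta\models(\sigma_1'',\sigma_2'')$ drives the concrete right-hand side along exactly that branch. You have correctly identified the obstacle; the argument you give for discharging it is not yet a proof.
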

\begin{technical}
\begin{definition}[Bisimilar Expressions]
  Two expressions $\vdash e_1: T$ and $\vdash e_2: T$ are bisimilar, and write $e_1 \bisimil e_2$, when 
  $\cdot \vdash \conf{\emptyA}{\emptyG}{\emptyK}{\emptyS}{e_1} \bisimil
  \cdot \vdash \conf{\emptyA}{\emptyG}{\emptyK}{\emptyS}{e_2}$.
  \defqed
\end{definition}


\begin{lemma}[Soundness of Symbolic Semantics]
  Suppose
  $\modelM \vDash \sig$
  and
  $\symltsconf{\sig}{\symval C}
  \trans{\symval \eta}
  \symltsconf{\sig'}{\symval C'}$;
  then there exists 
  $\modelM' \supseteq \modelM$ such that $\modelM' \vDash \sig'$ and 
  $\modelM{\symval C}
      \trans{\modelM'\symval \eta}
      \modelM'{\symval C'}$.
\end{lemma}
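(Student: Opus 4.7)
The plan is to proceed by case analysis on the LTS rule used to derive $\symltsconf{\sig}{\symval{C}} \trans{\symval{\eta}} \symltsconf{\sig'}{\symval{C'}}$. For each rule I would exhibit an extension $\modelM' \supseteq \modelM$ that satisfies $\sig'$ and verify that the concrete transition $\modelM{\symval C} \trans{\modelM'\symval\eta} \modelM'{\symval C'}$ is derivable by the rules of \cref{fig:lts} instantiated with the concrete reduction semantics of \cref{fig:lang}.

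The proponent rules (\iref{propappf}, \iref{propretf}) and the auxiliary rules (\iref{dummy}, \iref{term}) leave $\sig$ unchanged, so I take $\modelM' = \modelM$ and the corresponding concrete rule fires directly after instantiating the configuration and label through $\modelM$. For the opponent rules (\iref{opappf}, \iref{opretf}), the ultimate pattern $D$ in the symbolic label contains freshly generated symbolic constants at base type, and $\sig' = \sig$. I extend $\modelM$ to $\modelM'$ by choosing arbitrary concrete values of the appropriate type for each fresh constant; since these constants do not appear in $\sig$, any choice preserves $\modelM' \vDash \sig$, and the matching concrete opponent label is $\modelM' \symval\eta$.

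The substantive case is \iref{tau}, where the inner symbolic step is driven by one of the modified base rules. For the arithmetic rule, $\sig' = \sig \land (\kappa = op(\vec c))$ with $\kappa$ fresh, so I set $\modelM'(\kappa) = op(\modelM\vec c)$, which satisfies the new conjunct while keeping $\sig$ satisfied. For the conditional rules, $\sig'$ records the chosen branch (e.g.\ $\sig' = \sig \land (\symval c = \true)$), and the side-condition of the symbolic rule guarantees that $\sig'$ is satisfiable; any witness of $\sig'$ restricted to the constants of $\sig$ provides the required $\modelM'$, and the concrete conditional commits to the matching branch. The remaining base rules (beta, let, allocation, dereference, assignment) do not touch $\sig$, so $\modelM' = \modelM$ suffices.

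The main obstacle is the conditional sub-case of \iref{tau}: \emph{a priori}, an arbitrary $\modelM \vDash \sig$ may already assign the ``wrong'' value to the guard $\symval c$, in which case no extension of $\modelM$ can satisfy $\sig \land (\symval c = \true)$. The intended reading, consistent with standard soundness proofs for symbolic execution, is that $\modelM$ is the assignment built up incrementally along the trace: it is defined exactly on the symbolic constants that have been committed by prior transitions, and each conditional step commits a fresh value for its guard for the first time. Maintaining this invariant as part of the induction hypothesis makes every $\modelM' \supseteq \modelM$ in the case analysis well-defined; dually, a fully instantiated $\modelM'$ at the end of a finite symbolic trace yields a single concrete execution that shadows the symbolic one, which is exactly the statement required for using symbolic bisimulation to reason about all concrete instantiations at once.
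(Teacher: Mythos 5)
The paper states this lemma without proof in the appendix, so there is no reference argument to compare against; I am assessing your proposal on its own terms. Your overall structure (case analysis on the transition rules, with \iref{tau} split by the base reduction rule) is the right skeleton, and your treatment of the proponent, opponent, arithmetic and constraint-free cases is correct. The problem is the conditional case, and your proposed repair does not close it. The guard $\symval c$ of a conditional is in general a symbolic constant that already occurs in $\sig$ --- typically it was produced by an earlier arithmetic step, which added a defining constraint $\kappa = \mathop{op}(\vec c)$ to $\sig$ --- so by the definition of satisfiability any $\modelM \vDash \sig$ is already defined on it, and its value is forced by the values of the operands. Your invariant that ``each conditional step commits a fresh value for its guard for the first time'' is therefore unachievable: under your own handling of the opponent and arithmetic cases, the guard's value is fixed before the conditional is reached, and if the symbolic step selects the branch inconsistent with that value (which the satisfiability side-condition permits, since it only requires \emph{some} model of $\sig\land(\symval c = \true)$ to exist, not that every model of $\sig$ extends to one), then no $\modelM' \supseteq \modelM$ satisfies $\sig'$, and the concrete configuration $\modelM\symval C$ in fact reduces into the other branch.

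What needs adjusting is the statement, not merely the proof. The property actually needed to justify reasoning about all concrete instantiations via the symbolic LTS is: if $\symltsconf{\sig}{\symval C} \trans{\symval\eta} \symltsconf{\sig'}{\symval C'}$ and $\modelM' \vDash \sig'$, then the restriction of $\modelM'$ to the constants of $\sig$ models $\sig$ and $\modelM'\symval C \trans{\modelM'\symval\eta} \modelM'\symval C'$. Equivalently, one can keep the ``extension'' phrasing by strengthening the hypothesis to require that $\modelM$ be extendable to a model of $\sig'$ (automatic for the rules that only constrain fresh constants, and exactly the extra assumption needed for conditionals). With either reformulation your case analysis goes through essentially unchanged; without it, the conditional sub-case of \iref{tau} is a counterexample to the lemma as you are trying to prove it.
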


\begin{lemma}[Completeness of Symbolic Semantics]
  Suppose
  $\modelM \vDash \sig$
  and
  $\modelM \symval C \trans{\eta} C'$;
  then there exist $\symltsconf{\sig'}{\symval C'}$ and $\modelM' \supseteq \modelM$ and $\eta'$ such that
  $\modelM' \vDash \sigma'$ and $\eta = \modelM'\symval\eta$ and 
  $\symltsconf{\sig}{\symval C} 
  \trans{\symval \eta} 
  \symltsconf{\sig'}{\symval C'}$.
  
\end{lemma}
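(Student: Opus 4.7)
The plan is to induct on the derivation of the concrete transition $\modelM\symval C \trans{\eta} C'$, with case analysis on the LTS rule applied. Most rules align structurally with their symbolic counterparts: \iref{propappf} and \iref{propretf} (where ultimate pattern matching on the value depends only on the function/constant skeleton, hence is preserved under concretization), as well as \iref{dummy} and \iref{term}. For these cases we take $\modelM' = \modelM$ and $\sig' = \sig$, pick $\symval\eta$ to be the label produced by the symbolic rule, and verify $\modelM\symval\eta = \eta$ directly from the definitions.

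The substantive work concentrates in two places: the opponent rules \iref{opappf} and \iref{opretf}, which inject fresh constants from the context, and the \iref{tau} rule, whose inner base reduction may augment $\sig$. For the opponent cases the concrete label carries concrete constants $\vec c$ inside the context $D$, while the symbolic rule fires the same transition with fresh symbolic constants $\vec \kappa$ in place of $\vec c$. Choosing $\vec \kappa \notin \dom\modelM$ and setting $\modelM' = \modelM \cup [\vec\kappa \mapsto \vec c]$ yields $\modelM'\symval\eta = \eta$ by construction, and $\modelM' \vDash \sig' = \sig$ since the new constants are unconstrained by $\sig$. Freshness is always arrangeable because $\dom\modelM$ is finite at each step while the pool of symbolic constants is countable.

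The \iref{tau} case is handled by a sub-induction on the base reduction. The purely functional rules (beta, allocation, dereference, assignment, let, tuple) add no constraints, so $\sig' = \sig$ and $\modelM' = \modelM$ suffice. The arithmetic rule concretely yields $\mathop{op}^{\textsf{arith}}(\vec c) = w$; symbolically a fresh $\kappa$ is introduced together with the constraint $\kappa = \mathop{op}(\vec c)$, and extending $\modelM$ by $\kappa \mapsto w$ discharges the new conjunct (noting that $\vec c$ here denotes the values as seen by the symbolic rule, whose concretizations under $\modelM$ are exactly those in the concrete reduction). The conditional is the delicate one: concretely, the reduction selects the branch determined by $\modelM(\symval c) \in \{\true,\false\}$, while the symbolic rule carries a satisfiability side condition $\sig \land (\symval c = b)$ that must be discharged before firing. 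This side condition is satisfied by $\modelM$ itself: since $\modelM \vDash \sig$ and $\modelM(\symval c) = b$, we take $\modelM' = \modelM$ and $\sig' = \sig \land (\symval c = b)$.

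The main obstacle is exactly the conditional case, because it is the only place where the symbolic rule's premise involves a non-trivial satisfiability check that could, in principle, block the symbolic transition. The key insight is that the hypothesised concrete transition $\modelM\symval C \trans{\eta} C'$ already furnishes a witness for satisfiability; no fresh model needs to be synthesised. Beyond this, the proof is bookkeeping about fresh constants and about the fact that ultimate-pattern decomposition commutes with the concretization map $\modelM$, which can be established by a routine lemma prior to the main induction.
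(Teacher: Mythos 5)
The paper states this lemma without giving a proof (it appears only as a bare statement in the technical material), so there is no authorial argument to compare against. Your proposal is correct and is the standard argument one would expect: induction on the concrete transition, extending $\modelM$ with fresh symbolic constants for opponent-supplied values and for the result of arithmetic, and observing that the satisfiability side-condition on the symbolic conditional rule is witnessed by $\modelM$ itself since $\modelM\vDash\sig$ determines the branch taken concretely. The only point worth tightening is the \iref{dummy} case, where the label may also carry constants that must be abstracted to fresh symbolic ones (exactly as in the opponent cases) rather than taking $\modelM'=\modelM$; this is routine bookkeeping, not a gap.
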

\end{technical}
\begin{corollary}[Soundness, Completeness]
  $(\cdot \vdash C_1) \simil (\cdot \vdash C_2) \text{~iff~}
  C_1 \simil C_2$.
\end{corollary}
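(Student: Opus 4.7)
The plan is to obtain this corollary as a direct specialisation of the immediately preceding lemma, which states that $(\sigma_1 \vdash C_1) \simil (\sigma_2 \vdash C_2)$ iff for all $\delta \models \sigma_1, \sigma_2$ we have $\delta C_1 \simil \delta C_2$. I would instantiate both $\sigma_1$ and $\sigma_2$ with the empty symbolic environment and show that the right-hand side of the lemma then collapses to plain concrete bisimilarity $C_1 \simil C_2$.

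For this collapse, two small observations suffice. First, the empty symbolic environment is interpreted as $\top$ and is therefore satisfied by every assignment (in particular by the empty one), so the universal quantifier over $\delta$ is neither vacuous nor restrictive. Second, the configurations $C_1, C_2$ on the right-hand side of the corollary are inherited from the non-symbolic LTS of Section~4 and hence contain no symbolic constants; consequently $\delta C_i = C_i$ for every $\delta$ and $i \in \{1, 2\}$, making the substitution on the right-hand side of the lemma a no-op. Combining these, the clause ``for all $\delta \models \cdot,\cdot$, $\delta C_1 \simil \delta C_2$'' reduces to the single concrete statement $C_1 \simil C_2$, which is exactly what is required.

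The only bookkeeping subtlety is to confirm that configurations of the symbol-free fragment are legitimate inputs to the symbolic LTS, so that the preceding lemma genuinely applies to the form $(\cdot \vdash C)$, and that the $\simil$ symbol on the right-hand side of the corollary does denote concrete bisimilarity from Section~4. Both are immediate from the design choice that the symbolic semantics extends rather than replaces the concrete reduction rules: every concrete configuration is a symbolic configuration with empty $\sigma$, and every concrete transition is a symbolic transition under any assignment. I therefore do not expect any real obstacle beyond stating these trivial coherence checks explicitly.
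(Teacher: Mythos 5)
Your proposal is correct and matches the paper's (implicit) argument: the corollary is exactly the preceding lemma instantiated with $\sigma_1 = \sigma_2 = \cdot$, using that the empty environment is a tautology satisfied by every assignment and that assignments act as the identity on symbol-free configurations. The coherence checks you mention (concrete configurations embed into the symbolic LTS with empty $\sigma$) are indeed all that remains, and they hold by construction.
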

The up-to techniques we have developed in previous sections apply unmodified to
the extended LTS as the techniques do not involve symbolic constants, with the exception of up to beta which requires adapting the definition of a beta move to consider all possible $\modelM$.
The
introduction of symbolic first-order transitions allows us to prove many interesting
first-order examples, such as the equivalence of bubble sort and insertion
sort, an example borrowed from \tool{Hector}~\cite{hector}
(omitted here, see the \Hobbit\ distribution).
Below is a simpler example showing the equivalence of two integer
swap functions which \Hobbit\ is able to prove.
\begin{example}\label{ex:swap}
~\\
    \begin{tabular}{llll}
      $M =\;$
      &\begin{minipage}[t]{0.3\textwidth}\vspace{-1.3em} \begin{lstlisting}[boxpos=t]
let swap xy =
  let (x,y) = xy
  in (y, x)
in swap
\end{lstlisting}
    \end{minipage}
      &
      $N =\;$
      &\begin{minipage}[t]{0.3\textwidth}\vspace{-1.3em} \begin{lstlisting}[boxpos=t]
fun xy -> let (x,y) = xy in
  ref x = x in ref y = y in
  x := !x - !y; y := !x + !y;
  x := !y - !x; (!x, !y)
\end{lstlisting}
    \end{minipage}
    \end{tabular}
\end{example}

  \section{Up to State Invariants}
  \label{sec:up-to-gen}
  The addition of symbolic constants into \lang and the LTS not only allows us to
consider all possible opponent-generated constants simultaneously in a symbolic
execution of proponent expressions, but also allows us to define an additional
powerful up-to technique: \emph{up to state invariants}.
We define this technique in two parts: \emph{up to abstraction} and
 \emph{up to tautology} realised by $\utabs{}$ and $\uttaut{}$.\footnote{%
  \Hobbit\ also implements an \emph{up to $\sigma$-normalisation and garbage collection} technique.}
\\[1ex]
  $\begin{array}{@{}c@{}}
    \irule*[UpTo$\utabs{}$][uptogen]{
      (\sig_1 \vdash C_1)
      \bisim*{R}
      (\sig_2 \vdash C_2)
    }{
      (\sig_1 \vdash C_1)\sub{\vec\kappa}{\vec c}
      \utabs*{\bisim{R}}
      (\sig_2 \vdash C_2)\sub{\vec\kappa}{\vec c}
    }
\quad
    \irule*[UpTo$\uttaut{}$][uptotaut]{
      (\sig_1,\sigma_1' \vdash C_1)
      \bisim*{R}
      (\sig_2,\sigma_2' \vdash C_2)
      \\\\
      \sig_1,\sig_2,\sigma_1',\sigma_2' \text{ is sat.}
      \\\\
      \sig_1,\sig_2 \land \neg(\sigma_1',\sigma_2') \text{ is not sat.}
    }{
      (\sig_1 \vdash C_1)
      \uttaut*{\bisim{R}}
      (\sig_2 \vdash C_2)
     }
  \end{array}$\\[1ex]
The first function $\utabs{}$ allows us to derive the equivalence of configurations by abstracting constants with fresh symbolic constants (of the same type) and instead prove equivalent the more abstract configurations.
The second function $\uttaut{}$ allows us to introduce tautologies into the symbolic environments. These are predicates which are valid; i.e., they hold for all instantiations of the abstract variables. Combining the two functions we can introduce a tautology $I(\vec c)$ into the symbolic environments, and then abstract constants $\vec c$ from the predicate but also from the configurations with symbolic ones, obtaining $I(\vec \kappa)$, which encodes an invariant that always holds.  

Currently in \Hobbit, up to abstraction and tautology are combined and applied in a principled way. Functions can be annotated with the following syntax:
\begin{lstlisting}
  $F$ = fun x {$\vec\kappa$ | $l_1$ as $C_1[\vec\kappa]$, ..., $l_n$ as $C_n[\vec\kappa]$ | $\phi$} -> e
\end{lstlisting}
The annotation instructs \Hobbit\ to use the two techniques when opponent applies related functions where at least one of them has such an annotation. If both functions contain annotations, then they are combined and the same $\vec{\kappa}$ are used in both annotations.
The techniques are used again when proponent returns from the functions,
and proponent calls opponent from within the functions.\footnote{Finer-grain control of application of these up-to techniques is left to future work.}
As discussed in \cref{sec:reentry}, the same annotation enables up to reentry in \Hobbit.

When \Hobbit\ uses the above two up-to techniques it
\begin{inparaenum}[\itshape 1)\upshape]
\item pattern-matches the values
currently in each location $l_i$ with the value context $C_i$ where
fresh symbolic constants $\vec\kappa$ are in its holes,
obtaining a substitution $\sub{\vec\kappa}{\vec c}$;
\item
the up to tautology technique is applied for the formula $\phi\sub{\vec\kappa}{\vec c}$;
and
\item
  the up to abstraction technique is applied by replacing
$\phi\sub{\vec\kappa}{\vec c}$ in the symbolic environment with $\phi$, and the
contents of locations $l_i$ with $C_i[\vec{\kappa}]$.
\end{inparaenum}
\begin{example}\label{ex:meyer-sieber-e6}
  Following is an example by Meyer and Sieber~\cite{MeyerS88} featuring location passing,
  adapted to \lang where locations are local. Full example in \cref{sec:meyer-sieber-e6-full}.
  \begin{lstlisting}
$M =$ let loc_eq loc1loc2 = $[\dots]$ in
     fun q -> ref x = 0 in
    	      let locx = (fun () -> !x) , (fun v -> x := v) in
    	      let almostadd_2 locz {w | x as w | w mod 2 == 0} =
    	        if loc_eq (locx,locz) then x := 1 else x := !x + 2
    	      in q almostadd_2; if !x mod 2 = 0 then _bot_ else ()

$N =$ fun q -> _bot_
  \end{lstlisting}
%
%
  In this example we simulate general references as a pair of read-write functions.
  Function \lstinline{loc_eq} implements a standard location equality test (see \cref{sec:meyer-sieber-e6-full}).
  The two higher-order expressions are equivalent because the opponent can only increase the contents of \lstinline{x} through the function \lstinline{almostadd_2}.
  As the number of times the opponent can call this function is unbounded, the LTS is infinite.
  However, the annotation of function \lstinline{almostadd_2} applies the up to state invariants technique when the function is called (and, less crucially, when it returns), replacing the concrete value of \lstinline{x} with a symbolic integer constant $w$ satisfying the invariant \lstinline{w mod 2 == 0}. This makes the LTS finite, up to permutations of symbolic constants.
  Moreover, up to separation removes the outer functions from the $\Gamma$ environments, thus preventing re-entrant calls to these functions.
  Note the up to techniques are applied even though one of the configurations is diverging (\lstinline{_bot_}).
  This would not be possible with the LTS and bisimulation of~\cite{BiernackiLP19}. 
\end{example}

\begin{technical}
\begin{example}\label{ex:cell-4}
  Using up to abstraction and tautology, \Hobbit\ is able to prove the example below, an adaptation from~\cite{KoutavasW06}.
  The invariant relates locations in the two expressions by assigning the same $\vec \kappa$ to
  \lstinline{wp}, \lstinline{w1} and \lstinline{w2} in both programs.
  \begin{lstlisting}
$M =$ ref y = 0 in
     let set z = {wp, wy1, wy2, wy | y as wy | 
        ((wp mod 2 = 0) && (wy = wy1))
        || ((wp mod 2 <> 0) && (wy == wy2))} -> y := z in
     let get () = !y in
     (set , get)

$N =$ ref y1 = 0 in ref y2 = 0 in ref p = 0 in
     let set1 z =
       {wp, wy1, wy2, wy | p as wp; y1 as wy1; y2 as wy2 | true}
       -> p := !p + 1; if !p mod 2 = 0 then y1 := z else y2 := z in
     let get1 () = if !p mod 2 = 0 then !y1 else !y2 in
     (set1 , get1)
  \end{lstlisting}
\end{example}
\end{technical}


  \section{Implementation and Evaluation}
  \label{sec:imp}
  We implemented the LTS and up-to techniques for \lang in a tool prototype called \Hobbit{}, which we ran on a test-suite of 105 equivalences and 68 inequivalences---3338 and 2263 lines of code for equivalences and inequivalences respectively.

\Hobbit{} is bounded in the total number of function calls it explores per path. We ran \Hobbit{} with a default bound of 6 calls except where a larger bound was found to prove or disprove equivalence---46 examples required a larger bound, and the largest bound used was 348.
To illustrate the impact of up-to techniques, we checked all files (pairs of expressions to be checked for equivalence) in five configurations: default (all up-to techniques on), up to separation off, annotations (up to state invariants and re-entry) off, up to re-entry off, and everything off. The tool stops at the first trace that disproves equivalence, after enumerating all traces up to the bound, or after timing out at 150 seconds. Time taken and exit status (equivalent, inequivalent, inconclusive) were recorded for each file; an overview of the experiment can be seen in 
the following table. All experiments ran on an Ubuntu 18.04 machine with 32GB RAM, Intel Core i7 1.90GHz CPU, with intermediate calls to Z3 4.8.10 to prune invalid internal symbolic branching and decide symbolic bisimulation conditions. All constraints passed to Z3 are of propositional satisfiability in conjunctive normal form (CNF). \linebreak
\vspace{-3mm}

%
\centerline{\small\renewcommand{\arraystretch}{1.1}
\begin{tabular}{| c |@{\;} c@{\;} |@{\;} c@{\;} |@{\;} c@{\;} |@{\;} c@{\;} |@{\;} c@{\;} |}
\hline
& \bf default & \bf  sep. off &\bf annot. off &\bf ree. off &\bf all off\\
\hline
\bf eq.~ & 72 $\mid$ 0 [5.6s] & 32 $\mid$ 0 [1622.9s] & 47 $\mid$ 0 [178.3s] & 57 $\mid$ 0 [177.6s] & 3 $\mid$ 0 [2098.5s]\\
\bf ineq.~ & 0 $\mid$ 68 [20.0s] & 0 $\mid$ 66 [312.8s] & 0 $\mid$ 68 [19.6s] & 0 $\mid$ 68 [20.1s] & 0 $\mid$ 65 [515.7s]\\
\hline
      \multicolumn{6}{|c|}{\small
        \thead{$a \mid b$ [$c$] for $a$ (out of 105) equivalences and \\
        $b$ (out of 68) inequivalences reported taking $c$ seconds in total.}
      }\\
\hline
\end{tabular}}\smallskip

We can observe that \Hobbit{} was sound and bounded-complete for our examples; no false reports and all inequivalences were identified. Up-to techniques also had a significant impact on proving equivalence. With all techniques on, it proved 68.6\% of our equivalences; a dramatic improvement over 2.9\% proven with none on. The most significant technique was up-to separation---necessary for 55.6\% of equivalences proven and reducing time taken by 99.99\%---which was useful when functions could be independently explored by the context. Following was annotations---necessary for 34.7\% of equivalences and decreasing time by 96.9\%---and up-to re-entry---20.8\% of files and decreased time by 96.8\%. Although the latter two required manual annotation, they enabled equivalences where our language was able to capture the proof conditions. Note that, since turning off invariant annotations also turns off re-entry, only 10 files needed up-to re-entry on top of invariant annotations.
In contrast, inequivalences did not benefit as much. This was expected as without up-to techniques \Hobbit{} is still based on bounded model checking, which is theoretically sound and complete for inequivalences. Nonetheless, three files timed out with techniques turned off, which suggests that the reduction in state space is still relevant when searching for counterexamples.


  \section{Comparison with Existing Tools}
  \label{sec:comp}
  There are two main classes of tools for contextual equivalence checking. The first one includes semantics-driven tools that tackle higher-order languages with state like ours. In this class belong game-based tools \tool{Hector}~\cite{hector} and \tool{Coneqct}~\cite{coneqct}, which can only address carefully crafted fragments of the language, delineated by type restrictions and bounded data types. The most advanced tool in this class is \tool{SyTeCi}~\cite{syteci}, which is based on logical relations and removes a good part of the language restrictions needed in the previous tools.
The second class concerns tools that focus on first-order languages, typically variants of C, with main tools including \tool{R\^eve}~\cite{reve}, \tool{SymDiff}~\cite{symdiff} and \tool{RVT}~\cite{rvt}. These are highly optimised for handling \emph{internal loops}, a problem orthogonal to handling the interactions between higher-order functions and their environment, addressed by \Hobbit\ and related tools. We believe the techniques used in these tools may be useful when adapted to \Hobbit, which we leave for future work.

In the higher-order contextual equivalence setting, the most relevant tool to compare with \Hobbit{} is \tool{SyTeCi}. This is because \tool{SyTeCi} supersedes previous tools by proving examples with fewer syntactical limitations. We ran the tools on examples from both \tool{SyTeCi}'s and our own benchmarks---7 and 15 equivalences, and 2 and 7 inequivalences from \tool{SyTeCi} and \Hobbit{} respectively---with a timeout of 150s and using Z3. Unfortunately, due to differences in parsing and \tool{SyTeCi}'s syntactical restrictions, the input languages were not entirely compatible and only few manually translated programs were chosen.

\centerline{\small\renewcommand{\arraystretch}{1.1}
\newcolumntype{Y}{>{\centering\arraybackslash}X}
\begin{tabularx}{0.95\textwidth}{| Y | Y | Y |}
\hline
& \bf SyTeCi & \bf Hobbit \\
\hline
\bf SyTeCi eq. examples & 3 $\mid$ 0 $\mid$ 4 (0.03s) & 1  $\mid$  0  $\mid$  6 (\textless 0.01s) \\
\bf Hobbit eq. examples & 8  $\mid$  0  $\mid$  7 (0.4s) & 15  $\mid$  0  $\mid$  0 (\textless 0.01s) \\
\hline
\hline
\bf SyTeCi ineq. examples & 0  $\mid$  2  $\mid$  0 (0.06s) & 0  $\mid$  2  $\mid$  0 (0.02s) \\
\bf Hobbit ineq. examples & 2  $\mid$  3  $\mid$  2 (0.52s) & 0  $\mid$  7  $\mid$  0 (0.45s) \\
\hline
      \multicolumn{3}{|c|}{\small
  $a \mid b \mid c$ ($d$) for $a$ eq's, $b$ ineq's and $c$ inconclusive's reported taking $d$ sec in total
      }\\
\hline
\end{tabularx}}\smallskip

We were unable to translate many of our examples because of restrictions in the input syntax 
supported by 
\tool{SyTeCi}.
Some of these restrictions were inessential (e.g.\ absence of tuples) while others were substantial: the tool does not support programs where references are allocated both inside and outside functions (e.g.~\cref{ex:bohrb}),
or with non-synchroniseable recursive calls.
Moreover, \tool{SyTeCi} relies on Constrained Horn Clause satisfiability which is undecidable. In our testing
\tool{SyTeCi} sometimes timed out on examples; in private correspondence with its creator this was attributed to Z3's ability to solve Constrained Horn Clauses. Finally, \tool{SyTeCi} was sound for equivalences, but not always for inequivalences as can be seen in the table above; the reason is unclear and may be due to bugs. On the other hand, \tool{SyTeCi} was able to solve equivalences we are not able to handle; e.g.~synchronisable recursive calls and examples like the well-bracketed state problem:\\[2mm]
\centerline{$\begin{aligned}
  M &= \text{\lstinline{ref x = 0 in fun f -> x:=0; f(); x:=1; f(); !x}}
  \\
  N &=  \text{\lstinline{fun f -> f(); f(); 1}}
\end{aligned}$}


  \section{Conclusion}
  \label{sec:rel}
  Our experience with \Hobbit{} suggests that our technique provides a significant contribution to verification of contextual equivalence.
In the higher-order case, \Hobbit{} does not impose language restrictions as present in other tools.
Our tool is able to solve several examples that can not be solved by \tool{SyTeCi}, which is the most advanced tool in this family.
In the first-order case, the problem of contextual equivalence differs significantly as the interactions that a first-order expression can have with its context are limited; e.g.\ equivalence analyses do not need to consider callbacks or re-entrant calls. Moreover,
the distinction between global and local state is only meaningful in higher-order languages where a program phrase can invoke different calls of the same function, each with its own state.
Therefore, tools for first-order languages  focus on what in our setting are internal transitions and the complexities arising from e.g.\ unbounded datatypes and recursion, whereas we focus on external interactions with the context. 

As for limitations, \Hobbit{} {does not handle synchronised} internal recursion and well-bracketed state, which \tool{SyTeCi} can often solve. More generally, \Hobbit{} is not optimised for internal recursion as first-order tools are. 
In this work we have also disallowed reference types in \lang to simplify the technical development;
location exchange is encoded via function exchange (cf.~\cref{ex:meyer-sieber-e6}). 
We intend to address these limitations in future work.

\begin{technical}
\begin{NoHyper}
\newcommand\blfootnote[1]{%
  \begingroup
  \renewcommand\thefootnote{}\footnote{#1}%
  \addtocounter{footnote}{-1}%
  \endgroup

\blfootnote{%
The first two authors were supported, in part, by the Science Foundation Ireland grant 13/RC/2094 (LERO) and co-funded under the European Regional Development Fund. 
For the purpose of Open Access, the authors have a CC BY public copyright licence to any Author Accepted Manuscript version arising from this submission.
}
\end{NoHyper}
\end{technical}



  \bibliographystyle{splncs04}

  \bibliography{references}

  \clearpage
  \emph{\footnotesize This appendix is provided for the benefit of the reviewers, and will not appear in a final version of this paper.}
  \appendix
\includeversion{technical}

\section{Typing rules of \lang}\label{sec:typing}

\newcommand\qweqwe{\quad}
\newcommand\preG[1][]{\Delta#1;\varSigma\vdash}
\newcommand\atype[1]{\mathsf{#1}}
\newcommand\vars{Vars}
\newcommand\refs{Refs}
\newcommand\meths{Meths}

\begin{gather*}
{\infer{c\text{ cons.\ of type }T}{ \preG c : T }}
\qweqwe
\infer{ (x:T)\in\Delta }{ \preG x : T }
\qweqwe
\infer{ \preG e_1: T_1\quad\dots\quad \preG e_n: T_n }{ \preG (e_1,\dots,e_n) : T_1*\dots*T_n }
\\[1mm]
\infer{ op:\vec T\to T\quad  \preG (\vec e):\vec T }{ \preG \arithop{\vec e} : T }
\qweqwe
\infer{ \preG e : \atype{bool}\qquad  \preG (e_1,e_2) : T*T }{ \preG \cond{e}{e_1}{e_2} : T }
\\[1mm]
\infer{\preG v : T \quad  \Delta;\varSigma,l:T\vdash e : T' }{ \preG \new l v e : T' }
\qweqwe 
\infer{ (l:T)\in\varSigma }{ \preG {!l} : T }
\qweqwe
\infer{ (l:T)\in\varSigma \quad \preG e : T }{\preG l := e : \atype{unit} }
\\[1mm]
\infer
{ \preG e: T \rightarrow T' \quad \preG e' : T }
{ \preG ee' : T' }
\qweqwe 
\infer{ \preG[,f:T\to T',x:T] e : T' }{ \preG \lam{x}{e} : T\to T'}
\\[1mm]
\infer{ \preG[,x_1:T_1,\dots,x_n:T_n] e : T \quad  \preG e' : \vec T }{ \preG \elet{(\vec x)}{e'}{e} : T }
\end{gather*}


  \section{Proof of \cref{theorem:SC}}\label{sec:sound-complete}

We let a \emph{trace} be a sequence of app and ret moves (i.e.\ labels), as defined in  \cref{fig:lts}. A trace is \emph{complete} if it is starts with a fully bracketed segment, followed by a proponent return and is afterwards again fully bracketed, i.e.\ it adheres to the grammar:
\begin{align*}
  CT &::= \ Y\,\lpropret{D}\,X\\
  X  &::= \ \varepsilon\mor\lopapp{\alpha}{D}\ Y\,\lpropret{D}\,X\\
  Y  &::= \ \varepsilon\mor\lpropapp{\alpha}{D}\ X\,\lopret{D}\,Y
\end{align*}

\begin{theorem}[\cite{Laird07}]\label{theorem:FA}
Expressions $\vdash e_1: T$ and $\vdash e_2: T$ are contextually equivalent iff the configurations
  $\conf{\emptyA}{\emptyG}{\emptyK}{\emptyS}{e_1},
  \conf{\emptyA}{\emptyG}{\emptyK}{\emptyS}{e_2}$ produce the same complete traces.
\end{theorem}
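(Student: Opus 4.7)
The plan is to prove the two implications of the iff separately, following the standard operational game semantics pattern for full abstraction. The forward direction (complete trace equality implies contextual equivalence) will go via a \emph{composition lemma}, while the backward direction (contextual equivalence implies complete trace equality) will require \emph{definability} of traces by contexts.

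For the ($\Leftarrow$) direction, I would first introduce a notion of \emph{context-as-strategy}, that is, a symmetric transition system for contexts $D$ where opponent and proponent roles are swapped. I would then prove a composition lemma: given $\vdash D\hole[e]:\Unit$, we have $\redconf{\emptyS}{D\hole[e]}\trm$ iff there is a complete trace $t$ produced by the configuration $\conf{\emptyA}{\emptyG}{\emptyK}{\emptyS}{e}$ and a dual trace $\bar t$ produced by the corresponding configuration for $D$ (under the dual LTS), such that the two interactions compose into a terminating run. The proof of this lemma is by simulation: each reduction step of $D\hole[e]$ either corresponds to an internal $\tau$ transition of $e$ or of $D$, or to a communication step where a proponent move of one side is matched by a dual opponent move of the other. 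Given this lemma, if $e_1$ and $e_2$ produce the same complete traces, then for any $D$ the existence of a composing pair $(t,\bar t)$ for $e_1$ transfers to $e_2$, so both terminate on the same contexts.

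For the ($\Rightarrow$) direction, I would take the contrapositive and prove \emph{definability}: for every complete trace $t$ starting with $\pret{D_0}$, one can construct a context $D_t$ with $\vdash D_t\hole[e]:\Unit$ such that $\redconf{\emptyS}{D_t\hole[e]}\trm$ iff the configuration for $e$ produces $t$. The construction of $D_t$ proceeds by induction on the length of $t$. At each point in the context, I maintain a private counter location recording the position reached in the trace so far, together with locations storing the proponent values received (indexed by the $\Gamma$-indices used along $t$). At each opponent move position of $t$, the context dispatches on the counter: for an opponent application $\lopapp{i}{D\hole[\vec\alpha]}$ it calls the stored proponent function at index $i$ with $D\hole[\vec\alpha]$, where each $\alpha$ is instantiated by a fresh locally defined function which, on being invoked by proponent, records the passed arguments, checks they match the pattern expected at the next proponent move of $t$, increments the counter, and dispatches the corresponding opponent reply; and symmetrically for an opponent return. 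If at any point the observed proponent move disagrees with $t$, the context diverges (or does not terminate normally); if the trace is fully consumed, the context terminates. Using $D_t$, if $e_1$ produces $t$ but $e_2$ does not, then $D_t\hole[e_1]$ terminates while $D_t\hole[e_2]$ does not, contradicting $e_1\cxteq e_2$.

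The main obstacle will be the definability construction: faithfully encoding the stack $K$ of pending proponent returns and the potentially unbounded $\Gamma$ of previously exported functions, and doing so while ensuring the context is fully deterministic on its own state and only becomes ``live'' at precisely the right points in the interaction. In particular, higher-order arguments in $D\hole[\vec\alpha]$ must be supplied as genuine function closures that, when invoked, can tell \emph{which} occurrence in the trace they correspond to, which is handled by giving each $\alpha$ its own wrapper closing over a unique tag. A subtlety to watch is well-bracketedness: because each proponent return of $t$ must pop the most recent pending opponent call, the counter-based dispatch must respect this stack discipline, which is guaranteed by the grammar of complete traces $CT$. Once definability is established, the second direction follows immediately by contraposition, and together with the composition argument this yields \cref{theorem:FA}.
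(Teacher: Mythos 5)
Your outline is the standard full-abstraction argument for operational game semantics (composition for one direction, definability for the other), and in substance it is the right one --- but note that the paper does not prove this theorem itself: it is imported by citation from Laird~\cite{Laird07} and used as a black box in the proof of \cref{theorem:SC}. That said, the machinery you describe is exactly what the paper's appendix builds for the closely related direct proof that contextual equivalence implies bisimilarity (\cref{sec:completeness}): configuration--context composition via a dualiser, and a definability construction in which the context keeps a private guard reference toggled so that exactly one of its functions is ``live'' at each point of the interaction, stores received proponent functions in its environment, tests the first-order components of each proponent move against the expected ultimate pattern, and diverges on mismatch. One refinement you should make: your claim that $D_t\hole[e]$ terminates \emph{iff} the configuration for $e$ produces $t$ is slightly too strong as stated, because a context cannot by itself force the proponent's move to be literally the expected one --- the paper's one-step definability (\cref{thm:definty}) only guarantees that any terminating interaction performs \emph{some} move related to the expected one by $\funred_P$, and the gap is then closed using determinacy of the LTS (two traces of the same expression cannot first differ at a proponent move). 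With that adjustment your contrapositive argument for the forward direction goes through exactly as in the cited proof.
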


\paragraph{Proof of \cref{theorem:SC}}
We first note that our LTS is deterministic modulo the selection of fresh locations in tau transitions. Moreover,
by \cref{theorem:FA}, it suffices to show that $e_1,e_2$ are bisimilar iff they have the same complete traces.

If the two expressions have the same complete traces then each of them can match any challenge posed by the other, so long as such a challenge can lead to a complete trace. If a challenge is doomed to not complete, then it can be matched by a transition to $\botconf$. 
Conversely, if $e_1,e_2$ are bisimilar then any transition sequence yielding a complete trace of $e_1$ can be simulated by $e_2$, and viceversa, so the two expressions have the same complete traces.\qed
 
\begin{remark}
  Soundness is also proved via the up to techniques in \cref{sec:soundness}.
  Completeness is also proved directly in \cref{sec:completeness}.
\end{remark}


  \section{Theory of Enhancements}\label{sec:enhancements}
We develop our up-to techniques using the theory of bisimulation enhancements from \cite{PousS11,PousCompanion}.
Here we summarise main definitions, starting with the notions of progressions and compatible functions \cite{PousS11}.
The main result of this section is a set of proof obligations with which we can proof an up-to technique sound, shown in
\cref{lem:prfs}.
%
We start by defining basic operations on monotone functions.

\begin{definition}
  Consider monotone functions $f,g: \pow{X}\rightarrow \pow{X}$ on some set $X$.
  We write $f\comp g$ for the composition of $f$ and $g$, and
  $f \sqcup g$ for the function $\bisim{S} \mapsto f(\bisim{S}) \sqcup g(\bisim{S})$.
  For any set $F$ of functions, we write $\bigsqcup F$ for the function $\bisim{S} \mapsto \bigcup_{f\in F}f(\bisim{S})$.
  We also write $\constf{X}$ to be the constant function with range $\{X\}$.
  We let $f^0 \defeq \utId{}$ and $f^{n+1} \defeq f\comp f^{n}$. Moreover, we write
  $f^{\omega}$ to mean $\bigsqcup_{k<\omega} f^{k}$.
  We write $f\sqsubseteq g$ when, for all $\bisim{S}\in\pow{X}$, $f(\bisim{S}) \subseteq g(\bisim{S})$.
\end{definition}

The theory of enhancements we use here is based on the notion of \emph{weak progression}. 
Weak progression is first defined as a monotone function on configuration relations ($\WP(\bisim R))$,
and then used for a pre-fixpoint predicate on configuration relations (${\bisim{R}}\progress[\WP]{\bisim{S}}$)
and one on monotone functions over said relations ($f\progress[\WP]g$).
The latter functions are meant to encode up-to techniques.

\begin{definition}[Progressions ($\progress$)]
  ~
  \begin{itemize}
    \item
  $
    \WP(\bisim R) = \{(C_1,C_2) \where \forall C_1',\eta \wehave C_1\trans{\eta}C_1' \timplies \exists C_2'.
    C_2 \wtrans{\eta} C_2' \tand C_1' \bisim*{R} C_2'\}
  $.
    \item 
      $\bisim{R}$ \emph{weakly progresses} to $\bisim{S}$, and we write 
      ${\bisim{R}}\progress[\WP]{\bisim{S}}$ when ${\bisim{R}}\subseteq\WP(\bisim{S})$.

    \item For monotone functions 
      $f, g$ 
    we write $f \progress[\WP] g$ 
      when $f\comp\WP \sqsubseteq \WP\comp g$.
\defqed
  \end{itemize}
\end{definition}

\begin{lemma}
  \bisim{R} is a weak simulation when ${\bisim{R}}\progress[\WP]{\bisim{R}}$.
  Also, $({\simil}) = (\gfp\WP)$.
  \qed
\end{lemma}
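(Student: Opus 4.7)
Both parts follow by directly unfolding the definitions, with the second part additionally invoking the Knaster--Tarski fixed point theorem. First I would verify that $\WP$ is monotone: if $\bisim{R}\subseteq\bisim{S}$ and $(C_1,C_2)\in\WP(\bisim{R})$, then each matching witness $C_2'$ that puts the residuals in $\bisim{R}$ also puts them in $\bisim{S}$, so $(C_1,C_2)\in\WP(\bisim{S})$. Since the relations on configurations form a complete lattice under $\subseteq$, a greatest fixed point $\gfp\WP$ exists.

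For the first claim, I would expand ${\bisim{R}}\progress[\WP]{\bisim{R}}$ to ${\bisim{R}}\subseteq\WP(\bisim{R})$, which by the definition of $\WP$ means: for every $(C_1,C_2)\in\bisim{R}$ and every transition $C_1\trans{\eta}C_1'$, there exists $C_2'$ with $C_2\wtrans{\eta}C_2'$ and $C_1'\mathrel{\bisim*{R}}C_2'$. This is literally the defining clause of a weak simulation given earlier in the paper, so the equivalence is immediate.

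For the second claim, I would use the standard Knaster--Tarski characterisation of the greatest fixed point of a monotone operator on a complete lattice as the union of all post-fixed points:
\[
\gfp\WP \;=\; \bigcup\,\{\bisim{R}\mid\bisim{R}\subseteq\WP(\bisim{R})\}.
\]
By the first part, the post-fixed points of $\WP$ are exactly the weak simulations. Since similarity $\simil$ is defined as the largest weak simulation, it coincides with this union, giving $({\simil}) = (\gfp\WP)$.

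The argument is essentially bookkeeping: there is no genuine obstacle beyond checking that $\WP$'s defining clause matches the simulation clause verbatim and that monotonicity of $\WP$ holds so that Knaster--Tarski applies. If anything requires care, it is ensuring the definition of $\WP$ as formulated here really coincides with the one-sided simulation notion used in the paper (not bisimulation), so that part~(2) yields similarity $\simil$ rather than bisimilarity $\bisimil$; this is consistent with the statement and does not cause any complication.
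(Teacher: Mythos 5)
Your proof is correct and matches the paper's treatment: the paper states this lemma without proof precisely because, as you observe, $\bisim{R}\subseteq\WP(\bisim{R})$ is verbatim the weak-simulation clause, and the identification of $\simil$ with $\gfp\WP$ is the standard Knaster--Tarski argument for the monotone operator $\WP$. Your added care about monotonicity and about this being the one-sided (simulation, not bisimulation) notion is sound bookkeeping and introduces no divergence from the paper.
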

The following gives the definition of an up-to technique, what it means to be sound, and the stronger notion of compatibility.
\begin{definition}~
  \begin{itemize}
    \item\emph{Bisimulation up-to:}
      \bisim{R} is a \emph{weak simulation up to $f$} when ${\bisim{R}} \progress[\WP] f(\bisim{R})$.

    \item\emph{Sound up-to technique:}
      Function $f$ is \emph{$\WP$-sound} when $\gfp{\WP\comp f} \subseteq \gfp\WP$.

    \item\emph{Compatibility:}
      Monotone function $f$ is \emph{$\WP$-compatible} when $f\progress[\WP]f$.
  \end{itemize}
\end{definition}

\begin{lemma}[\cite{PousS11}, Lem. 6.3.12] \label{lem:compat-alternative}
  $f\progress[\WP]f$ if and only if 
  for all ${\bisim{R}}\progress[\WP]{\bisim{S}}$ we have $f\comp\WP(\bisim{R}) \subseteq \WP\comp g(\bisim{S})$.
  \qed
\end{lemma}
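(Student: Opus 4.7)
The plan is to prove the biconditional by separately establishing each direction, treating the statement as a lattice-theoretic fact about monotone functions on the complete lattice of relations. Throughout, I will rely only on monotonicity of $f$ and $\WP$, the definition of progression $\bisim{R}\progress[\WP]\bisim{S}$ as the inclusion $\bisim{R}\subseteq\WP(\bisim{S})$, and the definition $f\progress[\WP]f \iff f\comp\WP \sqsubseteq \WP\comp f$ (reading the ``$g$'' in the statement as a typo for ``$f$'').

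For the forward direction, I would assume $f\progress[\WP]f$ and take an arbitrary pair $\bisim{R}\progress[\WP]\bisim{S}$, so $\bisim{R}\subseteq\WP(\bisim{S})$. Applying $f$ to both sides and using monotonicity of $f$ yields $f(\bisim{R}) \subseteq f(\WP(\bisim{S}))$, and then the compatibility hypothesis gives $f(\WP(\bisim{S})) \subseteq \WP(f(\bisim{S}))$, so $f(\bisim{R}) \subseteq \WP(f(\bisim{S}))$, which is the desired conclusion (in the form $f(\bisim{R})\progress[\WP] f(\bisim{S})$).

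For the backward direction, I would assume the universal property and show $f\comp\WP \sqsubseteq \WP\comp f$ pointwise. The key move is a clever instantiation: given any relation $\bisim{S}$, set $\bisim{R} := \WP(\bisim{S})$. Then trivially $\bisim{R}\subseteq\WP(\bisim{S})$, i.e.\ $\bisim{R}\progress[\WP]\bisim{S}$, so the hypothesis applies and produces $f(\WP(\bisim{S})) = f(\bisim{R}) \subseteq \WP(f(\bisim{S}))$. Since $\bisim{S}$ was arbitrary, this is exactly $f\comp\WP \sqsubseteq \WP\comp f$.

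Rather than a real obstacle, the only point requiring minor care is choosing the right witness in the backward direction; once one notices that $\bisim{R} := \WP(\bisim{S})$ reduces the universal statement to the pointwise compatibility inequality, the argument collapses to pure monotone algebra and does not invoke any specific structure of the LTS, its configurations, or the up-to techniques introduced in the preceding sections. This is why this lemma can be used uniformly as the standard proof obligation for all of the new up-to techniques we introduce (e.g.\ separation, re-entry, and state invariants), reducing their soundness to establishing a single progression inequality of the form $f\progress[\WP] g$ with a suitable $g$.
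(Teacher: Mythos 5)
Your proof is correct and is exactly the standard argument for this result: the paper offers no proof of its own here (the lemma is stated with \qed and attributed to \cite{PousS11}, Lem.~6.3.12), and the proof there is the same monotonicity-plus-instantiation argument you give, with $\bisim{R}:=\WP(\bisim{S})$ as the key witness for the backward direction. One remark: besides reading the stray $g$ as $f$, you have also (rightly) read the left-hand side of the displayed inclusion as $f(\bisim{R})$ rather than the literal $f\comp\WP(\bisim{R})=f(\WP(\bisim{R}))$ --- under the literal reading neither direction of the equivalence goes through (in particular your instantiation would only yield $f(\WP(\WP(\bisim{S})))\subseteq\WP(f(\bisim{S}))$), so the statement as printed contains a second typo that your proof silently and correctly repairs.
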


\begin{lemma}[\cite{PousS11}, Thm. 6.3.9] \label{lem:compat-sound} 
  If $f$ is $\WP$-compatible then it is $\WP$-sound.
  \qed
\end{lemma}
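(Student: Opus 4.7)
The plan is to invoke the standard Pous–Sangiorgi strategy: from a bisimulation up to a $\WP$-compatible $f$, build a plain bisimulation by saturating under iterations of $f$. Concretely, let $\bisim{R} \defeq \gfp(\WP \comp f)$. By the fixpoint equation, $\bisim{R} = \WP(f(\bisim{R}))$, which in particular gives $\bisim{R} \progress[\WP] f(\bisim{R})$. To prove $\WP$-soundness, namely $\bisim{R} \subseteq \gfp\WP$, it suffices by coinduction to exhibit some $\bisim{S} \supseteq \bisim{R}$ with $\bisim{S} \progress[\WP] \bisim{S}$. The candidate is
\[
\bisim{S} \defeq f^{\omega}(\bisim{R}) \,=\, \textstyle\bigcup_{n\ge 0} f^n(\bisim{R}).
\]

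First I would check the easy containment $\bisim{R} \subseteq \bisim{S}$, which is immediate from $f^0 = \utId{}$. Next, I would lift compatibility from one step to $n$ steps by induction: from $f \comp \WP \sqsubseteq \WP \comp f$, monotonicity of $f$ and $\WP$, and the induction hypothesis, one gets $f^{n+1}\comp\WP = f^n\comp f\comp\WP \sqsubseteq f^n\comp\WP\comp f \sqsubseteq \WP\comp f^n\comp f = \WP\comp f^{n+1}$. With this iterated compatibility in hand, I would then show that each slice $f^n(\bisim{R})$ sits inside $\WP(\bisim{S})$: using the fixpoint identity $\bisim{R} \subseteq \WP(f(\bisim{R}))$ and then pushing $f^n$ through $\WP$,
\[
f^n(\bisim{R}) \,\subseteq\, f^n\bigl(\WP(f(\bisim{R}))\bigr) \,\subseteq\, \WP\bigl(f^{n+1}(\bisim{R})\bigr) \,\subseteq\, \WP(\bisim{S}),
\]
where the last step uses monotonicity of $\WP$ and $f^{n+1}(\bisim{R})\subseteq\bisim{S}$. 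Taking the union over $n$ yields $\bisim{S} \subseteq \WP(\bisim{S})$, i.e.\ $\bisim{S} \progress[\WP] \bisim{S}$.

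Finally I would conclude: since $\bisim{S}$ is a post-fixpoint of $\WP$, Knaster–Tarski gives $\bisim{S} \subseteq \gfp\WP = (\simil)$, and therefore $\bisim{R} \subseteq \bisim{S} \subseteq \gfp\WP$. This is exactly $\gfp(\WP\comp f) \subseteq \gfp\WP$, i.e.\ the $\WP$-soundness of $f$.

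The only delicate point is the inductive lift of compatibility through $f^n$; it is a routine induction but one must be careful to use monotonicity at the right place and to invoke the hypothesis on the correct side of the composition. Everything else is a direct application of the coinductive characterisation of $\simil$ as the largest post-fixpoint of $\WP$, together with Lem.~\ref{lem:compat-alternative}'s reading of $f\progress[\WP]f$. Monotonicity of $f$ (and hence of each $f^n$), which is part of the definition of compatibility in the excerpt, is tacitly used throughout.
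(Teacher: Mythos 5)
Your proof is correct and is exactly the standard saturation argument behind the cited result: the paper itself gives no proof of this lemma, deferring to Pous--Sangiorgi (Thm.~6.3.9), and your reconstruction---iterating compatibility to get $f^n\comp\WP\sqsubseteq\WP\comp f^n$, showing $f^\omega(\gfp(\WP\comp f))$ is a post-fixpoint of $\WP$, and concluding by Knaster--Tarski---is precisely that proof. The one step you flag as delicate, the inductive lift through $f^n$, is handled correctly, using monotonicity of $f^n$ for the inner step and the induction hypothesis instantiated at $f(\cdot)$ for the outer one.
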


\begin{lemma}[\cite{PousS11}, Prop. 6.3.11 and 6.3.12]\label{lem:pous-basic}
  The following functions are $\WP$-compatible:
  \begin{itemize}
    \item the reflexive \utRefl{} and identity \utId{} functions;
    \item $f \comp g$, for any $\WP$-compatible monotone functions $f$, $g$;
    \item $\bigsqcup F$, for any set $F$ of $\WP$-compatible monotone functions.\qed
    \end{itemize}
\end{lemma}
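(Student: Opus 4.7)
The plan is to unfold the definition of $\WP$-compatibility, namely $f \progress[\WP] f$ which amounts to the pointwise inclusion $f \comp \WP \sqsubseteq \WP \comp f$, and to verify it for each of the three constructions. Monotonicity of $f$, $g$, and $\WP$ is used throughout and follows from standard properties of the underlying relational operators. Because compatibility is a purely algebraic property of $\WP$, the whole argument reduces to chasing inclusions of relations; no new semantic ingredient from our LTS is needed.

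First I would dispatch the two base cases. For $\utId{}$ the claim is immediate, since $\utId \comp \WP = \WP = \WP \comp \utId$. For $\utRefl{}$, given $(C_1, C_2) \in \utRefl(\WP(\bisim R))$, either the pair already lies in $\WP(\bisim R)$ (and there is nothing to do), or it is a reflexive pair $(C, C)$ added by $\utRefl{}$. In the latter case, for every challenge $C \trans{\eta} C'$ we match with the very same transition and observe $(C', C') \in \utRefl(\bisim R)$, so the pair lies in $\WP(\utRefl(\bisim R))$.

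Next I would handle composition. Assuming $f \comp \WP \sqsubseteq \WP \comp f$ and $g \comp \WP \sqsubseteq \WP \comp g$, I compose and use monotonicity of $f$ to push the second inclusion inside $f$:
\[
  f \comp g \comp \WP \;\sqsubseteq\; f \comp \WP \comp g \;\sqsubseteq\; \WP \comp f \comp g.
\]
Finally, for the join $\bigsqcup F$ where each $f \in F$ satisfies $f \comp \WP \sqsubseteq \WP \comp f$, I apply monotonicity of $\WP$ together with the inclusion $f(\bisim R) \subseteq \bigsqcup F(\bisim R)$ to get, for each $f \in F$, $f(\WP(\bisim R)) \subseteq \WP(f(\bisim R)) \subseteq \WP(\bigsqcup F(\bisim R))$; taking the union over $f \in F$ yields $\bigsqcup F(\WP(\bisim R)) \subseteq \WP(\bigsqcup F(\bisim R))$, as required.

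The main (and really only) delicate point is the treatment of the reflexive construction: one must be careful that the reflexive closure at the codomain of $\WP$ still gives a correct match for the added diagonal pairs, which relies on $\trans{\eta}$ always being matchable by itself under weak simulation. No step is expected to present a technical obstacle, as the whole argument is the standard Pous--Sangiorgi reasoning for compatible functions; the only subtlety to watch is that monotonicity of the involved functions is implicit in the statement and is used in the composition and join cases.
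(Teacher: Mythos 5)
Your proof is correct and is the standard Pous--Sangiorgi argument; the paper does not prove this lemma but cites it directly from \cite{PousS11} (Prop.\ 6.3.11 and 6.3.12), and your inclusion-chasing for the identity/reflexive, composition, and join cases is exactly the expected reasoning. The only point worth noting is the one you already flag: for the diagonal pairs added by \utRefl{}, the challenge $C \trans{\eta} C'$ is answered by the same transition viewed as a weak transition $C \wtrans{\eta} C'$, which is always legitimate, so no gap remains.
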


Pous \cite{PousCompanion} extends the theory of enhancements with the notion of companion of $\WP$, the largest 
 $\WP$-compatible function.
\begin{definition}[Companion]
  $\companion[\WP] \defeq \bigsqcup\{ f:\mathcal{P}(\Conf^2)\to\mathcal{P}(\Conf^2) \mid  {f\progress[\WP]f} \}$.
\end{definition}

\begin{lemma}[\cite{PousCompanion}]\label{lem:companion-props}~
  \begin{enumerate}
    \item\label{lem:companion-props-1}
      \protect{\companion[\WP]} is $\WP$-compatible: $\companion[\WP]\progress\companion[\WP]$;
    \item\label{lem:companion-props-2}
      $\WP$ is $\WP$-compatible: $\WP\sqsubseteq\companion[\WP]$;
    \item\label{lem:companion-props-3}
      \protect{\companion[\WP]} is idempotent: ${\utId{}}\sqsubseteq\companion[\WP]$ and 
      $\companion[\WP]\comp\companion[\WP]\sqsubseteq\companion[\WP]$;
    \item\label{lem:companion-props-5}
      \companion[\WP] is $\WP$-sound: $\gfp{\WP\comp\companion[\WP]}\subseteq \gfp\WP$.
      \qed
  \end{enumerate}
\end{lemma}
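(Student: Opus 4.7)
The plan is to establish the four properties of $\companion[\WP]$ in the order that exposes their dependencies, leveraging the compatibility calculus already laid out in Lemmas \textit{pous-basic} and \textit{compat-sound}. The central observation is that $\companion[\WP]$ is defined as the supremum $\bigsqcup F$ where $F=\{f\mid f\progress[\WP]f\}$ is the set of all $\WP$-compatible monotone functions, and that this set is closed under the constructions of Lemma \textit{pous-basic} (composition and arbitrary suprema).

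First I would prove part~(\ref{lem:companion-props-1}), compatibility. By Lemma \textit{pous-basic}, $\bigsqcup F$ is $\WP$-compatible whenever each $f\in F$ is. Since $F$ is by definition the collection of $\WP$-compatible functions, we directly get $\companion[\WP]=\bigsqcup F\progress[\WP]\bigsqcup F=\companion[\WP]$. Next, part~(\ref{lem:companion-props-2}): observe that $\WP\comp\WP\sqsubseteq\WP\comp\WP$ trivially, i.e.\ $\WP\progress[\WP]\WP$, so $\WP\in F$ and hence $\WP\sqsubseteq\bigsqcup F=\companion[\WP]$.

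For part~(\ref{lem:companion-props-3}), both inclusions of idempotence are witnessed by membership in $F$. The identity $\utId{}$ is $\WP$-compatible by Lemma \textit{pous-basic}, so $\utId{}\in F$, giving $\utId{}\sqsubseteq\companion[\WP]$. For the composition, part~(\ref{lem:companion-props-1}) has just shown that $\companion[\WP]$ itself lies in $F$; by the composition clause of Lemma \textit{pous-basic}, $\companion[\WP]\comp\companion[\WP]$ is $\WP$-compatible too, hence is also a member of $F$, yielding $\companion[\WP]\comp\companion[\WP]\sqsubseteq\bigsqcup F=\companion[\WP]$. Finally, part~(\ref{lem:companion-props-5}) follows immediately: by part~(\ref{lem:companion-props-1}), $\companion[\WP]$ is $\WP$-compatible, and Lemma \textit{compat-sound} converts compatibility into soundness, i.e.\ $\gfp(\WP\comp\companion[\WP])\subseteq\gfp\WP$.

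I do not anticipate a genuinely hard step here, since all the heavy lifting is encapsulated in Lemmas \textit{pous-basic} and \textit{compat-sound}. The only potential subtlety is verifying that the definition of $\companion[\WP]$ as a supremum over the \emph{set} (not proper class) of $\WP$-compatible functions is size-legitimate for the invocation of Lemma \textit{pous-basic}; this is standard and can be justified by noting that $F\subseteq\mathcal{P}(\mathcal{P}(\Conf^2)\times\mathcal{P}(\Conf^2))$, so $F$ is a set and $\bigsqcup F$ is well-defined. Once that is noted, the four items unfold mechanically from the lemmas already available.
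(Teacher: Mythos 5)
Your proof is correct and is essentially the standard argument: the paper itself does not prove this lemma but defers to the cited reference \cite{PousCompanion}, and your reconstruction from the closure properties of compatible functions (Lemma~\ref{lem:pous-basic}) together with Lemma~\ref{lem:compat-sound} is exactly how the result is established there. Each item follows as you describe --- membership of $\WP$, $\utId{}$, $\companion[\WP]$ and $\companion[\WP]\comp\companion[\WP]$ in the set of compatible functions, plus the supremum closure for part~(\ref{lem:companion-props-1}) and soundness-from-compatibility for part~(\ref{lem:companion-props-5}) --- and your side remark on the set-theoretic legitimacy of the supremum is accurate.
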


This gives rise a proof technique for proving up-to techniques sound.
\begin{lemma}\label{lem:subseteq-companion-sound}
  Let $f\sqsubseteq\companion[\WP]$. Then $f$ is $\WP$-sound.
\end{lemma}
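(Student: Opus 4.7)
The plan is to derive the soundness of $f$ from that of $\companion[\WP]$, which is already available via \cref{lem:companion-props}(\ref{lem:companion-props-5}). The strategy is a short monotonicity argument showing $\gfp(\WP\comp f)\subseteq \gfp(\WP\comp\companion[\WP])$, after which the companion's own soundness closes the chain $\gfp(\WP\comp f)\subseteq \gfp(\WP\comp\companion[\WP])\subseteq \gfp\WP$.

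First I would unpack the hypothesis $f\sqsubseteq \companion[\WP]$ as the pointwise statement that $f(\bisim{R})\subseteq \companion[\WP](\bisim{R})$ for every relation $\bisim{R}$. Since $\WP$ is monotone (a standard consequence of its definition: adding pairs to the witness set $\bisim{R}$ can only enlarge the set of pairs that can simulate into it), applying $\WP$ to both sides yields $\WP\comp f\sqsubseteq \WP\comp\companion[\WP]$.

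Next I would invoke the general fact that if $F\sqsubseteq G$ pointwise and both are monotone, then $\gfp F\subseteq \gfp G$. Here $F=\WP\comp f$ and $G=\WP\comp\companion[\WP]$ are both compositions of monotone functions (up-to techniques are assumed monotone, and $\companion[\WP]$ is monotone as the supremum of monotone functions), so both have greatest fixpoints by Knaster–Tarski. The short argument is that $\gfp F = F(\gfp F)\subseteq G(\gfp F)$, which makes $\gfp F$ a post-fixpoint of $G$, hence $\gfp F\subseteq \gfp G$ by the coinduction principle. Composing this with \cref{lem:companion-props}(\ref{lem:companion-props-5}), which gives $\gfp(\WP\comp\companion[\WP])\subseteq \gfp\WP$, finishes the argument: $\gfp(\WP\comp f)\subseteq \gfp\WP$, i.e.\ $f$ is $\WP$-sound.

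There is no real obstacle here; the lemma is essentially a corollary of the companion's maximality and its established soundness. The only mild care point is to justify the monotonicity of $\WP\comp f$ and $\WP\comp\companion[\WP]$ so that the Knaster–Tarski greatest fixpoints exist and the pointwise-ordering implication for $\gfp$ applies; both reduce to the standing assumption that all functions under consideration are monotone on $\pow{\Conf^2}$.
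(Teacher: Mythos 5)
Your proof is correct and follows essentially the same route as the paper's: the paper establishes that $f\sqcup\companion[\WP]$ (which equals $\companion[\WP]$ under the hypothesis $f\sqsubseteq\companion[\WP]$) is compatible and invokes \cref{lem:compat-sound}, which is just the derivation of \cref{lem:companion-props}~(\ref{lem:companion-props-5}) that you cite directly. If anything, you make explicit the monotonicity-of-$\gfp$ step ($\gfp(\WP\comp f)\subseteq\gfp(\WP\comp\companion[\WP])$) that the paper's one-line proof leaves implicit.
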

\begin{proof}
  By showing that $f\cup\companion[\WP]\progress[\WP]f\cup\companion[\WP]$ and using \cref{lem:compat-sound}.
\end{proof}

\begin{lemma}[Function Composition Laws]\label{lem:fcl}
  Consider monotone functions $f,g,h: \pow{X}\rightarrow \pow{X}$ and set ${\bisim{S}}\in\pow{X}$. We have
  \begin{enumerate}
    \item $\constf{\bisim{S}} \comp f = \constf{\bisim{S}}$
    \item $(f\sqcup g)\comp h = (f\comp h) \sqcup (g \comp h)$
    \item $h\comp(f\sqcup g) = (h\comp f) \sqcup (h\comp g)$
    \item $(f\sqcup g) \sqsubseteq (f\sqcup h)$ and
          $(f\comp g) \sqsubseteq (f\comp h)$ and
          $(g\comp f) \sqsubseteq (h\comp f)$, when $g\sqsubseteq h$.
    \item
      $f \sqsubseteq f^\omega$ and $f \comp f^\omega = f^\omega \comp f \sqsubseteq f^\omega \comp f^\omega \sqsubseteq f^\omega$.
      \item $f^\omega\circ g = \bigsqcup_{i<\omega}(f^i\circ g)$.
      \qed
  \end{enumerate}
\end{lemma}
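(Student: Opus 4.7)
All six claims are elementary algebraic identities about monotone functions on the powerset lattice $\pow{X}$, and my plan is to prove each of them by unfolding the definitions pointwise (i.e. applying both sides to an arbitrary $\bisim{T}\in\pow{X}$) and reasoning about membership of pairs. Nothing beyond the definitions of $\comp$, $\sqcup$, $\constf{\cdot}$, $f^n$, $f^\omega$ and the monotonicity hypothesis should be needed.

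I would dispatch the items in the order (1),(2),(3),(6),(4),(5). For (1), observe that $(\constf{\bisim{S}}\comp f)(\bisim T) = \constf{\bisim{S}}(f(\bisim{T})) = \bisim{S}$, which is the definition of $\constf{\bisim{S}}$. For (2) and (3), apply both sides to $\bisim{T}$ and use that $(g\sqcup h)(\bisim{U}) = g(\bisim{U})\cup h(\bisim{U})$; in the case of (3) this step additionally appeals to the fact that composition with a fixed function $h$ on the outside distributes over unions of outputs. For (6), unfolding yields $f^\omega(g(\bisim{T})) = \bigcup_{i<\omega} f^i(g(\bisim{T})) = \bigsqcup_{i<\omega}(f^i\comp g)(\bisim{T})$, which is immediate from the definition of $f^\omega$.

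For (4), the three inequalities each reduce to $g(\bisim{T})\subseteq h(\bisim{T})$ for arbitrary $\bisim{T}$: the join case is direct, the case $f\comp g\sqsubseteq f\comp h$ uses monotonicity of $f$ applied to $g(\bisim{T})\subseteq h(\bisim{T})$, and the case $g\comp f\sqsubseteq h\comp f$ is just $g\sqsubseteq h$ instantiated at $f(\bisim{T})$. For (5), I would first prove $f\sqsubseteq f^\omega$ since $f = f^1$ is one of the summands in $\bigsqcup_{k<\omega}f^k$; then establish $f\comp f^\omega = f^\omega\comp f$ by an index-shift argument, using (6) with $g:=f$ together with the identity $f\comp f^i = f^{i+1} = f^i\comp f$; then derive $f\comp f^\omega \sqsubseteq f^\omega\comp f^\omega$ from $f\sqsubseteq f^\omega$ via the monotonicity clause of (4).

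The only item needing a bit of care is the final inequality $f^\omega\comp f^\omega\sqsubseteq f^\omega$. My plan is to argue $(f^\omega\comp f^\omega)(\bisim{T}) = \bigcup_{j<\omega} f^j\!\bigl(\bigcup_{i<\omega} f^i(\bisim{T})\bigr) \subseteq \bigcup_{j<\omega}\bigcup_{i<\omega} f^{i+j}(\bisim{T}) = \bigcup_{k<\omega} f^k(\bisim{T}) = f^\omega(\bisim{T})$, where the key containment pushes $f^j$ inside the union. This is the one point where monotonicity alone is not quite enough in full generality, so the main (mild) obstacle is justifying the push-in step; I would handle it by noting that the $\omega$-join on the right is countable and by using the fact that the inequality is only $\subseteq$ (not equality), so a cofinality/index-shifting argument applied to each element of $f^j(\bigcup_i f^i(\bisim{T}))$ suffices, or alternatively by appealing to the $\omega$-continuity of the functions actually used in the paper. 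Apart from this, the whole lemma is purely bookkeeping.
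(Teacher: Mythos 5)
The paper states this lemma without any proof (it is closed off as immediate), so there is no argument of the authors' to compare yours against; your pointwise-unfolding strategy is the natural one, and items (1), (2), (4), (6) and the claim $f\sqsubseteq f^\omega$ go through exactly as you describe. The genuine problem is that item (3) and the last two claims of item (5) are \emph{not} consequences of monotonicity alone, and your proof inherits this at exactly those points. For (3) you appeal to ``composition with a fixed function $h$ on the outside distributes over unions of outputs'', i.e.\ $h(A\cup B)=h(A)\cup h(B)$; a merely monotone $h$ only gives $h(A\cup B)\supseteq h(A)\cup h(B)$, so only $h\comp(f\sqcup g)\sqsupseteq(h\comp f)\sqcup(h\comp g)$ is provable. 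The same issue defeats the index-shift proof of $f\comp f^\omega=f^\omega\comp f$: indeed $(f^\omega\comp f)(\bisim{T})=\bigcup_i f^{i+1}(\bisim{T})$, but $(f\comp f^\omega)(\bisim{T})=f\bigl(\bigcup_i f^{i}(\bisim{T})\bigr)$, and moving $f$ inside the countable union is precisely the continuity property you do not have. Your proposed cofinality repair of $f^\omega\comp f^\omega\sqsubseteq f^\omega$ fails for the same reason, and no elementwise argument can save it: take $X=\Nat\cup\{\star\}$ and let $f(S)=\{0,\dots,|S|\}$ for finite $S\subseteq\Nat$ and $f(S)=X$ otherwise. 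This $f$ is monotone, $f^\omega(\emptyset)=\Nat$, yet $f(f^\omega(\emptyset))=X$, so $f\comp f^\omega\neq f^\omega\comp f$ and $f^\omega\comp f^\omega\not\sqsubseteq f^\omega$.

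Your fallback---assuming the functions preserve the relevant unions ($\omega$-continuity for (5), finite additivity for (3))---is in fact the only way to make these items true, and is presumably the authors' tacit intention; but it is a genuine strengthening of the stated hypotheses and must then be discharged wherever the lemma is invoked. In particular, in the proof of obligation~(3) of \cref{lem:prfs} the problematic direction is used in the form $(f\sqcup\companion)\comp(f\sqcup\companion)^\omega\comp\constf{\gfp\WP}\sqsubseteq(f\sqcup\companion)^\omega\comp\constf{\gfp\WP}$; there it can be salvaged because $\utId{}\sqsubseteq\companion$ makes the iterates an increasing chain and the concrete up-to functors (e.g.\ $\utsepconj{}$, which is not finitely additive but is defined by finitary rules) preserve unions of such chains. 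So the honest options are: restrict the lemma to union-preserving (or chain-continuous) functions, or weaken (3) and the middle claim of (5) to the inclusions that monotonicity actually yields. As stated, a proof from monotonicity alone does not exist, and your write-up should not present (3) and the equality in (5) as bookkeeping.
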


We distil this up-to technique to the following three proof obligations, each sufficient for proving the soundness of up-to techniques.
\begin{lemma}[POs for Up-To Soundness]\label{lem:prfs}
  Let $f$ be a monotone function and $\bisim{R}$ be a weak simulation; $f$ is $\WP$-sound when one of the following holds:
  \begin{enumerate}
    \item \label{lem:prfs-1}
      $f\progress[\WP]f$; or

    \item \label{lem:prfs-2}
      $f\progress[\WP](f\comp g)$, for some $g\sqsubseteq\companion[\WP]$; or



    \item \label{lem:prfs-3}
      $f = \bigsqcup_{f_i\in F}f_i\comp{\constf{\gfp\WP}}$,
      where $F$ is a set of monotone functions and, 
      for all $f_i\in F$, there exists $g_i\sqsubseteq\companion[\WP]$ such that
      $f_i\comp{\constf{\gfp\WP}}\progress[\WP](f\sqcup g_i)^{\omega}\comp{\constf{\gfp\WP}}$.

 \end{enumerate}
\end{lemma}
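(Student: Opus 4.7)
The plan is to handle the three items in turn, each leveraging the companion $\companion[\WP]$ and the preceding lemmas more heavily.

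Item~1 is immediate: $f\progress[\WP]f$ is exactly the definition of $\WP$-compatibility of $f$, and \cref{lem:compat-sound} yields $\WP$-soundness.

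For Item~2, the strategy is to show $f\sqsubseteq\companion[\WP]$ by proving that $f\comp\companion[\WP]$ is $\WP$-compatible; maximality of the companion then gives $f\comp\companion[\WP]\sqsubseteq\companion[\WP]$, so $f\sqsubseteq f\comp\companion[\WP]\sqsubseteq\companion[\WP]$ (using $\utId{}\sqsubseteq\companion[\WP]$ from \cref{lem:companion-props}), and \cref{lem:subseteq-companion-sound} delivers soundness. Compatibility of $f\comp\companion[\WP]$ follows by the chain $(f\comp\companion[\WP])\comp\WP \sqsubseteq f\comp\WP\comp\companion[\WP] \sqsubseteq \WP\comp f\comp g\comp\companion[\WP] \sqsubseteq \WP\comp f\comp\companion[\WP]\comp\companion[\WP] \sqsubseteq \WP\comp f\comp\companion[\WP]$, applying in turn compatibility of $\companion[\WP]$, the hypothesis $f\progress[\WP] f\comp g$, the bound $g\sqsubseteq\companion[\WP]$, and idempotence of $\companion[\WP]$.

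Item~3 is the most substantive. The key observation is that $f = \bigsqcup_{f_i \in F} f_i \comp \constf{\gfp\WP}$ is a \emph{constant} function on relations, with value $Y \defeq \bigsqcup_i f_i(\simil)$; hence $\gfp(\WP\comp f) = \WP(Y)$, and soundness of $f$ reduces to $Y\subseteq\simil$ (whence $\WP(Y)\subseteq\WP(\simil)=\simil$ by monotonicity of $\WP$). I would define $\bisim S \defeq \companion[\WP](Y\cup\simil)$ and show that $\bisim S$ is a weak simulation, i.e.\ $\bisim S \subseteq \WP(\bisim S)$, which by Tarski gives $\bisim S\subseteq\simil$ and in particular $Y\subseteq\simil$. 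The argument splits in two. First, summing the hypothesis over $i$ (using $\constf{\gfp\WP}\comp\WP=\constf{\gfp\WP}$ and $g \defeq \bigsqcup_i g_i \sqsubseteq \companion[\WP]$) yields $Y\subseteq\WP((f\sqcup g)^\omega(\simil))$. Second, by induction on iterates---each step using $f(\bisim T)=Y\subseteq\bisim S$ and $g(\bisim T)\subseteq\companion[\WP](\bisim S)\subseteq\bisim S$ via idempotence of $\companion[\WP]$---one shows $(f\sqcup g)^\omega(\simil)\subseteq\bisim S$. Combining them gives $Y\subseteq\WP(\bisim S)$; together with $\simil\subseteq\WP(\simil)\subseteq\WP(\bisim S)$ we obtain $Y\cup\simil\subseteq\WP(\bisim S)$; applying $\companion[\WP]$ to both sides and using its compatibility ($\companion[\WP]\comp\WP\sqsubseteq\WP\comp\companion[\WP]$) and idempotence closes the loop to $\bisim S\subseteq\WP(\bisim S)$.

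The main obstacle is the apparent circularity of Item~3: the hypothesis on each $f_i$ already mentions $f$ itself through $f\sqcup g_i$, so a direct compatibility argument cannot work and one is forced to build the simulation coinductively. The crucial choice is to take $\bisim S$ as the $\companion[\WP]$-closure of $Y\cup\simil$, so that both the iterate bound and the final simulation check can absorb the auxiliary functions $g_i$ using idempotence of $\companion[\WP]$, while the constant-function structure of $f$ handles the self-reference cleanly. The remaining work is careful bookkeeping with \cref{lem:fcl}.
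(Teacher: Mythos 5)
Your treatment of items~1 and~2 is correct and essentially identical to the paper's: item~1 is \cref{lem:compat-sound} verbatim, and for item~2 you prove compatibility of $f\comp\companion[\WP]$ by exactly the same chain of inequalities, then conclude $f\sqsubseteq f\comp\companion[\WP]\sqsubseteq\companion[\WP]$ and invoke \cref{lem:subseteq-companion-sound}. For item~3, however, you take a genuinely different and correct route. The paper stays entirely at the level of functions: it observes $f\sqsubseteq(f\sqcup\companion[\WP])^\omega\comp\constf{\gfp\WP}$ and proves that this latter function is $\WP$-compatible by an induction on the finite iterates $(f\sqcup\companion[\WP])^k\comp\constf{\gfp\WP}$, thereby obtaining $f\sqsubseteq\companion[\WP]$ and hence soundness via \cref{lem:subseteq-companion-sound}. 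You instead descend to the level of relations, exploiting the fact that $f$ is a \emph{constant} function with value $Y=\bigsqcup_i f_i(\simil)$, so that soundness reduces to $Y\subseteq(\simil)$; you then exhibit the explicit post-fixpoint $\bisim S=\companion[\WP](Y\cup{\simil})$ of $\WP$, with the hypotheses on the $f_i$ supplying $Y\subseteq\WP((f\sqcup g)^\omega(\simil))$ and an induction on iterates supplying $(f\sqcup g)^\omega(\simil)\subseteq\bisim S$. Both arguments are sound (your use of monotonicity of $g$ in the iterate induction can be routed through monotonicity of $\companion[\WP]$ via $g\sqsubseteq\companion[\WP]$, so nothing is lost there). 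The trade-off is that the paper's proof delivers the stronger conclusion $f\sqsubseteq\companion[\WP]$, which is what lets these techniques be freely combined with the other companion-bounded enhancements used throughout \cref{sec:uptosep-proof}; your proof delivers only $\WP$-soundness of $f$ in isolation, which is all the lemma claims but slightly less reusable. Your argument is arguably more elementary in that it avoids the function-level induction and makes the role of constancy of $f$ explicit, something the paper only uses implicitly via $f_i\comp\constf{\gfp\WP}=f_i\comp\constf{\gfp\WP}\comp\constf{\gfp\WP}$.
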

\begin{proof}~
  \begin{enumerate}
    \item By \cref{lem:compat-sound}.
    \item
      By \cref{lem:subseteq-companion-sound}, it suffices to show $f \sqsubseteq \companion[\WP]$.
      Because $f\sqsubseteq f \comp(\utId{}\sqcup\companion[\WP])\sqsubseteq f\comp\companion[\WP]$,
      it suffices to show $f\comp\companion[\WP]\progress[\WP]f\comp\companion[\WP]$ by 
      unfolding definitions and the premise:
      \begin{align*}
        f \comp \companion[\WP] \comp \WP \sqsubseteq
        f \comp \WP \comp \companion[\WP] \sqsubseteq
        \WP\comp f \comp g \comp \companion[\WP] \sqsubseteq
        \WP\comp f \comp \companion[\WP] \comp \companion[\WP] \sqsubseteq
        \WP\comp f \comp \companion[\WP].
      \end{align*}

    \item 
      Let $g=\bigsqcup_{f_i\in F}g_i$.
      By \cref{lem:subseteq-companion-sound}, it suffices to show $f \sqsubseteq \companion[\WP]$.
      Because
      \begin{align*}
        f_i\comp{\constf{\gfp\WP}}
        &=
        f_i\comp{\constf{\gfp\WP}}\comp{\constf{\gfp\WP}}
        \sqsubseteq
        f\comp{\constf{\gfp\WP}}
        \sqsubseteq
        (f\comp{\constf{\gfp\WP}}) \sqcup (\companion[\WP]\comp{\constf{\gfp\WP}})
        \\&=
        (f\sqcup\companion[\WP])\comp{\constf{\gfp\WP}}
        \sqsubseteq
        (f\sqcup\companion[\WP])^\omega\comp{\constf{\gfp\WP}}
      \end{align*}
      it suffices to show that 
      $(f\sqcup\companion[\WP])^\omega\comp{\constf{\gfp\WP}} \progress[\WP] (f\sqcup\companion[\WP])^\omega\comp{\constf{\gfp\WP}}$.
      This is proven by showing that for all $k$,
      \begin{equation}\tag{$P(k)$}
        (f\sqcup\companion[\WP])^k\comp{\constf{\gfp\WP}} \progress[\WP] (f\sqcup\companion[\WP])^\omega\comp{\constf{\gfp\WP}}.
        \end{equation}
      We proceed by induction on $k$. The base case is straightforward:
$$
\utId{}\circ\constf{\gfp\WP}\circ\WP =\constf{\gfp\WP} = \WP\circ \utId{}\circ\constf{\gfp\WP}\sqsubseteq
\WP\circ(f\sqcup\companion[\WP])^\omega\comp{\constf{\gfp\WP}}
$$
      In the inductive case we assume $P(k)$ and prove $P(k+1)$ as follows:
      \begin{align*}
        &(f\sqcup\companion[\WP])^{k+1}\comp{\constf{\gfp\WP}}\comp\WP
        \\&=
        (f\sqcup\companion[\WP])\comp(f\sqcup\companion[\WP])^{k}\comp{\constf{\gfp\WP}}\comp\WP
        \\&
        \sqsubseteq
        (f\sqcup\companion[\WP])\comp\WP\comp h
        &\text{($P(k),~h=(f\sqcup\companion[\WP])^\omega\comp{\constf{\gfp\WP}}$)}
        \\&
        \sqsubseteq
        (f\comp\WP\comp h) \sqcup (\companion[\WP]\comp\WP\comp h) 
        &\text{(\cref{lem:fcl})}
        \\&
        =
        \left(\bigsqcup_{f_i\in F}(f_i\comp{\constf{\gfp\WP}}\comp\WP\comp h)\right) \sqcup (\companion[\WP]\comp\WP\comp h)
        &\text{(definition of $f$ and \cref{lem:fcl})}
        \\&
        \sqsubseteq
        \left(\bigsqcup_{f_i\in F}(\WP\comp(f\sqcup g_i)^\omega\comp{\constf{\gfp\WP}}\comp h)\right)
        \sqcup (\companion[\WP]\comp\WP\comp{h})
        &\text{(premise)}
        \\&
        \sqsubseteq
        \left(\bigcup_{f_i\in F}(\WP\comp(f\sqcup \companion[\WP])^\omega\comp{\constf{\gfp\WP}})\right)
        \sqcup (\companion[\WP]\comp\WP\comp h)
        &\text{(\cref{lem:fcl} and premise on $g_i$)}
        \\&
        \sqsubseteq
        (\WP\comp(f\sqcup \companion[\WP])^\omega\comp{\constf{\gfp\WP}}) \sqcup (\WP\comp\companion[\WP]\comp h) 
        &\text{(\cref{lem:companion-props}~(\ref{lem:companion-props-1}))}
        \\&
        \sqsubseteq
        (\WP\comp\utId{}\comp h) \sqcup (\WP\comp\companion[\WP]\comp h) 
        &\text{(definition of $h$)}
        \\&
        =
        \WP\comp(\utId{}\sqcup\companion[\WP])\comp h
        &\text{(\cref{lem:fcl})}
        \\&
        = \WP\comp\companion[\WP]\comp h
        &\text{(\cref{lem:companion-props}~(\ref{lem:companion-props-3}))}
        \\&
        \sqsubseteq
        \WP\comp(f\sqcup\companion[\WP])\comp h
        &\text{(\cref{lem:fcl})}
        \\&
        \sqsubseteq
        \WP\comp(f\sqcup\companion[\WP])^\omega\comp {\constf{\gfp\WP}}
        &\text{(\cref{lem:fcl} and definition of h )}
         \\&&
        \qedhere
      \end{align*}
  \end{enumerate}
\end{proof}

As we are only interested in weak progression, in the following we drop the $\WP$ annotation from progressions, compatibility and companion.

\section{Simple Up-To Techniques}\label{app:simple}
We develop our up-to techniques using the theory of bisimulation enhancements from \cite{PousS11,PousCompanion} (see~Appendix~\ref{sec:enhancements}).
We start by presenting three straightforward up-to techniques which nevertheless are needed to reduce the configurations considered by bisimulation, achieving finite LTSs in many examples.
These techniques are \emph{up to permutations}, \emph{beta reductions}, \emph{garbage collection}, and \emph{weakening of knowledge environments}.
To present these techniques we first need the following definitions.

\begin{definition}[Permutations]
We consider permutations of store locations, $\pi_l$, abstract names, $\pi_\alpha$ and environment indices, $\pi_i$, respectively.
When applying a permutation $\pi_l$ to a store $s$, the former acts on both the domain and range of the latter.
When applying a permutation $\pi_i$ to an environment $\Gamma$, it only acts on its domain; other types of permutations only act on the codomain of $\Gamma$.
\end{definition}

\begin{definition}[$\beta$-move]\label{def:beta-move} 
  A $\tau$-transition $C \trans{\tau} C'$ is called a \emph{$\beta$-move}, and
  we write $C \betatrans C'$, when for all transitions $C \trans{\eta} C''$,
  one of the following holds:
  \begin{itemize}
    \item $\eta=\tau$ and $C'=C''$; or
    \item there exists $C'''$ such that $C' \trans{\eta} C'''$ and $C'' \betatrans C'''$ or $C''=C'''$.
  \end{itemize}
\end{definition}

\begin{definition}[Garbage Collection]\label{def:asymp}
  We let $(\asymp)$ be the largest equivalence relation between well-formed configurations with the axioms:
  \begin{itemize}
    \item $\conf{A}{\Gamma}{K}{s}{\hat e} \asymp \conf{A\uplus A'}{\Gamma}{K}{s}{\hat e}$
    \item $\conf{A}{\Gamma}{K}{s}{\hat e} \asymp \conf{A}{\Gamma}{K}{s,s_g}{\hat e}$
  \end{itemize}
  for any $A$, $A'$, $K$, $s$, $s_g$, $\hat e$ with $A'\cap \an{\Gamma,K,s,\hat e}=\emptyset$,
  $\dom{s_g}\cap\fl{\Gamma,K,s,\hat e}=\emptyset$.
  \defqed
\end{definition}

\begin{lemma}\label{lem:lts-perm}
  Let $\pi_l$, $\pi_\alpha$, and $\pi_i$ be
  permutations on locations,  abstract names, and indices, respectively,
 and $\pi=\pi_l\pi_\alpha\pi_i$.
  If $C \trans{\eta} C'$ then
  $C\pi \trans{\eta\pi_\alpha\pi_i} C'\pi$.
\end{lemma}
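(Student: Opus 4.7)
The plan is to proceed by case analysis on the last LTS rule used to derive $C \trans{\eta} C'$ in \cref{fig:lts}. For each rule, I would show that applying $\pi = \pi_l\pi_\alpha\pi_i$ to the premises yields premises of the same rule whose conclusion coincides with $C\pi \trans{\eta\pi_\alpha\pi_i} C'\pi$. Note that transition labels contain only indices and abstract names, never locations, which explains why $\pi_l$ does not appear in the renamed label; since $\pi_l$, $\pi_\alpha$ and $\pi_i$ act on disjoint syntactic categories, they commute, so $\pi$ can be applied componentwise.

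Before the case analysis I would establish two equivariance helper lemmas. First, the base reduction relation is equivariant: if $\redconf{s}{e} \redbase \redconf{s'}{e'}$ then $\redconf{s\pi}{e\pi} \redbase \redconf{s'\pi}{e'\pi}$. This is by inspection of the rules in \cref{fig:lang}; the only interesting side condition is $l \notin \dom{s}$ in the location-allocation rule, which is preserved because $\pi_l$ is a bijection on locations (and $l$ can be alpha-renamed if needed). Second, ultimate pattern construction is equivariant: $(D,\Gamma') \in \ulpatt(v)$ iff $(D\pi, \Gamma'\pi) \in \ulpatt(v\pi)$, and similarly for $\ulpatt(T)$, since both are defined by a purely syntactic deconstruction that commutes with renamings (the choice of hole indices for $\ulpatt(v)$ can be transported along $\pi_i$).

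With these lemmas in hand, rules \iref{propappf} and \iref{propretf} reduce to applying $\pi$ to $v$ and using equivariance of $\ulpatt$; the fresh index chosen for $\Gamma'$ on the right can be taken as the $\pi_i$-image of the one on the left. Rules \iref{opappf} and \iref{opretf} are similar, with the freshness condition $\vec\alpha \notin A$ (and analogously for locations introduced in the ultimate pattern) preserved because $\pi_\alpha$ is a bijection; the side-condition $\app{\Gamma(i)}{D\hole[\vec\alpha]}\funred e$ is closed under renaming by direct inspection of \funred. Rule \iref{tau} uses the base-reduction equivariance lemma lifted through the evaluation context. Finally, \iref{dummy} and \iref{term} are immediate, since $\botconf$ is invariant under all permutations and the $\lterm$ label contains no names.

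The main bookkeeping obstacle, rather than a conceptual one, is handling the Barendregt-style freshness conditions on locations and abstract names generated by $\ulpatt$ and by \iref{opretf}: one must ensure that the fresh names used in the transition of $C\pi$ can be chosen consistently with the $\pi$-images of the fresh names used in the transition of $C$. Since all three permutations are bijections with the domains they act on, this consistency can always be arranged, and the verification collapses to routine renaming manipulations.
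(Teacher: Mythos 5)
Your proposal is correct and matches the paper's argument: the paper disposes of this lemma with the one-line justification ``by nominal sets reasoning (all transition rules are closed under permutations),'' which is precisely the equivariance argument you spell out via case analysis on the rules, equivariance of the base reduction and of ultimate pattern matching, and bijectivity of the permutations to preserve freshness side-conditions. Your version is simply the unpacked form of the same reasoning, so no gap and no genuinely different route.
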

\begin{proof}
    By nominal sets reasoning (all transition rules are closed under permutations).
\end{proof}
\begin{lemma}\label{lem:lts-permute}
  Let $C=\conf{A}{\Gamma}{K}{s}{e} \trans{\eta} \conf{A'}{\Gamma'}{K'}{s'}{e'}=C'$; then
  for all finite $L_0,A_0,I_0$ there exist $\pi_l,\pi_\alpha,\pi_i$ such that
\[ C \trans{\eta\pi} C'\pi \text{ and }
  (A'\sdif A)\cap A_0 = (\dom{s'\pi}\sdif\dom{s})\cap L_0 = (\dom{\Gamma'\pi}\sdif\dom{\Gamma})\cap I_0=\emptyset
\]
where $\pi=\pi_l\pi_\alpha\pi_i$.
\end{lemma}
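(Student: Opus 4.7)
The plan is to reduce the statement to \cref{lem:lts-perm} by choosing permutations that fix the source configuration $C$ entirely while relocating the ``fresh'' components introduced by the transition out of the prohibited sets. The key observation is that a single transition $C\trans{\eta}C'$ can introduce only finitely many new elements of each sort: fresh abstract names $\vec\alpha$ (in rules \iref{opappf}, \iref{opretf}), fresh environment indices populating $\Gamma'\setminus\Gamma$ (in rules \iref{propappf}, \iref{propretf}), and fresh locations populating $\dom{s'}\setminus\dom{s}$ (coming from $\new l v e$ reductions under \iref{tau}). Write $A_f=A'\setminus A$, $I_f=\dom{\Gamma'}\setminus\dom\Gamma$, $L_f=\dom{s'}\setminus\dom{s}$; all three sets are finite.

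First, I would construct the permutations as products of transpositions. Since $A_0\cup A\cup A_f$ is finite and abstract names are drawn from an infinite supply, I can pick for each $\alpha\in A_f$ a fresh target $\alpha'$ outside $A_0\cup A\cup A_f$ and let $\pi_\alpha$ be the composition of swaps $(\alpha\ \alpha')$; by construction $\pi_\alpha$ is the identity on $A$ and maps $A_f$ into the complement of $A_0$. Define $\pi_i$ on $I_f$ and $\pi_l$ on $L_f$ analogously, so each acts as the identity on the ``old'' components ($\dom\Gamma$, $\dom s$, and all elements occurring in $A,\Gamma,K,s,\hat e$). Setting $\pi=\pi_l\pi_\alpha\pi_i$, a direct check of each component yields $C\pi=C$: the codomain of $\Gamma$ mentions only names already in $A$ and locations in $\dom s$, the codomain of $s$ and the syntax in $K$, $\hat e$ likewise; none of these are moved.

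Second, I would apply \cref{lem:lts-perm} to obtain $C\pi\trans{\eta\pi_\alpha\pi_i}C'\pi$, which by $C\pi=C$ gives exactly $C\trans{\eta\pi}C'\pi$ (noting that $\pi_l$ acts trivially on $\eta$ since labels do not carry locations). The disjointness conditions follow immediately from the choice of targets: $(A'\sdif A)\cap A_0 = \pi_\alpha(A_f)\cap A_0=\emptyset$, and analogously for $(\dom{s'\pi}\sdif\dom s)\cap L_0$ and $(\dom{\Gamma'\pi}\sdif\dom\Gamma)\cap I_0$.

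The main subtlety is the verification that $\pi$ genuinely fixes $C$ and acts on $C'$ only on the fresh components. This requires a careful case analysis on the transition rule and on where the ``new'' elements can appear in $C'$; in particular one must be sure that the targets chosen for $\pi_\alpha,\pi_i,\pi_l$ avoid not just the prohibited sets $A_0,L_0,I_0$ but also the old sets $A,\dom\Gamma,\dom s$ and each other's images, so that freshness conditions in the derivation of the transition are preserved. Beyond this bookkeeping the argument is routine.
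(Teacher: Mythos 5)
Your proposal is correct and follows essentially the same route as the paper's own (one-line) proof: invoke \cref{lem:lts-perm} with permutations chosen to rename only the freshly introduced names, indices and locations of $C'$ away from $A_0,I_0,L_0$, so that $C\pi=C$. The extra bookkeeping you spell out (finiteness of the fresh sets, fixing the old components, $\pi_l$ acting trivially on labels) is exactly what the paper leaves implicit.
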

\begin{proof}
  By \cref{lem:lts-perm}, picking permutations $\pi$ that rename new  names in $C'$  to fresh ones, and therefore such that $C\pi=C$.
\end{proof}
\begin{corollary}\label{lem:lts-permute-weak}
  Let $C=\conf{A}{\Gamma}{K}{s}{e} \wtrans{\eta} \conf{A'}{\Gamma'}{K'}{s'}{e'}=C'$; then
  for all finite $L_0,A_0,I_0$ there exist $\pi_l,\pi_\alpha,\pi_i$ such that
\[ C \wtrans{\eta\pi} C'\pi \text{ and }
  (A'\sdif A)\cap A_0 = (\dom{s'\pi}\sdif\dom{s})\cap L_0 = (\dom{\Gamma'\pi}\sdif\dom{\Gamma})\cap I_0=\emptyset
\]
where $\pi=\pi_l\pi_\alpha\pi_i$.
\end{corollary}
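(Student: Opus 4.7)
The plan is to proceed by induction on the length $n$ of the underlying sequence of single-step LTS transitions realising $C \wtrans{\eta} C'$. Since $\wtrans{\eta}$ unfolds to ${\trans{\tau}}^* \trans{\eta} {\trans{\tau}}^*$ (or just ${\trans{\tau}}^*$ when $\eta = \tau$), such a sequence is finite and we may write $C = C_0 \trans{\eta_1} C_1 \trans{\eta_2} \cdots \trans{\eta_n} C_n = C'$, with at most one $\eta_k$ equal to $\eta$ and the rest equal to $\tau$. The base case $n = 0$ is immediate: take $\pi_l, \pi_\alpha, \pi_i$ all to be the identity, and the three freshness conditions hold vacuously.

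For the inductive step, decompose as $C \trans{\eta_1} C_1 \wtrans{\eta'} C'$ where $\eta' \eta_1$ concatenate (in the appropriate order) to $\eta$. Apply \cref{lem:lts-permute} to the single step $C \trans{\eta_1} C_1$ with the given avoid sets $L_0, A_0, I_0$ to obtain a permutation $\pi^{(1)} = \pi_l^{(1)} \pi_\alpha^{(1)} \pi_i^{(1)}$ such that $C \trans{\eta_1 \pi^{(1)}} C_1 \pi^{(1)}$ and the names newly added in this step avoid $L_0, A_0, I_0$. By \cref{lem:lts-perm} applied step-wise to the remaining $\wtrans{\eta'}$, we lift it to $C_1 \pi^{(1)} \wtrans{\eta' \pi^{(1)}} C' \pi^{(1)}$. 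Now apply the inductive hypothesis to this shorter weak transition with the enlarged avoid sets
\[
  L_0' = L_0 \cup \dom{s_1 \pi_l^{(1)}}, \quad A_0' = A_0 \cup (A_1 \pi_\alpha^{(1)}), \quad I_0' = I_0 \cup \dom{\Gamma_1 \pi_i^{(1)}},
\]
where $s_1, A_1, \Gamma_1$ are the store, abstract-name and environment components of $C_1$. This yields a second permutation $\pi^{(2)}$ with $C_1 \pi^{(1)} \wtrans{\eta' \pi^{(1)} \pi^{(2)}} C' \pi^{(1)} \pi^{(2)}$ and the corresponding freshness property against $L_0', A_0', I_0'$. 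Composing, $C \wtrans{\eta \pi^{(1)} \pi^{(2)}} C' \pi^{(1)} \pi^{(2)}$, so set $\pi = \pi^{(1)} \pi^{(2)}$.

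The verification that the three final freshness conditions hold partitions the newly appearing names into those introduced by the first step and those introduced by the remaining sequence. The former avoid $L_0, A_0, I_0$ by construction of $\pi^{(1)}$, while the latter avoid $L_0', A_0', I_0' \supseteq L_0, A_0, I_0$ by the inductive hypothesis. The key technical device underlying the argument, which is exactly the remark used in the proof of \cref{lem:lts-permute}, is that each $\pi^{(i)}$ can be chosen to rename only the genuinely fresh names produced at step $i$ and to act as the identity on everything already present; this guarantees that $\pi^{(2)}$ does not disturb the freshness already secured by $\pi^{(1)}$ and that the composed permutation fixes $C$.

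The main obstacle is this bookkeeping: choosing each $\pi^{(i)}$ so that it only touches names produced after step $i-1$, and growing the avoid sets fast enough between inductive calls to absorb all names that must remain fixed. Once the permutations are organised as above, the composition collapses to the single permutation $\pi$ witnessing the claim, and the statement follows.
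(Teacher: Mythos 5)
Your proof is correct and follows essentially the same route as the paper, whose entire argument is ``by induction on the length of the transition sequence, using Lemma~\ref{lem:lts-permute}''. Your write-up merely makes explicit the bookkeeping (enlarging the avoid sets between inductive calls and choosing each permutation to fix all previously present names) that the paper leaves implicit.
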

\begin{proof}
  By induction on the length of the transition from $C_1$, using \cref{lem:lts-permute}.
\end{proof}

\begin{figure*}[t] 

  \[\begin{array}{@{}c@{}}
     \irule*[UpToBeta][uptobeta]{
      C_1'
      \bisim*{R}
      C_2'
      \\\\
       C_1 \betatranss C_1'
      \\\\
      C_2 \betatranss C_2'
    }{
      C_1
      \utbeta*{\bisim{R}}
      C_2
    }
    \quad
    \irule*[UpToPerm][uptoperm]{
      C_1
      \bisim*{R}
      C_2
    }{
      C_1\pi_{l_1}\pi_\alpha\pi_i
      \utperm*{\bisim{R}}
      C_2\pi_{l_2}\pi_\alpha\pi_i
    }
    \quad
    \irule*[UpToGC][uptogc]{
      C_1\asymp
      \bisim*{R}
      \asymp C_2
    }{
      C_1
      \utgc*{\bisim{R}}
      C_2
    }
    \qquad
    \irule*[UpToGC$\bot$][uptogcbot]{
      C_1 \asymp
      \bisim*{R}
      \botconf
    }{
      C_1
      \utgc*{\bisim{R}}
      \botconf
    }
     \\[1.5em]
    \irule*[UpToWeakening][uptoweakening]{
      \conf{A_1}{\Gamma_1,\maps{i}{v_1}}{K_1}{s_1}{\hat e_1}
      \bisim*{R}
      \conf{A_2}{\Gamma_2,\maps{i}{v_2}}{K_2}{s_2}{\hat e_2}
    }{
      \conf{A_1}{\Gamma_1}{K_1}{s_1}{\hat e_1}
      \utweak*{\bisim{R}}
      \conf{A_2}{\Gamma_2}{K_2}{s_2}{\hat e_2}
    }
    \quad
     \irule*[UpToWeakening$\bot$][uptoweakeningbot]{
       \conf{A_1}{\Gamma_1,\maps{i}{v_1}}{K_1}{s_1}{\hat e_1}
      \bisim*{R}
      \botconf
    }{
      \conf{A_1}{\Gamma_1}{K_1}{s_1}{\hat e_1}
      \utweak*{\bisim{R}}
      \botconf
    }
  \end{array}\]
  \vspace{-1mm}
  \hrule
  \vspace{-1mm}
  \caption{Simple Up-to techniques.}\label{fig:utsimple}
\end{figure*}

The monotone functions on relations $\utperm{}$, $\utbeta{}$,
$\utgc{}$, and $\utweak{}$, as shown on \cref{fig:utsimple}, define the sound enhancement techniques: up to permutations, up to beta reductions, up to garbage collection, and up to weakening, respectively.
\begin{lemma}
   Functions $\utperm{}$,
  $\utbeta{}$,
  $\utgc{}$, and
  $\utweak{}$ are sound up-to techniques.
\end{lemma}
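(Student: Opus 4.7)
The plan is to prove each of the four functions $\WP$-compatible, i.e., $f \progress f$, which by Lemma~\ref{lem:prfs}(\ref{lem:prfs-1}) suffices for soundness. For each case I would fix a relation $\bisim{R}$ with $\bisim{R} \progress \bisim{R}$ and show that any $(C_1, C_2) \in f(\bisim{R})$ lies in $\WP(f(\bisim{R}))$: given a challenge $C_1 \trans{\eta} C_1'$, I construct a matching weak transition $C_2 \wtrans{\eta} C_2'$ with $(C_1', C_2') \in f(\bisim{R})$. Transitions derived from the \iref{dummy} rule (ending in $\botconf$) are always matchable using the dedicated $\botconf$-rule of each $f$.

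For $\utperm{}$, given a transition from $C_1\pi$ where $\pi = \pi_{l_1}\pi_\alpha \pi_i$, I would apply $\pi^{-1}$ and use Lemma~\ref{lem:lts-perm} (closure of the LTS under permutations) to lift it to a transition from $C_1$. Labels mention abstract names and indices but, since \lang has no reference-valued values, not locations, so the two sides may share $\pi_\alpha \pi_i$ while using independent location permutations. I would then match via $\bisim{R}$ and reapply Lemma~\ref{lem:lts-perm} on the $C_2$ side with $\pi_{l_2}\pi_\alpha \pi_i$. For $\utweak{}$, the key observation is that any transition from the weakened (smaller-$\Gamma$) configuration cannot reference the dropped index $i$: the only transition rule that could is opponent application at $i$ (\iref{opappf}/\iref{opappc}), and that rule requires $i \in \dom{\Gamma}$. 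Hence the very same transition fires from the extended configuration, is matched by $\bisim{R}$, and since no rule removes entries from $\Gamma$, the residuals still contain $i \mapsto v_1, v_2$ and are again related by $\utweak{}$.

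For $\utgc{}$, the axioms of $\asymp$ (Definition~\ref{def:asymp}) extend only by abstract names outside $\an{\Gamma, K, s, \hat e}$ and store locations disjoint from $\fl{\cdot}$, so these components are unreachable by any rule. Given $C_1 \asymp C_1^{\star} \bisim*{R} C_2^{\star} \asymp C_2$ and $C_1 \trans{\eta} C_1'$, I would project this to a transition $C_1^{\star} \trans{\eta} \tilde C_1$ with $C_1' \asymp \tilde C_1$, match via $\bisim{R}$ to obtain $C_2^{\star} \wtrans{\eta} \tilde C_2$, and lift back to $C_2 \wtrans{\eta} C_2'$ with $C_2' \asymp \tilde C_2$. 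For $\utbeta{}$, given $C_1 \betatranss C_1' \bisim*{R} C_2' \,{{}_{\beta}{\!\twoheadleftarrow}}\, C_2$ and a transition $C_1 \trans{\eta} C_1''$, the confluence-like clause of Definition~\ref{def:beta-move} aligns the challenge with the $\beta$-sequence, yielding some $C_1^{\star}$ along the sequence with $C_1^{\star} \trans{\eta} C_1'''$ and $C_1'' \betatranss C_1'''$ (or $\eta = \tau$ and the challenge coincides with a $\beta$-step, handled by reflexivity). I would match $C_1^{\star} \trans{\eta} C_1'''$ via $\bisim{R}$ by $C_2' \wtrans{\eta} C_2'''$ and prepend the $\beta$-moves $C_2 \betatranss C_2'$ to obtain the required weak transition from $C_2$; an induction on the length of $\betatranss$ formalises the alignment.

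The main obstacle is expected to be the $\utgc{}$ case, where opponent transitions generate fresh abstract names and store locations that must be chosen disjoint from the garbage on both sides, as well as from all name/location sets accumulated along the silent prefix of the matching weak transition on the right. This is exactly what Corollary~\ref{lem:lts-permute-weak} provides, but threading it through the progression may require strengthening the target from bare $\utgc{}$ to $\utgc{} \sqcup \utperm{}$ and invoking Lemma~\ref{lem:prfs}(\ref{lem:prfs-3}) rather than the pure compatibility form of part~(\ref{lem:prfs-1}), mirroring the structure of the $\utsepconj{}$ soundness argument earlier in the paper. A secondary subtlety in the $\utbeta{}$ case is preserving the $\betatranss$ witnesses on both sides when $C_1''$ is reached only after extending $C_1'''$ by further $\beta$-moves, requiring the residuals to be related by $\utbeta(\bisim{R})$ rather than $\bisim{R}$ directly.
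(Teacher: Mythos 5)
Your treatment of $\utperm{}$ and $\utbeta{}$ matches the paper's: both are shown compatible in the sense of \cref{lem:prfs}~(\ref{lem:prfs-1}), via \cref{lem:lts-perm} and the confluence clause of \cref{def:beta-move} respectively. The gap is in $\utweak{}$. You claim that after matching a challenge "the very same transition fires from the extended configuration" and the residuals are "again related by $\utweak{}$", so that $\utweak{}\progress\utweak{}$. This fails for proponent application and return. A challenge $C_1\trans{\lpropapp{\alpha}{j}}C_1'$ from the weakened configuration may legitimately choose $j=i$, the very index that was dropped, since $i\notin\dom{\Gamma_1}$; but the extended configuration $\conf{A_1}{\Gamma_1,\maps{i}{v_1}}{K_1}{s_1}{\hat e_1}$ already has $i$ in its environment and must emit a \emph{different} label $\lpropapp{\alpha}{j'}$ with $j'\neq i$ (this is exactly \cref{lem:lts-weaken}~(2)). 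The residuals are therefore related only after an index permutation, so plain compatibility is not available; the paper instead proves $\utweak{}\progress\utweak{}\comp\utperm{}$ and concludes by \cref{lem:prfs}~(\ref{lem:prfs-2}) with $g=\utperm{}$.

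For $\utgc{}$ you do anticipate the analogous freshness problem (new abstract names and locations must be kept clear of the garbage on both sides and of the names accumulated along the silent prefix of the matching weak transition), but your proposed repair -- strengthening the target to $\utgc{}\sqcup\utperm{}$ and invoking \cref{lem:prfs}~(\ref{lem:prfs-3}) -- is heavier than needed and slightly off target: part~(\ref{lem:prfs-3}) establishes soundness of $f\comp\constf{\gfp\WP}$, i.e.\ of the technique applied to similarity, not of $\utgc{}$ itself. The paper's route is the lighter one: show $\utgc{}\progress\utgc{}\comp\utperm{}$ directly, using \cref{lem:lts-permute} and \cref{lem:lts-permute-weak} to rename the fresh names away from the garbage and from the other side, then apply \cref{lem:prfs}~(\ref{lem:prfs-2}) since $\utperm{}\sqsubseteq\companion[\WP]$ by its own compatibility. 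The same composition-with-$\utperm{}$ fix is what you are missing for $\utweak{}$.
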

Soundness follows by the bisimulation enhancement technique \cite{PousS11,PousCompanion}
  (\cref{lem:prfs}), showing  ${\utperm{}} \progress {\utperm{}}$,
  $\utbeta{} \progress \utbeta{}$,
  $\utgc{}\progress\utgc{}\comp\utperm{}$, and
  $\utweak{}\progress\utweak{}\comp\utperm{}$.

\begin{lemma}\label{lem:upto-perm}
   Function $\utperm{}$ is a sound up-to technique.
\end{lemma}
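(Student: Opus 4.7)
The plan is to invoke \cref{lem:prfs}(\ref{lem:prfs-1}), which reduces the soundness of an up-to technique $f$ to showing compatibility, i.e.\ $f\progress f$. So the entire task becomes proving $\utperm{}\progress\utperm{}$, which unfolds to: whenever $C_1\utperm*{\WP(\bisim{R})}C_2$ and $C_1\trans{\eta}C_1'$, there exists $C_2'$ with $C_2\wtrans{\eta}C_2'$ and $C_1'\utperm*{\bisim{R}}C_2'$.

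From the single rule \iref{uptoperm}, any such pair $(C_1,C_2)$ must arise from some $(C_3,C_4)\in\WP(\bisim{R})$ together with permutations $\pi_{l_1},\pi_{l_2},\pi_\alpha,\pi_i$ satisfying $C_1=C_3\pi_{l_1}\pi_\alpha\pi_i$ and $C_2=C_4\pi_{l_2}\pi_\alpha\pi_i$. First I would apply \cref{lem:lts-perm} with the inverse permutation $\pi_{l_1}^{-1}\pi_\alpha^{-1}\pi_i^{-1}$ in order to convert the challenge $C_1\trans{\eta}C_1'$ into a transition $C_3\trans{\eta\pi_\alpha^{-1}\pi_i^{-1}}C_3'$, where $C_3' = C_1'\pi_{l_1}^{-1}\pi_\alpha^{-1}\pi_i^{-1}$; a key point here is that \cref{lem:lts-perm} transports labels only along $\pi_\alpha\pi_i$, not $\pi_l$, because \lang has no reference types and hence LTS labels contain no concrete locations.

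The $\WP(\bisim{R})$ assumption on $(C_3,C_4)$ then yields $C_4\wtrans{\eta\pi_\alpha^{-1}\pi_i^{-1}}C_4'$ with $C_3'\bisim*{R}C_4'$. Applying a weak-transition extension of \cref{lem:lts-perm} (provable by straightforward induction on the length of the $\tau$-segments, mirroring the passage from \cref{lem:lts-perm} to \cref{lem:lts-permute-weak}) with the permutation $\pi_{l_2}\pi_\alpha\pi_i$ lifts this back to $C_2\wtrans{\eta}C_4'\pi_{l_2}\pi_\alpha\pi_i$, and I set $C_2' \defeq C_4'\pi_{l_2}\pi_\alpha\pi_i$. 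Reusing the same four permutations, a final application of \iref{uptoperm} to the pair $(C_3',C_4')\in\bisim{R}$ gives $C_1'\utperm*{\bisim{R}}C_2'$, closing the diagram.

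The main obstacle I anticipate is the careful bookkeeping of how each flavour of permutation acts: $\pi_l$ touches only store locations inside a configuration and acts independently on the two sides ($\pi_{l_1}$ vs.\ $\pi_{l_2}$), whereas $\pi_\alpha$ and $\pi_i$ must be shared between the sides and are simultaneously applied to the label. A secondary concern is the \iref{dummy} rule, in which $C_1\trans{\eta}\botconf$ with $\eta\neq\lterm$: this case is handled directly, since $\botconf$ is a nominal atom fixed by every permutation, so $C_2$ can trivially answer by $C_2\trans{\eta}\botconf$ via \iref{dummy} and the pair $(\botconf,\botconf)$ is closed under \iref{uptoperm} using identity permutations. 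Similarly, the \iref{term} case degenerates to matching transitions into $\botconf$ with identical labels, requiring no use of the permutation machinery.
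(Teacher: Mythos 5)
Your proposal is correct and follows essentially the same route as the paper's proof: reduce to compatibility via \cref{lem:prfs}~(\ref{lem:prfs-1}), transport the challenge back through the permutation using \cref{lem:lts-perm}, invoke the $\WP(\bisim{R})$ assumption, and transport the weak answer forward again before closing with \iref{uptoperm}. Your explicit use of inverse permutations (where the paper treats them as involutions), your note that labels are acted on only by $\pi_\alpha\pi_i$, and your separate treatment of the \iref{dummy}/\iref{term} cases are minor refinements of the same argument rather than a different approach.
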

\begin{proof}
  \todo{update with $\pi_i$}
  From \cref{lem:prfs}~(\ref{lem:prfs-1}), it suffices to show that $\utperm{}$ is compatible; i.e.,
  ${\utperm{\WP(\bisim{R})}} \sqsubseteq \WP(\utperm{\bisim{R}})$, for any configuration relation \bisim{R}.

  Let $C_1 \mathrel{\WP(\bisim{R})} C_2$ and 
  $C_1\pi_1 \utperm*{\bisim{R}} C_2\pi_2$,
  where $\pi_1=\pi_\alpha\pi_{l1}$ and $\pi_1=\pi_\alpha\pi_{l1}$.
  Moreover, let $C_1\pi_1 \trans{\eta} C_1'$.
  Because of $\pi_1 \pi_1 = \mathsf{id}$ and \cref{lem:lts-perm} we get
  $C_1 \trans{\eta\pi_\alpha} C_1'\pi_1$. By definition of $\WP(\bisim{R})$, there exists $C_2'$ such that
  $C_2 \wtrans{\eta\pi_\alpha} C_2'$ and $C_1'\pi_1 \bisim*{R} C_2'$. By \cref{lem:lts-perm}
  $C_2\pi_2 \wtrans{\eta} C_2'\pi_2$, and by definition of $\utperm{\bisim{R}}$:
  $C_1' \utperm*{\bisim{R}} C_2'\pi$.
\end{proof}
\begin{lemma}\label{lem:upto-beta}
  Function $\utbeta{}$ is a sound up-to technique.
\end{lemma}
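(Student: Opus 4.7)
The plan is to apply \cref{lem:prfs}~(\ref{lem:prfs-3}) with $F=\{\utbeta{}\}$ and $g=\utId{}$, reducing the soundness of $\utbeta{}$ to showing the progression $\utbeta{}\comp\constf{\gfp\WP}\progress(\utbeta{}\comp\constf{\gfp\WP}\sqcup\utId{})^{\omega}\comp\constf{\gfp\WP}$. Unfolding the right-hand side on any argument yields $\utbeta{\simil}\cup(\simil)$, so the obligation reduces to showing the containment $\utbeta{\simil}\subseteq\WP(\utbeta{\simil}\cup(\simil))$. This plays the role of compatibility, modulated by the $\constf{\simil}$ hint to accommodate the asymmetry of beta moves on the two sides, in the same fashion as the up-to-separation proof.

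The key technical ingredient is a commutation lemma for beta moves: if $C\betatranss C'$ and $C\trans{\eta}C''$, then either (a) $\eta=\tau$ and $C''\betatranss C'$, or (b) there exists $C^*$ with $C'\trans{\eta}C^*$ and either $C''\betatranss C^*$ or $C''=C^*$. The proof is by induction on the length of $C\betatranss C'$, invoking \cref{def:beta-move} at the leading $\betatrans$-step: the definition forces the challenge either to coincide with that beta step (yielding case~(a), since the remaining beta path from $C''$ still reaches $C'$) or to commute past it, feeding the inductive hypothesis.

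Given the lemma, let $(C_1,C_2)\in\utbeta{\simil}$ be witnessed by $C_1\betatranss C_1'$, $C_2\betatranss C_2'$, and $C_1'\simil C_2'$. For a challenge $C_1\trans{\eta}C_1''$, case~(a) is matched by the empty response $C_2\wtrans{\tau}C_2$, with $(C_1'',C_2)\in\utbeta{\simil}$ witnessed by the same $(C_1',C_2')$, since $C_1''\betatranss C_1'$ still holds. Case~(b) uses $C_1'\simil C_2'$ to obtain $C_2'\wtrans{\eta}C_2^*$ with $C^*\simil C_2^*$; concatenating with $C_2\betatranss C_2'$ gives $C_2\wtrans{\eta}C_2^*$, and $(C^*,C_2^*)$ together with $C_1''\betatranss C^*$ (or $C_1''=C^*$) places $(C_1'',C_2^*)\in\utbeta{\simil}$.

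The main obstacle is case~(a) of the commutation lemma, where the challenge is the beta move itself and there is no matching external action to perform on $C_2$. A direct compatibility argument $\utbeta{}\progress\utbeta{}$ breaks here: one would be left needing to promote a pair from $\WP(\bisim{R})$ into $\bisim{R}$, which is not available in general. The use of $\constf{\simil}$ in \cref{lem:prfs}~(\ref{lem:prfs-3}) sidesteps this obstacle by letting the target relation already be similarity, into which the residual configuration fits via the existing beta witnesses.
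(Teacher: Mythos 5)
Your proof is correct, but it takes a genuinely different route from the paper's. The paper proves this lemma via \cref{lem:prfs}~(\ref{lem:prfs-1}), i.e.\ by claiming direct compatibility $\utbeta{}\comp\WP\sqsubseteq\WP\comp\utbeta{}$, whereas you invoke \cref{lem:prfs}~(\ref{lem:prfs-3}) with the $\constf{\gfp\WP}$ seeding, reducing the obligation to $\utbeta{\simil}\subseteq\WP(\utbeta{\simil}\cup(\simil))$. Your stated reason for avoiding the compatibility route is well-founded: in the paper's proof the ``former case'' ($\eta=\tau$ with $C_1''=C_1'$) leaves a residual pair that is only known to lie in $\WP(\bisim{R})$, while the target is $\utbeta{\bisim{R}}$, and $\WP(\bisim{R})\not\subseteq\utbeta{\bisim{R}}$ in general; the claim that this case is ``trivial'' glosses over precisely the promotion step you identify. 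Your version works because, with the argument fixed at $(\simil)$, the fixpoint property $\WP(\simil)=(\simil)$ lets the residual re-enter $\utbeta{\simil}$ through the original witnesses. Two remarks. First, a repair that stays closer to the paper's argument and yields the formally stronger conclusion is \cref{lem:prfs}~(\ref{lem:prfs-2}) with $g=\utId{}\sqcup\WP$ (both summands lie below $\companion[\WP]$ by \cref{lem:companion-props}): your case~(a) then lands in $\utbeta{}\comp\WP(\bisim{R})$ and case~(b) in $\utbeta{}\comp\utId{}(\bisim{R})$, establishing soundness of $\utbeta{}$ itself; your route, like the paper's own treatment of $\utsepconj{}$, formally establishes soundness of $\utbeta{}\comp\constf{\gfp\WP}$ rather than of $\utbeta{}$ in isolation, which is the weaker (though practically sufficient) statement. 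Second, your explicit commutation lemma for $\betatranss$, proved by induction on the length of the beta path, is a worthwhile addition: the paper applies \cref{def:beta-move}, a single-step property, directly to the multi-step closure, which is exactly the induction you spell out.
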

\begin{proof}
  From \cref{lem:prfs}~(\ref{lem:prfs-1}), it suffices to show that $\utbeta{}$ is compatible; i.e.,
  ${\utbeta{\WP(\bisim{R})}} \sqsubseteq \WP(\utbeta{\bisim{R}})$, for any configuration relation \bisim{R}.
  Let $C_1 \utbeta*{\WP(\bisim{R})} C_2$
  and $C_1' \mathrel{\WP(\bisim{R})} C_2'$
  and $C_1 \betatranss C_1'$
  and $C_2 \betatranss C_2'$.
  We need to show that for all $C_1''$ such that
  $C_1 \trans{\eta} C_1''$
  there exists $C_2''$ such that
  $C_2 \wtrans{\eta} C_2''$ and $C_1'' \utbeta*{\bisim{R}} C_2''$.

  Let $C_1\trans{\eta} C_1''$. By definition of a $\beta$-move (\cref{def:beta-move}), $C_1''=C_1'$ and $\eta=\tau$ or 
  there exists $C_3$ such that $C_1' \trans{\eta} C_3$ and $C_1''\betatranss C_3$.
  In the former case the proof is trivial.
  In the latter case, by definition of $\WP(\bisim{R})$,
  there exists $C_4$ such that
  $C_2' \wtrans{\eta} C_4$ and $C_3 \bisim*{R} C_4$.
  Moreover, $C_2 \wtrans{\tau} C_2' \wtrans{\eta} C_4$, and
  $C_1'' \betatranss C_3 \bisim*{R} C_4$ which implies $C_1'' \utbeta*{\bisim{R}} C_4$, concluding the proof.
\end{proof}
\begin{lemma}\label{lem:utgc-sound} 
  Function $\utgc{}$ is a sound up-to technique.
\end{lemma}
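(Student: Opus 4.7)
The plan is to apply \cref{lem:prfs}~(\ref{lem:prfs-2}) with the auxiliary function $g=\utperm{}$; since $\utperm{}$ is compatible by \cref{lem:upto-perm}, we have $\utperm{}\sqsubseteq\companion[\WP]$. It then suffices to establish the progression $\utgc{}\progress\utgc{}\comp\utperm{}$, that is: for every relation $\bisim{R}$ and every pair with $C_1\utgc*{\WP(\bisim{R})}C_2$ and $C_1\trans{\eta}C_1''$, I would exhibit some $C_2''$ such that $C_2\wtrans{\eta}C_2''$ and $C_1''\mathrel{(\utgc{}\comp\utperm{})(\bisim{R})}C_2''$. The side case corresponding to rule \iref{uptogcbot}, where $C_2=\botconf$, is immediate: $\botconf$ simulates every label by \iref{dummy}, and the match can be kept in $\utgc{}\comp\utperm{}$ by invoking the same rule on the resulting configurations.

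For the main case, I would unfold $C_1\utgc*{\WP(\bisim{R})}C_2$ into configurations $C_1',C_2'$ with $C_1\asymp C_1'\mathrel{\WP(\bisim{R})}C_2'\asymp C_2$. The crucial step is to lift $C_1\trans{\eta}C_1''$ to a transition from $C_1'$ bearing the \emph{same} label. The obstruction is that $C_1'$ may carry extra abstract names $A'$ or extra store locations $s_g$ (the garbage) that could clash with the fresh names introduced along $\eta$. Using \cref{lem:lts-permute} I would pick a permutation $\pi$ acting only on the freshly generated names such that $C_1\trans{\eta\pi}C_1''\pi$ uses names disjoint from the garbage of both $C_1'$ and $C_2'$. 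The same labelled transition then holds at $C_1'$, giving $C_1'\trans{\eta\pi}C_1'''$ with $C_1''\pi\asymp C_1'''$.

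Feeding this into the premise $C_1'\mathrel{\WP(\bisim{R})}C_2'$ produces a weak match $C_2'\wtrans{\eta\pi}C_2'''$ with $C_1'''\bisim*{R}C_2'''$. Applying $\pi^{-1}$ pointwise along this weak transition via \cref{lem:lts-perm} rewrites it as $C_2'\wtrans{\eta}C_2'''\pi^{-1}$, using that $\pi$ acts only on names freshly introduced during the transition and hence leaves $C_2'$ itself fixed. Since $C_2\asymp C_2'$ and $C_2$ has no more abstract names or locations than $C_2'$, any name fresh for $C_2'$ is fresh for $C_2$, so the same weak transition fires from $C_2$, yielding $C_2\wtrans{\eta}C_2''$ with $C_2''\asymp C_2'''\pi^{-1}$. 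Chaining $C_1''\asymp C_1'''\pi^{-1}$ (by equivariance of $\asymp$ applied to $C_1''\pi\asymp C_1'''$), $C_1'''\pi^{-1}\utperm*{\bisim{R}}C_2'''\pi^{-1}$ (one $\utperm{}$ step applying $\pi^{-1}$ on both sides of $C_1'''\bisim*{R}C_2'''$), and $C_2'''\pi^{-1}\asymp C_2''$ yields $C_1''\mathrel{\utgc{}(\utperm{}(\bisim{R}))}C_2''$ as required.

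The main obstacle I expect is the bookkeeping of freshness: the transition from $C_1$ is forced to collide, in principle, with the garbage components of $C_1'$ and $C_2'$, so a direct simulation preserving the label $\eta$ is not available. The permutation supplied by \cref{lem:lts-permute} and \cref{lem:lts-permute-weak} sidesteps this, and the $\utperm{}$ appearing in the post-relation is precisely what absorbs this renaming; this is why \cref{lem:prfs}~(\ref{lem:prfs-2}) is used rather than the simpler plain-compatibility obligation~(\ref{lem:prfs-1}).
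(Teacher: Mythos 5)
Your proposal follows the same route as the paper's proof: reduce via \cref{lem:prfs}~(\ref{lem:prfs-2}) with $g=\utperm{}$ to the progression $\utgc{}\progress\utgc{}\comp\utperm{}$, and establish that progression by a diagram chase in which the fresh names of the challenged transition are first permuted away from the garbage before the transition is transferred across $\asymp$. The first half of your chase (lifting $C_1\trans{\eta}C_1''$ to the collected configuration $C_1'$ and feeding it to $\WP(\bisim{R})$) matches the paper's argument, as does the treatment of the $\botconf$ rule and the final packaging of the residuals in $\utgc{}\comp\utperm{}$.

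There is, however, a genuine gap in the step that transfers the answering weak transition from $C_2'$ back to $C_2$. You justify it by asserting that $C_2$ has no more abstract names or locations than $C_2'$, so that any name fresh for $C_2'$ is fresh for $C_2$. This is backwards: in the intended use of \iref{uptogc} it is $C_2$ that carries the garbage and $C_2'$ that is the collected configuration, and since $\asymp$ is symmetric neither inclusion is guaranteed, so freshness must be arranged with respect to $C_2$ itself. Concretely, the locations allocated \emph{internally} along the weak transition of $C_2'$ do not occur in the label, so your $\pi$ (which only renames the names generated by $C_1$'s transition) places no constraint on them; they may collide with garbage locations of $C_2$, and \cref{lem:asymp-bisim-weak} is then inapplicable. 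The paper closes exactly this hole with a second permutation $\pi_{l2}$ obtained from \cref{lem:lts-permute-weak}, applied to the weak transition of the collected configuration so that its newly allocated locations avoid $\fl{C_2}$, together with the observation (via \cref{lem:lts-basic}) that the abstract names created along that weak transition are exactly those of the label, which were already chosen away from $\an{C_2}$. A smaller soft spot of the same kind: your claim that $\pi$ fixes $C_2'$ needs the \emph{original} fresh names of $C_1$'s transition (the ones $\pi$ swaps out) to be absent from $C_2'$, which freshness with respect to $C_1$ alone does not provide. All of this is repairable --- the $\utperm{}$ in the target relation exists precisely to absorb such renamings --- but as written the second half of your chase does not go through.
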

\begin{proof}
  From \cref{lem:prfs}~(\ref{lem:prfs-2}) and \cref{lem:upto-perm}, it suffices to show that $\utgc{}\progress\utgc{}\comp\utperm{}$.

  Let $C_1 \utgc*{\WP(\bisim{R})} C_2$.
  By case analysis on this derivation we have two cases:

  {\flushleft
  \iref{uptogc}:} $C_1 \asymp C_3 \mathrel{\WP(\bisim{R})} C_4 \asymp C_2$.
  Consider $C_1 \trans{\eta} C_1'$.
  
  By \cref{lem:lts-permute}, there exists $\pi=\pi_\alpha\pi_{l1}$ such that 
  $C_1 \trans{\eta\pi_\alpha} C_1'\pi$
  and $(\an{C_1'\pi}\sdif\an{C_1})\cap \an{C_3,C_2} = \emptyset$
  and $(\fl{C_1'\pi}\sdif\fl{C_1})\cap \fl{C_3} = \emptyset$.
  
  By \cref{lem:asymp-bisim}, there exists $C_3'$ such that 
  $C_3 \trans{\eta\pi_\alpha} C_3'$ and $C_1'\asymp C_3'\pi$.
  
  By definition of \WP, there exists $C_4'$ such that $C_4 \wtrans{\eta\pi_\alpha} C_4'$ and $C_3' \bisim*{R} C_4'$.
  
  By \cref{lem:lts-permute-weak}
  there exists $\pi_{l2}$ such that $C_4 \wtrans{\eta\pi_\alpha} C_4'\pi_{l2}$ and
  $(\fl{C_4'\pi_{l2}}\sdif\fl{C_4})\cap \fl{C_2} = \emptyset$.

  By \cref{lem:lts-basic},
  $(\an{C_4'\pi_{l2}}\sdif\an{C_4})\cap \an{C_2} = \emptyset$.

  By \cref{lem:asymp-bisim-weak}, there exists $C_2'$ such that $C_2 \wtrans{\eta\pi_\alpha} C_2'$ and $C_2' \asymp C_4'\pi_l$.

  By \cref{lem:lts-perm}
  $C_2 \wtrans{\eta} C_2'\pi_\alpha$, and moreover, by \cref{lem:asymp-perm}, $C_2'\pi_\alpha \asymp C_4'\pi_l\pi_\alpha$.

  Therefore, $C_1' \asymp C_3\pi_\alpha\pi_{l1} \bisim*{R} C_4'\pi_\alpha\pi_{l2} \asymp C_2'$, from which we derive
  $C_1' \utgc*{\utperm{\bisim{R}}} C_2'$ as required.

  {\flushleft
  \iref{uptogcbot}:}
  $C_3 \bisim*{R} \botconf$
  and $C_1 \asymp C_3$
  and $C_2 = \botconf$.
  We proceed with the same reasoning as in the above case, with the exception that $C_4'=\botconf$.
\end{proof}
\begin{lemma}\label{lem:utweak-sound}
  Function $\utweak{}$ is a sound up-to technique.
\end{lemma}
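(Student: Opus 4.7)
The plan is to mirror the soundness proof for $\utgc{}$ (\cref{lem:utgc-sound}) and apply \cref{lem:prfs}~(\ref{lem:prfs-2}): since $\utperm{} \sqsubseteq \companion[\WP]$ by \cref{lem:upto-perm}, it suffices to establish the progression $\utweak{} \progress \utweak{} \comp \utperm{}$. Concretely, for every $C_1 \utweak*{\WP(\bisim{R})} C_2$ and every transition $C_1 \trans{\eta} C_1'$, I will exhibit $C_2'$ with $C_2 \wtrans{\eta} C_2'$ and $C_1' \utweak*{\utperm{\bisim{R}}} C_2'$.

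I proceed by case analysis. The derivation of the hypothesis is either by \iref{uptoweakening} or \iref{uptoweakeningbot}; I focus on the former, the latter being analogous but simpler. So let $C_k = \conf{A_k}{\Gamma_k}{K_k}{s_k}{\hat e_k}$ and let $\tilde C_k = \conf{A_k}{\Gamma_k,\maps{i}{v_k}}{K_k}{s_k}{\hat e_k}$ be their extensions, with $\tilde C_1 \mathrel{\WP(\bisim{R})} \tilde C_2$. The \iref{dummy} case is immediate since $C_2 \trans{\eta} \botconf$ and $\botconf \utId*{\bisim{R}} \botconf$. For all other transitions, the key observation is that $\tilde C_1$ can replay the same transition as $C_1$: the relevant premises depend on $A_1$, $K_1$, $s_1$, $\hat e_1$, and (for opponent transitions) on entries of $\Gamma_1$ indexed by the label; since $i \notin \dom{\Gamma_1}$, adding $\maps{i}{v_1}$ to $\Gamma_1$ blocks no transition of $C_1$.

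The single subtlety concerns \iref{propappf}/\iref{propretf}, where $C_1$ may introduce a fresh index in $\Gamma$ that clashes with $i$ and so fails the freshness premise for $\tilde C_1$. I resolve this via \cref{lem:lts-permute}, choosing a transposition $\pi_i$ swapping any such clashing index with some $j$ fresh for both $\dom{\Gamma_k,\maps{i}{v_k}}$ ($k=1,2$), which yields $C_1 \trans{\eta\pi_i} C_1'\pi_i$. Then $\tilde C_1 \trans{\eta\pi_i} \tilde C_1'$ with $\tilde C_1'$ equal to $C_1'\pi_i$ extended by $\maps{i}{v_1}$. The hypothesis supplies $\tilde C_2 \wtrans{\eta\pi_i} \tilde C_2'$ with $\tilde C_1' \bisim*{R} \tilde C_2'$, and since the $\maps{i}{v_2}$ entry is untouched by the transition, $\tilde C_2'$ is some $C_2''$ extended by $\maps{i}{v_2}$, for some $C_2''$ with $C_2 \wtrans{\eta\pi_i} C_2''$. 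Using that $\pi_i$ is trivial on $C_2$ (as $i,j \notin \dom{\Gamma_2}$), \cref{lem:lts-permute-weak} gives $C_2 \wtrans{\eta} C_2''\pi_i$; setting $C_2' := C_2''\pi_i$ and reading off the shapes then yields $C_1' \utweak*{\utperm{\bisim{R}}} C_2'$ via \iref{uptoweakening} (with weakening index $j$) followed by \iref{uptoperm} (with permutation $\pi_i$).

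The hardest part will be the bookkeeping of the index transposition $\pi_i$: verifying that $\maps{i}{v_k}$ is genuinely preserved by every transition rule, that $\pi_i$ can be chosen disjoint from all relevant indices on both sides simultaneously, and that the final pair matches the shape demanded by the composition $\utweak{} \comp \utperm{}$. The \iref{uptoweakeningbot} case is handled analogously, using that $\botconf$ vacuously matches every non-$\lterm$ transition through \iref{dummy}; the opponent, $\tau$, and \iref{term} cases introduce no fresh $\Gamma$-indices and so proceed with $\pi_i$ chosen as the identity.
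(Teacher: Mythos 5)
Your proposal is correct and follows essentially the same route as the paper: the paper's proof is simply ``similar to the proof for $\utgc{}$, using \cref{lem:lts-weaken}'', i.e.\ establish $\utweak{}\progress\utweak{}\comp\utperm{}$ and conclude by \cref{lem:prfs}~(\ref{lem:prfs-2}), which is exactly your plan. The only difference is that you re-derive the content of \cref{lem:lts-weaken} inline (that the extra $\Gamma$-entry blocks no transition, and that index clashes in \iref{propappf}/\iref{propretf} are resolved by a transposition) rather than citing it.
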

\begin{proof}
  Similar to the preceding proof, using \cref{lem:lts-weaken}.
\end{proof}

The up to $\utbeta{}$ technique is useful in reducing the configurations
considered in bisimulation, focusing only on the configurations before 
the observable transitions of \cref{fig:lts}; $\tau$-transitions are all
beta transitions and can be considered all at once.

Many simple example equivalences have infinite transition systems without these up-to techniques, even when combined with the more sophisticated up-to techniques in the following sections. A simple example is the following.
\begin{example}
  Consider the equivalent functions
  $$M=\text{\lstinline{(fun () -> ref l = 0 in 5)}} \qquad\qquad
  M=\text{\lstinline{(fun () -> 5)}}$$
  Due to the allocation of \lstinline{l} the LTS of this equivalence is finite only using the up to $\utgc{}$ technique.
\end{example}

\subsection{Proof of \cref{lem:utsepconj-sound}} \label{sec:uptosep-proof}
\begin{proof}
  By \cref{lem:prfs}~(\ref{lem:prfs-3}) it suffices to show
  $\utsepconj{}\comp{\constf{\gfp\WP}}\progress(\utsepconj{}\cup g)^\omega\comp{\constf{\gfp\WP}}$,
  where $g=\utId{}\sqcup\utperm{}$.
  We need to show
  $\utsepconj{}\comp{\constf{\gfp\WP}}\comp\WP(\bisim{R}) \subseteq 
  {\WP}\comp (\utsepconj{}\cup g)^\omega\comp{\constf{\gfp\WP}}(\bisim{R})$.
  Because $\constf{\gfp\WP}$ is the constant function mapping its argument to $(\simil)$, we need to show
  $\utsepconj{}(\simil) \subseteq 
  {\WP}\comp (\utsepconj{}\cup g)^\omega(\simil)$.
  We proceed by induction on the derivation of $C_1 \utsepconj*{\simil} C_2$.

  Let $C_1\utsepconj*{\simil}C_2$, 
  and $C_1\trans{\eta}C_1'$.
  If $C_1\trans{\eta}C_1'$ is produced by rule \iref{dummy} then the proof is trivial as $C_2$ can perform the same transition and $C_1'=\botconf \utId*{\simil} \botconf=C_2'$. We thus consider only non-\iref{dummy} transitions from $C_1$.
  We proceed by case analysis on the derivation
  $C_1\utsepconj*{\simil}C_2$. There are three cases: \iref{uptosepconj}, \iref{uptosepconjbotl} and \iref{uptosepconjbotr}.
  We only show the proof for the first case; the last two cases are handled in a similar but simpler fashion.

  W.l.o.g. we let $i=1$ and we have
  \begin{align*}
    C_1 &= C_3 \sepconj[i][\vec k] C_5 = \conf{A_3\cup A_5}{\Gamma_3,\Gamma_5}{K_1\interleave[\vec k]K_3}{s_1, s_3}{\hat e_3}\\
    C_2 &= C_4 \sepconj[i][\vec k] C_6 = \conf{A_3\cup A_5}{\Gamma_4,\Gamma_6}{K_2\interleave[\vec k]K_4}{s_2, s_4}{\hat e_4}\\
    C_3 &= \conf{A_3}{\Gamma_3}{K_3}{s_3}{\hat e_3} \simil \conf{A_3}{\Gamma_4}{K_4}{s_4}{\hat e_4} = C_4
    &\dom{\Gamma_3}&=\dom{\Gamma_4} \\
    C_5 &= \conf{A_5}{\Gamma_5}{K_5}{s_5}{\hat e_5} \simil \conf{A_5}{\Gamma_6}{K_6}{s_6}{\hat e_6} = C_6
    &\dom{\Gamma_5}&=\dom{\Gamma_6}
  \end{align*}
  We proceed by cases on the transition $C_1\trans{\eta}C_1'$.

  {\flushleft
  \iref{propappc},
  \iref{propappf},
  \iref{propretc},
  \iref{propretf},
  \iref{tau}:}
  In all proponent transitions the proof is similar. We show the case \iref{propappf}.
  In this case $e_3=E\hole[\app{\alpha} v]$ and
  \begin{align*}
    C_1 = \conf{A_3\cup A_5}{\Gamma_3,\Gamma_5}{K_3\interleave[\vec k]K_5}{s_3, s_5}{E\hole[\app\alpha v]}
    &\trans{\lpropapp{\alpha}{i}}
    \conf{A_3\cup A_5}{\Gamma_3,\Gamma_5,\maps{i}{v}}{K_3\interleave[\vec k]K_5}{s_3, s_5}{\noe}
    =C_1'
    \\&\qquad\qquad\text{ and }
    i\not\in\dom{\Gamma_3,\Gamma_5}
    \\
    C_3 = \conf{A_3}{\Gamma_3}{K_3}{s_3}{E\hole[\app\alpha v]}
    &\trans{\lpropapp{\alpha}{i}}
    \conf{A_3}{\Gamma_3,\maps{i}{v}}{K_3}{s_3}{\cdot} = C_3'
  \end{align*}
  Moreover, $C_1'=C_3'\sepconj[1][\vec k]C_5$.
  By $C_3 \simil C_4$ there are two possibilities:
  \begin{itemize}
    \item $C_4 \wtrans{\lpropapp{\alpha}{i}} \botconf$ and $C_3' \simil \botconf$.
      The proof in this case is completed by $C_2' \trans{\lpropapp{\alpha}{i}} \botconf$ and
      $C_1 \utsepconj*{\simil} \botconf$ by either rule \iref{uptosepconjbotl} or \iref{uptosepconjbotr}.

    \item
    $C_4 = \conf{A_3}{\Gamma_4}{K_4}{s_4}{\hat e_4}
    \wtrans{\tau}\trans{\lpropapp{\alpha}{i}}
    \conf{A_3}{\Gamma_4,\maps{i}{v_4}}{K_4}{s_4'}{\noe} = C_4'$
      and
    $C_3' \simil C_4'$. By
    $\dom{\Gamma_3,\Gamma_5}=\dom{\Gamma_4,\Gamma_6}$ and \cref{lem:lts-taus} we derive:
    \[
      C_2 = \conf{A_3\cup A_5}{\Gamma_4,\Gamma_6}{K_4\interleave[\vec k]K_6}{s_4, s_6}{\hat e_4}
      \wtrans{\lpropapp{\alpha}{i}}
      \conf{A_3\cup A_5}{\Gamma_4,\Gamma_6,\maps{i}{v_4}}{K_4\interleave[\vec k]K_6}{s_4', s_6}{\noe}
      =C_2'
    \]
   Moreover,
   $C_2'=C_4'\sepconj[1][\vec k]C_6$ and
   $C_1' \utsepconj*{\simil} C_2'$.
  \end{itemize}

{\flushleft
\iref{opappc},\iref{opappf}:}
We show the case for \iref{opappf}. In this case $\eta=\lopapp{i}{\alpha}$, and  $\hat e_3=\noe$, and by \cref{lem:simil-properties}, $\hat e_4=\noe$.
Moreover $i\in\dom{\Gamma_3}$ or $i\in\dom{\Gamma_5}$.
If it is the former, then $C_3 \trans{\lopapp{i}{\alpha}} C_3'$ and $C_1'=C_3' \sepconj[1][\vec k] C_5$, otherwise
$C_5 \trans{\lopapp{i}{\alpha}} C_5'$ and $C_1'=C_3 \sepconj[2][\vec k] C_5'$. Moreover by the simulation we can show that
$C_4 \wtrans{\lopapp{i}{\alpha}} C_4'$ or
$C_6 \wtrans{\lopapp{i}{\alpha}} C_6'$, respectively.
In both cases $C_2 \wtrans{\lopapp{i}{\alpha}} C_2'$ and $C_2'=C_4'\sepconj[1][\vec k] C_6$ or
 $C_2'=C_4\sepconj[2][\vec k] C_6'$, respectively.
 Therefore $C_1' \utsepconj*{\simil} C_2'$.

{\flushleft
\iref{opretc},\iref{opretf}:}
We show the cases for \iref{opretf}. Here $\vec k$ is either $(1,\vec k')$ or $(2, \vec k')$. We consider the former case, the latter is symmetric. In this case $K_3 = E, K_3'$ and $K_3\interleave[\vec k]K_5= E,(K_3'\interleave[\vec k'] K_5)$. Therefore we have:
  \begin{align*}
    C_1 = \conf{A_3\cup A_5}{\Gamma_3,\Gamma_5}{E,(K_3'\interleave[\vec k'] K_5)}{s_3, s_5}{\noe}
    &\trans{\lopret{\alpha}}
    \conf{A_3\cup A_5\uplus \alpha}{\Gamma_3,\Gamma_5}{K_3'\interleave[\vec k']K_5}{s_3, s_5}{E\hole[\alpha]}
    =C_1'
    \\&\qquad\qquad\text{ and }
    \alpha\not\in\dom{A_3,A_5}
    \\
    C_3 = \conf{A_3}{\Gamma_3}{K_3}{s_3}{\noe}
    &\trans{\lopret{\alpha}}
    \conf{A_3\uplus\alpha}{\Gamma_3}{K_3}{s_3}{\cdot} = C_3'
  \end{align*}
  By $C_3 \simil C_4$ there are two possibilities:
  \begin{itemize}
    \item $C_4 \wtrans{\lopret{\alpha}} \botconf$ and $C_3' \simil \botconf$.
      The proof in this case is completed by $C_2' \trans{\lopret{\alpha}} \botconf$ and
      $C_1 \utsepconj*{\simil} \botconf$ by either rule \iref{uptosepconjbotl} or \iref{uptosepconjbotr}.

    \item
      $C_4 = \conf{A_3}{\Gamma_4}{K_4}{s_4}{\noe}
    \wtrans{\tau}\trans{\lopret{\alpha}}
      \conf{A_3\uplus\alpha}{\Gamma_4,\maps{i}{v_4}}{K_4'}{s_4'}{E_4\hole[\alpha]} = C_4'$
      and
    $C_3' \simil C_4'$
      and $K_4= E_4,K_4'$.
      Thus $K_4\interleave[\vec k]K_6=K_4\interleave[1,\vec k']K_6=E_4,(K_4'\interleave[\vec k']K_6)$.
      By \cref{lem:lts-taus} we derive:
      \begin{align*}
        C_2 &= \conf{A_3\cup A_5}{\Gamma_4,\Gamma_6}{E_4,(K_4'\interleave[\vec k']K_6)}{s_4, s_6}{\noe}
        \\&\wtrans{\lpropapp{\alpha}{i}}
      \conf{A_3\cup A_5\uplus\alpha}{\Gamma_4,\Gamma_6}{K_4'\interleave[\vec k']K_6}{s_4', s_6}{E_4\hole[\alpha]}
      =C_2'
        \end{align*}
   Moreover,
   $C_2'=C_4'\sepconj[1][\vec k]C_6$ and
   $C_1' \utsepconj*{\simil} C_2'$.
  \end{itemize}

{\flushleft
\iref{tau}:}
In this case we have 
  \begin{align*}
    C_1 = C_3 \sepconj[1][\vec k] C_5
    &\trans{\tau}
    C_3' \sepconj[1][\vec k] C_5 = C_1'
    \\
    C_3 &\trans{\tau} C_3'
  \end{align*}
  By $C_3 \simil C_4$ we have
    $C_4 \wtrans{\tau} C_4'=\botconf$
    or
    $C_4 \wtrans{\tau} C_4'\not=\botconf$
    and
    $C_3' \simil C_4'$.
    In the former case the proof is completed by $C_2 \trans{\tau}\botconf$ and
    $C_1' \utsepconj*{\simil} \botconf$ by rule \iref{uptosepconjbotl}.
    In the latter case, by \cref{lem:lts-permute-weak},
  there exists $\pi_{l4}$ such that $C_4 \wtrans{\tau} C_4'\pi_{l4}$ and
  $(\fl{C_4'\pi_{l2}}\sdif\fl{C_4})\cap \fl{C_6} = \emptyset$.
  We have $C_3'\utperm*{\simil}C_4'\pi_{l4}$. Moreover we derive
  $C_2 = C_4 \sepconj[1][\vec k] C_6 \wtrans{\tau} C_4'\pi_{l4} \sepconj[1][\vec k] C_6 = C_2'$
  and $C_1' \utsepconj*{\utperm{\simil}} C_2'$.

{\flushleft \iref{term}:}
In this case we have $K_3=K_5=\emptyK$ and $\hat e_3=\hat e_5=\noe$. Therefore $C_3 \trans{\lterm} \botconf$ and
 $C_5 \trans{\lterm} \botconf$. Therefore by $C_3 \simil C_4$ and $C_5 \simil C_6$, we have
 $C_4 \trans{\lterm} \botconf$ and $C_6 \trans{\lterm} \botconf$. Therefore $K_4=K_6=\emptyK$ and
 $\hat e_4 =\hat e_6=\noe$, and thus $C_2 \trans{\lterm}\botconf$. The resulting \botconf configurations are related by $\utId{\simil}$.
\end{proof}

\begin{technical}
  \section{Soundness of $(\bisimil)$} \label{sec:soundness}
  
\subsection{Language Lemmas}
The following lemmas hold for \lang extended with abstract names.

\begin{lemma}[Unique Decomposition]\label{lem:unique-decomp}
  Let $e=E\hole[e]$ and $\redconf{s}{e} \redbase $.
  Then for any $E'$ and $e'$ such that
  $e=E'\hole[e']$ and $\redconf{s}{e'} \redbase $, we have $E=E'$ and $e=e'$.
\end{lemma}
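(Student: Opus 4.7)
}

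I read the statement as: if $e = E\hole[e_1]$ with $\redconf{s}{e_1}\redbase$ and also $e = E'\hole[e_1']$ with $\redconf{s}{e_1'}\redbase$, then $E = E'$ and $e_1 = e_1'$. My plan is a standard structural induction on $e$, combined with a case analysis on the outermost production of the evaluation contexts $E$ and $E'$ from the grammar in \cref{fig:lang}. The key invariant driving the induction is that only non-value expressions can sit under a context hole whose position is not $\hole_T$: this is because every non-hole context former in the \EC{} grammar places either a value to the left of the hole (e.g.\ $\app v E$, $(\vec v, E, \vec e)$, $\arithop{\vec v, E, \vec e}$) or has no siblings at all (e.g.\ $l \asgn E$, $\cond E e e$, $\elet{(\vec x)}{E}{e}$). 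So at each node of $e$ the choice of ``which subterm to descend into'' is forced.

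The main case split is on whether $E = \hole_T$ or not (and symmetrically for $E'$). If $E = \hole_T$, then $e$ itself is the redex $e_1$, and I must show $E' = \hole_T$: inspecting the base rules of $\redbase$, a redex must be an $\arithop{\vec c}$, a $\beta$-application $\app{(\lam[f] x e)} v$, a $\elet{}{}{}$ on a tuple of values, a $\new{l}{v}{e}$, a $\deref l$, an $l \asgn v$, or a conditional on a concrete boolean. A straightforward inspection shows that none of these forms admits a proper (non-hole) \EC{} decomposition that puts another redex underneath; e.g.\ $\app{(\lam x e)}{v}$ could only be split as $\app E v$ with $E = \lam x e$ (not a context) or $\app{(\lam x e)} E$ with the right-hand argument already a value (so $E = \hole$ and the subexpression is $v$, not a redex). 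If instead both $E$ and $E'$ are proper contexts, I show they must share the same outermost constructor: the value-forcing discipline above means that, for a given $e$, the outermost constructor of any proper $E$ with a redex underneath is determined uniquely by which of the immediate subterms of $e$ is not a value. Once the outermost constructor agrees, I peel it off and apply the induction hypothesis to the corresponding smaller subterm.

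The slightly delicate subcases are the ones with multiple child positions, namely $(\vec v, E, \vec e)$ and $\arithop{\vec v, E, \vec e}$. There I rely on the convention that the vector $\vec v$ to the left of $E$ consists entirely of values, which pins down $E$ to start at the leftmost non-value component of the tuple/operator arguments; any two such decompositions of the same $e$ must therefore agree on this position, reducing the argument to the single-child induction. The other compound cases ($\app{E}{e}$ vs.\ $\app{v}{E}$, etc.) are handled identically: the presence of a value on one side forces the hole to lie on the other side.

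The obstacle I expect to be the most fiddly—though not deep—is the base-case inspection showing that each shape of redex admits only the trivial $\hole_T$ decomposition. This is a finite tedious enumeration over the \redbase{} rules, and the only mildly tricky entry is $\elet{(\vec x)}{(\vec v)}{e}$, where one has to observe that a decomposition $\elet{(\vec x)}{E}{e}$ with $E \ne \hole$ would require the bound expression to contain a non-value, contradicting $\redconf{s}{\elet{(\vec x)}{(\vec v)}{e}} \redbase$. With that enumeration in hand, the induction closes.
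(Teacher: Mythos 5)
Your proof is correct and follows essentially the same route as the paper, which simply states ``by induction on $E$'' and leaves all details implicit; your structural induction (on $e$ rather than $E$, an immaterial difference) with the value-discipline argument forcing the hole position and the finite inspection showing redexes admit only the trivial decomposition is exactly the standard argument being invoked.
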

\begin{proof}
  By induction on $E$.
\end{proof}

\begin{lemma}\label{lem:red-perm}
  Let $\pi_l,\pi_\alpha,\pi_i$ be permutations on locations,  abstract names and indices respectively, and $\pi=\pi_l\pi_\alpha\pi_i$.
  If $\redconf{s}{e} \redbase \redconf{s'}{e'}$ then
  $\redconf{s\pi}{e\pi} \redbase \redconf{s'\pi}{e'\pi}$.
  Moreover,
  if $\redconf{s}{e} \red \redconf{s'}{e'}$ then
  $\redconf{s\pi}{e\pi} \red \redconf{s'\pi}{e'\pi}$.
\end{lemma}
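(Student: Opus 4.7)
The plan is to prove both claims by straightforward case analysis, using the fact that all syntactic operations of \lang\ are nominal in flavour and so equivariant with respect to the three kinds of permutation. The key preliminary observations I would establish (or cite as folklore) are that substitution commutes with permutation, i.e.\ $e\sub{x}{v}\pi = (e\pi)\sub{x}{v\pi}$, that evaluation-context hole filling is equivariant, $(E\hole[e])\pi = (E\pi)\hole[e\pi]$, and that store lookup/update transport through $\pi_l$: $(s\pi)(l\pi_l) = (s(l))\pi$ and $(s\stupd{l}{v})\pi = (s\pi)\stupd{l\pi_l}{v\pi}$.

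For the first claim I would proceed by cases on the rule used to derive $\redconf{s}{e} \redbase \redconf{s'}{e'}$ in \cref{fig:lang}. For the arithmetic rule, $\mathop{op}^{\mathsf{arith}}(\vec c) = w$ implies the same equation after $\pi$ since base constants are untouched by $\pi$ and $\pi$ fixes the arithmetic interpretation. For beta and let, I would invoke the substitution lemma above. For $\new l v e$, the side-condition $l \notin \dom{s}$ transfers to $l\pi_l \notin \dom{s\pi}$ since $\pi_l$ is a bijection, and the resulting store $(s\stupd{l}{v})\pi = (s\pi)\stupd{l\pi_l}{v\pi}$ matches what the permuted rule produces. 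Dereference, assignment, and conditional are equally immediate from the observations above.

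For the second claim, suppose $\redconf{s}{e} \red \redconf{s'}{e'}$ is derived from the evaluation-context rule with $e = E\hole[e_0]$, $e' = E\hole[e'_0]$, and $\redconf{s}{e_0} \redbase \redconf{s'}{e'_0}$. (By \cref{lem:unique-decomp} this decomposition is unique, though we do not need uniqueness here.) Applying $\pi$ and using equivariance of hole filling gives $e\pi = (E\pi)\hole[e_0\pi]$ and similarly for $e'\pi$. By the first claim, $\redconf{s\pi}{e_0\pi} \redbase \redconf{s'\pi}{e'_0\pi}$, and then one application of the evaluation-context rule yields $\redconf{s\pi}{e\pi} \red \redconf{s'\pi}{e'\pi}$ as required.

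I do not expect a serious obstacle: the statement is pure equivariance, and the only mildly delicate point is the freshness side-condition for $\new$, which is handled by the bijectivity of $\pi_l$. The index permutation $\pi_i$ plays no active role in the reduction rules (indices live in $\Gamma$, not in expressions or stores), so it acts trivially here; including it in the statement is for uniformity with \cref{lem:lts-perm}.
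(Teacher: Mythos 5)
Your proposal is correct and matches the paper's argument, which simply appeals to ``nominal sets reasoning (all reduction rules are closed under permutations)''; you have just unfolded that equivariance argument into its constituent cases (substitution, hole filling, store update, and the freshness side-condition for \lstinline{new}). Nothing further is needed.
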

\begin{proof}
  By nominal sets reasoning (all reduction rules are closed under permutations).
\end{proof}
\begin{lemma}\label{lem:redbase-subst} 
  \todo{unused}
  Let $\sigma=\concr{\vec\alpha}{\vec v}$, where $\vec v$ are closed $\lambda$-abstractions with $\fl{\vec v}\subseteq\dom{s}$
  and $\redconf{s\sigma}{e\sigma} \redbase \redconf{s'}{e'}$. Then one of the following holds:
  \begin{enumerate}
    \item $e = (\app c c')$ and $e'=w\sigma$ and $s'=s\sigma$ and $c^{\textsf{arith}}(c') = w$;
    \item $e =(\new l u e')$ and $s'=(s\stupd{l}{u})\sigma$ and $l\not\in\dom{s}$;
    \item $e =\deref l$ and $e'=u\sigma$ and $s'=s\sigma$ and $\sigma(l)=u$;
    \item $e =l \asgn u$ and $e'=\true\sigma$ and $s'=(s\stupd{l}{u})\sigma$;
    \item $e =(\app{(\lam x e'')} u)$ and $e'=(e''\sub{x}{u})\sigma$ and $s'=s$;
    \item $e =(\app \alpha u)$ and $e'=(e''\sub{x}{u})\sigma$ and $s'=s$ and $\sigma(\alpha)=(\lam x e'')$.
  \end{enumerate}
\end{lemma}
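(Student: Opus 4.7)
The plan is to proceed by case analysis on the rule of $\redbase$ (Figure~\ref{fig:lang}) used to derive $\redconf{s\sigma}{e\sigma}\redbase\redconf{s'}{e'}$. The key observation driving the proof is that $\sigma$ substitutes only abstract names $\vec\alpha$, which occupy leaf positions in the syntax tree of \lang, and replaces them with closed $\lambda$-abstractions (also leaf values). Hence applying $\sigma$ preserves the tree structure of $e$ everywhere except at occurrences of the $\alpha_i$, and in particular it does not introduce or destroy any of the non-application constructors ($\arithop{\cdot}$, $\deref{\cdot}$, $\cdot\asgn\cdot$, $\new l\cdot\cdot$, $\cond{}{}{}$, $\elet{}{}{}$).

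First I would dispatch the rules whose left-hand side is rooted by one of these non-application constructors (cases corresponding to arithmetic, deref, assignment, allocation, and — if formally included — conditional and let). In each case the root of $e\sigma$ must already be the root of $e$, so $e$ has exactly the stated form and its immediate subterms are determined up to application of $\sigma$. The store equations $s'=s\sigma$, $s'=(s\stupd{l}{u})\sigma$ follow from the side-condition $\fl{\vec v}\subseteq\dom{s}$, which ensures no location capture and that $\sigma$ commutes with the pointwise store update $\stupd{l}{u}$. These cases are essentially bookkeeping.

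The only substantive case is the $\beta$-rule, where $e\sigma = \app{(\lam x e''_0)}{u_0}$ and $e' = e''_0\sub{x}{u_0}\sub{f}{\lam[f] x e''_0}$. Since $\sigma$ acts on leaves, the only way for $\sigma$ to produce an application at the root is for $e$ to already be an application, say $e = \app{v_1}{v_2}$ with $v_1\sigma = \lam x e''_0$ and $v_2\sigma = u_0$. Now $v_1\sigma$ being a $\lambda$-abstraction forces a dichotomy: either $v_1$ is itself a $\lambda$-abstraction (whence $v_1 = \lam x e''_0$, giving conclusion~(5)), or $v_1 = \alpha$ for some $\alpha\in\dom{\sigma}$ with $\sigma(\alpha)=\lam x e''_0$ (giving conclusion~(6)). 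The result in both subcases follows after rewriting $e'$ using the commutation of $\sigma$ with the object-level substitution.

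The main obstacle — really the only one — is this commutation property: I would prove as an auxiliary fact the substitution lemma $(e\sub{x}{v})\sigma = (e\sigma)\sub{x}{v\sigma}$ whenever $\dom\sigma\subseteq\mathsf{AN}$ and $v$ is closed (and similarly for the recursive binder $f$). The proof is a routine induction on $e$, using that abstract names are not term variables (so $\sub{x}{v}$ and $\sigma$ touch disjoint syntactic positions) and that the $\lambda$-abstractions in the codomain of $\sigma$ are closed (so $\sub{x}{v}$ does nothing under them). Once this is in hand, the six conclusions of the lemma drop out immediately from the case analysis.
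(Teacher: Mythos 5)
Your proposal is correct and follows the same route as the paper, whose entire proof of this lemma is the single line ``By case analysis on the transition.'' Your case analysis on the $\redbase$ rule, the observation that $\sigma$ only rewrites leaf occurrences of abstract names (so the application case is the only one with a genuine subcase split between a literal $\lambda$-abstraction and an $\alpha\in\dom{\sigma}$), and the auxiliary commutation fact $(e\sub{x}{v})\sigma=(e\sigma)\sub{x}{v\sigma}$ are exactly the details the paper leaves implicit.
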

\begin{proof}
  By case analysis on the transition.\
\end{proof}
\begin{lemma}\label{lem:red-subst} 
  \todo{unused}
  Let $\sigma=\concr{\vec\alpha}{\vec v}$, where $\vec v$ are closed $\lambda$-abstractions with
  $\an{\vec v}\cap\{\vec \alpha\}=\emptyset$
  and $\fl{\vec v}\subseteq\dom{s}$
  and $\redconf{s\sigma}{e\sigma} \red \redconf{s'}{e'}$. Then there exists $s''$ such that 
  $s'=s''\sigma$ and
  one of the following holds:
  \begin{enumerate}
    \item there exists $e''$ such that  and $e'=e''\sigma$ and $\redconf{s}{e}\red\redconf{s'}{e'}$; or
    \item there exist $E$, $u$ and $\alpha\in\vec\alpha$ such that
      $e =E\hole[\app \alpha u]$ and $e'=(E\hole[e''\sub{x}{u}])\sigma$ and $\sigma(\alpha)=\lam x e''$.
  \end{enumerate}
\end{lemma}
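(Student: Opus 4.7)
The plan is to proceed by case analysis on the reduction $\redconf{s\sigma}{e\sigma}\red\redconf{s'}{e'}$, which by definition unfolds via an evaluation context as $e\sigma = E'\hole[f_1]$ with $\redconf{s\sigma}{f_1}\redbase\redconf{s'}{f_1'}$ and $e' = E'\hole[f_1']$. The first step is a pull-back lemma, proved by induction on $e$: every decomposition $e\sigma = E'\hole[f_1]$ arises from a decomposition $e = E\hole[f_0]$ with $E\sigma = E'$ and $f_0\sigma = f_1$. This holds because $\sigma$ only substitutes closed $\lambda$-abstractions into value positions (the abstract names $\alpha$), so it neither creates nor destroys evaluation-context boundaries; unique decomposition (\cref{lem:unique-decomp}) then pins down $E$ and $f_0$.

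The second step dispatches on the shape of the pulled-back $f_0$. If $f_0$ is one of the non-application redex forms ($\arithop{\vec c}$, $\new l v e_0$, $\deref l$, $l\asgn v$, $\cond c {e_1}{e_2}$, $\elet{(\vec x)}{(\vec v)}{e_0}$) or $\app{(\lam x e_0)}{u}$ with a concrete $\lambda$, then $f_0$ is already a $\redbase$-redex in $s$: constants are fixed by $\sigma$, $\sigma$ is the identity on locations, and $\dom{s\sigma}=\dom{s}$, so the freshness side-condition for $\new$ transfers. The base step $\redconf{s}{f_0}\redbase\redconf{s''}{f_0'}$ lifts to $\redconf{s}{e}\red\redconf{s''}{E\hole[f_0']}$, and componentwise application of $\sigma$ yields $s'=s''\sigma$ and $e'=(E\hole[f_0'])\sigma$, delivering case~1. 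If instead $f_0 = \app\alpha u$ with $\alpha\in\vec\alpha$, then $\sigma(\alpha) = \lam x e''$ is a $\lambda$-abstraction and the base step at $f_1 = \app{(\lam x e'')}{u\sigma}$ yields $f_1' = e''\sub{x}{u\sigma}$, matching case~2 with the required equation $e' = (E\hole[e''\sub{x}{u}])\sigma$.

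The main obstacle is verifying this last equation, which reduces to the substitution commutation $(e''\sub{x}{u})\sigma = e''\sub{x}{u\sigma}$. This depends on both hypotheses on $\sigma$: closedness of $\vec v$ prevents variable capture when pushing $\sigma$ past $\sub{x}{-}$, while $\an{\vec v}\cap\{\vec\alpha\}=\emptyset$ ensures $e''\sigma = e''$, so the two substitutions commute cleanly. The recursive-$f$ variant of $\beta$ is handled identically, folding $\sub{f}{\lam[f] x e''}$ into the same calculation. Some care is also needed in the pull-back induction step when $e$ is itself an abstract name $\alpha\in\vec\alpha$: here $e\sigma$ is a value and admits no decomposition with a redex at the hole, so no reduction from this position is possible in isolation; any actual redex originates in an enclosing application context belonging to the caller, which is exactly the structure captured by case~2 once that context is pulled back.
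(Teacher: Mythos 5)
Your proof is correct and follows essentially the same route as the paper: the paper's one-line proof ("by definition of the transition, using Lemma~\ref{lem:redbase-subst}") is exactly your two steps, namely pulling the evaluation-context decomposition of $e\sigma$ back to a decomposition of $e$ and then case-analysing the shape of the pulled-back redex, with the $\app{\alpha}{u}$ case yielding clause~2 via the commutation $(e''\sub{x}{u})\sigma=e''\sub{x}{u\sigma}$. You merely inline the redex case analysis that the paper factors out into Lemma~\ref{lem:redbase-subst}, and your attention to the pull-back subtlety at abstract names and to the closedness/disjointness hypotheses matches what that lemma relies on.
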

\begin{proof}
  By definition of the transition, using \cref{lem:redbase-subst}.
\end{proof}

\subsection{LTS Lemmas}
  
\begin{lemma}\label{lem:lts-basic}
  Let $\conf{A}{\Gamma}{K}{s}{e} \trans{\eta} \conf{A'}{\Gamma'}{K'}{s'}{e'}$; then
  \begin{enumerate}
    \item if $\eta\not\in\{\lopapp{i}{\alpha},\lopret{\alpha}\where\text{any } i, \alpha\}$ then $A=A'$;
    \item if $\eta=\lopapp{i}{\alpha}$ or $\eta=\lopret{\alpha}$ then $A\uplus\alpha=A'$;
    \item if $\eta\not\in\{\lpropapp{\alpha}{i},\lpropret{i}\where\text{any } i, \alpha\}$ then $\dom{\Gamma}=\dom{\Gamma'}$;
    \item if $\eta=\lpropapp{\alpha}{i}$ or $\eta=\lpropret{i}$ then $\dom{\Gamma}\uplus i=\dom{\Gamma'}$;
    \item $\dom{s} \subseteq\dom{s'}$.
  \end{enumerate}
\end{lemma}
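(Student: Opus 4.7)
}
The proof goes by a straightforward case analysis on the last rule used to derive the transition $\conf{A}{\Gamma}{K}{s}{e} \trans{\eta} \conf{A'}{\Gamma'}{K'}{s'}{e'}$. Since the target configuration is of the regular form $\conf{A'}{\Gamma'}{K'}{s'}{e'}$ and not $\botconf$, only the rules \iref{propappf}, \iref{propretf}, \iref{opappf}, \iref{opretf}, and \iref{tau} from \cref{fig:lts} can apply; the rules \iref{dummy} and \iref{term} both produce $\botconf$ and so are vacuously excluded.

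For items~1--4, each rule is inspected directly. Rules \iref{propappf} and \iref{propretf} carry label $\lpropapp{\alpha}{D}$ and $\lpropret{D}$ respectively, leave $A$ unchanged, and extend $\Gamma$ by exactly the indices of holes in $D$ (namely, $\dom{\Gamma'}$); this establishes items~1 and~4 for these rules, and items~2 and~3 are vacuous. Dually, rules \iref{opappf} and \iref{opretf} carry the opponent labels $\lopapp{i}{D\hole[\vec\alpha]}$ and $\lopret{D\hole[\vec\alpha]}$, leave $\Gamma$ unchanged, and set $A' = A\uplus\vec\alpha$; this handles items~2 and~3 for these rules (with the understood generalisation to vectors $\vec\alpha$), and items~1 and~4 are vacuous. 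Rule \iref{tau} changes neither $A$ nor $\Gamma$, so only items~1 and~3 apply and both hold trivially.

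For item~5, again only rule \iref{tau} can modify the store, since all other rules preserve $s$ by inspection. For this case, one observes $\redconf{s}{e} \red \redconf{s'}{e'}$ and then invokes a subsidiary fact about the reduction semantics of \cref{fig:lang}: each base reduction rule $\redbase$ either leaves the store unchanged (arithmetic, $\beta$, \textsf{let}, dereference, conditional), updates it in place without changing the domain (assignment), or extends it with a single fresh location (allocation, where $l\notin\dom{s}$). In all cases $\dom{s}\subseteq\dom{s'}$, and this is preserved under evaluation-context lifting, yielding item~5.

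The case analysis is entirely mechanical and there is no real obstacle. The only mild subtlety is the wording of items~2 and~4, which mention single names $\alpha$ (resp.\ single indices $i$), whereas the opponent application/return and proponent application/return labels in \cref{fig:lts} generally carry an ultimate pattern producing a vector $\vec\alpha$ (resp.\ a $\Gamma'$ whose domain is a vector of indices); the statement should be read with the natural generalisation to vectors, and then the proof goes through by direct inspection.
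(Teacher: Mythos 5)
Your proposal is correct and takes essentially the same approach as the paper, whose entire proof is the one-liner ``By cases on the transition''; your case analysis simply spells out those cases, and your treatment of item~5 via inspection of the base reduction rules is the intended argument. Your observation that items~2 and~4 should be read with the natural generalisation to the vectors $\vec\alpha$ (resp.\ the full index set of $\Gamma'$) produced by ultimate pattern matching is a fair reading of the statement and does not affect soundness.
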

\begin{proof}
  By cases on the transition.
\end{proof}

\begin{lemma}\label{lem:asymp-perm}
  If $C \asymp C'$ then $C\pi \asymp C'\pi$.
  \qed
\end{lemma}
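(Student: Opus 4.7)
The plan is to exploit the fact that $(\asymp)$ is defined as the \emph{largest} equivalence relation closed under the two garbage-collection axioms of Definition~\ref{def:asymp}, together with the observation that both axioms are equivariant under permutations $\pi = \pi_l \pi_\alpha \pi_i$. Thus, rather than unfold any bisimulation-style reasoning, the statement reduces to a stability check on the axioms, in the spirit of standard nominal-sets arguments.

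Concretely, for a fixed $\pi$, I would define $\bisim{R} = \{ (C\pi, C'\pi) \mid C \asymp C' \}$ and show that $\bisim{R}$ is itself an equivalence relation satisfying both axioms of Definition~\ref{def:asymp}. Reflexivity, symmetry and transitivity of $\bisim{R}$ transfer directly from the corresponding properties of $\asymp$, using only that $\pi$ is a bijection on configurations. For the first axiom, suppose a witness $C = \conf{A}{\Gamma}{K}{s}{\hat e}$ and $C' = \conf{A\uplus A'}{\Gamma}{K}{s}{\hat e}$ with $A' \cap \an{\Gamma,K,s,\hat e} = \emptyset$. Applying $\pi$ yields $C\pi = \conf{A\pi_\alpha}{\Gamma\pi}{K\pi}{s\pi}{\hat e\pi}$ and $C'\pi = \conf{(A\pi_\alpha)\uplus (A'\pi_\alpha)}{\Gamma\pi}{K\pi}{s\pi}{\hat e\pi}$, using that $\pi_\alpha$ is a bijection and hence distributes over disjoint union. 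Equivariance of $\an{\cdot}$ under $\pi_\alpha$ gives $A'\pi_\alpha \cap \an{\Gamma\pi,K\pi,s\pi,\hat e\pi} = \emptyset$, so the pair $(C\pi, C'\pi)$ is again an instance of the first axiom. The second axiom is treated symmetrically: the disjointness condition $\dom{s_g}\cap\fl{\Gamma,K,s,\hat e}=\emptyset$ is preserved by $\pi_l$, since $\pi_l$ is a bijection on locations and $\fl{\cdot}$ is equivariant under it.

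Having established that $\bisim{R}$ is an equivalence relation satisfying both defining axioms, maximality of $\asymp$ immediately gives $\bisim{R}\subseteq(\asymp)$, which unpacks to the desired implication $C \asymp C' \Rightarrow C\pi \asymp C'\pi$. I do not anticipate any real obstacle here: the entire argument is routine equivariance bookkeeping, and the only substantive point is that all data of a configuration ($A$, $\Gamma$, $K$, $s$, $\hat e$) together with the derived sets $\an{\cdot}$ and $\fl{\cdot}$ commute with the respective components of $\pi$, which is immediate from their inductive definitions.
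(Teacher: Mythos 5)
Your proof is correct and matches the paper's (implicit) argument: the paper states this lemma with no proof at all, treating it as immediate equivariance in the style of its other nominal-sets lemmas (cf.\ the proof of Lemma~\ref{lem:lts-perm}), and your detailed check that the two garbage-collection axioms, the disjointness side conditions, and the equivalence-closure structure are all preserved by $\pi=\pi_l\pi_\alpha\pi_i$ is exactly the bookkeeping being elided. The only point worth noting is that your closing appeal to maximality takes the phrase ``largest equivalence relation with the axioms'' in Definition~\ref{def:asymp} literally; if one instead reads $\asymp$ as the equivalence relation \emph{generated} by those axioms, the identical equivariance computation concludes via minimality applied to $\{(C,C')\mid C\pi\asymp C'\pi\}$ (or by induction on derivations), so nothing substantive changes.
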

\begin{lemma}\label{lem:asymp-bisim}
  Let $C_1\asymp C_2$
  and $C_1 \trans{\eta} C_1'$; then
  $C_2 \trans{\eta} C_2'$ and $C_1' \asymp C_2'$, provided that
      $(\an{C_1'}\sdif\an{C_1})\cap \an{C_2} = \emptyset$ and
      $(\fl{C_1'}\sdif\fl{C_1})\cap \fl{C_2} = \emptyset$.
\end{lemma}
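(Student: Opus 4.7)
The plan is to proceed by case analysis on the rule deriving $C_1 \trans{\eta} C_1'$. By the definition of $\asymp$ as the largest equivalence closed under the two garbage axioms of \cref{def:asymp}, the hypothesis $C_1 \asymp C_2$ yields a common ``live core'' $C_0 = \conf{A_0}{\Gamma_0}{K_0}{s_0}{\hat e_0}$ such that $C_1$ and $C_2$ are both obtained from $C_0$ by extending $A_0$ with garbage abstract names and $s_0$ with garbage store fragments, with the two garbage parts in $C_1$ and $C_2$ being independent of one another. In particular $\Gamma_1 = \Gamma_2$, $K_1 = K_2$, and $\hat e_1 = \hat e_2$, so the active ingredients driving each transition rule are shared between $C_1$ and $C_2$. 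The task then reduces to checking that whatever freshness side-condition the rule imposes on $C_2$ can be honoured with the very same choice used for $C_1$.

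The proponent rules \iref{propappf} and \iref{propretf}, together with \iref{term} and \iref{dummy}, impose no freshness constraint touching $A$ or $\dom{s}$: they are determined by the shared $\hat e, K, \Gamma$ components and the ultimate pattern chosen, whose fresh index can be picked identically for both configurations since $\dom{\Gamma_1} = \dom{\Gamma_2}$. For \iref{tau}, the same base reduction $\redconf{s_1}{\hat e_1} \redbase \redconf{s_1'}{\hat e_1'}$ applies in $C_2$; the only delicate subcase is allocation of a fresh location $l$, where the hypothesis $(\fl{C_1'}\sdif\fl{C_1}) \cap \fl{C_2} = \emptyset$ ensures $l \not\in \dom{s_2}$, so the allocation goes through in $C_2$ just as well.

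For the opponent rules \iref{opappf} and \iref{opretf}, the transition introduces fresh abstract names $\vec\alpha$; the hypothesis $(\an{C_1'}\sdif\an{C_1}) \cap \an{C_2} = \emptyset$ yields $\vec\alpha \cap A_2 = \emptyset$, so $C_2$ may take the same labelled step with the identical $\vec\alpha$. In every case the residual configurations $C_1'$ and $C_2'$ inherit identical updates to the live components $\Gamma, K, s_0, \hat e$, while the garbage portions of $A$ and $s$ are carried along unchanged. Therefore $C_1' \asymp C_2'$ by reapplying the same two axioms that witnessed $C_1 \asymp C_2$.

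The main obstacle is essentially notational rather than conceptual: one has to commit to an explicit structural decomposition of $\asymp$ as two independent garbage extensions (one in $A$, one in $s$) and then thread this decomposition cleanly through each transition rule. The freshness side-conditions of the statement do real work only in the \iref{tau}, \iref{opappf}, and \iref{opretf} cases, precisely where they are needed to rule out clashes between fresh items minted by $C_1'$ and the garbage already sitting in $C_2$.
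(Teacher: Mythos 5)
Your proposal is correct and follows essentially the same route as the paper, whose proof is just the one-line sketch ``by induction on the derivation of $C_1 \asymp C_2$ and case analysis on the transition from $C_1$'': your explicit ``common live core'' decomposition is simply that induction made concrete, and your case analysis correctly identifies that the freshness hypotheses are needed exactly for allocation in \iref{tau} and for the fresh names in \iref{opappf}/\iref{opretf}.
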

\begin{proof}
  By induction on the derivation of $C_1 \asymp C_2$ and case analysis on the transition from $C_1$.
\end{proof}
\begin{corollary}\label{lem:asymp-bisim-weak}
  Let $C_1\asymp C_2$
  and $C_1 \wtrans{\eta} C_1'$; then
  $C_2 \wtrans{\eta} C_2'$ and $C_1' \asymp C_2'$, provided that
      $(\an{C_1'}\sdif\an{C_1})\cap \an{C_2} = \emptyset$ and
      $(\fl{C_1'}\sdif\fl{C_1})\cap \fl{C_2} = \emptyset$.
\end{corollary}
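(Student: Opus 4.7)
The plan is to induct on the length $n$ of the transition sequence witnessing $C_1 \wtrans{\eta} C_1'$. Unfold it as $C_1 = C_1^{(0)} \trans{\eta_0} \cdots \trans{\eta_{n-1}} C_1^{(n)} = C_1'$, where exactly one $\eta_j$ equals $\eta$ (when $\eta \neq \tau$) and all others are $\tau$. The base case $n = 0$ is immediate: take $C_2' = C_2$ with the empty transition sequence. For the inductive step, first apply the induction hypothesis to the prefix of length $n-1$ to obtain $C_2 \wtrans{} C_2^{(n-1)}$ with $C_1^{(n-1)} \asymp C_2^{(n-1)}$, and then apply \cref{lem:asymp-bisim} to the last transition $C_1^{(n-1)} \trans{\eta_{n-1}} C_1^{(n)}$, extending to $C_2' = C_2^{(n)}$ with $C_1' \asymp C_2'$.

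Two freshness obligations need to be discharged. For the IH, the required conditions $(\an{C_1^{(n-1)}} \sdif \an{C_1}) \cap \an{C_2} = \emptyset$ and $(\fl{C_1^{(n-1)}} \sdif \fl{C_1}) \cap \fl{C_2} = \emptyset$ follow from the corollary's hypothesis together with \cref{lem:lts-basic}, which gives $\an{C_1^{(k)}} \subseteq \an{C_1^{(k+1)}}$ and $\dom{s_1^{(k)}} \subseteq \dom{s_1^{(k+1)}}$ along every transition. Hence the increments along the prefix are bounded by the increments along the full sequence, and are therefore already disjoint from the names and locations of $C_2$.

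The main obstacle is the second obligation: applying \cref{lem:asymp-bisim} to the final step requires the names and locations newly introduced by $C_1^{(n-1)} \trans{\eta_{n-1}} C_1^{(n)}$ to avoid $\an{C_2^{(n-1)}}$ and $\fl{C_2^{(n-1)}}$, which may have grown beyond $\an{C_2}$ and $\fl{C_2}$ via intermediate transitions on the $C_2$ side. The plan is to resolve this by exploiting equivariance of the LTS (\cref{lem:lts-perm,lem:lts-permute}) at every inductive step: the names and locations freshly introduced on the $C_2$ side are chosen to be globally fresh, avoiding in particular everything in $C_1, C_1'$ and in all $C_1^{(k)}$ produced so far. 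Since $\asymp$ is preserved under permutations (\cref{lem:asymp-perm}), renaming the $C_2$-side transition in this way maintains the inductive invariant $C_1^{(k)} \asymp C_2^{(k)}$, while guaranteeing that the freshness preconditions of \cref{lem:asymp-bisim} remain satisfied throughout. This yields the desired $C_2'$ with $C_2 \wtrans{\eta} C_2'$ and $C_1' \asymp C_2'$, completing the induction.
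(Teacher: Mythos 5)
Your skeleton --- induction on the length of the transition sequence, the induction hypothesis on the prefix, \cref{lem:asymp-bisim} on the last step, and \cref{lem:lts-basic} to propagate the freshness side conditions --- is exactly the paper's (one-line) proof. The gap is in your treatment of the ``main obstacle'': the one-sided renaming of the $C_2$-side transitions is not sound. \cref{lem:asymp-perm} applies the \emph{same} permutation to both sides of $\asymp$, so to conclude $C_1^{(k)}\asymp C_2^{(k)}\pi$ from $C_1^{(k)}\asymp C_2^{(k)}$ you would need $\pi$ to fix every name occurring in $C_1^{(k)}$. But the names and locations ``freshly introduced on the $C_2$ side'' are not yours to choose: for visible steps they are pinned down by the label $\eta_k$, which must be matched literally, and for allocating $\tau$-steps the conclusion $C_1^{(k+1)}\asymp C_2^{(k+1)}$ of \cref{lem:asymp-bisim} forces the new location to coincide with the one in the active part of $C_1^{(k+1)}$, since $\asymp$ requires the two configurations to agree everywhere outside the garbage. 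Renaming those shared, active names on the $C_2$ side alone would therefore \emph{break} the invariant $C_1^{(k)}\asymp C_2^{(k)}$ rather than maintain it.

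Fortunately no such device is needed. Because \cref{lem:asymp-bisim} produces a lockstep match and the garbage components of an $\asymp$-pair are by definition disjoint from the active parts (hence never read, written or extended by a transition), the excess of $C_2^{(k)}$ over $C_1^{(k)}$ is the same garbage throughout the run, giving $\an{C_2^{(k)}}\subseteq\an{C_1^{(k)}}\cup\an{C_2}$ and $\fl{C_2^{(k)}}\subseteq\fl{C_1^{(k)}}\cup\fl{C_2}$ for every $k$. The names and locations introduced at step $k{+}1$ are disjoint from $\an{C_1^{(k)}}$ and $\fl{C_1^{(k)}}$ automatically, and disjoint from $\an{C_2}$ and $\fl{C_2}$ by the corollary's hypothesis together with the monotonicity of \cref{lem:lts-basic} --- the very same argument you already used to discharge the obligation for the induction hypothesis. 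Carry this containment as an additional invariant of the induction and the proof closes without any permutation.
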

\begin{proof}
  By induction on the length of the transition from $C_1$ and \cref{lem:lts-basic,lem:asymp-bisim}.
\end{proof}


\begin{lemma}
  Any transition $C\trans{\tau}C'$ is a $\beta$-move, provided $C'=\botconf$ implies $C=\botconf$.
\end{lemma}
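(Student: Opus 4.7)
The plan is to proceed by case analysis on the rule used to derive $C \trans{\tau} C'$, which must be either \iref{dummy} or \iref{tau}. For each case I will check the defining property of a $\beta$-move directly by exhibiting, for every transition $C \trans{\eta} C''$, a matching response from $C'$. Crucially, in both cases the second disjunct of the $\beta$-move definition will always be discharged using the trivial witness $C'' = C'''$ rather than the recursive clause $C'' \betatrans C'''$; consequently no coinductive reasoning is needed and the definition can be checked by a single unfolding.

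In the first case, the \iref{dummy} rule gives $C' = \botconf$, so the hypothesis forces $C = \botconf$. None of the other LTS rules applies to $\botconf$ (its syntactic form matches neither $\conf{A}{\Gamma}{K}{s}{e}$ nor the premise of \iref{term}), hence every transition $\botconf \trans{\eta} C''$ must itself be another instance of \iref{dummy} and satisfies $C'' = \botconf$. When $\eta = \tau$ we obtain $C' = C''$ immediately; otherwise \iref{dummy} applied to $C' = \botconf$ yields $C' \trans{\eta} \botconf$, and we take $C''' = \botconf = C''$.

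In the second case, $C = \conf{A}{\Gamma}{K}{s}{e}$ with $\redconf{s}{e} \red \redconf{s'}{e'}$, so $e$ is a reducible expression. By unique decomposition (\cref{lem:unique-decomp}) the redex of $e$ is uniquely determined, and no other non-\iref{dummy} rule can apply: \iref{propretf}, \iref{opappf}, \iref{opretf} and \iref{term} all require $e$ to be a value or $\noe$, while \iref{propappf} would place an abstract application $\alpha\,v$ at the unique hole of $e$, which is not itself a redex. Therefore any other transition from $C$ is either another \iref{tau}-step (producing the same $C'$ by determinism of $\redbase$, and thus satisfying the first disjunct) or a \iref{dummy}-step to $\botconf$ (matched by $C' \trans{\eta} \botconf$ and discharged with $C''' = \botconf = C''$). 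The most delicate point, and the only one requiring care, is the determinism of \iref{tau}: this holds because the $\new l\,v\,e$ rule syntactically fixes the allocated location and the Barendregt convention (a standing assumption of the paper) prevents clashes, thereby eliminating the only potential source of non-determinism in the reduction relation. With these observations in hand the verification of both disjuncts is immediate.
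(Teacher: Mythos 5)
Your proof is correct and follows essentially the same route as the paper's: a case split on whether the $\tau$-transition comes from \iref{dummy} or \iref{tau}, using the hypothesis to collapse the former case to $C=C'=\botconf$, and using unique decomposition in the latter to show that the only competing transitions are the \iref{tau}-step itself (first disjunct of the $\beta$-move definition) and \iref{dummy}-steps to $\botconf$ (second disjunct, with the trivial witness). Your explicit remarks on the determinism of $\redbase$ and on why $\lterm$ cannot fire are welcome elaborations of points the paper leaves implicit, but they do not change the argument.
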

\begin{proof}
  Case analysis on the transition relation gives us two cases:
  the transition is derived either by the \iref{dummy} or \iref{tau} rule.
  The former case is trivial because $C=C'$. 
  The latter is also trivial because unique decomposition
  (\cref{lem:unique-decomp}) implies that transitions derived by the \iref{tau}
  rule can only perform that transition and the transition derived by
  \iref{dummy}. The \iref{tau}-transition satisfies the first condition of \cref{def:beta-move} and the \iref{dummy}-transition satisfies the second condition of the same definition.
\end{proof}

\begin{lemma}\label{lem:lts-weaken}
  Let $C_1 = \conf{A}{\Gamma}{K}{s}{\hat e}$
  and $C_2 = \conf{A}{\Gamma,\maps i v}{K}{s}{\hat e}$ be well formed configurations. Then the following hold:
  \begin{enumerate}
    \item
      If $C_1 \trans{\eta} C_1' = \conf{A'}{\Gamma'}{K'}{s'}{\hat e'}$,
      where $\eta\not\in\{\lpropapp{\alpha}{i},\lpropret{i}\where \text{any } \alpha\}$,
      then
      $$C_2 \trans{\eta} \conf{A'}{\Gamma',\maps i v}{K'}{s'}{\hat e'}.$$
    \item
      If $C_1 \trans{\eta} C_1'=\conf{A'}{\Gamma,\maps{i}{u}}{K'}{s'}{\hat e'}$
      where $\eta=\lpropapp{\alpha}{i}$ or $\eta=\lpropret{i}$,
      then $$C_2 \trans{\eta'} \conf{A'}{\Gamma,\maps{j}{u}}{K'}{s'}{\hat e'}$$
      where $\eta'=\lpropapp{\alpha}{j}$ or $\eta'=\lpropret{j}$, respectively, and
      $j\not=i$.
    \item
      If $C_2 \trans{\eta} \conf{A'}{\Gamma',\maps i v}{K'}{s'}{\hat e'}$,
      where $\eta\not\in\{\lopapp{i}{\alpha}\where \text{any } \alpha\}$,
      then $$C_1 \trans{\eta} \conf{A'}{\Gamma,\maps{i}{u}}{K'}{s'}{\hat e'}$$

  \end{enumerate}
\end{lemma}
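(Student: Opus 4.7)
The plan is to proceed by case analysis on the LTS rule (\cref{fig:lts}) that derives each given transition, exploiting the fact that $C_1$ and $C_2$ differ only by the extra entry $\maps{i}{v}$ in their $\Gamma$ components. Since the store, stack, expression and abstract-name set are identical in $C_1$ and $C_2$, the only rules whose applicability depends on this difference are those that inspect or extend $\Gamma$: namely \iref{opappf} (which reads $\Gamma$ at some index) and \iref{propappf}/\iref{propretf} (which pick a fresh index). The rules \iref{tau}, \iref{opretf} and \iref{dummy} treat $\Gamma$ transparently and so lift immediately in either direction.

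For part~(1), after dispatching the transparent rules, an \iref{opappf}-step with label $\lopapp{j}{D\hole[\vec\alpha]}$ requires $j\in\dom{\Gamma}$, hence $j\neq i$ and $(\Gamma,\maps{i}{v})(j)=\Gamma(j)$, so $C_2$ fires the same transition and its result simply carries the extra $\maps{i}{v}$. For a \iref{propappf}/\iref{propretf} step using fresh index $j$, the exclusion of $\lpropapp{\alpha}{i}$ and $\lpropret{i}$ from the admitted labels forces $j\neq i$, so $j$ is still fresh for $\dom{\Gamma,\maps{i}{v}}$ and $C_2$ matches the step with the identical label. Part~(2) covers exactly the case excluded from part~(1): a proponent step in $C_1$ that chooses the index $i$ itself, which is no longer fresh in $C_2$. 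Here I pick any $j$ fresh for $\dom{\Gamma}\cup\{i\}$ and fire $C_2\trans{\eta'}$ with the correspondingly relabelled move, noting that the codomain written in the statement appears to elide the preserved $\maps{i}{v}$ binding, which the formal proof restores.

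Part~(3) is the converse and is handled symmetrically. A transition from $C_2$ labelled $\eta\neq\lopapp{i}{\alpha}$ falls into one of three groups: an \iref{opappf} at some $j\neq i$, in which case $j\in\dom{\Gamma}$ and the value read coincides with $\Gamma(j)$; a \iref{propappf}/\iref{propretf} picking $j$ fresh for $\dom{\Gamma,\maps{i}{v}}$ (and hence for $\dom{\Gamma}$ too); or a $\Gamma$-transparent rule. In each case $C_1$ can fire the very same transition, and its successor is exactly the one obtained from $C_2$'s successor by deleting the preserved binding $\maps{i}{v}$.

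The main obstacle is purely bookkeeping: tracking freshness side-conditions on indices across the two configurations and staying consistent about which rules are sensitive to $\Gamma$. The only genuine subtlety is the apparent elision of the preserved $\maps{i}{v}$ in the statement of part~(2), which is reconciled by making that binding explicit in the target configuration; no deeper argument is required beyond the structural observations above.
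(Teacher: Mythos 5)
Your proposal is correct and follows essentially the same route as the paper, whose proof is simply ``by case analysis on the transitions''; you have merely spelled out the cases (rules transparent in $\Gamma$ versus \iref{opappf}, \iref{propappf}, \iref{propretf}) that the paper leaves implicit. Your observation that the stated target configurations in parts~(2) and~(3) elide the preserved $\maps{i}{v}$ binding is a reasonable reading of what appear to be typos in the statement, and restoring that binding is exactly how the lemma is used elsewhere (e.g.\ in the soundness proof of $\utweak{}$).
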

\begin{proof}
By case analysis on the transitions.
\end{proof}

\begin{lemma}\label{lem:lts-concr}
  Let $C_1 = \conf{A}{\Gamma}{K}{s}{\hat e}$
  and $C_1\concr{\alpha}{v}$ be well-formed configurations;
  then the following hold:
  \begin{enumerate}
    \item
      If $C_1 \trans{\tau} C_1'$
      then
      $C_1\concr{\alpha}{v} \trans{\tau} C_1'\concr{\alpha}{v}$.
    \item
      If $C_1\concr{\alpha}{v} \trans{\tau} C_2'$
     and  $\hat e\not= E\hole[\app\alpha u]$ (for any $E,u$)
      then there exists $C_1'$ such that $C_2'=C_1'\concr{\alpha}{v}$
      and $C_1 \trans{\tau} C_1'$.

  \end{enumerate}
\end{lemma}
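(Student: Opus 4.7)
The plan is to case-split on the derivation of the $\tau$-transition in each part. The LTS of \cref{fig:lts} has exactly two rules that produce $\tau$-labels: \iref{dummy} (whose target is $\botconf$) and \iref{tau} (which wraps an inner base-reduction $\redbase$). Both \iref{dummy} instances will be dispatched immediately in either direction, since $\botconf\concr{\alpha}{v}=\botconf$ and the rule is applicable to any source configuration. All the substance lies in the \iref{tau} case.

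For part (1) via \iref{tau}, I would write $\hat e = E\hole[r]$ with $\redconf{s}{r}\redbase\redconf{s'}{r'}$. My first observation is that, by inspection of the base rules of \cref{fig:lang}, abstract names never appear at the head of any redex --- they are not constants, not $\lambda$-abstractions, not tuples, and not locations. Consequently $\concr{\alpha}{v}$ commutes with every base rule and also with evaluation-context plugging, yielding $\redconf{s\concr{\alpha}{v}}{(E\hole[r])\concr{\alpha}{v}}\red \redconf{s'\concr{\alpha}{v}}{(E\hole[r'])\concr{\alpha}{v}}$. Re-applying \iref{tau} then delivers the required transition, and the remaining configuration components $A,\Gamma,K$ are carried through unchanged.

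For part (2), I would assume $C_1\concr{\alpha}{v}\trans{\tau} C_2'$ by \iref{tau}, so that $\hat e\concr{\alpha}{v} = E'\hole[r_0]$ uniquely (by \cref{lem:unique-decomp}) with a base reduction $\redconf{s\concr{\alpha}{v}}{r_0}\redbase \redconf{s_0}{r_0'}$. The heart of the proof will be a decomposition-lifting claim: if $\hat e \neq E\hole[\app\alpha u]$ for any $E,u$, then there exist $E,r$ with $\hat e = E\hole[r]$, $E\concr{\alpha}{v}=E'$, $r\concr{\alpha}{v}=r_0$, and $r$ itself a base redex. Given this, I would invert the base rule --- again using that $\alpha$ cannot head any redex --- to recover $\redconf{s}{r}\redbase\redconf{s_1}{r_1}$ with $s_1\concr{\alpha}{v}=s_0$ and $r_1\concr{\alpha}{v}=r_0'$, so that the desired $C_1' = \conf{A}{\Gamma}{K}{s_1}{E\hole[r_1]}$ satisfies $C_1'\concr{\alpha}{v}=C_2'$.

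The main obstacle will be the decomposition-lifting claim, which I plan to prove by structural induction on $E'$. In the base case $E'=\hole$, the substituted term $\hat e\concr{\alpha}{v}$ is itself a base redex, and I must exhibit a pre-image redex in $\hat e$. Case analysis over the base rules shows that the only way a substitution $\concr{\alpha}{v}$ with $v$ a $\lambda$-abstraction can convert a non-redex into a redex at the head is by turning $\app\alpha u$ into $\app v u$, which is precisely the pattern excluded by the side condition. All other base-rule shapes --- arithmetic, conditional, dereference, assignment, allocation, let-tuple --- have pre-images of matching form in $\hat e$. The compositional cases of $E'$ will follow by induction, with no capture issues since $\alpha$ is a free abstract name not bound by any evaluation-context production.
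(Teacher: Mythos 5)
Your proposal is correct and follows essentially the same route as the paper, which proves this lemma simply ``by case analysis on the transitions'': you dispatch the \textsc{Response} rule trivially and handle the \textsc{Tau} rule by commuting the concretisation with the base reduction, using the observation that $\alpha$ can only head a redex after substitution in the excluded shape $E[\alpha\,u]$. The decomposition-lifting claim you isolate is exactly the content the paper leaves implicit, and your use of the side condition to rule out newly created $\beta$-redexes in evaluation position is the right justification.
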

\begin{proof}
  By case analysis on the transitions.
\end{proof}

\begin{lemma}\label{lem:lts-taus}
  Let $C_1 = \conf{A_1}{\Gamma_1}{K_1}{s}{e} \wtrans{\tau} \conf{A_1'}{\Gamma_2'}{K_1'}{s'}{e'}$.
  For any $A_2$, $\Gamma_2$, $K_2$, and $E$ such that the configurations
  $C_2 = \conf{A_2}{\Gamma_2}{K_2}{s}{e}$ and
  $C_2' = \conf{A_2}{\Gamma_2}{K_2}{s}{E[e]}$ are well-formed:
  \[
    C_2 \wtrans{\tau} \conf{A_2}{\Gamma_2}{K_2}{s'}{e'}
    \qquad\text{and}\qquad
    C_2' \wtrans{\tau} \conf{A_2}{\Gamma_2}{K_2}{s'}{E[e']}
  \]
\end{lemma}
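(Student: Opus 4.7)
The plan is to induct on the length $n$ of the weak transition $C_1 \wtrans{\tau} C_1' = \conf{A_1'}{\Gamma_2'}{K_1'}{s'}{e'}$. The first observation is that since the target $C_1'$ is a non-$\botconf$ configuration and $\botconf$ can only reach $\botconf$ under rule \iref{dummy}, none of the intermediate $\tau$-steps may have been produced by \iref{dummy}. Hence every step in the sequence arises from the \iref{tau} rule, and therefore corresponds to an underlying reduction $\redconf{s_i}{e_i} \red \redconf{s_{i+1}}{e_{i+1}}$ of the language. In particular, by \cref{lem:lts-basic}, $A$, $\Gamma$ and $K$ are invariant along the entire sequence, so $A_1' = A_1$, $\Gamma_2' = \Gamma_1$ and $K_1' = K_1$.

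The key auxiliary property is an evaluation-context closure result: if $\redconf{s}{e} \red \redconf{s_0}{e_0}$ then $\redconf{s}{E\hole[e]} \red \redconf{s_0}{E\hole[e_0]}$ for any evaluation context $E$. This is immediate from the definition of $\red$ in \cref{fig:lang}: the premise unfolds as $e = F\hole[e_b]$ with $\redconf{s}{e_b} \redbase \redconf{s_0}{e_b'}$ and $e_0 = F\hole[e_b']$ for some context $F$, and a routine induction on $E$ shows that $E\hole[F\hole[\cdot]]$ is again a single-hole evaluation context, giving the conclusion directly.

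In the base case $n = 0$ both conclusions are trivial: $s' = s$, $e' = e$, and $C_2, C_2'$ stay put. For the inductive step write $C_1 \trans{\tau} \conf{A_1}{\Gamma_1}{K_1}{s''}{e''} \wtrans{\tau} C_1'$. By the first observation the single step stems from \iref{tau}, i.e.\ from a reduction $\redconf{s}{e} \red \redconf{s''}{e''}$. Since the stores of $C_2$ and $C_2'$ coincide with that of $C_1$, the \iref{tau} rule fires over exactly the same underlying reduction in $C_2$, and—using the context-closure property above with context $E$—also in $C_2'$, yielding
\[
C_2 \trans{\tau} \conf{A_2}{\Gamma_2}{K_2}{s''}{e''} \qquad\text{and}\qquad C_2' \trans{\tau} \conf{A_2}{\Gamma_2}{K_2}{s''}{E\hole[e'']}.
\]
Applying the induction hypothesis to the residual $\wtrans{\tau}$ sequence from $\conf{A_1}{\Gamma_1}{K_1}{s''}{e''}$ (with arbitrary $A_2, \Gamma_2, K_2$, and the same context $E$) chains these single steps into the weak transitions required by the conclusion.

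The main subtlety I expect to address carefully is precisely ruling out \iref{dummy}-derived $\tau$-steps in the initial hypothesis, since $\botconf \trans{\tau} \botconf$ is always possible and would otherwise break the transfer of the underlying reduction. The context-composition argument is standard but deserves an explicit appeal to the grammar of $\EC$. Aside from these points, the proof is a routine induction leveraging the fact that $\tau$-transitions act only on the store/expression components of configurations.
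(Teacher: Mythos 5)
Your proof is correct and follows essentially the same route as the paper, which disposes of this lemma with a one-line "by case analysis on the transitions": you simply make explicit the two ingredients that case analysis rests on, namely that a $\tau$-sequence ending in a non-$\botconf$ configuration consists solely of instances of the internal-reduction rule (which touch only the store and expression components), and that the reduction relation is closed under composition with an evaluation context. Nothing further is needed.
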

\begin{proof}
  By case analysis on the transitions.
\end{proof}
\subsection{Simple Simulation Results}

\begin{lemma}[Equivalent Knowledge Environments]\label{lem:simil-properties}
  Consider $C_1 = \conf{A_1}{\Gamma_1}{K_1}{s_1}{\hat e_1}$
  and $C_2 = \conf{A_2}{\Gamma_2}{K_2}{s_2}{\hat e_2}$ with $C_1 \simil C_2$.
  \begin{enumerate}
    \item $\hat e_1 = \noe$ if and only if $\hat e_2=\noe$.
    \item If there exists a trace $C_1 \trans{t\lterm}$ then $\sizeof{K_1}=\sizeof{K_2}$.
    \item $\dom{\Gamma_1} \subseteq \dom{\Gamma_2}$
    \item If $C_1 \trans{\eta} C_1'$ and $\eta \in \{\lpropapp{\alpha}{i}, \lpropret{i}\where \text{ some } i ,\alpha\}$
      then $\dom{\Gamma_1}=\dom{\Gamma_2}$.
    \item If $\Gamma_1 \not=\emptyG$ or $K_1\not=\emptyK$ then $A_1 \supseteq A_2$.
      \qed
  \end{enumerate}
\end{lemma}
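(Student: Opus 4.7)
The plan is to prove all five parts by case analysis on the LTS rules, exploiting two structural facts. First, opponent and proponent configurations have disjoint label alphabets: opponent labels $\lopapp{i}{\cdot}$, $\lopret{\cdot}$, $\lterm$ only fire when $\hat e = \noe$, while proponent labels $\lpropapp{\alpha}{i}$, $\lpropret{i}$ and all $\tau$-moves only fire when $\hat e \neq \noe$. Second, the \iref{dummy} rule absorbs any non-$\lterm$ transition into $\botconf$, but $\botconf$ itself cannot fire $\lterm$. So whenever $C_1$ admits a terminating continuation, $C_2$ cannot match via \iref{dummy} indefinitely and is forced into a structural match.

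For (1) I would argue by contradiction. Suppose $\hat e_1 = \noe$ but $\hat e_2 \neq \noe$. Any non-dummy transition of $C_1$ carries an opponent label, but $C_2$'s $\tau$-closure preserves its proponent shape, so no such label is enabled at $C_2$ non-trivially; the only match is \iref{dummy}, giving $C_1' \simil \botconf$. Draining $K_1$ by a sequence of opponent returns (possibly interleaved with proponent-side continuations $C_1$ itself produces) eventually yields $K_1 = \emptyK$ with $\lterm$ enabled, which $\botconf$ cannot match, a contradiction. For (2), I would induct on the length of the trace $t$: $\lpropapp{\alpha}{i}$ pushes a frame onto $K$, $\lopret{\cdot}$ pops one, all other labels including $\tau$ leave $K$ unchanged, and the \iref{term} rule requires $K = \emptyK$. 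Since $C_2$ weakly matches $t$ with the same labelled moves modulo $\tau$s, the net changes to $|K_1|$ and $|K_2|$ agree, hence the two stacks have equal size.

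For (3)--(5), the shared technique is a \emph{probe}: a finite sequence of challenges from $C_1$ with a terminating extension, chosen so that $C_2$ cannot use \iref{dummy} to duck the challenge. For (3), given $i \in \dom{\Gamma_1}$ and an opponent stage (reached by proponent progress if $\hat e_1 \neq \noe$, with the alignment guaranteed by~(1)), the challenge $\lopapp{i}{\vec\alpha}$ with a suitable terminating continuation forces $C_2$ to fire the same labelled transition, requiring $i \in \dom{\Gamma_2}$. For (4), a $\lpropapp{\alpha}{i}$ or $\lpropret{i}$ transition from $C_1$ picks $i$ fresh in $\Gamma_1$; if we could pick $i \in \dom{\Gamma_2} \setminus \dom{\Gamma_1}$ (assuming the reverse containment fails), no weak match of $C_2$ exists on that label, so combined with (3) we get $\dom{\Gamma_1} = \dom{\Gamma_2}$. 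For (5), an opponent challenge enabled by $\Gamma_1\neq\emptyG$ or $K_1\neq\emptyK$ can draw its fresh abstract names from $A_2 \setminus A_1$: this is a legal move for $C_1$ but inconsistent with freshness in $A_2$ for any non-dummy match of $C_2$, forcing $A_2 \subseteq A_1$.

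The main obstacle will be making the probe constructions precise, particularly for (1) when $K_1 \neq \emptyK$: one must verify that sequences of opponent returns suffice to empty $K_1$ while preserving $C_1' \simil \botconf$ at each step, and that the eventual $\lterm$ truly distinguishes. Handling the freshness side-conditions on abstract names and environment indices across weak transitions will rely on the permutation lemmas already proved in the appendix (\cref{lem:lts-perm}, \cref{lem:lts-permute}), and the mutual dependence of parts (3) and (4) must be untangled by proving them in tandem, using (1) to align proponent and opponent stages throughout.
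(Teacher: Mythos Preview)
The paper offers no proof beyond the concluding $\qed$, so there is no argument to compare against. Your overall strategy---using the disjoint label alphabets of opponent and proponent configurations, together with the fact that $\botconf$ cannot fire $\lterm$---is the natural one, and your treatment of part~(2), where a terminating trace is an explicit hypothesis, is correct.

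The gap is that your probes for parts~(1), (3), (4) and~(5) all presuppose that the post-challenge configuration $C_1'$ admits a terminating trace, so that $C_2$ is forced into a structural match rather than escaping via \iref{dummy}. This is not guaranteed by the hypotheses, and without it the claims are actually false. For~(1): let $C_1=\conf{\{\alpha\}}{\emptyG}{E,\emptyK}{\emptyS}{\noe}$ where $E[w]$ diverges for every value $w$; then every $\lopret{w}$ leads to a diverging proponent, $K_1$ is never drained, $\traces{C_1}=\emptyset$, and by \cref{lem:similtraces} we get $C_1\simil C_2$ for \emph{every} $C_2$, including proponent ones. For~(3): let $C_1=\conf{\emptyA}{(\maps{i}{\lambda x.e})}{\emptyK}{\emptyS}{\noe}$ with $e$ diverging and $C_2=\conf{\emptyA}{\emptyG}{\emptyK}{\emptyS}{\noe}$; both have trace set $\{\epsilon\}$, so $C_1\simil C_2$, yet $i\in\dom{\Gamma_1}\setminus\dom{\Gamma_2}$, and your probe $\lopapp{i}{v}$ is absorbed by \iref{dummy} on the $C_2$ side. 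You correctly flag the draining step as ``the main obstacle'' in your closing paragraph, but it is not a detail to be discharged later: it is where the argument---and the lemma as literally stated---fails. A liveness hypothesis on the relevant post-challenge configurations (of the kind made explicit in~(2)) is needed before your approach can be made rigorous.
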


\subsection{Up to Fold}

\begin{definition}[Concretisation]
  A \emph{concretisation} $\concr{\alpha}{v}$
  is defined when $\alpha\not\in\an{v}$, and we
  write $e\concr{\alpha}{v}$ for the expression obtained after substituting $\alpha$ for $v$ in $e$.
  We let $\dom{\concr{\alpha}{v}}=\{\alpha\}$ and $\rng{\concr{\alpha}{v}}=\{v\}$,
  and let $\kappa$ range over concretisations.
  We lift concretisation to contexts, environments and stores point-wise; we also lift it to configurations, writing
  $C\kappa$
  to mean $\conf{A\sdif\dom{\kappa}}{\Gamma\kappa}{K\kappa}{s\kappa}{e\kappa}$, when $C=\conf{A}{\Gamma}{K}{s}{e}$.
  \defqed
\end{definition}

\begin{figure*}[t] 
  \[\begin{array}{@{}c@{}}
    \irule[UTFe1][utfe1]{
      \conf{A_1}{\Gamma_1}{E_1,K_1}{s_1}{e_1}
      \bisim*{R}
      \conf{A_2}{\Gamma_2}{E_2,K_2}{s_2}{e_2}
    }{
      \conf{A_1}{\Gamma_1}{K_1}{s_1}{E_1\hole[e_1]}
      \utfolde*{\bisim{R}}
      \conf{A_2}{\Gamma_2}{K_2}{s_2}{E_2\hole[e_2]}
    }
    \\\\
    \irule[UTFe2][utfe2]{
      \conf{A_1}{\Gamma_1}{K_{11},E_1,E'_1,K_{12}}{s_1}{\hat e_1}
      \bisim*{R}
      \conf{A_2}{\Gamma_2}{K_{21},E_2,E'_2,K_{22}}{s_2}{\hat e_2}
      \\
      \sizeof{K_{11}}=\sizeof{K_{21}}
    }{
      \conf{A_1}{\Gamma_1}{K_{11},E'_1\hole[E_1],K_{12}}{s_1}{\hat e_1}
      \utfolde*{\bisim{R}}
      \conf{A_2}{\Gamma_2}{K_{21},E'_2\hole[E_2],K_{22}}{s_2}{\hat e_2}
    }
    \\\\
    \irule[UTFd1][utfd1]{
      \conf{A_1}{\Gamma_1}{E_1,K_1}{s_1}{e_1}
      \bisim*{R}
      \botconf
    }{
      \conf{A_1}{\Gamma_1}{K_1}{s_1}{E_1\hole[e_1]}
      \utfolde*{\bisim{R}}
      \botconf
    }
    \quad
    \irule[UTFd2][utfd2]{
      \conf{A_1}{\Gamma_1}{K_{11},E_1,E'_1,K_{12}}{s_1}{\hat e_1}
      \bisim*{R}
      \botconf
    }{
      \conf{A_1}{\Gamma_1}{K_{11},E'_1\hole[E_1],K_{12}}{s_1}{\hat e_1}
      \utfolde*{\bisim{R}}
      \botconf
    }
  \end{array}\]
    \Hline
  \[\begin{array}{@{}c@{}}
     \irule[UTFv][utfv]{
       \conf{A_1\uplus\alpha}{\Gamma_1,\maps{i}{v_1}}{K_1}{s_1}{\hat e_1}
      \bisim*{R}
      \conf{A_2\uplus\alpha}{\Gamma_2,\maps{i}{v_2}}{K_2}{s_2}{\hat e_2}
      \\\\
      \kappa_1=\concr{\alpha}{v_1}
      \\
      \kappa_2=\concr{\alpha}{v_2}
      \\
      \alpha\not\in\an{v_1,v_2}
    }{
      \conf{A_1}{\Gamma_1\kappa_1}{K_1\kappa_1}{s_1\kappa_1}{\hat e_1\kappa_1}
      \utfoldv*{\bisim{R}}
      \conf{A_2}{\Gamma_2\kappa_2}{K_2\kappa_2}{s_2\kappa_2}{\hat e_2\kappa_2}
    }
   \\\\
    \irule[UTFd3][utfd3]{
      \conf{A_1\uplus\alpha}{\Gamma_1,\maps{i}{v_1}}{K_1}{s_1}{\hat e_1}
      \bisim*{R}
      \botconf
      \\
      \kappa_1=\concr{\alpha}{v_1}
      \\
      \alpha\not\in\an{v_1}
    }{
      \conf{A_1}{\Gamma_1\kappa_1}{K_1\kappa_1}{s_1\kappa_1}{\hat e_1\kappa_1}
      \utfoldv*{\bisim{R}}
      \botconf
    }
  \end{array}\]
  \hrule
  \caption{Up-to fold.}\label{fig:utfold}
\end{figure*}

\begin{lemma}\label{lem:utfolde-sound} 
  Let $f= \utfolde{}\sqcup\utfoldv{}$; then
  $\utfolde{}\comp{\constf{\gfp\WP}}\progress f^\omega\comp{\constf{\gfp\WP}}$.
\end{lemma}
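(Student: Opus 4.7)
The plan is to unfold the progression pointwise: since $\constf{\gfp\WP}$ maps every relation to $(\simil)$, the obligation reduces to showing that whenever $C_1 \utfolde*{\simil} C_2$ and $C_1 \trans{\eta} C_1'$, there exists $C_2'$ with $C_2 \wtrans{\eta} C_2'$ and $(C_1', C_2') \in (\utfolde{}\sqcup\utfoldv{})^\omega(\simil)$. Landing residuals in $f^\omega$ rather than in a single layer of $\utfolde{}$ is essential, because an unfolded $\beta$-step may substitute an abstract name whose justification requires folding a concrete function back via $\utfoldv{}$, after which subsequent evaluation may re-unfold via $\utfolde{}$---the two techniques genuinely interleave, which is what the outer iteration absorbs.

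The proof then proceeds by case analysis on the rule deriving $C_1 \utfolde*{\simil} C_2$ (one of \iref{utfe1}, \iref{utfe2}, \iref{utfd1}, \iref{utfd2}) and then on the transition $\eta$. The \iref{dummy}-originating case and the two $\bot$-cases \iref{utfd1}/\iref{utfd2} are immediate: $\botconf$ simulates every challenge and the residuals lie in $\utId{}(\simil) \subseteq f^\omega(\simil)$. For the principal case \iref{utfe1}, let $C_1 = \conf{A_1}{\Gamma_1}{K_1}{s_1}{E_1\hole[e_1]}$, $C_2 = \conf{A_2}{\Gamma_2}{K_2}{s_2}{E_2\hole[e_2]}$ and write $C_1^f, C_2^f$ for their folded counterparts, which by hypothesis satisfy $C_1^f \simil C_2^f$. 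Unique decomposition (\cref{lem:unique-decomp}) splits the redex of $\eta$ into two regimes: either it sits strictly inside $e_1$, or it sits inside $E_1$ with $e_1$ already a value. In the first regime---covering $\tau$-reductions, proponent calls, and proponent returns all originating in $e_1$---the same transition is available from $C_1^f$, differing only by whether $E_1$ is stored on the stack or wrapped around the expression. The folded similarity delivers a weak matching transition from $C_2^f$, which by \cref{lem:lts-taus} lifts to $C_2 \wtrans{\eta} C_2'$; the resulting pair sits under \iref{utfe1} (when $E_1$ survives in the residual) or under \iref{utfe2} (when a proponent call from inside $e_1$ pushes a new evaluation context on top of $E_1$), hence lies in $\utfolde{}(\simil) \subseteq f^\omega(\simil)$. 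Case \iref{utfe2} is handled identically after applying stack-interleaving at position $K_{11}$.

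The principal obstacle is the remaining subcase where $e_1 = v_1$ is a value and the redex of $\eta$ sits inside $E_1$: the unfolded $C_1$ emits an internal $\tau$-step $E_1\hole[v_1]\red e''$, whereas the folded $C_1^f$ with value expression $v_1$ can only emit a proponent return $\lpropret{D}$. To reconstruct the match we perform an inner induction on the structure of $E_1$: a pure syntactic reduction (arithmetic, conditional, store operation, or $\beta$ against a syntactically-present function) produces a reduct of the form $E_1'\hole[e_1']$ for strictly smaller $E_1'$, which keeps the pair within \iref{utfe1}; when instead the reduction $\beta$-applies a function stored in $\Gamma$---thereby concretising an abstract name $\alpha$ that $C_1^f$ had abstracted---the commutation between this concretisation and the up-to-fold relation is precisely what rule \iref{utfv} supplies, placing the residual in $\utfoldv{}(\utfolde{}(\simil)) \subseteq f^\omega(\simil)$. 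This is exactly the bookkeeping that forces the RHS of the progression to be $f^\omega$ rather than a single application of $\utfolde{}$, and it is the one place where the proof cannot avoid invoking the value-fold technique in addition to the expression-fold one.
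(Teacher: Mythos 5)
Your overall skeleton matches the paper's: reduce the obligation to $\utfolde{}(\simil)\subseteq \WP\comp f^\omega(\simil)$, case-split on the rule deriving $C_1 \utfolde*{\simil} C_2$ and then on the transition, dispatch the $\bot$- and \iref{dummy}-cases trivially, and transfer transitions whose redex lies strictly inside $e_1$ to the folded configurations via \cref{lem:lts-taus}, re-packaging the residuals with \iref{utfe1} or \iref{utfe2}. You have also correctly identified that $\utfoldv{}$ must appear in the residuals and that this is why the right-hand side is $f^\omega$.

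The gap is in the hard subcases, where $e_1$ is a value $v_1$ and the transition of $C_1$ consumes the innermost frame of $E_1$ (this includes the proponent return when $E_1=\hole$, the proponent applications where $v_1$ is the applied abstract name or its argument, and the $\tau$-steps whose redex is $F\hole[v_1]$ for the innermost frame $F$). You correctly note that the folded $C_3=\conf{A_1}{\Gamma_1}{E_1,K_1}{s_1}{v_1}$ can only emit a proponent return, but your proposed inner induction on $E_1$ then reasons exclusively about the left-hand configuration: it never produces the matching weak transition of $C_2$, nor the similarity of the new folded pair that re-entering \iref{utfe1} would require. The only source of information about $C_2$ is the hypothesis $C_3\simil C_4$, and when $e_1$ is a value the only way to exploit it is to let $C_3$ actually perform observable moves. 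The paper's proof drives $C_3$ through $\lpropret{j}$, then $\lopret{\alpha'}$ for a fresh abstract name $\alpha'$ standing for the returned value, then (depending on the frame) further app/return moves; the matching sequence from $C_4$ reveals that $e_2$ converges to some value $v_2$ and how $E_2\hole[\alpha']$ subsequently behaves; these moves are then replayed from $C_2$ itself using the concretisation lemma (\cref{lem:lts-concr}) with $\alpha'$ instantiated to $v_1$ on the left and $v_2$ on the right, and the residuals land in \iref{utfv} (possibly under \iref{utfe2}), since concretising a fresh name is exactly the format of up-to-fold for values. Your account of where $\utfoldv{}$ enters --- ``$\beta$-applying a function stored in $\Gamma$'' during a $\tau$-step --- also misidentifies the mechanism: applications of abstract names are observable proponent moves, never $\tau$-reductions, and $\utfoldv{}$ is needed to undo the abstraction of the returned value by $\alpha'$, not to handle an application. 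Without the return/opponent-return detour and \cref{lem:lts-concr}, the value-in-context cases of your argument do not go through.
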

\begin{proof}
  We need to show
  $\utfolde{}\comp{\constf{\gfp\WP}}\comp\WP(\bisim{R}) \sqsubseteq 
  {\WP}\comp f^\omega\comp{\constf{\gfp\WP}}(\bisim{R})$.
  Because $\constf{\gfp\WP}$ is the constant function mapping its argument to $(\simil)$, we need to show
  $\utfolde{}(\simil) \sqsubseteq 
  {\WP}\comp f^\omega(\simil)$.

  Let $C_1\utfolde*{\simil}C_2$, 
  and $C_1\trans{\eta}C_1'$.
  If $C_1\trans{\eta}C_1'$ is produced by rule \iref{dummy} then the proof is trivial as $C_2$ can perform the same transition and $C_1'=\botconf \mathrel{f^0(\simil)} \botconf=C_2'$. We thus consider only non-\iref{dummy} transitions from $C_1$.
  We proceed by case analysis on the derivation
  $C_1\utfolde*{\simil}C_2$.

  {\flushleft
  \iref{utfe1}:}
  $C_1=\conf{A_1}{\Gamma_1}{K_1}{s_1}{E_1\hole[e_1]}$ and
  $C_2=\conf{A_2}{\Gamma_2}{K_2}{s_2}{E_2\hole[e_2]}$ and
  $C_3=\conf{A_1}{\Gamma_1}{E_1,K_1}{s_1}{e_1}$ and
  $C_4=\conf{A_2}{\Gamma_2}{E_2,K_2}{s_2}{e_2}$ and
  $C_3 \mathrel{\simil} C_4$.
  In this case $C_1$ is a proponent configuration thus the transition can only be produced by rules
  \iref{propappc}, \iref{propappf}, \iref{propretc}, \iref{propretf}, and \iref{tau}.

  \begin{itemize}
    \item
      \iref{propappc} and
      \iref{propappf}: Here we show only the case for the latter rule.
      We have $\eta=\lpropapp{\alpha}{i}$ and
      $E_1\hole[e_1] = E\hole[\app \alpha v]$
      and
      $C'_1=\conf{A_1}{\Gamma_1,\maps{i}{v_1}}{E_1';K_1}{s_1}{\noe}$.
      We have the following three cases:
      \begin{itemize}
        \item $e_1=E_1'\hole[\app\alpha v_1]$, $E=E_1\hole[E_1'\hole]]$:
          In this case configuration $C_3$ can perform the same transition:
          $C_3\trans{\eta}C_3'=\conf{A_1}{\Gamma_1,\maps{i}{v_1}}{E_1',E_1,K}{s_1}{\noe}$.
          Because $C_3\simil C_4$,
          $C_4\wtrans{\eta}C_4'=\conf{A_2}{\Gamma_2'}{E_2',E_2,K}{s_2'}{\noe}$ or
          $C_4\wtrans{\eta}C_4'=\botconf$, and $C_3'\simil C_4'$.
          In the case where $C_4$ goes to \botconf, $C_2\trans{\eta}\botconf$ and
          $C_1' \utfolde*{\simil} \botconf$ by rule \iref{utfd2}.
          In the other case we derive $C_2\wtrans{\eta}C_2'=\conf{A_2}{\Gamma_2'}{E_2\hole[E_2']}{s_2'}{\noe}$
          and $C_1' \utfolde*{\bisim{S}} C_2'$ by rule \iref{utfe2}.

        \item $e_1=\alpha$, $E_1=E_1'\hole[\app\hole v_1]$
      and $C'_1=\conf{A_1}{\Gamma_1,\maps{i}{v_1}}{E_1';K_1}{s_1}{\noe}$:

          In this case configuration $C_3$ can perform the transitions:
          \begin{align*}
            C_3
            &\trans{\lpropret{j}}
            \conf{A_1}{\Gamma_1,\maps{j}{\alpha}}{E_1'\hole[\app\hole v_1],K_1}{s_1}{\noe}
            \\&\trans{\lopret{\alpha'}}
            \conf{A_1\uplus\alpha'}{\Gamma_1,\maps{j}{\alpha}}{K_1}{s_1}{E_1'\hole[\app\alpha' v_1]}
            \\&\trans{\lpropapp{\alpha'}{j'}}
            \conf{A_1\uplus\alpha'}{\Gamma_1,\maps{j}{\alpha},\maps{j'}{v_1}}{E_1',K_1}{s_1}{\noe}
            &(\alpha'\not\in\an{\Gamma_1,K_1,s_1,E_1',v_1})
            \\&\trans{\lopapp{j}{\alpha''}}
            \conf{A_1\uplus\alpha',\alpha''}{\Gamma_1,\maps{j}{\alpha},\maps{j'}{v_1}}{E_1',K_1}{s_1}{\app\alpha{\alpha''}}
            &(\alpha''\not\in\an{\Gamma_1,K_1,s_1,E_1',v_1})
            \\&\trans{\lpropapp{\alpha}{i}}
            \conf{A_1\uplus\alpha',\alpha''}{\Gamma_1,\maps{j}{\alpha},\maps{j'}{v_1},\maps{i}{\alpha''}}{
              \hole,E_1',K_1}{s_1}{\noe}
            =C_3'
          \end{align*}
          Because $C_3\simil C_4$, there exists $C_4'$ such that
          \[
            C_4
            \wtrans{\lpropret{j}}
            \wtrans{\lopret{\alpha'}}
            \wtrans{\lpropapp{\alpha'}{j'}}
            \wtrans{\lopapp{j}{\alpha''}}
            \wtrans{\lpropapp{\alpha}{i}}
            C_4'
          \]
          and $C_3' \simil C_4'$. By analysis of the transitions we have two possibilities:
          \begin{itemize}
            \item $C_4'=\botconf$ and in this case we have
              $C_2 \trans{\eta}\botconf$.
             Moreover:
             \begin{align*}
               C_3'&=
               \conf{A_1\uplus\alpha',\alpha''}{\Gamma_1,\maps{j}{\alpha},\maps{j'}{v_1},\maps{i}{\alpha''}}{
               \hole,E_1',K_1}{s_1}{\noe}
               \simil \botconf
              \\
               &\mathrel{\phantom{=}}
               \conf{A_1\uplus\alpha',\alpha''}{\Gamma_1,\maps{i}{\alpha''}}{
               \hole,E_1',K_1}{s_1}{\noe}
               \concr{\alpha''}{v_1}
               \concr{\alpha'}{\alpha}
               \utfoldv*{\utfoldv{\simil}}
               \botconf
              &\text{(\iref{utfd3})}
              \\
               &\mathrel{\phantom{=}}
               \conf{A_1}{\Gamma_1,\maps{i}{v_1}}{
               \hole,E_1',K_1}{s_1}{\noe}
               \utfoldv*{\utfoldv{\simil}}
               \botconf
               &\hspace{18ex}
               \llap{$(\alpha',\alpha''\not\in\an{\Gamma_1,K_1,s_1,E_1',v_1})$}
              \\
               C_1'&=\conf{A_1}{\Gamma_1,\maps{i}{v_1}}{
               E_1'\hole,K_1}{s_1}{\noe}
              \utfolde*{\utfoldv{\utfoldv{\simil}}}
               \botconf
              &\text{(\iref{utfe2})}
              \end{align*}

            \item
              Otherwise we have:
             \begin{align*}
               C_4 & =\conf{A_2}{\Gamma_2}{E_2,K_2}{s_2}{e_2}
               \wtrans{\tau}
              \conf{A_2}{\Gamma_2}{E_2,K_2}{s_2'}{w_2}
              \\& \trans{\lpropret{j}}
               \conf{A_2}{\Gamma_2,\maps{j}{w_2}}{E_2,K_2}{s_2'}{\noe}
               \\&\trans{\lopret{\alpha'}}
              \conf{A_2\uplus\alpha'}{\Gamma_2,\maps{j}{w_2}}{K_2}{s_2'}{E_2\hole[\alpha']}
               &(\alpha'\not\in\an{\Gamma_2,K_2,w_2})
              \\& \wtrans{\tau}
              \conf{A_2\uplus\alpha'}{\Gamma_2,\maps{j}{w_2}}{K_2}{s_2''}{E_2'\hole[\app\alpha' w_2']}
              \\&\trans{\lpropapp{\alpha'}{j'}}
              \conf{A_1\uplus\alpha'}{\Gamma_2,\maps{j}{w_2},\maps{j'}{w_2'}}{E'_2,K_2}{s_2''}{\noe}
              \\&\trans{\lopapp{j}{\alpha''}}
               \conf{A_1\uplus\alpha',\alpha''}{\Gamma_2,\maps{j}{w_2},\maps{j'}{w_2'}}{E'_2,K_2}{s_2''}{\app{w_2}{\alpha''}}
               &(\alpha''\not\in\an{w_2'})
              \\& \wtrans{\tau}
               \conf{A_1\uplus\alpha',\alpha''}{\Gamma_2,\maps{j}{w_2},\maps{j'}{w_2'}}{E'_2,K_2}{s_2''}{E''_2\hole[\app{\alpha}{v_2}]}
              \\&\trans{\lpropapp{\alpha}{i}}
               \conf{A_1\uplus\alpha',\alpha''}{\Gamma_2,\maps{j}{w_2},\maps{j'}{w_2'},\maps{i}{v_2'}}{
                 E''_2,E'_2,K_2}{s_2''}{\noe}
               =C_4'
             \end{align*}
              Using \cref{lem:lts-concr} and LTS rule $\iref{tau}$, we can derive the following transitions from $C_2$:
             \begin{align*}
               C_2 & =\conf{A_2}{\Gamma_2}{K_2}{s_2}{E_2\hole[e_2]}
               \wtrans{\tau}
               \conf{A_2}{\Gamma_2}{K_2}{s_2'}{E_2\hole[w_2]}
              \\& =
               \conf{A_2\uplus\alpha'}{\Gamma_2}{K_2}{s_2'}{E_2\hole[\alpha']}\concr{\alpha'}{w_2}
               &(\alpha'\not\in\an{w_2})
              \\& \wtrans{\tau}
               \conf{A_2\uplus\alpha'}{\Gamma_2}{K_2}{s_2''}{E_2'\hole[\app\alpha' w_2']}\concr{\alpha'}{w_2}
              \\&=
               \conf{A_1\uplus\alpha',\alpha''}{\Gamma_2}{K_2}{s_2''}{E'_2\hole[\app{w_2}{\alpha''}]}
               \concr{\alpha''}{w_2'}
               \concr{\alpha'}{w_2}
               &(\alpha''\not\in\an{w_2'})
              \\& \wtrans{\tau}
               \conf{A_1\uplus\alpha',\alpha''}{\Gamma_2}{,K_2}{s_2''}{
                 E'_2\hole[E''_2\hole[\app{\alpha}{v_2}]]}
               \concr{\alpha''}{w_2'}
               \concr{\alpha'}{w_2}
              \\&\trans{\lpropapp{\alpha}{i}}
               \conf{A_1\uplus\alpha',\alpha''}{\Gamma_2,\maps{i}{v_2}}{
                 E'_2\hole[E''_2],K_2}{s_2''}{\noe}
               \concr{\alpha''}{w_2'}
               \concr{\alpha'}{w_2}
               =C_2'
             \end{align*}
             We also have
             \[\abox{
               C_3'&=
               \conf{A_1\uplus\alpha',\alpha''}{\Gamma_1,\maps{j}{\alpha},\maps{j'}{v_1},\maps{i}{\alpha''}}{
               \hole,E_1',K_1}{s_1}{\noe}
             \\&\simil
               \conf{A_1\uplus\alpha',\alpha''}{\Gamma_2,\maps{j}{w_2},\maps{j'}{w_2'},\maps{i}{v_2'}}{
               E''_2,E'_2,K_2}{s_2''}{\noe}
               =C_4'
             }\]
             By rule \iref{utfv}
             \[\nbox{
               \conf{A_1\uplus\alpha',\alpha''}{\Gamma_1,\maps{i}{\alpha''}}{
               \hole,E_1',K_1}{s_1}{\noe}
               \concr{\alpha''}{v_1}
               \concr{\alpha'}{\alpha}
               \\\quad\utfoldv*{\utfoldv{\simil}}
               \conf{A_1\uplus\alpha',\alpha''}{\Gamma_2,\maps{i}{v_2'}}{
               E''_2,E'_2,K_2}{s_2''}{\noe}
               \concr{\alpha''}{w_2'}
               \concr{\alpha'}{w_2}
             }\]
            Because $\alpha',\alpha''\not\in\an{\Gamma_1,K_1,s_1,E_1',v_1}$
             \[\nbox{
               \conf{A_1}{\Gamma_1,\maps{i}{v_1}}{
               \hole,E_1',K_1}{s_1}{\noe}
               \\\quad\utfoldv*{\utfoldv{\simil}}
               \conf{A_1\uplus\alpha',\alpha''}{\Gamma_2,\maps{i}{v_2'}}{
               E''_2,E'_2,K_2}{s_2''}{\noe}
               \concr{\alpha''}{w_2'}
               \concr{\alpha'}{w_2}
             }\]
             By rule \iref{utfe2}
             \[\nbox{
               \conf{A_1}{\Gamma_1,\maps{i}{v_1}}{
               E_1'\hole,K_1}{s_1}{\noe}
             \\\quad\utfolde*{\utfoldv{\utfoldv{\simil}}}
               \conf{A_1\uplus\alpha',\alpha''}{\Gamma_2,\maps{i}{v_2'}}{
                 E'_2\hole[E''_2],K_2}{s_2''}{\noe}
               \concr{\alpha''}{w_2'}
               \concr{\alpha'}{w_2}
             }\]
             Therefore $C_1' \utfolde*{\utfoldv{\utfoldv{\simil}}} C_2'$ as required.
           \end{itemize}

        \item $e_1=v_1$, $E_1=E_1'\hole[\app\alpha\hole]$
            and $C'_1=\conf{A_1}{\Gamma_1,\maps{i}{v_1}}{E_1';K_1}{s_1}{\noe}$:
          In this case configuration $C_3$ can perform the transitions:
          \begin{align*}
            C_3
            &\trans{\lpropret{j}}
            \conf{A_1}{\Gamma_1,\maps{j}{v}}{E_1'\hole[\app\alpha\hole],K_1}{s_1}{\noe}
            \\&\trans{\lopret{\alpha'}}
            \conf{A_1\uplus\alpha'}{\Gamma_1,\maps{j}{v}}{K_1}{s_1}{E_1'\hole[\app\alpha\alpha']}
            &(\alpha'\not\in\an{\Gamma_1,v_1,K_1,s_1,E_1'})
            \\&\trans{\lpropapp{\alpha}{i}}
            \conf{A_1\uplus\alpha'}{\Gamma_1,\maps{j}{v},\maps{i}{\alpha'}}{E_1',K_1}{s_1}{\noe}
            =C_3'
          \end{align*}
          Because $C_3\simil C_4$, there exists $C_4'$ such that
          $C_4
            \wtrans{\lpropret{j}}
            \wtrans{\lopret{\alpha'}}
            \wtrans{\lpropapp{\alpha}{i}}
          C_4'$
          and $C_3' \simil C_4'$. By analysis of the transitions we have
          \begin{align*}
            C_4 & =\conf{A_2}{\Gamma_2}{E_2,K_2}{s_2}{e_2}
            \wtrans{\tau}
            \conf{A_2}{\Gamma_2}{E_2,K_2}{s_2'}{v_2}
            \\& \trans{\lpropret{j}}
            \conf{A_2}{\Gamma_2,\maps{j}{v_2}}{E_2,K_2}{s_2'}{\noe}
            \\& \trans{\lopret{ \alpha'}}
            \conf{A_2\uplus\alpha'}{\Gamma_2,\maps{j}{v_2}}{K_2}{s_2'}{E_2\hole[\alpha']}
            &(\alpha'\not\in\an{\Gamma_2,v_2,K_2,s_2',E_2'})
            \\& \wtrans{\tau}
            \conf{A_2\uplus\alpha'}{\Gamma_2,\maps{j}{v_2}}{K_2}{s_2''}{E_2'\hole[\app \alpha w_2]}
            &
            \\& \trans{\lpropapp{\alpha}{i}}
            \conf{A_2\uplus\alpha'}{\Gamma_2,\maps{j}{v_2},\maps{i}{w_2}}{E_2',K_2}{s_2''}{\noe}
            = C_4'
          \end{align*}
          Using \cref{lem:lts-concr} and LTS rule $\iref{tau}$, we can derive the following transitions from $C_2$:
          \begin{align*}
            C_2 & =\conf{A_2}{\Gamma_2}{K_2}{s_2}{E_2\hole[e_2]}
            \wtrans{\tau}
            \conf{A_2}{\Gamma_2}{K_2}{s_2'}{E_2[v_2]}
            \\&=
            \conf{A_2\uplus\alpha'}{\Gamma_2}{K_2}{s_2'}{E_2\hole[\alpha']}
            \concr{\alpha'}{v_2}
            \wtrans{\tau}
            \conf{A_2\uplus\alpha'}{\Gamma_2}{K_2}{s_2''}{E_2'\hole[\app \alpha w_2]}
            \concr{\alpha'}{v_2}
            &
            \\& \trans{\lpropapp{\alpha}{i}}
            \conf{A_2\uplus\alpha'}{\Gamma_2,\maps{i}{w_2}}{E_2',K_2}{s_2''}{\noe}
            \concr{\alpha'}{v_2}
            = C_2'
          \end{align*}
          We also have
          \[
            C_3' =
            \conf{A_1\uplus\alpha'}{\Gamma_1,\maps{j}{v},\maps{i}{\alpha'}}{E_1',K_1}{s_1}{\noe}
            \simil
            \conf{A_2\uplus\alpha'}{\Gamma_2,\maps{j}{v_2},\maps{i}{w_2}}{E_2',K_2}{s_2''}{\noe}
            = C_4'
          \]
          And by rule \iref{utfv}:
          \[\abox{
            C_1' &=
            \conf{A_1}{\Gamma_1,\maps{i}{v_1}}{E_1',K_1}{s_1}{\noe}
            \\&=
            \conf{A_1\uplus\alpha'}{\Gamma_1,\maps{i}{\alpha'}}{E_1',K_1}{s_1}{\noe}
            \concr{\alpha'}{v_1}
            \utfoldv*{\simil}
            \conf{A_2\uplus\alpha'}{\Gamma_2,\maps{i}{w_2}}{E_2',K_2}{s_2''}{\noe}
            \concr{\alpha'}{v_2}
            = C_2'
          }\]
 
      \end{itemize}
    \item
      \iref{propretc} and \iref{propretf}: here we show only the case for the latter rule.
      We have $\eta=\lpropret{i}$ and $E_1=\hole$
      and $C'_1=\conf{A_1}{\Gamma_1,\maps{i}{v_1}}{K_1}{s_1}{\noe}$.
      Configuration $C_3$ can perform the transitions:
      \begin{align*}
        C_3
        &\trans{\lpropret{j}}
        \conf{A_1}{\Gamma_1,\maps{j}{v_1}}{\hole,K_1}{s_1}{\noe}
        \\&\trans{\lopret{\alpha'}}
        \conf{A_1\uplus\alpha'}{\Gamma_1,\maps{j}{v_1}}{K_1}{s_1}{\hole[\alpha']}
        &(\alpha'\not\in\an{\Gamma_1,v_1,K_1,s_1})
        \\&\trans{\lpropret{i}}
        \conf{A_1\uplus\alpha'}{\Gamma_1,\maps{j}{v_1},\maps{i}{\alpha'}}{K_1}{s_1}{\noe}
        =C_3'
      \end{align*}
      Because $C_3\simil C_4$, there exists $C_4'$ such that
      $C_4
        \wtrans{\lpropret{j}}
        \wtrans{\lopret{\alpha'}}
        \wtrans{\lpropret{i}}
      C_4'$
      and $C_3' \simil C_4'$. By analysis of the transitions we have
      \begin{align*}
        C_4 &=
        \conf{A_2}{\Gamma_2}{E_2,K_1}{s_1}{e_2}
        \wtrans{\tau}
        \conf{A_2}{\Gamma_2}{E_2,K_1}{s_1'}{v_2}
        \\&\trans{\lpropret{j}}
        \conf{A_2}{\Gamma_2,\maps{j}{v_2}}{E_2,K_2}{s_2'}{\noe}
        \\&\trans{\lopret{\alpha'}}
        \conf{A_2\uplus\alpha'}{\Gamma_2,\maps{j}{v_2}}{K_2}{s_2'}{E_2\hole[\alpha']}
        &(\alpha'\not\in\an{\Gamma_2,v_2,K_2,s_2',E_2})
        \\&\wtrans{\tau}
        \conf{A_2\uplus\alpha'}{\Gamma_2,\maps{j}{v_2}}{K_2}{s_2''}{w_2}
        \\&\trans{\lpropret{i}}
        \conf{A_2\uplus\alpha'}{\Gamma_2,\maps{j}{v_2},\maps{i}{w_2}}{K_2}{s_2''}{\noe}
        =C_4'
      \end{align*}
      Using \cref{lem:lts-concr} and LTS rule $\iref{tau}$, we can derive the following transitions from $C_2$:
      \begin{align*}
        C_2 &=
        \conf{A_2}{\Gamma_2}{K_1}{s_1}{E_2\hole[e_2]}
        \wtrans{\tau}
        \conf{A_2}{\Gamma_2}{K_1}{s_1'}{E_2\hole[v_2]}
        \\&=
        \conf{A_2\uplus\alpha'}{\Gamma_2}{K_2}{s_2'}{E_2\hole[\alpha']}
        \concr{\alpha'}{v_2}
        \wtrans{\tau}
        \conf{A_2\uplus\alpha'}{\Gamma_2}{K_2}{s_2''}{w_2}
        \concr{\alpha'}{v_2}
        \\&\trans{\lpropret{i}}
        \conf{A_2\uplus\alpha'}{\Gamma_2,\maps{i}{w_2}}{K_2}{s_2''}{\noe}
        \concr{\alpha'}{v_2}
        =C_2'
      \end{align*}
      Moreover,
      \[\abox{
        C_1' &= \conf{A_1}{\Gamma_1,\maps{i}{v_1}}{K_1}{s_1}{\noe}
        \\ &= \conf{A_1\uplus\alpha'}{\Gamma_1,\maps{i}{\alpha'}}{K_1}{s_1}{\noe} \concr{\alpha'}{v_1}
        \utfoldv*{\simil}
        \conf{A_2\uplus\alpha'}{\Gamma_2,\maps{i}{w_2}}{K_2}{s_2''}{\noe}\concr{\alpha'}{v_2}
        =C_2'
      }\]

    \item
      \iref{tau}:
      We have $\eta=\tau$ and 
      and $C'_1=\conf{A_1}{\Gamma_1}{K_1}{s_1'}{e_1'}$
      and $\redconf{s_1}{E_1\hole[e_1]} \red \redconf{s_1'}{e_1'}$.
      By the reduction rule we get $E_1\hole[e_1]=E\hole[e]$ and 
      $\redconf{s_1}{e} \redbase \redconf{s_1'}{e'}$ and $e_1'=E\hole[e']$, for some $E,e,e'$.

      We proceed by cases on $E_1\hole[e_1]=E\hole[e]$:
      \begin{itemize}
        \item $e_1=E_{e_1}\hole[e]$ and $E=E_1\hole[E_{e_1}\hole]$ and $e_1'=E_1\hole[{E_{e_1}\hole[e']}]$.
          In this case $C_3$ can perform the same transition:
          \[
            C_3 = \conf{A_1}{\Gamma_1}{E_1,K_1}{s_1}{E_{e_1}\hole[e]}
            \trans{\tau} \conf{A_1}{\Gamma_1}{E_1,K_1}{s_1'}{E_{e_1}\hole[e']} = C_3'
          \]
          Because $C_3 \simil C_4$,
          \[
            C_4 = \conf{A_2}{\Gamma_2}{E_2,K_2}{s_2}{e_2}
            \wtrans{\tau} \conf{A_2}{\Gamma_2}{E_2,K_2}{s_2'}{e_2'} = C_4'
          \]
          and $C_3' \simil C_4'$. Therefore
          \[
            C_2 = \conf{A_2}{\Gamma_2}{K_2}{s_2}{E_2\hole[e_2]}
            \wtrans{\tau} \conf{A_2}{\Gamma_2}{K_2}{s_2'}{E_2\hole[e_2']} = C_2'
          \]
          and from rule \iref{utfe1}, $C_1' \utfolde*{\simil} C_2'$.

        \item $e_1=v_1$ and $E_1=E\hole[F]$
          and
          $\redconf{s_1}{F\hole[v_1]} \redbase \redconf{s_1'}{e'}$,
          where $F$ is one of the following contexts:
          $(\arithop{\vec c_{11},\hole,\vec c_{12}})$ or
          $(\app \hole v)$ or
          $(\app v \hole)$ or
          $(l \asgn \hole)$ or
          $(\cond \hole{e_{11}}{e_{12}})$.
          We proceed by cases on $F$:
          \begin{itemize}
            \item $F=\arithop{\vec c_{11},\hole,\vec c_{12}}$:
              Here it must be $v_1=c$ and $e'=c'=\arithop{\vec c_{11},c,\vec c_{12}}$. Thus $C_3$ can perform the transitions
              \begin{align*}
                C_3 &= \conf{A_1}{\Gamma_1}{E\hole[F],K_1}{s_1}{c}
                \trans{\lpropret{c}}
                \conf{A_2}{\Gamma_1}{E\hole[F],K_1}{s_1}{\noe}
                \\&\trans{\lopret{c}}
                \conf{A_2}{\Gamma_1}{K_1}{s_1}{E\hole[F{\hole[c]}]}
                \trans{\tau}
                \conf{A_2}{\Gamma_1}{K_1}{s_1}{E\hole[c']}
                =C_3' = C_1'
              \end{align*}
              Because $C_3 \simil C_4$,
              \begin{align*}
                C_4 &= \conf{A_2}{\Gamma_2}{E_2,K_2}{s_2}{e_2}
                \wtrans{\tau}
                \conf{A_2}{\Gamma_2}{E_2,K_2}{s_2'}{c}
                \trans{\lpropret{c}}
                \conf{A_2}{\Gamma_2}{E_2,K_2}{s_2'}{\noe}
                \\&\trans{\lopret{c}}
                \conf{A_2}{\Gamma_2}{K_2}{s_2'}{E_2\hole[c]}
                \wtrans{\tau}
                C_4'
              \end{align*}
              and $C_3'\simil C_4'$. Moreover we derive the transitions:
              \begin{align*}
                C_2 &= \conf{A_2}{\Gamma_2}{K_2}{s_2}{E_2\hole[e_2]}
                \wtrans{\tau}
                \conf{A_2}{\Gamma_2}{K_2}{s_2'}{E_2\hole[c]}
                \wtrans{\tau}
                C_4' = C_2'
              \end{align*}
              We also derive $C_1' \mathrel{f^0(\simil)} C_2'$ as needed.
             
            \item $F=\app \hole v$: Here it must be $v_1 = \lam{x} e_{11}$ and $e'=e_{11}\sub{x}{v}$.
              By the LTS and $C_3\simil C_3'$ we have:
             \begin{align*}
               C_3 &= \conf{A_1}{\Gamma_1}{E\hole[F],K_1}{s_1}{\lam{x} e_{11}}
                \\&\trans{\lpropret{i}}
                \conf{A_2}{\Gamma_1,\maps{i}{\lam{x} e_{11}}}{E\hole[F],K_1}{s_1}{\noe}
                \\&\trans{\lopret{\alpha}}
                \conf{A_1\uplus\alpha}{\Gamma_1,\maps{i}{\lam{x} e_{11}}}{K_1}{s_1}{E\hole[\app\alpha v]}
                &(\alpha\not\in\an{A_1,\Gamma_1,e_{11},K_1,s_1,E,v})
                \\&\trans{\lpropapp{\alpha}{j}}
                \conf{A_1\uplus\alpha}{\Gamma_1,\maps{i}{\lam{x} e_{11},\maps{j}{v}}}{E,K_1}{s_1}{\noe}
                \\&\trans{\lopapp{i}{\alpha'}}
                \conf{A_1\uplus\alpha}{\Gamma_1,\maps{i}{\lam{x} e_{11},\maps{j}{v}}}{E,K_1}{s_1}{e_{11}\sub{x}{\alpha'}}
                =C_3'
                &(\alpha'\not\in\an{A_1,\Gamma_1,e_{11},K_1,s_1,E,v})
             \end{align*}
             \begin{align*}
               C_4 &= \conf{A_2}{\Gamma_2}{E_2,K_2}{s_2}{e_2}
                \wtrans{\tau}
                \conf{A_2}{\Gamma_2}{E_2,K_2}{s_2'}{v_2}
                \\&\trans{\lpropret{i}}
                \conf{A_2}{\Gamma_2,\maps{i}{v_2}}{E_2,K_2}{s_2'}{\noe}
                \\&\trans{\lopret{\alpha}}
                \conf{A_2\uplus\alpha}{\Gamma_2,\maps{i}{v_2}}{K_2}{s_2'}{E_2\hole[\alpha]}
                &(\alpha\not\in\an{A_2,\Gamma_2,v_2,K_2,s_2',E_2})
                \\&\wtrans{\tau}
                \conf{A_2\uplus\alpha}{\Gamma_2,\maps{i}{v_2}}{K_2}{s_2''}{E_2'\hole[\app{\alpha}{v_2'}]}
                \\&\trans{\lpropapp{\alpha}{j}}
                \conf{A_2\uplus\alpha,\alpha'}{\Gamma_2,\maps{i}{v_2},\maps{j}{v_2'}}{E_2',K_2}{s_2''}{\noe}
                \\&\trans{\lopapp{i}{\alpha'}}
                \conf{A_2\uplus\alpha,\alpha'}{\Gamma_2,\maps{i}{v_2},\maps{j}{v_2'}}{E_2',K_2}{s_2''}{e_2'}
                &(\alpha'\not\in\an{A_2,\Gamma_2,v_2,v_2',E_2',K_2,s_2''})
               \\&=
                \begin{cases}
                  \conf{A_2\uplus\alpha,\alpha'}{\Gamma_2,\maps{i}{v_2},\maps{j}{v_2'}}{
                          E_2',K_2}{s_2''}{\app{\alpha''}{\alpha'}}
                                          &\rlap{\text{if } $v_2=\alpha''$}\\
                  \conf{A_2\uplus\alpha,\alpha'}{\Gamma_2,\maps{i}{v_2},\maps{j}{v_2'}}{
                          E_2',K_2}{s_2''}{\app{e_2'}\sub{x}{\alpha'}}
                                          &\rlap{\text{if } $v_2=\lam{x} e_2'$}
                \end{cases}
               \\&\wtrans{\tau}
               \conf{A_2\uplus\alpha,\alpha'}{\Gamma_2,\maps{i}{v_2},\maps{j}{v_2'}}{
                 E_2',K_2}{s_2'''}{e_2''} = C_4'
                & C_3' \simil C_4'
             \end{align*}
             \begin{align*}
               C_2 &= \conf{A_2}{\Gamma_2}{K_2}{s_2}{E_2[e_2]}
                \wtrans{\tau}
                \conf{A_2}{\Gamma_2}{E_2,K_2}{s_2'}{E_2[v_2]}
                \\&=
                \conf{A_2\uplus\alpha}{\Gamma_2}{K_2}{s_2'}{E_2\hole[\alpha]}
                \concr{\alpha}{v_2}
                \wtrans{\tau}
                \conf{A_2\uplus\alpha}{\Gamma_2}{K_2}{s_2''}{E_2'\hole[\app\alpha v_2']}
                \concr{\alpha}{v_2}
                \\&=
                \conf{A_2\uplus\alpha,\alpha'}{\Gamma_2}{K_2}{s_2''}{E_2'\hole[\app\alpha \alpha']}
                \concr{\alpha'}{v_2'}
                \concr{\alpha}{v_2}
                \\&
                \begin{cases}
                  =
                  \conf{A_2\uplus\alpha,\alpha'}{\Gamma_2}{K_2}{s_2''}{E_2'\hole[\app{\alpha''} \alpha']}
                  \concr{\alpha'}{v_2'}
                  \concr{\alpha}{v_2}
                                          &\rlap{\text{if } $v_2=\alpha''$}\\
                  \trans{\tau}
                  \conf{A_2\uplus\alpha,\alpha'}{\Gamma_2}{K_2}{s_2''}{E_2'\hole[e_2'\sub{x}{\alpha'}]}
                                          &\rlap{\text{if } $v_2=\lam{x} e_2'$}
                \end{cases}
               \\&\wtrans{\tau}
               \conf{A_2\uplus\alpha,\alpha'}{\Gamma_2}{K_2}{s_2'''}{E_2'\hole[e_2'']}
                  \concr{\alpha'}{v_2'}
                  \concr{\alpha}{v_2}
               = C_2'
             \end{align*}
             Moreover: $C'_1 \utfolde*{\utfoldv{\utfoldv{\simil}}} C'_2$, as required.

            \item $F=\app v \hole$: Here it must be $v = \lam{x} e_{11}$ and $e'=e_{11}\sub{x}{v_1}$.
              By the LTS and $C_3\simil C_3'$ we have:
             \begin{align*}
               C_3 &= \conf{A_1}{\Gamma_1}{E\hole[F],K_1}{s_1}{v_1}
                \\&\trans{\lpropret{i}}
                \conf{A_2}{\Gamma_1,\maps{i}{v_1}}{E\hole[F],K_1}{s_1}{\noe}
                \\&\trans{\lopret{\alpha}}
                \conf{A_1\uplus\alpha}{\Gamma_1,\maps{i}{v_1}}{K_1}{s_1}{E\hole[\app {(\lam{x}{e_{11}})} \alpha]}
                \\&\trans{\tau}
                \conf{A_1\uplus\alpha}{\Gamma_1,\maps{i}{v_1}}{K_1}{s_1}{E\hole[e_{11}\sub{x}{\alpha}]}
                =C_3'
                &(\alpha\not\in\an{A_1,\Gamma_1,v_{1},K_1,s_1,E,v_1})
             \end{align*}
             \begin{align*}
               C_4 &= \conf{A_2}{\Gamma_2}{E_2,K_2}{s_2}{e_2}
                \wtrans{\tau}
                \conf{A_2}{\Gamma_2}{E_2,K_2}{s_2'}{v_2}
                \\&\trans{\lpropret{i}}
                \conf{A_2}{\Gamma_2,\maps{i}{v_2}}{E_2,K_2}{s_2'}{\noe}
                \\&\trans{\lopret{\alpha}}
                \conf{A_2\uplus\alpha}{\Gamma_2,\maps{i}{v_2}}{K_2}{s_2'}{E_2\hole[\alpha]}
                &(\alpha\not\in\an{A_2,\Gamma_2,v_2,K_2,s_2',E_2})
                \\&\wtrans{\tau}
                \conf{A_2\uplus\alpha}{\Gamma_2,\maps{i}{v_2}}{K_2}{s_2''}{e_2'}
                =C_4'
                & C_3' \simil C_4'
             \end{align*}
             \begin{align*}
               C_2 &= \conf{A_2}{\Gamma_2}{K_2}{s_2}{E_2[e_2]}
                \wtrans{\tau}
                \conf{A_2}{\Gamma_2}{E_2,K_2}{s_2'}{E_2[v_2]}
                \\&=
                \conf{A_2\uplus\alpha}{\Gamma_2}{K_2}{s_2'}{E_2\hole[\alpha]}
                \concr{\alpha}{v_2}
                \wtrans{\tau}
                \conf{A_2\uplus\alpha}{\Gamma_2}{K_2}{s_2''}{e_2'}
                \concr{\alpha}{v_2}
                =C_2'
             \end{align*}
             Moreover, $C_1' \utfoldv*{\simil} C_2'$.
  
           \item $F=l \asgn \hole$: Here $e'=\true$ and $s_1'=s_1\stupd{l}{v_1}$.
              By the LTS and $C_3\simil C_3'$ we have:
             \begin{align*}
               C_3 &= \conf{A_1}{\Gamma_1}{E\hole[F],K_1}{s_1}{v_1}
                \\&\trans{\lpropret{i}}
                \conf{A_2}{\Gamma_1,\maps{i}{v_1}}{E\hole[F],K_1}{s_1}{\noe}
                \\&\trans{\lopret{\alpha}}
                \conf{A_1\uplus\alpha}{\Gamma_1,\maps{i}{v_1}}{K_1}{s_1}{E\hole[l \asgn \alpha]}
                \\&\trans{\tau}
                \conf{A_1\uplus\alpha}{\Gamma_1,\maps{i}{v_1}}{K_1}{s_1\stupd{l}{\alpha}}{E\hole[\true]}
                =C_3'
                &(\alpha\not\in\an{A_1,\Gamma_1,v_{1},K_1,s_1,E,v_1})
             \end{align*}
             \begin{align*}
               C_4 &= \conf{A_2}{\Gamma_2}{E_2,K_2}{s_2}{e_2}
                \wtrans{\tau}
                \conf{A_2}{\Gamma_2}{E_2,K_2}{s_2'}{v_2}
                \\&\trans{\lpropret{i}}
                \conf{A_2}{\Gamma_2,\maps{i}{v_2}}{E_2,K_2}{s_2'}{\noe}
                \\&\trans{\lopret{\alpha}}
                \conf{A_2\uplus\alpha}{\Gamma_2,\maps{i}{v_2}}{K_2}{s_2'}{E_2\hole[\alpha]}
                &(\alpha\not\in\an{A_2,\Gamma_2,v_2,K_2,s_2',E_2})
                \\&\wtrans{\tau}
                \conf{A_2\uplus\alpha}{\Gamma_2,\maps{i}{v_2}}{K_2}{s_2''}{e_2'}
                =C_4'
                & C_3' \simil C_4'
             \end{align*}
             \begin{align*}
               C_2 &= \conf{A_2}{\Gamma_2}{K_2}{s_2}{E_2[e_2]}
                \wtrans{\tau}
                \conf{A_2}{\Gamma_2}{E_2,K_2}{s_2'}{E_2[v_2]}
                \\&=
                \conf{A_2\uplus\alpha}{\Gamma_2}{K_2}{s_2'}{E_2\hole[\alpha]}
                \concr{\alpha}{v_2}
                \wtrans{\tau}
                \conf{A_2\uplus\alpha}{\Gamma_2}{K_2}{s_2''}{e_2'}
                \concr{\alpha}{v_2}
                =C_2'
             \end{align*}
             Moreover, $C_1' \utfoldv*{\simil} C_2'$.

            \item $F=\cond{\hole}{e_{11}}{e_{12}}$:
              Here $v_1=\true$ or $v_1=\false$, and $e'=e_{11}$ or $e'=e_{12}$, respectively.
              In both sub-cases, the proof proceeds as in the case where $F=\arithop{\vec c_{11},\hole,\vec c_{12}}$.

          \end{itemize}

      \end{itemize}

  \end{itemize}

{\flushleft
  \iref{utfe2}: In this case we have:}
      $C_1 = \conf{A_1}{\Gamma_1}{K_{11},E'_1\hole[E_1],K_{12}}{s_1}{\hat e_1}$ and
      $C_2 = \conf{A_2}{\Gamma_2}{K_{21},E'_2\hole[E_2],K_{22}}{s_2}{\hat e_2}$ and
      $C_3 = \conf{A_1}{\Gamma_1}{K_{11},E_1,E'_1,K_{12}}{s_1}{\hat e_1}$ and
      $C_4 = \conf{A_2}{\Gamma_2}{K_{21},E_2,E'_2,K_{22}}{s_2}{\hat e_2}$ and
      $\sizeof{K_{11}}=\sizeof{K_{21}}$ and $C_3 \simil C_4$.
      When $\sizeof{K_{11}}>0$, because $C_1$ ($C_3$) and $C_2$ (resp.~ $C_4$) have the same expressions, 
      the proof requires a simple simulation diagram chasing, with the resulting configurations related in
      $\utfolde{\simil}$ via rule \iref{utfe2}.
      Similarly when $\eta \not\in\{\lopret{c},\lopret{\alpha}\where \text{any } c, \alpha$.
      When $\eta \in\{\lopret{c},\lopret{\alpha}\where \text{any } c, \alpha$, the simulation diagram is
      similar, but completed by relating the resulting configurations in
      $\utfolde{\simil}$ via rule \iref{utfe1}.
      Note that a $\lterm$-transition is not possible from $C_1$.

{\flushleft
  \iref{utfd1},
  \iref{utfd2}:}
  The proof in these cases proceeds as the corresponding cases above, with the simplification that the right-hand side configurations ($C_2$ and $C_4$) perform the required transitions via the LTS rule \iref{dummy}.
\end{proof}

\begin{lemma}\label{lem:utfoldv-sound} 
  Let $f= \utfolde{}\sqcup\utfoldv{}$
  and $g=\utperm{}$; then
  $\utfoldv{}\comp{\constf{\gfp\WP}}\progress(f\sqcup g)^\omega\comp{\constf{\gfp\WP}}$.
\end{lemma}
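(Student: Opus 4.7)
Following the template of \cref{lem:utfolde-sound}, the plan is to apply \cref{lem:prfs}~(\ref{lem:prfs-3}), which reduces the claim to showing
$\utfoldv(\simil) \sqsubseteq \WP\bigl((f\sqcup g)^\omega(\simil)\bigr)$.
I would fix $C_1 \utfoldv*{\simil} C_2$ and a transition $C_1 \trans{\eta} C_1'$, and exhibit a matching $C_2 \wtrans{\eta} C_2'$ with the residuals in $(f\sqcup g)^\omega(\simil)$. Transitions derived by \iref{dummy} are immediate, and the degenerate case \iref{utfd3} collapses to the analysis below with $C_4$ replaced by $\botconf$.

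The principal case is rule \iref{utfv}, where $C_1 = C_3\kappa_1$, $C_2 = C_4\kappa_2$, $\kappa_\ell = \concr{\alpha}{v_\ell}$, and $C_3 \simil C_4$ carry the extra binding $\maps{i}{v_\ell}$ and the abstract name $\alpha$. For each transition of $C_1$ the recipe is to (a)~construct a corresponding transition or short sequence from $C_3$ that witnesses the same externally observable behaviour modulo $\kappa_1$, (b)~use $C_3 \simil C_4$ to lift it to a weak sequence from $C_4$, and (c)~project back to a weak move $C_2 \wtrans{\eta} C_2'$ using \cref{lem:lts-concr} and \cref{lem:lts-taus} to commute $\tau$-steps with concretisation.

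I would split the transitions into three groups, matching how $(f\sqcup g)^\omega$ appears on the right. The passive moves---opponent transitions at indices $j\neq i$ whose fresh $\vec\beta$ avoid $\alpha$, proponent moves whose redex or returned value is unaffected by $\kappa_1$, and \iref{term}---lift directly to $C_3$, are matched by $C_4$ via similarity, and refold through \iref{utfv} to yield a residual in $\utfoldv(\simil) \subseteq f^\omega(\simil)$. The hidden-beta moves are \iref{tau}-transitions that reduce a redex $\app{v_1}{w}$ in $\hat e_1\kappa_1$ originating from $\app{\alpha}{w}$ in $\hat e_1$; these are simulated in $C_3$ by the two-step sequence $\lpropapp{\alpha}{j}\,\lopapp{i}{D[\vec{\alpha'}]}$, in the same spirit as the $e_1=\alpha$ subcase of \cref{lem:utfolde-sound}, and the matching sequence from $C_4$ projects after $\kappa_2$ to a weak $\tau$ from $C_2$, with the residual in $\utfoldv(\utfoldv(\simil)) \subseteq f^\omega(\simil)$. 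Finally, opponent transitions whose freshness choice clashes with $\alpha$ (or where OpApp/OpRet names collide with those appearing in $v_1,v_2$ after concretisation) are first adjusted by a permutation of $C_3,C_4$ via \cref{lem:lts-permute,lem:lts-permute-weak}, so the residual lies in $\utperm(\utfoldv(\simil)) \subseteq (f\sqcup g)^\omega(\simil)$; this is precisely why $g=\utperm{}$ is needed in the statement.

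The hidden-beta group will be the main obstacle. For a single $\tau$-move of $C_1$ one must manufacture a multi-move sequence in $C_3$ that threads through opponent and back so that (i)~the similarity $C_3 \simil C_4$ yields a matching weak sequence from $C_4$, and (ii)~that sequence collapses under $\kappa_2$ to the required weak $\tau$ from $C_2$. The bookkeeping for freshness of the auxiliary abstract names, the ultimate patterns chosen for the function arguments, and the commutation properties captured by \cref{lem:lts-basic,lem:lts-concr,lem:lts-taus} is essentially the same bureaucratic obligation as in the corresponding case of \cref{lem:utfolde-sound}, and I do not expect any new conceptual ingredient.
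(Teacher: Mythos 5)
Your plan follows the paper's proof of this lemma essentially step for step: reduce via \cref{lem:prfs}~(\ref{lem:prfs-3}) to the progression $\utfoldv{}(\simil) \sqsubseteq \WP\comp(f\sqcup g)^\omega(\simil)$, dispose of \iref{dummy} and \iref{utfd3}, and in the \iref{utfv} case match each transition of $C_1$ by lifting to $C_3$, transporting along $C_3\simil C_4$, and projecting back through the concretisation with \cref{lem:lts-concr,lem:lts-taus}, inserting permutations for freshness of new indices and names. Your hidden-beta analysis of the \iref{tau} case, with the $\lpropapp{\alpha}{j}\,\lopapp{i}{\cdot}$ detour, is exactly the paper's.

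One concrete correction to your taxonomy, though. You file all opponent applications (at indices other than $i$, with fresh names avoiding $\alpha$) under ``passive'', but the longest and most delicate case of the paper's proof is an opponent application, not a $\tau$-step: when the applied index $i'$ satisfies $\Gamma_3(i')=\alpha$, the configuration $C_1$ applies the concrete value $v_3=\lam{x}{e_3''}$ and $\funred$ lands directly in $e_3''\sub{x}{\alpha''}$, whereas $C_3$ produces $\app{\alpha}{\alpha''}$ and must take the three-move detour $\lopapp{i'}{\alpha''}\,\lpropapp{\alpha}{j}\,\lopapp{i}{\alpha'''}$ before $C_3\simil C_4$ can be exploited; the matching on the $C_4$ side then further splits on whether $\Gamma_4(i')$ is $\alpha$ or a concrete lambda and on the shape of $v_4$ (the paper even has to argue that $\Gamma_4(i')$ being a \emph{different} abstract name is impossible). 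As written, your ``lift directly and refold through \iref{utfv}'' step fails for this case. The repair uses precisely the detour mechanism you already describe for hidden betas, so no new idea is needed, but the case must be moved out of the passive group. Relatedly, the residual for the detour cases is not $\utfoldv{\utfoldv{\simil}}$ but $\utfolde{\utfoldv{\utfoldv{\simil}}}$ (with an outer $\utperm{}$ in the opponent-application case): the detour pushes the evaluation context onto the stack, and \iref{utfe2} is needed to fold it back before the configurations line up. This is harmless for the statement, since $f=\utfolde{}\sqcup\utfoldv{}$ and the target is $(f\sqcup g)^\omega$, but your refolding step should invoke $\utfolde{}$ explicitly.
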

\begin{proof}
  We need to show
  $\utfoldv{}\comp{\constf{\gfp\WP}}\comp\WP(\bisim{R}) \sqsubseteq 
  {\WP}\comp (f\sqcup g)^\omega\comp{\constf{\gfp\WP}}(\bisim{R})$.
  Because $\constf{\gfp\WP}$ is the constant function mapping its argument to $(\simil)$, we need to show
  $\utfoldv{}(\simil) \sqsubseteq 
  {\WP}\comp (f\sqcup g)^\omega(\simil)$.

  Let $C_1\utfoldv*{\simil}C_2$, 
  and $C_1\trans{\eta}C_1'$.
  If $C_1\trans{\eta}C_1'$ is produced by rule \iref{dummy} then the proof is trivial as $C_2$ can perform the same transition and $C_1'=\botconf \utId*{\simil} \botconf=C_2'$. We thus consider only non-\iref{dummy} transitions from $C_1$.
  By case analysis, the derivation
  $C_1\utfoldv*{\simil}C_2$ can be produced by the \iref{utfv} or the \iref{utfd3} rules. We show the former; the proof of the latter is similar but simpler.

  {\flushleft
  \iref{utfv}:}
  Here we have
      $C_1 = C_3\kappa_1$ and
      $C_2 = C_4\kappa_2$ and
      $C_3=\conf{A_3\uplus\alpha}{\Gamma_3,\maps{i}{v_3}}{K_3}{s_3}{\hat e_3}$ and
      $C_4=\conf{A_4\uplus\alpha}{\Gamma_4,\maps{i}{v_4}}{K_4}{s_4}{\hat e_4}$ and
      $C_3 \simil C_4$ and
      $\alpha\not\in\an{v_3,v_4}$ and
      $\kappa_1=\concr{\alpha}{v_3}$ and
      $\kappa_2=\concr{\alpha}{v_4}$.
   We proceed by cases on the transition from $C_1$. Cases \iref{propretc}, \iref{propretf}, \iref{opretc}, \iref{opretf}, and \iref{term} are straightforward and are proved using \cref{lem:lts-concr}.

   \begin{itemize}
     \item \iref{propappc}, \iref{propappf}: Both cases are similar; here we show the latter.
       We have $\eta=\lpropapp{\alpha'}{j}$ and $\alpha\in A_3$ ($\alpha\not=\alpha'$) and
       $\hat e_3\kappa_1=E_1\hole[\app{\alpha'} u_1]$ and
       \[
          C_1=\conf{A_3}{\Gamma_3\kappa_1}{K_3\kappa_1}{s_3\kappa_1}{E_1\hole[\app{\alpha'} u_1]}
          \trans{\lpropapp{\alpha'}{j}}
          \conf{A_3}{\Gamma_3\kappa_1,\maps{j}{u_1}}{E_1,K_3\kappa_1}{s_3\kappa_1}{\noe}
          =C_1'
        \]
        We choose $j'\not\in\dom{\Gamma_3}\uplus i$
        and create the permutation $\pi=(j\bij j')$. We derive
       \[
          C_1=\conf{A_3}{\Gamma_3\kappa_1}{K_3\kappa_1}{s_3\kappa_1}{E_1\hole[\app{\alpha'} u_1]}
          \trans{\lpropapp{\alpha'}{j'}}
          \conf{A_3}{\Gamma_3\kappa_1,\maps{j'}{u_1}}{E_1,K_3\kappa_1}{s_3\kappa_1}{\noe}
          =C_1'\pi
        \]
        We consider cases on 
       $\hat e_3\kappa_1=E_1\hole[\app{\alpha'} u_1]$:
       \begin{itemize}
         \item
           $\hat e_3=E_3\hole[\app{\alpha'} u_3]$ and
           $E_1=E_3\kappa_1$ and $u_1=u_3\kappa_1$.
           In this case we have
           \begin{align*}
             C_3=\conf{A_3\uplus\alpha}{\Gamma_3,\maps{i}{v_3}}{K_3}{s_3}{E_3\hole[\app{\alpha'} u_3]}
             &\trans{\lpropapp{\alpha'}{j'}}
             \conf{A_3\uplus\alpha}{\Gamma_3,\maps{i}{v_3},\maps{j'}{u_3}}{E_3,K_3}{s_3}{\noe}
             =C_3'
           \end{align*}
           By $C_3 \simil C_4$ we have 
           $C_4 \wtrans{\lpropapp{\alpha'}{j'}} C_4$ 
           and $C_3' \simil C_4'$.

           We have two cases. The first is when $C_4'=\botconf$. This case is straightforward using rule \iref{utfd3}
           and relating the resulting configurations in $\utperm{\utfoldv{\simil}}$.
           The other case is as follows.
           \begin{align*}
             C_4=\conf{A_4\uplus\alpha}{\Gamma_4,\maps{i}{v_4}}{K_4}{s_4}{\hat e_4}
             &\wtrans{\tau}
             \conf{A_4\uplus\alpha}{\Gamma_4,\maps{i}{v_4}}{K_4}{s_4'}{E_4\hole[\app{\alpha'} u_4]}
             \\
             &\trans{\lpropapp{\alpha'}{j'}}
             \conf{A_4\uplus\alpha}{\Gamma_4,\maps{i}{v_4},\maps{j'}{u_4}}{E_4,K_4}{s_4'}{\noe}
             =C_4'
             &\text{and } C_3' \simil C_4'
             \\
             C_2=\conf{A_4\uplus\alpha}{\Gamma_4,\maps{i}{v_4}}{K_4}{s_4}{\hat e_4}\kappa_2
             &\wtrans{\tau}
             \conf{A_4\uplus\alpha}{\Gamma_4,\maps{i}{v_4}}{K_4}{s_4'}{E_4\hole[\app{\alpha'} u_4]}\kappa_2
             &(\text{\ref{lem:lts-concr}})
             \\
             &\trans{\lpropapp{\alpha'}{j'}}
             \conf{A_4\uplus\alpha}{\Gamma_4,\maps{i}{v_4},\maps{j'}{u_4}}{E_4,K_4}{s_4'}{\noe}\kappa_2
             \\
              C_2
             &\wtrans{\lpropapp{\alpha'}{j}}
             C_4'\kappa_2\pi = C_2'
             &(\ref{lem:lts-perm})
            \end{align*}
            Moreover, $C_1' \utperm*{\utfoldv{\simil}} C_2'$, as required.
 
         \item $\hat e_3=E_3\hole[\app{\alpha} u_3]$ and
           $E_1=E_3\kappa_1$ and $u_1=u_3\kappa_1$ and $\kappa_1=\concr{\alpha}{\alpha'}$ and $v_3=\alpha$.
           In this case we have
           \begin{align*}
             C_3=\conf{A_3\uplus\alpha}{\Gamma_3,\maps{i}{v_3}}{K_3}{s_3}{E_3\hole[\app{\alpha} u_3]}
             &\trans{\lpropapp{\alpha}{j'}}
             \conf{A_3\uplus\alpha}{\Gamma_3,\maps{i}{v_3},\maps{j'}{u_3}}{E_3,K_3}{s_3}{\noe}
             =C_3'
           \end{align*}
           By $C_3 \simil C_4$ we have 
           $C_4 \wtrans{\lpropapp{\alpha}{j'}} C_4$ 
           and $C_3' \simil C_4'$.

           We have two cases. The first is when $C_4'=\botconf$. This case is straightforward using rule \iref{utfd3}
           and relating the resulting configurations in $\utperm{\utfoldv{\simil}}$.
           The other case is as follows.
           \begin{align*}
             C_4=\conf{A_4\uplus\alpha}{\Gamma_4,\maps{i}{v_4}}{K_4}{s_4}{\hat e_4}
             &\wtrans{\tau}
             \conf{A_4\uplus\alpha}{\Gamma_4,\maps{i}{v_4}}{K_4}{s_4'}{E_4\hole[\app\alpha u_4]}
             \\
             &\trans{\lpropapp{\alpha}{j'}}
             \conf{A_4\uplus\alpha}{\Gamma_4,\maps{i}{v_4},\maps{j'}{u_4}}{E_4,K_4}{s_4'}{\noe}
             =C_4'
             &\text{and } C_3' \simil C_4'
             \\
             C_2=\conf{A_4\uplus\alpha}{\Gamma_4,\maps{i}{v_4}}{K_4}{s_4}{\hat e_4}\kappa_2
             &\wtrans{\tau}
             \conf{A_4\uplus\alpha}{\Gamma_4,\maps{i}{v_4}}{K_4}{s_4'}{E_4\hole[\app\alpha u_4]}\kappa_2
             \\
             &\trans{\lpropapp{\alpha'}{j'}}
             \conf{A_4\uplus\alpha}{\Gamma_4,\maps{i}{v_4},\maps{j'}{u_4}}{E_4,K_4}{s_4'}{\noe}\kappa_2
             &(\text{\ref{lem:lts-concr}})
             \\
              C_2
             &\wtrans{\lpropapp{\alpha'}{j}}
             C_4'\kappa_2\pi = C_2'
             &(\ref{lem:lts-perm})
            \end{align*}
            Moreover, $C_1' \utperm*{\utfoldv{\simil}} C_2'$, as required.
        \end{itemize}
     \item \iref{opappc}, \iref{opappf}: Both cases are similar; here we show the latter.
       We have $\eta=\lopapp{i'}{\alpha'}$ and $\alpha' \not\in A_3$ and $\hat e_3=\noe$ and
       $\Gamma_3(i')=u_3$ and $\app{u_3\kappa_1} \alpha' \funred e_1'$ and
       \[
         C_1= \conf{A_3\uplus\alpha}{\Gamma_3,\maps{i}{v_3}}{K_3}{s_3}{\noe}\kappa_1
          \trans{\lopapp{i'}{\alpha'}}
          \conf{A_3\uplus\alpha'}{\Gamma_3\kappa_1}{K_3\kappa_1}{s_3\kappa_1}{e_1'}
          =C_1'
        \]
        We choose $\alpha''\not\in A\uplus \alpha$
        and create the permutation $\pi=(\alpha'\bij \alpha'')$.
        We derive:
       \[
         C_1= \conf{A_3\uplus\alpha}{\Gamma_3,\maps{i}{v_3}}{K_3}{s_3}{\noe}\kappa_1
          \trans{\lopapp{i'}{\alpha''}}
          \conf{A_3\uplus\alpha''}{\Gamma_3\kappa_1}{K_3\kappa_1}{s_3\kappa_1}{e_1''}
          =C_1'\pi
        \]
         with $\app{u_3\kappa_1} \alpha'' \funred e_1''$.
         We consider cases on $u_3$ and $v_3$:
         \begin{itemize}
           \item $u_3 = \alpha$ and
             $v_3 = \lam x e_3''$: we have $e_1''=e_3''\sub{x}{\alpha''} = e_3''\sub{x}{\alpha''}\kappa_1$ 
                 ($\alpha\not\in\an{v_3}\cup\alpha''$).
             \begin{align*}
               C_1= \conf{A_3\uplus\alpha}{\Gamma_3,\maps{i}{v_3}}{K_3}{s_3}{\noe}\kappa_1
               &\trans{\lopapp{i'}{\alpha''}}
                \conf{A_3\uplus\alpha,\alpha''}{\Gamma_3,\maps{i}{v_3}}{K_3}{s_3}{e_3''\sub{x}{\alpha''}}\kappa_1
               =C_1'\pi
               \\
               C_3=\conf{A_3\uplus\alpha}{\Gamma_3,\maps{i}{v_3}}{K_3}{s_3}{\noe}
               &\trans{\lopapp{i'}{\alpha''}}
               \conf{A_3\uplus\alpha,\alpha''}{\Gamma_3,\maps{i}{v_3}}{K_3}{s_3}{\app{\alpha} \alpha''}
               \\
               &\trans{\lpropapp{\alpha}{j}}
               \conf{A_3\uplus\alpha,\alpha''}{\Gamma_3,\maps{i}{v_3},\maps{j}{\alpha''}}{\hole,K_3}{s_3}{\noe}
               \\
               &\trans{\lopapp{i}{\alpha'''}}
               \conf{A_3\uplus\alpha,\alpha'',\alpha'''}{\Gamma_3,\maps{i}{v_3},\maps{j}{\alpha''}}{\hole,K_3}{s_3}
               {e_3''\sub{x}{\alpha'''}}
               =C_3'
            \end{align*}
                 By $C_3 \simil C_4$ we have two cases. The first is when
                 $C_4$ weakly matches these transitions and becomes
                 $C_4'=\botconf$, due to an application of the \iref{dummy} rule.
                 This case is proved using rule \iref{utfd3} and relating the resulting
                 configurations in $\utperm{\utfolde{\utfoldv{\utfoldv{\simil}}}}$.
                 The other case is as follows.

                 By \cref{lem:simil-properties}, $\hat e_4=\noe$ and
                 $\Gamma_4(i')=u_4$ and 
             we proceed by cases on $u_4$ and $v_4$:
           \begin{itemize}
             \item $u_4 = \alpha_4\not=\alpha$:
               This is not possible because then $C_4$ would not be able to match the second transition from $C_3$.

             \item $u_4 = \alpha$ and
             $v_4 = \lam x e_4''$: we have:
            \begin{align*}
               C_4=\conf{A_4\uplus\alpha}{\Gamma_4,\maps{i}{v_4}}{K_4}{s_4}{\noe}
               &\trans{\lopapp{i'}{\alpha''}}
               \conf{A_4\uplus\alpha}{\Gamma_4,\maps{i}{v_4}}{K_4}{s_4}{\app{\alpha}{\alpha''}}
               \\
               &\trans{\lpropapp{\alpha}{j}}
               \conf{A_4\uplus\alpha,\alpha''}{\Gamma_4,\maps{i}{v_4},\maps{j}{\alpha''}}{\hole,K_4}{s_4}{\noe}
               \\
               &\trans{\lopapp{i}{\alpha'''}}
               \conf{A_4\uplus\alpha,\alpha'',\alpha'''}{\Gamma_4,\maps{i}{v_4},\maps{j}{\alpha''}}
               {\hole,K_4}{s_4}{e_4''\sub{x}{\alpha'''}}
               \\
               &\wtrans{\tau}
               \conf{A_4\uplus\alpha,\alpha'',\alpha'''}{\Gamma_4,\maps{i}{v_4},\maps{j}{\alpha''}}{\hole,K_4}{s_4'}{e_4''}
               =C_4'
            \end{align*}
           and $C_3' \simil C_4'$.
            \begin{align*}
               C_2=\conf{A_4\uplus\alpha}{\Gamma_4,\maps{i}{v_4}}{K_4}{s_4}{\noe}\kappa_2
               &\trans{\lopapp{i'}{\alpha''}}
               \conf{A_4\uplus\alpha,\alpha'',\alpha'''}{\Gamma_4,\maps{i}{v_4},\maps{j}{\alpha''}}
               {K_4}{s_4}{e_4''\sub{x}{\alpha'''}}\kappa_2\kappa_4
               \\&\qquad\qquad\qquad\qquad\qquad\qquad\qquad\qquad\qquad\qquad(\kappa_4=\concr{\alpha'''}{\alpha''})
               \\
               &\wtrans{\tau}
               \conf{A_4\uplus\alpha,\alpha'',\alpha'''}{\Gamma_4,\maps{i}{v_4},\maps{j}{\alpha''}}{K_4}{s_4'}{e_4''}
               \kappa_2\kappa_4
               =C_2'\pi
               \\
               C_2&\trans{\lopapp{i'}{\alpha'}} C_2'
            \end{align*}
               Moreover we derive $C_1' \utperm*{\utfolde{\utfoldv{\utfoldv{\simil}}}} C_2'$.
  
             \item $u_4 = \alpha$ and
             $v_4 = \alpha_4 \in A_4$: The proof is as before with transitions:
            \begin{align*}
               C_4=\conf{A_4\uplus\alpha}{\Gamma_4,\maps{i}{v_4}}{K_4}{s_4}{\noe}
               &\trans{\lopapp{i'}{\alpha''}}
               \trans{\lpropapp{\alpha}{j}}
               \\
               &\trans{\lopapp{i}{\alpha'''}}
               \conf{A_4\uplus\alpha,\alpha'',\alpha'''}{\Gamma_4,\maps{i}{v_4},\maps{j}{\alpha''}}
               {\hole,K_4}{s_4}{\app {\alpha_4} \alpha'''}
               =C_4'
               \\
               C_2=\conf{A_4\uplus\alpha}{\Gamma_4,\maps{i}{v_4}}{K_4}{s_4}{\noe}\kappa_2
               &\trans{\lopapp{i'}{\alpha''}}
               \conf{A_4\uplus\alpha,\alpha'',\alpha'''}{\Gamma_4,\maps{i}{v_4},\maps{j}{\alpha''}}
               {K_4}{s_4}{\app{\alpha_4}\alpha'''}\kappa_2\kappa_4
               =C_2'\pi
               \\&\qquad\qquad\qquad\qquad\qquad\qquad\qquad\qquad\qquad\qquad(\kappa_4=\concr{\alpha'''}{\alpha''})
               \\
               C_2&\trans{\lopapp{i'}{\alpha'}} C_2'
            \end{align*}
           and $C_3' \simil C_4'$.
               Moreover we derive $C_1' \utperm*{\utfolde{\utfoldv{\utfoldv{\simil}}}} C_2'$.
 
             \item $u_4 = \lam x e_4$:
             In this case we derive:
            \begin{align*}
               C_4=\conf{A_4\uplus\alpha}{\Gamma_4,\maps{i}{v_4}}{K_4}{s_4}{\noe}
               &\trans{\lopapp{i'}{\alpha''}}
               \conf{A_4\uplus\alpha}{\Gamma_4,\maps{i}{v_4}}{K_4}{s_4}{e_4\sub{x}{\alpha''}}
               \\
               &\wtrans{\tau}
               \conf{A_4\uplus\alpha}{\Gamma_4,\maps{i}{v_4}}{K_4}{s_4'}{E_4\hole[\app{\alpha}{w_4}]}
               \\
               &\trans{\lpropapp{\alpha}{j}}
               \conf{A_4\uplus\alpha,\alpha''}{\Gamma_4,\maps{i}{v_4},\maps{j}{w_4}}{E_4,K_4}{s_4}{\noe}
               \\
               &\trans{\lopapp{i}{\alpha'''}}
               \conf{A_4\uplus\alpha,\alpha'',\alpha'''}{\Gamma_4,\maps{i}{v_4},\maps{j}{w_4}}
               {E_4,K_4}{s_4}{e_4'}
               &(\app{v_4}{\alpha'''}\funred e_4')
               \\
               &\wtrans{\tau}
               \conf{A_4\uplus\alpha,\alpha'',\alpha'''}{\Gamma_4,\maps{i}{v_4},\maps{j}{w_4}}{E_4,K_4}{s_4'}{e_4''}
               =C_4'
            \end{align*}
           and $C_3' \simil C_4'$.
            \begin{align*}
               C_2=\conf{A_4\uplus\alpha}{\Gamma_4,\maps{i}{v_4}}{K_4}{s_4}{\noe}\kappa_2
               &\trans{\lopapp{i'}{\alpha''}}
               \conf{A_4\uplus\alpha}{\Gamma_4,\maps{i}{v_4}}{K_4}{s_4}{e_4\sub{x}{\alpha''}}\kappa_2
               \\
               &\wtrans{\tau}
               \conf{A_4\uplus\alpha}{\Gamma_4,\maps{i}{v_4}}{K_4}{s_4'}{E_4\hole[\app{\alpha}{w_4}]}
               \kappa_2
               &(\ref{lem:lts-concr})
               \\
               &\mathrel{({\trans{\tau}}\cup {=})}
               \conf{A_4\uplus\alpha,\alpha'',\alpha'''}{\Gamma_4,\maps{i}{v_4},\maps{j}{w_4}}
               {K_4}{s_4}{E_4\hole[e_4']}
               \kappa_2\kappa_4
               \\&\qquad\qquad\qquad(\kappa_4=\concr{\alpha'''}{w_4},
               (\app{v_4}{\alpha'''})\kappa_2\kappa_4\funred e_4'\kappa_2\kappa_4)
               \\
               &\wtrans{\tau}
               \conf{A_4\uplus\alpha,\alpha'',\alpha'''}{\Gamma_4,\maps{i}{v_4},\maps{j}{w_4}}{K_4}{s_4'}
               {E_4\hole[e_4'']}
               \kappa_2\kappa_4
               =C_2'\pi
               \\
               C_2
               &\wtrans{\lopapp{i'}{\alpha''}}
               C_2'
            \end{align*}
               Moreover we derive $C_1' \utperm*{\utfolde{\utfoldv{\utfoldv{\simil}}}} C_2'$.
           \end{itemize}

           \item $u_3 = \alpha$ and
             $v_3 = \alpha_3\in A_3$: 
             In this case we have $e_1''=\app{\alpha_3}\alpha''=(\app{\alpha_3}\alpha'')\kappa_1$
             ($\alpha_3\not=\alpha\not=\alpha'')$, and derive the transitions:
             \begin{align*}
               C_1= \conf{A_3\uplus\alpha}{\Gamma_3,\maps{i}{v_3}}{K_3}{s_3}{\noe}\kappa_1
               &\trans{\lopapp{i'}{\alpha''}}
                \conf{A_3\uplus\alpha,\alpha''}{\Gamma_3,\maps{i}{v_3}}{K_3}{s_3}{\app\alpha\alpha''}\kappa_1
               =C_1'\pi
               \\
               C_3= \conf{A_3\uplus\alpha}{\Gamma_3,\maps{i}{v_3}}{K_3}{s_3}{\noe}
               &\trans{\lopapp{i'}{\alpha''}}
                \conf{A_3\uplus\alpha,\alpha''}{\Gamma_3,\maps{i}{v_3}}{K_3}{s_3}{\app\alpha\alpha''}
               =C_3'
              \end{align*}
                 By $C_3 \simil C_4$ we have two cases. The first is when
                 $C_4$ weakly matches these transitions and becomes
                 $C_4'=\botconf$. This case is proved using rule \iref{utfd3} and relating the resulting
                 configurations in $\utperm{\utfoldv{\simil}}$.
                 The other case is as follows:
                 By \cref{lem:simil-properties}, $\hat e_4=\noe$ and
                 $\Gamma_4(i')=u_4$ and 
             \begin{align*}
               C_4= \conf{A_4\uplus\alpha}{\Gamma_4,\maps{i}{v_4}}{K_4}{s_4}{\noe}
               &\trans{\lopapp{i'}{\alpha''}}
                \conf{A_4\uplus\alpha,\alpha''}{\Gamma_4,\maps{i}{v_4}}{K_4}{s_4}{e_4}
                \wtrans{\tau}C_4'
               &((\app{u_4}{\alpha''})\funred e_4)
               \\
               C_2=C_4\kappa_2
               &\wtrans{\lopapp{i'}{\alpha''}}
               C_4'\kappa_2=C_2\pi
               &(\text{as above})
                \\
               C_2=C_4\kappa_2
               &\wtrans{\lopapp{i'}{\alpha'}}
               C_4'\kappa_2\pi=C_2
              \end{align*}
              Moreover $C_3' \simil C_4'$ and thus
              $C_1' \utperm*{\utfoldv{\simil}} C_2'$.
 
           \item $v_1 = \alpha_3\in A_3$: Here $\alpha_3\not=\alpha$ and the proof proceeds as in the previous case.
             The resulting configurations are again related in $\utperm{\utfoldv{\simil}}$.
           \item $u_3 = \lam x e_3$: similarly.
\end{itemize}

     \item \iref{tau}:
       Here we have $\eta=\tau$ and
       \[
         C_1 = \conf{A_3\uplus\alpha}{\Gamma_3,\maps{i}{v_3}}{K_3}{s_3}{\hat e_3}\kappa_1
         \trans{\tau}
         \conf{A_3}{\Gamma_3\kappa_1}{K_3\kappa_1}{s_3'\kappa_1}{e_1'}
         = C_1'
       \]
       We distinguish the case when $\hat e_3\kappa_1 \not= E_1\hole[\app{\lam x e_1'} u_1]$, for any $E_1$, $\lam{x}e_1'$, $u_1'$. This case is straightforward and follows from \cref{lem:lts-concr}.
       In the remaining case 
       we proceed by case analysis of the equality
       $\hat e_3\kappa_1 = E_1\hole[\app{\lam x e_1'} u_1]$:
       \begin{itemize}
         \item $\hat e_3 = E_3\hole[\app{\lam x e_3'} u_3]$ and $E_1=E_3\kappa_1$ and
           $\lam x e_1' = \lam x e_3'\kappa_1$ and $u_1 = u_3\kappa_1$ and $s_3=s_3'$.
           This case follows again from \cref{lem:lts-concr}.
         \item $\hat e_3 = E_3\hole[\app\alpha u_3]$ and $E_1=E_3\kappa_1$ and
           $v_3 = \lam x e_1'$ and $u_1 = u_3\kappa_1$ and $s_3=s_3'$.
           Here we have
           \begin{align*}
             C_3=
             \conf{A_3\uplus\alpha}{\Gamma_3,\maps{i}{v_3}}{K_3}{s_3}{E_3\hole[\app\alpha u_3]}
             &\trans{\lpropapp{\alpha}{j}}
             \conf{A_3\uplus\alpha}{\Gamma_3,\maps{i}{v_3},\maps{j}{u_3}}{E_3,K_3}{s_3}{\noe}
             \\
             &\trans{\lopapp{i}{\alpha'}}
             \conf{A_3\uplus\alpha,\alpha'}{\Gamma_3,\maps{i}{v_3},\maps{j}{u_3}}{E_3,K_3}{s_3}{e_1'\sub{x}{\alpha'}}
             =C_3'
           \end{align*}
                 By $C_3 \simil C_4$ we have two cases. The first is when
                 $C_4$ weakly matches these transitions and becomes
                 $C_4'=\botconf$. This case is proved using rule \iref{utfd3} and relating the resulting
                 configurations in $\utfolde{\utfoldv{\utfoldv{\simil}}}$.
                 The other case is as follows:
           \begin{align*}
             C_4=
             \conf{A_4\uplus\alpha}{\Gamma_4,\maps{i}{v_4}}{K_4}{s_4'}{\hat e_4}
             &\wtrans{\tau}
             \conf{A_4\uplus\alpha}{\Gamma_4,\maps{i}{v_4}}{K_4}{s_4'}{E_4\hole[\app\alpha u_4]}
             \\
             &\trans{\lpropapp{\alpha}{j}}
             \conf{A_4\uplus\alpha}{\Gamma_4,\maps{i}{v_4},\maps{j}{u_4}}{E_4,K_4}{s_4'}{\noe}
             \\
             &\trans{\lopapp{j}{\alpha'}}
             \conf{A_4\uplus\alpha,\alpha'}{\Gamma_4,\maps{i}{v_4},\maps{j}{u_4}}{E_4,K_4}{s_4'}{e_4'}
             &(\app{v_4}{\alpha'} \funred e_4')
             \\
             &\wtrans{\tau}
             C_4'
           \end{align*}
           and $C_3' \simil C_4'$. As above we can derive:
           \begin{align*}
             C_2=
             \conf{A_4\uplus\alpha}{\Gamma_4,\maps{i}{v_4}}{K_4}{s_4'}{\hat e_4}\kappa_2
             &\wtrans{\tau}
             \conf{A_4\uplus\alpha}{\Gamma_4,\maps{i}{v_4}}{K_4}{s_4'}{E_4\hole[\app\alpha u_4]}\kappa_2
             \\
             &\mathrel{({\trans{\tau}}\cup{=})}
             \conf{A_4\uplus\alpha,\alpha'}{\Gamma_4,\maps{i}{v_4},\maps{j}{u_4}}{K_4}{s_4'}
             {E_4\hole[e_4']}\kappa_2\kappa_4
             \\
             &\qquad\qquad\qquad\qquad\qquad
             (\kappa_4=\concr{\alpha'}{u_4}, (\app{v_4}{\alpha'})\kappa_2\kappa_4 \funred e_4'\kappa_2\kappa_4)
             \\
             &\wtrans{\tau}
             C_4'\kappa_2\kappa_4
           \end{align*}
           Moreover we derive $C_1' \utfolde*{\utfoldv{\utfoldv{\simil}}} C_2'$.
           \qedhere
          \end{itemize}

   \end{itemize}
 \end{proof}

\begin{proposition}\label{prop:utfold-sound} 
  Functions \utfolde{} and \utfoldv{} are sound up-to techniques.
\end{proposition}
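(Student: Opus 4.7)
The plan is to invoke Lemma~\ref{lem:prfs}(\ref{lem:prfs-3}) with the family $F=\{\utfolde{},\utfoldv{}\}$, yielding soundness of the combined function $f=(\utfolde{}\sqcup\utfoldv{})\comp\constf{\gfp\WP}$; from this, the individual soundness of $\utfolde{}$ and $\utfoldv{}$ is obtained in the same style as the already-proven Lemma~\ref{lem:utsepconj-sound} for $\utsepconj{}$. Note that the two lemmas~\ref{lem:utfolde-sound} and~\ref{lem:utfoldv-sound} are precisely engineered to match the shape of the premises required by Lemma~\ref{lem:prfs}(\ref{lem:prfs-3}), so the proof is essentially a book-keeping step on top of the extensive case analyses already carried out for those two lemmas.

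Applying Lemma~\ref{lem:prfs}(\ref{lem:prfs-3}) requires, for each $f_i\in F$, exhibiting a function $g_i\sqsubseteq\companion[\WP]$ witnessing the progression $f_i\comp\constf{\gfp\WP}\progress(f\sqcup g_i)^{\omega}\comp\constf{\gfp\WP}$. For $f_1=\utfolde{}$ I take $g_1=\utId{}$: this lies below $\companion[\WP]$ by the idempotency of the companion (Lemma~\ref{lem:companion-props}(\ref{lem:companion-props-3})), and the required progression is supplied directly by Lemma~\ref{lem:utfolde-sound}. For $f_2=\utfoldv{}$ I take $g_2=\utperm{}$: this is $\WP$-compatible (Lemma~\ref{lem:upto-perm}) and hence in $\companion[\WP]$, and the progression is supplied directly by Lemma~\ref{lem:utfoldv-sound}.

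The main delicate step is to verify that the progression targets of Lemmas~\ref{lem:utfolde-sound} and~\ref{lem:utfoldv-sound}, which take the shape $(\utfolde{}\sqcup\utfoldv{})^{\omega}\comp\constf{\gfp\WP}$ (augmented by $\utperm{}$ in the second) and can unfold arbitrarily nested applications of $\utfolde{}$ and $\utfoldv{}$ between resets to $\simil$, indeed entail the target $(f\sqcup g_i)^{\omega}\comp\constf{\gfp\WP}$ demanded by Lemma~\ref{lem:prfs}(\ref{lem:prfs-3}), in which every $f$-step resets via $\constf{\gfp\WP}$. As with the analogous Lemma~\ref{lem:utsepconj-sound} and the $\utreentry{}$ soundness sketched in Section~\ref{sec:reentry}, this alignment is absorbed by the inductive argument inside the proof of Lemma~\ref{lem:prfs}(\ref{lem:prfs-3}), which dominates the nested iterations of the technique through the envelope $(f\sqcup\companion[\WP])^{\omega}\comp\constf{\gfp\WP}$; this step succeeds precisely because the auxiliary functions $\utId{}$ and $\utperm{}$ we chose are themselves inside the companion. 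I expect this matching to be the only nontrivial point of the proof.
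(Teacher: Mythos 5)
Your proposal is correct and takes essentially the same route as the paper's proof: it invokes \cref{lem:prfs}~(\ref{lem:prfs-3}) with $F=\{\utfolde{},\utfoldv{}\}$ and discharges the two progression premises via \cref{lem:utfolde-sound,lem:utfoldv-sound}, taking $g_2=\utperm{}$ (below the companion by \cref{lem:upto-perm}). The only, inessential, difference is that you choose $g_1=\utId{}$ where the paper takes $g_1=\emptyset$; since $f^0=\utId{}$ the resulting envelopes $(f\sqcup g_1)^{\omega}$ coincide.
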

\begin{proof}
  Consider $f=\utfolde{}\sqcup\utfoldv{}$. It suffices to show that $f$ is \WP-sound.
  By \cref{lem:prfs}~(\ref{lem:prfs-3}), it is sufficient to show
  \begin{align*}
    \utfolde{} \comp{\constf{\gfp\WP}}&\progress(f\sqcup g_1)^{\omega}\comp{\constf{\gfp\WP}} &\text{and}\\
    \utfoldv{} \comp{\constf{\gfp\WP}}&\progress(f\sqcup g_2)^{\omega}\comp{\constf{\gfp\WP}}
  \end{align*}
  where $g_1=\emptyset$ and $g_2=g=\utperm{}$.
  We have $g_i\sqsubseteq\companion$ by \cref{lem:upto-perm}.
  We finally establish the needed progressions by \cref{lem:utfolde-sound,lem:utfoldv-sound}.
\end{proof}

\subsection{Up to Context}
\todo{This section now requires no proof. Some content will move into the main paper, the rest to be removed.}

\begin{figure*}[t] 

  \[\begin{array}{@{}c@{}}
    \irule[UpToCxt${}_s$][uptocxts]{
      \conf{A_1}{\Gamma_1}{K_1}{s_1}{\hat e_1}
      \bisim*{R}
      \conf{A_2}{\Gamma_2}{K_2}{s_2}{\hat e_2}
      \\
      \vec l \not\in\dom{s_1}\sqcup\dom{s_2}
      \\
      \fl{\vec v} \sqsubseteq \{\vec l\}
      \\
      \an{\vec v} \sqsubseteq A_1\cap A_2
    }{
      \conf{A_1}{\Gamma_1}{K_1}{s_1\stupd{\vec l}{\vec v}}{\hat e_1}
      \utcxt*[][\vec l]{\bisim{R}}
      \conf{A_2}{\Gamma_2}{K_2}{s_2\stupd{\vec l}{\vec v}}{\hat e_2}
    }
    \\\\
    \irule[UpToCxt${}_\Gamma$][uptocxtgamma]{
      \conf{A_1}{\Gamma_1}{K_1}{s_1}{\hat e_1}
      \utcxt*[][\vec l]{\bisim{R}}
      \conf{A_2}{\Gamma_2}{K_2}{s_2}{\hat e_2}
      \\
      i\not\in\dom{\Gamma_1}\sqcup\dom{\Gamma_2}
      \\
      \fl{v} \sqsubseteq \{\vec l\}
      \\
      \an{v} \sqsubseteq A_1\cap A_2
    }{
      \conf{A_1}{\Gamma_1,\maps{i}{v}}{K_{1}}{s_1}{\hat e_1}
      \utcxt*[][\vec l]{\bisim{R}}
      \conf{A_2}{\Gamma_2,\maps{i}{v}}{K_{2}}{s_2}{\hat e_2}
    }
    \\\\
    \irule[UpToCxt${}_K$][uptocxtk]{
      \conf{A_1}{\Gamma_1}{K_{11},K_{12}}{s_1}{\hat e_1}
      \utcxt*[][\vec l]{\bisim{R}}
      \conf{A_2}{\Gamma_2}{K_{21},K_{22}}{s_2}{\hat e_2}
      \\
      \sizeof{K_{11}}=\sizeof{K_{21}}
      \\
      \fl{E} \sqsubseteq \{\vec l\}
      \\
      \an{E} \sqsubseteq A_1\cap A_2
    }{
      \conf{A_1}{\Gamma_1}{K_{11},E,K_{12}}{s_1}{\hat e_1}
      \utcxt*[][\vec l]{\bisim{R}}
      \conf{A_2}{\Gamma_2}{K_{21},E,K_{21}}{s_2}{\hat e_2}
    }
    \\\\
    \irule[UpToCxt${}_e$][uptocxte]{
      \conf{A_1}{\Gamma_1}{K_1}{s_1}{\noe}
      \utcxt*[][\vec l]{\bisim{R}}
      \conf{A_2}{\Gamma_2}{K_2}{s_2}{\noe}
      \\
      \fl{e} \sqsubseteq \{\vec l\}
      \\
      \an{e} \sqsubseteq A_1\cap A_2
    }{
      \conf{A_1}{\Gamma_1}{K_1}{s_1}{e}
      \utcxt*[][\vec l]{\bisim{R}}
      \conf{A_2}{\Gamma_2}{K_2}{s_2}{e}
    }
    \\\\
    \utcxt{} = \displaystyle\bigcup_{\vec l} \utcxt[][\vec l]{}
  \end{array}\]
  \hrule
  \caption{Up-to context.}\label{fig:utcontext}
\end{figure*}

\begin{lemma}
  Let 
  $g=
  \utId{}           
  \sqcup\utperm{}     
  \sqcup\utweak{}     
  \sqcup\utgc{}       
  \sqcup\utfolde{}    
  \sqcup\utfoldv{}    
  $; then
  $\utcxt[][\vec l]{}\comp{\constf{\gfp\WP}}\progress(\utcxt{}\sqcup g)^\omega\comp{\constf{\gfp\WP}}$.
\end{lemma}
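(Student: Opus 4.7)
The plan is to apply \cref{lem:prfs}~(\ref{lem:prfs-3}) in the same style as for the preceding up-to techniques \utsepconj{}, \utfolde{}, \utfoldv{}. It suffices to show $\utcxt[][\vec l]{}(\simil) \subseteq \WP\comp(\utcxt{}\sqcup g)^\omega(\simil)$. I would proceed by induction on the derivation of $C_1 \utcxt*[][\vec l]{\simil} C_2$, which decomposes into the four rules \iref{uptocxts}, \iref{uptocxtgamma}, \iref{uptocxtk}, \iref{uptocxte}; within each case I would do a case analysis on the challenge transition $C_1 \trans{\eta} C_1'$, dispatching \iref{dummy}-transitions trivially.

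For the base case \iref{uptocxts}, both configurations share an additional store chunk over locations $\vec l$ disjoint from the original domains, with values closed over $\{\vec l\}$ and names in $A_1\cap A_2$. Every observable transition originates from the underlying $C_3, C_4$ (the shared store is passive data). By $C_3 \simil C_4$ the transition is matched modulo internal $\tau$-moves, and the residual configurations would be related by reapplying \iref{uptocxts} combined with up to permutation $\utperm{}$ (to align the freshly chosen names and locations produced by the weak matching transitions) and up to garbage collection $\utgc{}$ (to discard cells in $\vec l$ that become unreachable on either side after the move). The inductive cases \iref{uptocxtgamma} and \iref{uptocxtk} lift the hypothesis through the shared environment entry and the shared continuation frame respectively: for \iref{uptocxtgamma}, the only new transitions are opponent applications at the shared index $i$ against the common value $v$, whose unfolding yields identical residual expressions on both sides that can be re-related by \iref{uptocxte}; other transitions factor through \cref{lem:lts-weaken} and reapply \iref{uptocxtgamma}. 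For \iref{uptocxtk}, the constraint $\sizeof{K_{11}}=\sizeof{K_{21}}$ ensures that the shared frame $E$ is popped by the same opponent return on both sides, reducing again to a residual built by \iref{uptocxte}.

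The core case is \iref{uptocxte}, where both configurations now execute the common expression $e$ whose free locations lie in $\vec l$ and whose abstract names lie in $A_1\cap A_2$. Internal $\beta$-moves inside $e$ acting on the shared store synchronise exactly, keeping residuals within $\utcxt[][\vec l]{}$. Proponent calls $E'\hole[\app\alpha v]$ with $\alpha\in A_1\cap A_2$ and proponent returns push, respectively pop, the same continuation $E'$ on both sides, emitting the same ultimate pattern $D$ (since $v$ is built out of shared data); the residuals are then related by \iref{uptocxtk} composed with \iref{uptocxte}. Opponent calls and returns after such a move introduce the same fresh abstract name into $A_1\cap A_2$, preserving the shared-context invariant.

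The main obstacle will be the sub-case of \iref{uptocxte} where $e$ performs a proponent application against an $\alpha \in A_1 \cap A_2$ and the matching transition from $C_2$ requires weak internal $\tau$-moves that allocate fresh locations or names asymmetrically in the two underlying configurations; here I expect to need \utfolde{} and \utfoldv{} to re-associate the residual into a \iref{uptocxte}-shape, together with \utperm{} to rename the fresh names/locations out of the shared component $A_1\cap A_2$ and $\vec l$, and \utgc{} to hide cells introduced only on one side. Soundness of $g$ is already established (\cref{lem:upto-perm}, \cref{lem:utgc-sound}, \cref{lem:utweak-sound}, \cref{prop:utfold-sound}), so $g\sqsubseteq\companion$, which is the remaining hypothesis needed to invoke \cref{lem:prfs}~(\ref{lem:prfs-3}) and conclude soundness of \utcxt{}.
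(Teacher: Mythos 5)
Your proposal follows the same overall route as the paper's: the same reduction via \cref{lem:prfs}~(\ref{lem:prfs-3}) to showing $\utcxt[][\vec l]{}(\simil) \subseteq \WP\comp(\utcxt{}\sqcup g)^\omega(\simil)$, the same induction on the derivation of $C_1 \utcxt*[][\vec l]{\simil} C_2$, and the same case analysis on the challenge transition. Where you genuinely diverge is the base case \iref{uptocxts}. You treat the added store over $\vec l$ as passive data and run a transition-matching argument, reapplying \iref{uptocxts} together with $\utperm{}$ and $\utgc{}$ on the residuals. The paper instead observes that at this point in the derivation the cells $\vec l\mapsto\vec v$ are \emph{unreachable} from the rest of the configuration (since $\vec l\not\in\dom{s_1}\cup\dom{s_2}$ and $\fl{\vec v}\subseteq\{\vec l\}$, while $C_3,C_4$ are well formed without $\vec l$), so $C_1 \asymp C_3 \simil C_4 \asymp C_2$, hence $C_1\utgc*{\simil}C_2$; by soundness of $\utgc{}$ this gives $C_1\simil C_2$ outright and the case closes with $\WP\comp\utId{}$, with no diagram chase. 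Your version still works but does more work; the shared store only becomes live once the later rules install values, frames or an expression mentioning $\vec l$. For the remaining cases your sketch is at least as detailed as the paper's own, which merely states the proof obligation and declares the cases ``similar'', so there is little to compare against; your account of how opponent applications at the shared index and pops of the shared frame reduce to residuals built by \iref{uptocxte}, and of how proponent moves of the shared expression reduce to \iref{uptocxtk} and \iref{uptocxtgamma}, matches what the rules force. One caveat: your anticipated need for $\utfolde{}$ and $\utfoldv{}$ inside the \iref{uptocxte} case is speculative --- the shared expression executes synchronously on both sides, so asymmetric internal moves arise only from the underlying pair, where $\simil$ together with $\utperm{}$ and $\utgc{}$ already absorbs them; the fold functions sit in $g$, but it is not evident from either proof that they are actually exercised here.
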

\begin{proof}
  We need to show
  $\utcxt[][\vec l]{}\comp{\constf{\gfp\WP}}\comp\WP(\bisim{R}) \sqsubseteq 
  {\WP}\comp (\utcxt{}\sqcup g)^\omega\comp{\constf{\gfp\WP}}(\bisim{R})$.
  Because $\constf{\gfp\WP}$ is the constant function mapping its argument to $(\simil)$, we need to show
  $\utcxt[][\vec l]{}(\simil) \subseteq 
  {\WP}\comp (\utcxt{}\sqcup g)^\omega({\simil})$.
  
  We proceed by induction on the derivation of $C_1 \utcxt*[][\vec l]{\simil} C_2$.

  {\flushleft
  \iref{uptocxts}:}
We have
$C_1 = \conf{A_1}{\Gamma_1}{K_1}{s_1\stupd{\vec l}{\vec v}}{\hat e_1}$ and
$C_2 = \conf{A_2}{\Gamma_2}{K_2}{s_2\stupd{\vec l}{\vec v}}{\hat e_2}$ and
$C_3 = \conf{A_1}{\Gamma_1}{K_1}{s_1}{\hat e_1}$ and
$C_4 = \conf{A_2}{\Gamma_2}{K_2}{s_2}{\hat e_2}$ and
$C_3 \simil C_4$ and
$\vec l \not\in\dom{s_1}\sqcup\dom{s_2}$ and
$\fl{\vec v} \sqsubseteq \{\vec l\}$ and
$\an{\vec v} \sqsubseteq A_1\cap A_2$.
  By \cref{def:asymp}, $C_1 \asymp C_3$ and $C_2 \asymp C_4$.
  Therefore $C_1 \utgc*{\simil} C_2$. By \cref{lem:utgc-sound},
  $C_1 \simil C_2$, thus
  $C_1 \mathrel{\WP\comp\utId{\simil}} C_2$, and finally
  $C_1 \mathrel{\WP\comp(\utcxt{}\cup g)^\omega(\simil)} C_2$.

  {\flushleft
  \iref{uptocxtgamma}:}
We have
$C_1 = \conf{A_1}{\Gamma_1,\maps{i}{v}}{K_{1}}{s_1}{\hat e_1}$ and
$C_2 = \conf{A_2}{\Gamma_2,\maps{i}{v}}{K_{2}}{s_2}{\hat e_2}$ and
$C_3 = \conf{A_1}{\Gamma_1}{K_1}{s_1}{\hat e_1}$ and
$C_4 = \conf{A_2}{\Gamma_2}{K_2}{s_2}{\hat e_2}$ and
$C_3 \utcxt*[][\vec l]{\bisim{R}} C_4$ and
$i\not\in\dom{\Gamma_1}\cup\dom{\Gamma_2}$ and
$\fl{v} \subseteq \{\vec l\}$ and
$\an{v} \subseteq A_1\cap A_2$.
  \todo[inline]{todo - Maybe up-to separation covers this lemma entirely!}

  Let $C_1 \utcxt*[][\vec l]{\simil} C_2$ and $C_1 \trans{\eta} C_1'$. We need to show that there exists $C_2'$ such that 
  $C_2 \wtrans{\eta} C_2'$ and $C_1' \bisim*{S} C_2'$, where $\bisim{S} = (\utcxt{}\cup g)^\omega(\simil)$.

  The proof is similar in the remaining two cases.
\qed
\end{proof}

\begin{lemma}
  $e_1 \cxteq e_2$ iff
  $\lam x e_1 \cxteq \lam x e_2$
\end{lemma}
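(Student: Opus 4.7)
The plan is to reduce the statement to the bisimilarity version, which is essentially the technical lemma stated immediately after the LTS definition in \cref{sec:lts}, and then transfer the result to $\cxteq$ via soundness and completeness (\cref{theorem:SC}).

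First I would note that, since $\vdash e_1:T$ and $\vdash e_2:T$, both $e_1$ and $e_2$ are closed and the variable $x$ does not occur in either; thus $\lam x e_i$ coincides with $\lam{\_} e_i$ in the notation of the technical lemma. By \cref{theorem:SC}, $e_1 \cxteq e_2$ iff $\conf{\emptyA}{\emptyG}{\emptyK}{\emptyS}{e_1} \bisimil \conf{\emptyA}{\emptyG}{\emptyK}{\emptyS}{e_2}$, and the analogous equivalence holds for $\lam x e_1$ and $\lam x e_2$. The technical lemma supplies exactly the bridge between these two bisimilarity statements, so chaining the equivalences yields $e_1 \cxteq e_2$ iff $\lam x e_1 \cxteq \lam x e_2$.

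Alternatively, one could attempt a direct contextual argument. The forward direction is straightforward: given $D$ with $\vdash D\hole[\lam x e_i] : \Unit$, take $D'[\hole] \defeq D[\lam x \hole]$; since each $e_i$ is closed there is no capture, $D'[e_i] = D[\lam x e_i]$, and the conclusion follows from $e_1 \cxteq e_2$. For the reverse, one picks any closed value $v_0$ of type $T_x$ (every \lang type is inhabited) and sets $D'[\hole] \defeq D[\app \hole v_0]$, so that $D'[\lam x e_i] = D[\app (\lam x e_i) v_0]$, which should have the same termination behaviour as $D[e_i]$ because $x$ does not occur in $e_i$.

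The main obstacle for the direct route is justifying this equi-termination: the redex $\app (\lam x e_i) v_0$ embedded inside the general context $D$ need not be at an evaluation context position of the top-level program, so a single beta step does not suffice to identify $D[\app (\lam x e_i) v_0]$ with $D[e_i]$. Closing this gap rigorously amounts to proving that beta-reduction preserves contextual equivalence under arbitrary contexts, a non-trivial congruence result not established directly in the paper. The bisimulation route circumvents this difficulty: the technical lemma carries the entire load using the sound up-to techniques (separation, garbage collection, weakening, similarity) developed earlier, after which \cref{theorem:SC} transfers the conclusion from $\bisimil$ to $\cxteq$ with no further work.
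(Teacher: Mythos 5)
Your first route is correct but genuinely different from what the paper does: the paper's entire proof of this lemma is a citation to Theorem~3.2 of \cite{KoutavasW06}, i.e.\ it defers to an external, purely contextual/operational argument of exactly the kind you sketch as your second route. Your bisimulation route instead chains \cref{theorem:SC} with the technical lemma of \cref{sec:lts} relating $\conf{\emptyA}{\emptyG}{\emptyK}{\emptyS}{e_i}$ and $\conf{\emptyA}{\emptyG}{\emptyK}{\emptyS}{\lam \_ e_i}$. This is sound provided \cref{theorem:SC} is taken as established by the paper's primary, trace-based proof (via \cref{theorem:FA}), and modulo the small observation that the technical lemma is stated for $\simil$, so it must be applied in both directions and combined with the fact that mutual similarity coincides with bisimilarity for this (deterministic) LTS, as the paper itself implicitly assumes. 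What your route buys is a proof internal to the paper's own machinery; what it costs is that this lemma is deployed in the appendix precisely to drive the \emph{alternative} soundness proof of $(\bisimil)$ (reducing congruence of $(\bisimil)$ to congruence for values), and within that development an appeal to \cref{theorem:SC} would be circular --- which is presumably why the paper cites an external contextual result instead of arguing as you do. If you adopt this route you should state explicitly that you rely only on the trace-based proof of \cref{theorem:SC}.

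Your diagnosis of the direct route is also accurate and matches what the cited theorem actually supplies: the forward direction via $D' \defeq D[\lam x \hole]$ is unproblematic because $e_i$ is closed, while the reverse direction reduces to showing that $D[\app{(\lam x e_i)}{v_0}]$ and $D[e_i]$ co-terminate, a beta-congruence fact for redexes under arbitrary (non-evaluation) contexts that the paper does not prove itself. Closing that gap directly would require a small simulation argument showing that every residual of the redex deterministically reduces to (a residual of) $e_i$ once it reaches evaluation position; that is exactly the content outsourced to \cite{KoutavasW06}.
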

\begin{proof}
  See Theorem 3.2 in \cite{KoutavasW06}.
\end{proof}
\begin{theorem}[Soundness of $\bisimil$]
  $e_1 \bisimil e_2$ implies $e_1 \cxteq e_2$.
\end{theorem}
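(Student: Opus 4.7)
The plan is to reduce the statement to complete-trace equivalence using Laird's full abstraction result (\cref{theorem:FA}), and then to prove that bisimilarity on our LTS coincides with complete-trace equivalence. The key structural observations that make this reduction clean are: (i) the LTS of \cref{fig:lts} is deterministic modulo the choice of fresh names (locations, abstract names, indices); (ii) opponent labels are fully determined up to such freshness; (iii) the special $\botconf$ configuration, together with the \iref{dummy} rule, allows any configuration to respond to any challenge \emph{except} the $\lterm$ label, so that only complete traces impose genuine matching obligations.

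For the ``$\Rightarrow$'' direction, I will show that if $e_1 \bisimil e_2$ then the two configurations produce the same complete traces. Given a complete trace $t\,\lterm$ witnessed by $e_1$, I unfold the bisimulation along the sequence of labels of $t$: each transition of the $e_1$-side is weakly matched on the $e_2$-side, and by \cref{lem:simil-properties} (or its analogue) the matched configuration cannot be $\botconf$ on the $e_2$-side if the $e_1$-side ever reaches the $\lterm$ transition, since $\botconf \not\trans{\lterm}\botconf$. Thus the matching configuration on the $e_2$-side is a regular configuration with empty continuation stack, which then fires $\lterm$ itself, giving the same complete trace. Symmetry gives the converse inclusion.

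For the ``$\Leftarrow$'' direction, I will build a bisimulation explicitly. Let $\bisim{R}$ be the relation consisting of all pairs $(C_1,C_2)$ of configurations reachable from $(\conf{\emptyA}{\emptyG}{\emptyK}{\emptyS}{e_i})_{i=1,2}$ by the same trace $t$ (up to fresh-name permutation), together with pairs of the form $(C,\botconf)$ and $(\botconf,C)$ whenever $C$ is unreachable via complete extensions of $t$, plus $(\botconf,\botconf)$. To show $\bisim{R}$ is a weak bisimulation, consider a challenge $C_1 \trans{\eta} C_1'$. If there is a matching transition $C_2 \wtrans{\eta} C_2'$ keeping both reachable via some complete extension, the resulting pair is again in $\bisim{R}$ by determinism up to freshness. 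If no such matching exists in the regular LTS, then any extension of $t\,\eta$ from $C_1$ cannot yield a complete trace not already produced from $e_2$; in that case we use the \iref{dummy} rule to send $C_2$ to $\botconf$, and the resulting pair $(C_1', \botconf)$ lies in $\bisim{R}$ by construction. Termination via $\lterm$ from a regular configuration cannot be answered by $\botconf$, so the classification of $(C,\botconf)$ pairs above is justified by the assumed equivalence of complete traces.

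The main obstacle will be step~(\Leftarrow): ensuring that the $\botconf$ flexibility does not collapse bisimilarity, i.e.\ that moving to $\botconf$ is only used on branches that never witness a fresh complete trace. This requires a careful invariant tying together the set of complete traces available from $C_1$ and $C_2$ with the set available from the roots, and it depends crucially on the stack-discipline of our labels (so that $\lterm$ can only fire from $(\cdot; s; \noe)$-configurations, via \iref{term}). Once this invariant is established, determinism modulo freshness (\cref{lem:lts-permute}) and the response rule \iref{dummy} give the bisimulation conditions straightforwardly, and the result follows by combining with \cref{theorem:FA}.
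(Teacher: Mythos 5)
Your proof is correct in outline, but it takes a genuinely different route from the paper's proof of this particular theorem. The paper proves ``Soundness of $\bisimil$'' by a congruence argument: it first reduces contextual equivalence of expressions to contextual equivalence of values (via the cited lemma that $e_1 \cxteq e_2$ iff the corresponding thunks are equivalent), and then shows that $(\bisimil)$ is a congruence for values by chaining the already-established soundness of the up-to techniques $\utgc{}$, $\utcxt{}$ and $\utfoldv{}$: from $v_1 \bisimil v_2$ one passes to configurations holding $v_i$ in the knowledge environment, installs an arbitrary context $D\hole[\alpha]$ as the running expression by up-to-context, and folds the abstract name back to $v_i$ by up-to-fold, obtaining $D\hole[v_1] \bisimil D\hole[v_2]$; adequacy then gives containment in $(\cxteq)$. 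Your route --- reduce to complete-trace equality via \cref{theorem:FA} and show that bisimilar configurations have the same complete traces --- is instead the argument the paper uses for the combined statement \cref{theorem:SC} in \cref{sec:sound-complete}, so it is sound; and your key observation, that $\botconf$ cannot weakly perform $\lterm$ and hence a configuration owning a complete trace can never be matched into $\botconf$, is exactly the right one, though it deserves a short induction on the length of the residual trace rather than an appeal to \cref{lem:simil-properties}. The trade-off is that your argument is shorter but outsources the quantification over contexts to Laird's full-abstraction theorem, whereas the congruence proof is self-contained in the paper's up-to machinery and simultaneously exercises the up-to-context and up-to-fold techniques. Finally, note that for the statement as given you only need your ``$\Rightarrow$'' direction; the ``$\Leftarrow$'' half you sketch is completeness, which is substantially harder (the paper needs a definability argument in \cref{sec:completeness}) and should not be bundled into this proof.
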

\begin{proof}
  It suffices to show that $(bisimil)$ is a congruence.
  By the above lemma, it suffices to show that $(bisim)$ is a congruence for values.
  Let $v_1 \bisimil v_2$  and context $D$.
  We have
  \begin{align*}
    v_1 &\bisimil v_2
    \\
    \conf{\cdot}{\maps{i}{v_1}}{\cdot}{\cdot}{\cdot}
    &\bisimil
    \conf{\cdot}{\maps{i}{v_1}}{\cdot}{\cdot}{\cdot}
    &\text{ (by bisimulation def. and \iref{propretf} transition)}
    \\
    \conf{\alpha}{\maps{i}{v_1}}{\cdot}{\cdot}{\cdot}
    &\bisimil
    \conf{\alpha}{\maps{i}{v_2}}{\cdot}{\cdot}{\cdot}
    &\text{ (by soundness of $\utgc{\bisimil}$)}
    \\
    \conf{\alpha}{\maps{i}{v_1}}{\cdot}{\cdot}{D\hole[\alpha]}
    &\bisimil
    \conf{\alpha}{\maps{i}{v_2}}{\cdot}{\cdot}{D\hole[\alpha]}
    &\text{ (by soundness of $\utcxt{\bisimil}$)}
    \\
    \conf{\cdot}{\cdot}{\cdot}{\cdot}{D\hole[v_1]}
    &\bisimil
    \conf{\cdot}{\cdot}{\cdot}{\cdot}{D\hole[v_2]}
    &\text{ (by soundness of $\utfoldv{\bisimil}$)}
    \\
    D\hole[v_1] &\bisimil D\hole[v_2]
    &\text{ (by definition)}
  \end{align*}
  \qed

\end{proof}


  \newcommand\traces[2][]{\mathsf{Traces}_{#1}(#2)}
\newcommand\cotraces[2][]{\underline{\mathsf{Traces}}_{#1}(#2)}
\newcommand\absNames{\mathcal{A}}
\newcommand\indNames{\mathcal{I}}
\newcommand\rcomp{\triangleright}
\newcommand\RR{\bisim{R}}
\newcommand\dualz[1]{\mathsf{Dualz}(#1)}
\newcommand\dual[2]{\overline{#1}^{#2}}
\newcommand\newz[1]{\mathsf{new}(#1)}
\newcommand\ext[1]{\mathsf{Extn}(#1)}

\section{Completeness of (\bisimil)}\label{sec:completeness}

Let us denote by $\absNames$ and $\indNames$ the countably infinite sets of abstract and index names respectively.\todo{are the indices names or just numbers?}
In this section we present the proof of the following result.

\begin{theorem}[Completeness]\label{thm:complete}
For any two doubly closed expressions $e_1$ and $e_2$, if $e_1\cxteq e_2$ then $e_1\bisimil e_2$.
\end{theorem}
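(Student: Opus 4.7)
The plan is to reduce the completeness direction to the full-abstraction result Theorem~\ref{theorem:FA}: if $e_1 \cxteq e_2$ then the initial configurations produce the same set of complete traces. Under this assumption, it suffices to exhibit a weak bisimulation containing the pair of initial configurations, from which $e_1 \bisimil e_2$ follows by definition. I would take the candidate
\[
  \bisim{R} \;=\; \{(C_1,C_2) \mid \cotraces{C_1}=\cotraces{C_2}\}
  \cup \{(C,\botconf),(\botconf,C) \mid \cotraces{C}=\emptyset\}
  \cup \{(\botconf,\botconf)\},
\]
where $\cotraces{C}$ denotes the set of complete traces producible from $C$.

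The key preparatory step is a determinism lemma for the non-\iref{dummy} fragment of the LTS: given a source $C$ and a label $\eta$, a target $C'$ with $C \trans{\eta} C'$ is unique up to permutations of freshly introduced abstract names, indices, and locations. For observable moves this is immediate from the shapes of the rules in Figure~\ref{fig:lts}; for $\tau$-moves derived by \iref{tau} it follows from unique decomposition of evaluation contexts together with the freshness clause for location allocation, and it is essentially captured already by Lemma~\ref{lem:lts-perm} and Corollary~\ref{lem:lts-permute-weak}. The \iref{dummy}-transition to $\botconf$ is always available but contributes no complete trace, since $\botconf$ cannot perform $\lterm$.

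Next I would verify the simulation clause for $\bisim{R}$; the symmetric one is identical. Consider $C_1 \bisim*{R} C_2$ via the first clause, and a transition $C_1 \trans{\eta} C_1'$. If $\cotraces{C_1'}=\emptyset$, then $\eta\neq\lterm$ (else the single-label trace $\lterm$ would witness $\cotraces{C_1}\neq\emptyset$), so $C_2 \trans{\eta} \botconf$ by \iref{dummy} and the resulting pair lies in the second clause. If $\cotraces{C_1'}\neq\emptyset$, pick any $t\lterm \in \cotraces{C_1'}$; then $\eta\cdot t\cdot\lterm \in \cotraces{C_1}=\cotraces{C_2}$ (with $\eta$ absorbed if $\eta=\tau$), so there exists $C_2'$ with $C_2 \wtrans{\eta} C_2' \wtrans{t\lterm}$, canonical by determinism. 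A short bidirectional argument — taking arbitrary $s\lterm \in \cotraces{C_1'}$ (resp.\ $\cotraces{C_2'}$), lifting it via $\eta$ to a complete trace of $C_1$ (resp.\ $C_2$), and reapplying determinism — yields $\cotraces{C_1'}=\cotraces{C_2'}$, placing the pair back in the first clause. Pairs involving $\botconf$ are handled uniformly: non-$\lterm$ challenges are matched by \iref{dummy} on the other side into $(\botconf,\botconf)$, while an $\lterm$-challenge out of a $C$ with $\cotraces{C}=\emptyset$ is impossible by rule \iref{term}.

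The main obstacle is the equality $\cotraces{C_1'}=\cotraces{C_2'}$ after matching a single step. The subtlety is that opponent moves generate fresh abstract names, $\tau$-moves may allocate fresh locations, and proponent moves generate fresh indices, so the residuals coincide only up to $\alpha$-equivalence of these symbols. Handling this cleanly requires first showing that both $\cotraces{\cdot}$ and membership in $\bisim{R}$ are closed under the name and location permutations of Lemma~\ref{lem:lts-perm}, and then invoking such a permutation to normalise the choice of fresh symbols on the $C_2$ side before comparing trace sets. Once this normalisation is in place, the remaining case analysis over the transition rules of Figure~\ref{fig:lts} is routine.
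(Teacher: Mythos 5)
Your proof is essentially correct, but it is not the route the paper takes for \cref{thm:complete}. The paper's own proof of this statement is a self-contained definability argument: it defines a candidate relation on configurations in terms of \emph{configuration contexts} $(C,\phi)$ and termination of the composite $C[C_i]_\phi$, and shows it to be a bisimulation using the one-step definability theorem (\cref{thm:definty}), which for each possible move constructs a syntactic context that forces exactly that move (using a boolean flag $r$ and a higher-order cell $l$ to sequence and record the interaction, with \cref{lem:l1,lem:l2} cleaning up the encoding). That proof deliberately does not rely on \cref{theorem:FA}. Your argument instead assumes \cref{theorem:FA} (full abstraction of the complete-trace model, imported from Laird) and shows that equality of complete-trace sets implies bisimilarity; this is precisely the content of the paper's \cref{lem:similtraces} and of the short proof of \cref{theorem:SC} in \cref{sec:sound-complete}, so your route is one the paper itself endorses --- just not as the proof of \cref{thm:complete}. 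What the definability proof buys is independence from the external trace-model result (it is, in effect, a proof of the hard half of \cref{theorem:FA}); what your route buys is brevity, at the price of making completeness conditional on that imported theorem.

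One genuine slip in your case analysis: the claim ``if $\traces{C_1'}=\emptyset$ then $\eta\neq\lterm$'' is false --- when $\eta=\lterm$ we have $C_1'=\botconf$ and $\traces{C_1'}=\emptyset$, and you cannot then match with \iref{dummy}, since that rule explicitly excludes $\lterm$. The repair is immediate: if $\eta=\lterm$ then $\epsilon\in\traces{C_1}=\traces{C_2}$, so $C_2\wtrans{\lterm}\botconf$ and the residual pair is $(\botconf,\botconf)$. With that fix, and your (correct) observation that matching residuals agree only up to permutation of fresh abstract names, indices and locations --- which the paper likewise handles via \cref{lem:lts-perm} and \cref{lem:lts-permute-weak} --- the argument goes through.
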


We start off with a few auxiliary results.

\begin{lemma}\label{lem:l1}
Let $C$ be a configuration with state $s$, and $r$ a location such that $s(r)=\false$ and:
  \begin{itemize}
  \item all the assignments of $r$ occurring in $C$ are of the form $r\asgn\false$
  \item $C$ contains a subterm (in one of its $\Gamma,K,s,\hat e$) of the form $\cond{\deref r}{e_\true}{e_\false}$.
  \end{itemize}
  Then, ${C}\bisimil{C'}$, where $C'$ is obtained from $C$ by replacing the subterm above with $e_\false$.
\end{lemma}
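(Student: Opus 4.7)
The plan is to exhibit a relation $\mathcal R$ containing $(C, C')$ and show that $\mathcal R$ is a weak bisimulation (possibly up to $\beta$, cf.~\cref{lem:upto-beta}); since $\bisimil$ is the largest weak bisimulation, this yields the result. The intuition is that because $r$ is initialised to $\false$ and never reassigned to any non-$\false$ value in any reachable program fragment, $\deref r$ always evaluates to $\false$, so $\cond{\deref r}{e_\true}{e_\false}$ is indistinguishable from $e_\false$.

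I would define $\mathcal R$ to be the symmetric closure of the relation pairing well-formed configurations $(C_1, C_2)$ such that $C_2$ is obtained from $C_1$ by replacing, at an arbitrary set of positions in $\Gamma, K, s, \hat e$, subterms of the form $\cond{\deref r}{e_\true'}{e_\false'}$ or $\cond{\false}{e_\true'}{e_\false'}$ (where $e_\true', e_\false'$ are descendants of the original $e_\true, e_\false$ under further substitutions of opponent-supplied values) with $e_\false'$, subject to the standing invariants: $s(r) = \false$ in both configurations, and every assignment to $r$ appearing anywhere in either configuration is of the form $r\asgn\false$. The relation is closed under reflexive pairs so that transitions not involving the tracked subterms are handled trivially.

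The bisimulation proof proceeds by case analysis on transitions. For the \iref{tau} rule, an internal reduction unrelated to a tracked conditional is mimicked step-for-step on the other side; the two-step reduction $\cond{\deref r}{e_\true'}{e_\false'} \red \cond{\false}{e_\true'}{e_\false'} \red e_\false'$ taken by $C_1$ is matched by zero steps from $C_2$ (justified by up-to $\beta$, since both steps are deterministic $\beta$-moves in the sense of \cref{def:beta-move}); and any assignment $r \asgn \false$ leaves $s(r)=\false$ and preserves the invariant. For \iref{opappf} and \iref{opretf}, both configurations accept the same $D\hole[\vec\alpha]$ with fresh $\vec\alpha$, and since opponent-supplied values mention no location (no reference types in \lang) and contain no assignments, the $r$-invariants and the replacement discipline survive. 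For \iref{propappf} and \iref{propretf}, ultimate pattern decomposition emits the same label $D$ on both sides (only base-type constants and holes for functions appear in $D$), while the function bodies newly added to $\Gamma$ may still differ by the $\cond{\deref r}{-}{e_\false}$-for-$e_\false$ replacement, precisely as $\mathcal R$ permits. The \iref{term} case is immediate.

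The main obstacle will be carefully justifying that the $r$-invariants are preserved across all reachable configurations, especially after opponent re-entry of proponent functions closing over $r$. The key language-level observations are that \lang has no reference types, so $r$ never escapes the proponent, meaning opponent cannot construct an assignment or a dereference targeting $r$; and that reduction rules only substitute values or unfold lambdas, neither of which can introduce a fresh syntactic occurrence of $r \asgn v$ with $v \neq \false$ that was not already closed over in the original $C$. Together these guarantee that the $r\asgn\false$ discipline and the value $s(r)=\false$ persist under every reachable transition, which is exactly what makes the replacement semantically inert and closes the bisimulation argument.
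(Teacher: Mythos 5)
Your proposal is correct and follows essentially the same route as the paper: the paper's proof also defines a one-step replacement relation $\funred_r$ on configurations satisfying the stated invariants, takes $\bisim{R}=\{(C,C')\mid C\funred^*_r C'\}$, and shows it is a weak bisimulation up to $\utbeta{}$. You merely spell out the closure conditions (descendants under substitution, the intermediate $\cond{\false}{-}{-}$ form, preservation of the $r$-invariants via the absence of reference types) that the paper's one-line proof leaves implicit.
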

\begin{proof}
Let us define the relation $C\funred_r C'$ to hold for each pair of $C,C'$ as above. The statement then follows by  showing that $\bisim{R}=\{ (C,C') \mid C\funred^*_r C'\}$ is a weak bisimulation up to $\utbeta{}$.
\end{proof}

\begin{lemma}\label{lem:l2}
  Let $C=\conf{A}{\Gamma}{K}{s}{e}$ be a configuration with $\alpha\in A$, $s(l)=\lambda z.\alpha z$ for some location $l$ that has no assignments in $C$, and:
  \begin{itemize}
  \item $C'$ be obtained from $C$ by replacing an occurrence of $\alpha$ with $\lambda z.\alpha z$; or
  \item $C'$ be obtained from $C$ by replacing an occurrence of $\alpha$ with $\deref l$ (other than in $s(l)=\lambda z.\alpha z$).
  \end{itemize}
  Then, ${C}\bisimil{C'}$.
\end{lemma}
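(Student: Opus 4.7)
The plan is to exhibit, for each clause, a candidate weak bisimulation up to $\utbeta{}$ and invoke \cref{lem:upto-beta} for soundness. For the first clause I would take $\bisim{R_1}$ as the symmetric closure of the set of pairs $(C_1, C_2)$ such that $C_2$ is obtained from $C_1$ by replacing some free occurrences of $\alpha$ with $\lam z (\app \alpha z)$. For the second clause I would take the analogous $\bisim{R_2}$ with replacement $\alpha \mapsto \deref l$, carrying the invariant that $s(l) = \lam z (\app \alpha z)$ and $l$ is never assigned in the configuration. This invariant is preserved by every transition: opponent labels range over abstract names and never expose proponent locations, the replacement introduces no $l\asgn\_$ sub-term, and proponent reductions only reduce what was already present.

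The central observation driving both cases is that the $\funred$ preorder in rules \iref{opappf}/\iref{opappc} absorbs exactly the one beta step that distinguishes $\alpha$ from $\lam z (\app \alpha z)$: for any argument $u$, both $\app{\alpha}{u}\funred \app\alpha u$ and $\app{(\lam z (\app\alpha z))}{u}\funred \app\alpha u$. Hence at an opponent application at index $i$ with $\Gamma_1(i)=\alpha$ and $\Gamma_2(i)=\lam z (\app\alpha z)$, both sides transition to syntactically identical residual expressions, and the configurations stay in $\bisim{R_1}$ (the $\Gamma$ entries still differ only by the allowed substitution). Ultimate pattern matching is also uniform: $\alpha$ and $\lam z (\app\alpha z)$ are both function-typed values, so $\ulpatt$ abstracts each of them into a single fresh $\Gamma$ entry, yielding identical labels. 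On the proponent side, any mismatch is a single beta step $\app{(\lam z (\app\alpha z))}{u} \to \app\alpha u$ — a $\beta$-move per \cref{def:beta-move} — absorbed by $\utbeta{}$. For the second clause the picture is analogous but with two extra tau steps (first the dereference yielding $\lam z (\app\alpha z)$, then the $\funred$-absorbed beta), again absorbed by up to beta.

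The main obstacle will be the case analysis over all LTS rules verifying closure of $\bisim{R_i}$ under transitions. The delicate points are: (i) confirming that proponent return and application labels coincide on both sides, which hinges on the uniform abstraction of function-typed values by $\ulpatt$ (so that returning $\alpha$ versus $\lam z (\app\alpha z)$ produces the same hole-shaped label and a related $\Gamma$ update); (ii) checking that the $\funred$ substitution inside $\lam z (\app\alpha z)$ is capture-avoiding, which is trivial since $z$ is bound and $\alpha$ is free; and (iii) for the second clause, ensuring that no transition from $C_2$ ever brings $l$ into the opponent's view or produces an assignment of $l$ — this follows from the shape of the substitution together with the Barendregt convention on fresh location allocation. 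Once these points are discharged, \cref{lem:upto-beta} gives $\bisim{R_i} \subseteq ({\bisimil})$, and thus $C \bisimil C'$.
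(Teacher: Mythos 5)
Your proposal takes essentially the same route as the paper: the paper also defines the one-step replacement relation $\funred_\alpha$ (covering both clauses in a single relation), closes it to $\funred_\alpha^*$, and shows the result is a weak bisimulation up to $\utbeta{}$, exactly as you do. The only cosmetic difference is that the paper keeps both replacement shapes in one relation, which slightly streamlines the clause-2 case where the dereference $\tau$-step lands you in a clause-1 pair --- a point you correctly discharge anyway by letting up-to-beta absorb both $\tau$-steps.
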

\begin{proof}
Let us define the relation $C\funred_\alpha C'$ to hold for each pair of $C,C'$ as above. The statement then follows by  showing that $\bisim{R}=\{ (C,C') \mid C\funred^*_\alpha C'\}$ is a weak bisimulation up to $\utbeta{}$.
\end{proof}

\begin{lemma}\label{lem:l3}\todo{unused}
  Let $C=\conf{A}{\Gamma}{K}{s}{e}$ be a configuration with $\alpha\in A$, $s(l)=\lambda z.c z$ for some constant $c$ and location $l$ that has no assignments in $C$, and:
  \begin{itemize}
  \item $C'$ be obtained from $C$ by replacing an occurrence of $\alpha$ with $\lambda z.c z$; or
  \item $C'$ be obtained from $C$ by replacing an occurrence of $\alpha$ with $\deref l$ (other than in $s(l)=\lambda z.c z$).
  \end{itemize}
  Then, ${C'}\simil{C}$.
\end{lemma}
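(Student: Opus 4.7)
The plan is to follow the template of Lemmas~\ref{lem:l1} and~\ref{lem:l2}. Let $C \funred_{l,c} C'$ mean that $C'$ is obtained from $C$ by a single substitution of an occurrence of $\alpha$ with either $\lam z.cz$ or $\deref l$ (subject to the lemma's side-conditions). Setting
\[
\bisim{R} \;=\; \{\,(C',C) \mid C \funred^{*}_{l,c} C'\,\},
\]
the goal is to show that $\bisim{R}$ is a weak simulation up to $\utbeta{}$.

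The case analysis is by a transition $C' \trans{\eta} C''$ crossed with the position of the replaced $\alpha$-occurrence in $C$. The routine cases are those where the occurrence sits in an \emph{inactive} position: inside a stored value at a location other than $l$, inside $\Gamma$, in a non-top frame of $K$, or inside a closure body not currently being reduced. There, $C$ and $C'$ offer the same transitions and the successors stay $\funred^{*}_{l,c}$-related by reapplying the substitution trail. Transitions that merely \emph{read} the replaced occurrence (e.g.\ ship a value containing $\deref l$ out to opponent) are handled by $\utbeta{}$, invoking the side-condition that $l$ has no assignments in $C$ to collapse any unfolding of $\deref l$ back to $\lam z.cz$.

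The crux, and expected main obstacle, is the case in which the replacement lands at the head of an \emph{active} redex: $C'$ takes a $\tau$-step $\beta$-reducing $(\lam z.cz)\,v$ to $c\,v$, while $C$ has $\alpha\,v$ at the same spot and can only perform the visible move $\lpropapp{\alpha}{i}$. This is exactly the point at which the symmetric treatment of Lemma~\ref{lem:l2} breaks, because eta-expansion of $\alpha$ no longer lines up with the concrete constant wrapper. Here only a one-sided match is needed, and the plan is to absorb $C'$'s $\beta$-steps via $\utbeta{}$: $C$ stays put and we compare it against the $\beta$-residual of $C'$. Closing the argument then requires showing that this $\beta$-residual is still $\funred^{*}$-related to $C$, which forces a mild enlargement of $\funred_{l,c}$ so as to also allow substitution of $\alpha$ by $c$ itself; the key technical check is that this enlarged relation is still a simulation up to $\utbeta{}$. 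In the converse direction $C$ exhibits the $\lpropapp{\alpha}{i}$ move which $C'$ cannot match, explaining why only similarity, not bisimilarity, is claimed.
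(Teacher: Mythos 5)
Your proposal takes essentially the same route as the paper: the paper's proof is a one-line sketch that defines the substitution relation $\funred_c$ from the two bullet points and asserts that $\{(C',C)\mid C\funred_c^* C'\}$ is a weak simulation up to $\utbeta{}$, which is exactly your construction. Your additional observation --- that the active-redex case forces an enlargement of the relation to also permit substituting $\alpha$ by $c$ itself, and that this asymmetry is why only similarity (not bisimilarity) holds --- is a sensible elaboration of a detail the paper's sketch leaves implicit, and does not constitute a different approach.
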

\begin{proof}
Let us define the relation $C\funred_c C'$ to hold for each pair of $C,C'$ as above. The statement then follows by  showing that $\bisim{R}=\{ (C',C) \mid C\funred^*_c C'\}$ is a weak simulation up to $\utbeta{}$.
\end{proof}

\begin{definition}[Traces]
For any $C$, we let $\traces{C}\defeq \{\ t \mid \tau,{\lterm}\notin t\land C \wtrans{t}{\!\!}^*{}\trans{\downarrow}\botconf\}$.
\end{definition}

\begin{lemma}\label{lem:similtraces}\todo{only implicitly used, but useful to have}
  For any two configurations $C_1$ and $C_2$, $C_1\simil C_2$ iff  $\traces{C_1}\subseteq\traces{C_2}$.
\end{lemma}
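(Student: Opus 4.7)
The plan is to prove the two implications separately, exploiting the near-determinism of the LTS.

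For the ``only if'' direction, suppose $C_1 \simil C_2$ and let $t \in \traces{C_1}$, so $C_1 \wtrans{t} C_1' \trans{\lterm}\botconf$. Iterating the simulation condition along $\wtrans{t}$ yields $C_2 \wtrans{t} C_2'$ with $C_1' \simil C_2'$. Since the \iref{dummy} rule excludes the label $\lterm$, the transition $C_1' \trans{\lterm}\botconf$ is not a dummy, so the simulation forces $C_2' \wtrans{\lterm} C_2''$. The only rule producing a $\lterm$-label is \iref{term}, whose target is $\botconf$, hence $C_2 \wtrans{t}\trans{\lterm}\botconf$ and $t \in \traces{C_2}$.

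For the ``if'' direction, we show that $\bisim{R} = \{(C_1, C_2) \mid \traces{C_1}\subseteq\traces{C_2}\}$ is a weak simulation. The key ingredient, already used in the proof of \cref{theorem:SC}, is that the LTS is deterministic modulo \iref{dummy} transitions and $\alpha$-renaming of freshly drawn names: at any configuration the non-dummy outgoing transitions are either all $\tau$ or all observable, never both, because proponent configurations are either reducible, or stuck at a value, or stuck at an abstract-name application, while opponent configurations admit no \iref{tau}; and for each observable label $\eta$, the non-dummy $\eta$-successor, if any, is unique up to $\alpha$. Consequently the $\tau$-chain starting from any configuration is linear and ends precisely at the first configuration that enables an observable transition.

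Given $(C_1, C_2) \in \bisim{R}$ and $C_1 \trans{\eta} C_1'$, we case-split on $\eta$ and on the rule used. A dummy transition is matched by a dummy from $C_2$, as $\traces{\botconf}=\emptyset$. A \iref{tau}-transition is matched with zero steps from $C_2$; the key observation is $\traces{C_1'}\subseteq\traces{C_1}$, since prepending a silent step does not affect the recorded trace. A $\lterm$-transition uses $\varepsilon \in \traces{C_1} \subseteq \traces{C_2}$ to obtain $C_2 \wtrans{\lterm}\botconf$. In the principal case, $\eta$ is observable, non-$\lterm$, non-dummy: if $\traces{C_1'} = \emptyset$ we again match via \iref{dummy}; otherwise, picking any $t \in \traces{C_1'}$ gives $\eta t \in \traces{C_1} \subseteq \traces{C_2}$, so $C_2 \wtrans{\tau}{}^* C_2^* \trans{\eta} C_2' \wtrans{t} \trans{\lterm}\botconf$. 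By the determinism above, $C_2^*$ and $C_2'$ are uniquely determined up to $\alpha$, and the same $C_2'$ witnesses every $t' \in \traces{C_1'}$, so $\traces{C_1'} \subseteq \traces{C_2'}$ and $(C_1', C_2') \in \bisim{R}$.

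The main obstacle is making the determinism claim precise enough to justify the ``same $C_2'$ for every $t'$'' step above. We isolate this as a small auxiliary lemma stating that (i) no configuration enables both \iref{tau} and any of \iref{propappf}, \iref{propretf}, \iref{opappf}, \iref{opretf}, \iref{term}, and (ii) for each observable label, the corresponding rule pins the successor down up to $\alpha$-renaming of the freshly chosen locations, abstract names, and indices. Both points follow by direct inspection of \cref{fig:lts} and of the deterministic reduction semantics in \cref{fig:lang}, but need careful $\alpha$-bookkeeping.
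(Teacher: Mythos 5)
Your proof is correct and takes essentially the same route as the paper's: the right-to-left direction shows that the relation $\{(C_1,C_2)\mid \traces{C_1}\subseteq\traces{C_2}\}$ is a weak simulation by exploiting determinacy of the LTS outside $\botconf$ (matching $\tau$'s with zero steps, empty-trace residuals with the dummy rule, and observable moves with the unique non-dummy successor), and the left-to-right direction iterates the simulation along the trace and uses the fact that only the \textsc{Term} rule can produce $\lterm$. Your extra care with the determinism bookkeeping, and concluding only $\traces{C_1'}\subseteq\traces{C_2'}$ where the paper asserts equality, are minor improvements in rigour rather than a different approach.
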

\begin{proof}
For the right-to-left direction , let $\bisim{R}$ be the relation:
  \[
\bisim{R}\defeq\{(C_1,C_2)\mid \traces{C_1}\subseteq\traces{C_2}\}.
\]
We claim that $\bisim{R}$ is a simulation. 
Take $C_1\bisim{R}C_2$. 
Let $C_1\trans{\eta}C_1'$ and suppose that $C_1'\neq\botconf$.
If $\eta=\tau$ then, since the operational semantics is deterministic and transitions to $\botconf$ do not contribute traces, $\traces{C_1}=\traces{C_1'}$ and therefore $C_1'\bisim{R}C_2$.
If $\eta$ is one of $\lpropapp{\alpha}{v},\lpropret{v}$, for some $\alpha,v$,
then there are two cases:
\begin{itemize}
\item If there exists some $\eta t\in\traces{C_1}$ then equality of traces implies that $C_2\wtrans{\eta}C_2'$, for some $C_2'\neq\botconf$. By determinacy of the LTS outside $\botconf$'s, we have that $\traces{C_1'}=\traces{C_2'}$.
  \item Otherwise, $C_2\trans{\eta}\botconf$ and $\traces{C_1'}=\traces{\botconf}=\emptyset$.
\end{itemize}
Therefore, in both cases $C_1'\bisim{R}C_2'$.
Similarly if $\eta$ is one of $\lopapp{i}{v},\lopret v$, for some $i,v$.
Finally,
let $C_1\trans{\eta}\botconf$. If $\eta={\lterm}$ then  $\epsilon\in\traces{C_i}$ and therefore $C_2\wtrans{\downarrow}\botconf$;
if $\eta\not={\lterm}$ then  $C_2\trans{\eta}\botconf$.
In both cases, we conclude by noting that $\botconf\bisim{R}\botconf$.

Conversely, suppose $C_1\simil C_2$ and let $t=\eta_1\cdots\eta_n\in\traces{C_1}$. Then, there are $C_1=C_1^0,C_1^1,\dots,C_1^n$ such that $C_1^{i-1}\wtrans{\eta_i}C_1^i$ , for each $1\leq i\leq n$,
 and $C_1^n\wtrans{\lterm}\botconf$. Since $C_1\simil C_2$, $C_2$ can simulate these transitions and produce the same trace $t$.
\end{proof}

\begin{corollary}
For any two doubly closed expressions $e_1$ and $e_2$, $e_1\bisimil e_2$ iff $\traces{C_{e_1}}=\traces{C_{e_2}}$, where $C_{e_i}=\conf{\emptyA}{\emptyG}{\emptyK}{\emptyS}{e_i}$.
\end{corollary}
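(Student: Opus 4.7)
The plan is to derive this corollary directly from Lemma~\ref{lem:similtraces}, which already characterises similarity in terms of trace inclusion, by unfolding the definition of bisimilarity as a symmetric pair of simulations.

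First, I would unfold the definitions on both sides. On the left, by Definition of Bisimilar Expressions we have $e_1\bisimil e_2$ iff $C_{e_1}\bisimil C_{e_2}$, and by the definition of weak bisimilarity as the largest weak bisimulation, this holds iff both $C_{e_1}\simil C_{e_2}$ and $C_{e_2}\simil C_{e_1}$ (since the union of a simulation and the inverse of a simulation yields a bisimulation whenever both hold). On the right, trace-set equality $\traces{C_{e_1}}=\traces{C_{e_2}}$ unfolds into the conjunction of $\traces{C_{e_1}}\subseteq\traces{C_{e_2}}$ and $\traces{C_{e_2}}\subseteq\traces{C_{e_1}}$.

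Next, I would apply Lemma~\ref{lem:similtraces} twice: once to each direction of simulation. The lemma gives $C_1\simil C_2 \iff \traces{C_1}\subseteq\traces{C_2}$, so instantiating with $(C_1,C_2)=(C_{e_1},C_{e_2})$ and then $(C_{e_2},C_{e_1})$ transforms the conjunction of simulations into the conjunction of trace inclusions, which is exactly trace-set equality.

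There is essentially no obstacle: the real work has been done in Lemma~\ref{lem:similtraces}, and the corollary is a two-line unfolding. The only subtlety worth flagging is making sure the standard fact that $\bisimil = {\simil}\cap{\simil^{-1}}$ on our configurations is invoked explicitly, so that the step from $C_{e_1}\bisimil C_{e_2}$ to the two simulation statements is justified. This holds in our setting because bisimilarity is defined as the largest weak bisimulation and, by a routine argument, a relation $\bisim R$ is a bisimulation iff both $\bisim R$ and $\bisim R^{-1}$ are simulations, giving $\bisimil \subseteq \simil$ and $\bisimil \subseteq \simil^{-1}$, while conversely ${\simil}\cap{\simil^{-1}}$ is itself a bisimulation.
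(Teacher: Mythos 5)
Your overall route is the same as the paper's: the corollary is meant to follow by applying Lemma~\ref{lem:similtraces} in both directions, and your forward direction ($\bisimil$ implies mutual similarity implies both trace inclusions) is fine. The gap is in the backward direction, specifically in the claim that ``by a routine argument \ldots ${\simil}\cap{\simil^{-1}}$ is itself a bisimulation.'' This is false for general labelled transition systems: simulation equivalence is strictly coarser than bisimilarity (the classic counterexample being $a.(b+c)$ versus $a.(b+c)+a.b$). Knowing that a relation $\bisim{R}$ is a bisimulation iff $\bisim{R}$ and $\bisim{R}^{-1}$ are simulations does not let you conclude that the intersection of the largest simulation with its inverse is a bisimulation, because the simulation witnessing $C_1\simil C_2$ may match a challenge $C_1\trans{\eta}C_1'$ with some $C_2'$ for which $C_2'\simil C_1'$ fails.

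The step is nevertheless repairable in this setting, for a reason you do not invoke: the LTS is deterministic modulo $\botconf$-transitions and the choice of fresh names, a fact the paper states before the proof of \cref{theorem:SC} and uses inside the proof of Lemma~\ref{lem:similtraces} itself. With determinacy, if $C_1\simil C_2$ and $C_2\simil C_1$ and $C_1\trans{\eta}C_1'$ with $C_1'\neq\botconf$, the matching non-$\botconf$ residuals on each side are unique (up to renaming), so the two simulations are forced to pair the same residuals and $\simil\cap\simil^{-1}$ is indeed a bisimulation. Alternatively, and closer to the spirit of the lemma's own proof, you can show directly that $\{(C_1,C_2)\mid \traces{C_1}=\traces{C_2}\}$ is a weak bisimulation by rerunning that argument symmetrically. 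Either fix closes the gap; as written, the justification rests on a general principle that does not hold.
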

\begin{proof}
Directly from the previous lemma.
\end{proof}

We call a finite partial bijection $\phi:\absNames\cup\indNames\overset{\cong}{\rightharpoonup}\absNames\cup\indNames$ a \emph{dualiser} if, for all $x\in\dom{\phi}$, $x\in\absNames\iff \phi(x)\notin\absNames$.
%
%
%
Given stacks $K_1,K_2$, we define their (left) composition $K_1\rcomp K_2$ recursively by:
\[
K_1\rcomp K_2 \defeq\begin{cases}
\hole & \text{if }K_1=K_2=\emptyK\\
(K_1\rcomp K_2')[E] &\text{if }K_2=E,K_2'\text{ and } |K_1|=|K_2|\\
(K_1'\rcomp K_2)[E] &\text{if }K_1=E,K_1'\text{ and } |K_1|=|K_2|+1\\
\text{undefined} &\text{otherwise}
\end{cases}
\]
Thus, if well defined, $K_1\rcomp K_2$ is an evaluation context.

\begin{definition}
  For any two configurations $C_1,C_2$, with $C_i=\conf{A_i}{\Gamma_i}{K_i}{s_i}{\hat e_i}$ and dualiser $\phi$, we say that \emph{$(C_1,\phi)$ is a context for $C_2$} if:
  \begin{itemize}
  \item $\phi:A_2\cup\dom{\Gamma_2}\overset{\cong}{\to}A_1\cup\dom{\Gamma_1}$
  \item $\dom{s_1}\cap\dom{s_2}=\emptyset$
  \item exactly one of $\hat e_1,\hat e_2$ is $\noe$
  \item $K_1\rcomp K_2$ is well defined
      \item taking $\psi$ to be the following map from $A_1\cup A_2$ to terms: \
  $
\psi = (\Gamma_2\circ (\phi^{-1}\upharpoonright A_1))\cup(\Gamma_1\circ (\phi\upharpoonright A_2))
   $, there is some $k$ such that $\psi^k=\psi^{k+1}$ \ (note: $\psi^1\defeq\psi$ and $\psi^{n+1}\defeq\alpha\mapsto (\psi^n(a))\psi$).
  \end{itemize}
  In such a case, we set $C_1[C_2]_\phi\defeq\redconf{s_1\cup s_2}{(K_1\rcomp K_2)[e]}\psi^*$, where $e\in\{\hat e_1,\hat e_2\}\setminus\{\noe\}$ and $\psi^*\defeq\psi^k$.
\end{definition}

We write $C_1\diamond C_2$ when $C_1$ and $C_2$ have the same contexts.
We call a configuration $C$ \emph{lost} if either $C=\botconf$ or there is no context $(C',\phi)$ for $C$ such that $C'[C]_\phi\Downarrow$. Finally, for moves $\eta,\eta'$, we let $\eta\funred_P\eta'$ hold if:
\begin{itemize}
  \item $\eta$ is in one of the forms $\lpropapp{\alpha}{i}$ or $\lpropret{i}$, and
  \item $\eta'$ is in one of the forms $\lpropapp{\alpha}{c}$ or $\lpropret{c}$, respectively, for some constant $c$.
\end{itemize}

\begin{theorem}[One-step definability]\label{thm:definty}
  Given $C_0,C_0',C',\phi'$ and $\eta\neq\tau,\lterm$ such that $C_0\trans{\eta}C_0'$ and $C'[C_0']_{\phi'}\Downarrow$, there is a context $(C,\phi)$ for $C_0$ such that:
  \begin{itemize}
  \item $C[C_0]_\phi\Downarrow$ and
  \item for all $C_0''\diamond C_0$, $C[C_0'']_\phi\Downarrow$ implies that $C_0''\wtrans{\eta'}$ for some $\eta\funred_P\eta'$.
\end{itemize}
  \end{theorem}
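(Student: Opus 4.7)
The plan is to proceed by case analysis on the shape of $\eta\in\{\lpropret{i},\lpropapp{\alpha}{i},\lopret{D\hole[\vec\alpha]},\lopapp{i}{D\hole[\vec\alpha]}\}$. In each case we build $(C,\phi)$ as an extension of $(C',\phi')$: $\phi\supseteq\phi'$, the store of $C$ extends that of $C'$ with a fresh boolean flag $r$ initialised to $\false$, and the top frame of $C$'s stack (or its expression, depending on polarity) is replaced by a discriminating probe that tests $r$ before delegating to the behaviour of $C'$. The overall idea is the standard catch-and-diverge template: the probe first witnesses the expected move by assigning $r\asgn\true$, and only then does the computation evolve into $C'[C_0']_{\phi'}$, which converges by assumption; any $C_0''\diamond C_0$ that behaves otherwise keeps $r$ at $\false$, whence Lem.~\ref{lem:l1} collapses the conditional to the divergent branch, contradicting $C[C_0'']_\phi\Downarrow$.

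For the proponent moves $\eta=\lpropret{i}$ and $\eta=\lpropapp{\alpha}{i}$, the configuration $C_0$ produces the move on its own, so $C$ need only stand ready to receive it. We push onto the opponent-side stack of $C$ a frame of the form $\elet{(x)}{\hole}{\cond{\deref r}{e_\false}{(r\asgn\true;\,E\hole[x])}}$, where $E$ is the fragment of $C'$ that consumes the value bound at index $\phi(i)$, and $e_\false$ is a diverging term. In the application case we moreover arrange that $\phi(\alpha)$ is a function value inside $C$'s environment so that the call $\app\alpha v$ produced by $C_0$ is actually serviced by the probe. The relaxation to $\funred_P$ arises naturally here: a proponent return (or application) that delivers a base constant $c$ in place of an indexed higher-order value still triggers the flag and continues to converge, so the discriminating context cannot distinguish $i$ from $c$.

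For the opponent moves $\eta=\lopapp{i}{D\hole[\vec\alpha]}$ and $\eta=\lopret{D\hole[\vec\alpha]}$, the context must itself initiate the transition. We extend $C$ so that, after the probe sets $r\asgn\true$, it either invokes the function stored at $\phi(i)$ with argument $D\hole[\vec\alpha]$, or returns $D\hole[\vec\alpha]$ into $C$'s top stack frame. The fresh abstract names $\vec\alpha$ have no direct source-language representatives, so we realise each $\alpha$ concretely by a freshly allocated eta-thunk $\lambda z.\,\beta z$, extending $\phi$ by $\beta\leftrightarrow\alpha$ and threading the pointer through a fresh location as in Lem.~\ref{lem:l2}; that lemma ensures the realisation is observationally indistinguishable from the abstract name that $C_0$ would receive as an oracle, so $C[C_0]_\phi$ still converges. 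Any $C_0''$ producing a different transition (or none) leaves $r=\false$ and the run diverges by Lem.~\ref{lem:l1}; hence $C_0''\wtrans{\eta'}$ with $\eta\funred_P\eta'$ as required.

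The main obstacle is the opponent case: one must realise the fresh $\vec\alpha$ via eta-thunks while simultaneously verifying that the extended pair $(C,\phi)$ remains a context for $C_0$. Concretely this means checking that $\phi$ is still a bijection on the appropriate name sets, that $K_1\rcomp K_2$ remains well defined once the probe frame is inserted, and that the fixed-point condition on $\psi$ still terminates---this last point uses that the new bindings are inert $\lambda z.\beta z$ thunks and therefore do not lengthen $\psi$'s fixed-point iteration beyond that already established for $(C',\phi')$. Once these book-keeping checks are settled, convergence of $C[C_0]_\phi$ is inherited from $C'[C_0']_{\phi'}\Downarrow$ via one or two extra deterministic steps, and the discrimination clause follows uniformly across all four cases from the flag argument and Lem.~\ref{lem:l1}.
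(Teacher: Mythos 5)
Your skeleton matches the paper's: case analysis on $\eta$, a one-shot boolean flag guarding divergence, and the eta-thunk/location lemmas to transfer convergence from $C'[C_0']_{\phi'}$. But two of your design decisions break the argument. First, the discrimination clause (the second bullet) needs \emph{every} entry point of the context to be guarded, not just one probe: the paper wraps every function $\Gamma(j)$ with $j\neq i_\alpha$ as $\lambda x.\cond{\deref r}{\bot}{e_j}$ \emph{and} wraps the pending stack frame with $(\lambda x.\cond{\deref r}{\bot}{x})\hole$, so that while the flag is still in its initial state any interaction other than the expected one diverges immediately. With only the single probe you describe, a deviant $C_0''$ can call some other public function, or return into the unguarded pending frame, and $C[C_0'']_\phi$ may converge without $C_0''$ ever performing a move $\eta'$ with $\eta\funred_P\eta'$ --- so the second bullet fails. (Relatedly, your flag bookkeeping is internally inconsistent: with $r$ initialised to $\false$ and a probe $\cond{\deref r}{e_\false}{(r\asgn\true;\dots)}$, a run in which $r$ stays $\false$ collapses by Lemma~\ref{lem:l1} to the \emph{continuing} branch, not the divergent one. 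In the paper $r$ starts at $\true$, the expected probe resets it to $\false$, and Lemma~\ref{lem:l1} is used in the opposite direction: to erase the now-dead guards and show the \emph{correct} residual is bisimilar to $C'[C_0']_{\phi'}$, hence converges.)

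Second, you have placed the eta-thunk machinery in the wrong case. For opponent moves no flag and no thunks are needed at all: the context itself initiates the move by directly evaluating $E[\phi(\alpha)v']$ with $v'=\Gamma'(v)$, the second bullet is immediate because a passive $C_0''\diamond C_0$ always accepts any opponent move, and $C[C_0]_\phi$ steps directly to $C'[C_0']_{\phi'}$. Where Lemma~\ref{lem:l2} is genuinely needed is the proponent \emph{application} case: the function that $C_0$ passes acquires the abstract name $\alpha_i=\phi'(i)$ in $C'$, and that name may occur scattered throughout $\Gamma'$, $K'$, $s'$ and $e'$, not only in one consuming fragment $E\hole[x]$. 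The paper therefore replaces $\alpha_i$ by $\deref l$ \emph{everywhere} in $C'$, has the probe execute $l\asgn\lambda z.xz$ upon receiving the argument $x$, and invokes Lemma~\ref{lem:l2} to show this indirection is bisimilar to the original. Your construction, which binds the received value to $x$ and plugs it into a single continuation, does not account for these other occurrences, so convergence of $C[C_0]_\phi$ cannot be inherited from $C'[C_0']_{\phi'}\Downarrow$ as claimed.
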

  \begin{proof}
    Let us assume
$C_0=\conf{A_0}{\Gamma_0}{K_0}{s_0}{\hat e_0}$,
    $C'=\conf{A'}{\Gamma'}{K'}{s'}{\hat e'}$ and build a configuration $C$ and a dualiser $\phi'\subseteq\phi$. We do case analysis on $\eta$.

    If $\eta= \lpropapp{\alpha}{i}$,
    let $i_\alpha=\phi'(\alpha)\in\dom{\Gamma'}$ and $\alpha_i=\phi'(i)\in A'$.
    We have $\hat e'=e'$ and
    take $\phi=\phi'\setminus\{(i,\alpha_i)\}$ and
    $C=\conf{A'\setminus\{\alpha_i\}}{\Gamma}{K}{s}{e}{\noe}[\deref l/\alpha_i]$ with $s=s'\uplus[r\mapsto\true,l\mapsto\true]$.
We also let $\dom{\Gamma}=\dom{\Gamma'}$ and for each $j\in\dom{\Gamma'}$, assuming $\Gamma'(j)=\lambda x.e_j$:
 $$
 \Gamma(j)\defeq \begin{cases}
     \lambda x. \cond{\deref r}{\bot}{e_j}& \text{if }j\neq i_\alpha\\
     \lambda x. \cond{\deref r}{(r\asgn\false;\,l\asgn \lambda z.xz;\,{e'})}{e_j} & \text{if }j=i_\alpha
     \end{cases}$$
     Moreover, if $K'$ is empty then $K\defeq K'$; and if $K'=E',K_1$ then \
$K \defeq E'[(\lambda x.\cond{\deref r}{\bot}{x})\hole],K_1$.
Setting $\Gamma'',K'',s'',e''$ to be ${\Gamma[\deref l/\alpha_i]},{K[\deref l/\alpha_i]},{s'[\deref l/\alpha_i]},e'[\deref l/\alpha_i]$ respectively, the non-blocking transitions from $C$ are
$C\wtrans{\lopapp{i_\alpha}{\alpha_i}}C''=\conf{A'}{\Gamma''}{ K''}{ s''\uplus[r\mapsto\false,l\mapsto\lambda z.\alpha_i z]}{ e''}$ (modulo renamings of $\alpha_i$)
and
$C\wtrans{\lopapp{i}{c}}C'_{c}=\conf{A}{\Gamma''}{ K''}{ s''\uplus[r\mapsto\false,l\mapsto\lambda z.cz]}{e''}$, for all constants $c$.
Let us now consider $C[C_0]_\phi=\redconf{s_0\uplus s}{(K\rcomp K_0)[E[\alpha v]]}\psi^*$, with $\psi$ defined as above.
By construction,  $C[C_0]_\phi\trans{}^*C''[C_0']_{\phi'}$ and thus, by \cref{lem:l1,lem:l2}, $C[C_0]_\phi\Downarrow$.
Take now some $C_0''\diamond C_0$. By construction, in order for $C[C_0'']_\phi$ to take a transition, it must be an internal call to $i_\alpha$ with an arbitrary argument, i.e.\ we need $C_0''\wtrans{\eta'}$ for some $\eta\funred_P\eta'$.

If $\eta= \lpropret{i}$, 
    let $\alpha_i=\phi'(i)\in A'$.
    We have $\hat e'=e'$ and take $C=\conf{A'\setminus\{\alpha_i\}}{\Gamma}{E,K'}{s}{e}{\noe}[\deref l/\alpha_i]$ with $s=s'\uplus[r\mapsto\true,l\mapsto\true]$, and $\phi=\phi'\setminus\{(i,\alpha_i)\}$.
    We also let $\dom{\Gamma}=\dom{\Gamma'}$ and set
$\Gamma(j)\defeq
     \lambda x. \cond{\deref r}{\bot}{e_j}$ (where $\Gamma(j)=\lambda x.e_j$) and
$
  E \defeq
    (\lambda x.r\asgn\false;\,l\asgn\lambda z.xz;\, e)\hole.
$
The argument then follows that of the previous case.

If $\eta= \lpropapp{\alpha}{c}$,
    let $i_\alpha=\phi'(\alpha)\in\dom{\Gamma'}$.
    We have $\hat e'=e'$ and take $C=\conf{A'}{\Gamma}{K}{s}{\noe}$ with $s=s'\uplus[r\mapsto\true]$, and $\phi=\phi'$.
We also let $\dom{\Gamma}=\dom{\Gamma'}$ and for each $j\in\dom{\Gamma'}$, assuming $\Gamma'(j)=\lambda x.e_j$:
 $$
 \Gamma(j)\defeq \begin{cases}
     \lambda x. \cond{\deref r}{\bot}{e_j}& \text{if }j\neq i_\alpha\\
     \lambda x. \cond{\deref r}{(r\asgn\false;\,\cond{x==c}{e}{\bot})}{e_j} & \text{if }j=i_\alpha
     \end{cases}$$
     Moreover, if $K'$ is empty then $K\defeq K'$; and if $K'=E',K_1$ then \
$K \defeq E'[(\lambda x.\cond{\deref r}{\bot}{x})\hole],K_1$.
The only non-blocking transition from $C$ is
$C\wtrans{\lopapp{i_\alpha}{c}}C''=\conf{A'}{\Gamma}{ K}{ s'\uplus[r\mapsto\false]}{ e}$.
Let us now consider $C[C_0]_\phi=\redconf{s_0\uplus s}{(K\rcomp K_0)[E[\alpha c]]}\psi^*$.
By construction,  $C[C_0]_\phi\trans{}^*C''[C_0']_{\phi'}$ and thus, by \cref{lem:l1}, $C[C_0]_\phi\Downarrow$.
Take now some $C_0''\diamond C_0$. By construction, in order for $C[C_0'']_\phi$ to take a transition, it must be an internal call to $i_\alpha$ with  argument $c$, i.e.\ we need $C_0''\wtrans{\eta}$.

If $\eta= \lpropret{c}$,
    we have $\hat e'=e'$ and take $C=\conf{A'}{\Gamma}{E,K'}{s}{\noe}$ with $s=s'\uplus[r\mapsto\true]$, and $\phi=\phi'$.
We also let $\dom{\Gamma}=\dom{\Gamma'}$ and for each $j\in\dom{\Gamma'}$, $\Gamma(j)\defeq
     \lambda x. \cond{\deref r}{\bot}{e_j}$ (where $\Gamma(j)=\lambda x.e_j$) and \
$
  E \defeq
    (\lambda x.r\asgn\false;\cond{x==c}{e}{\bot})\hole.
$
We conclude as in the previous case.

If $\eta= \lopapp{\alpha}{v}$,
we must have $K'=E,K$.
We take
$C=\conf{A'}{\Gamma'\setminus\{v\}}{K}{s'}{E[\phi(\alpha) v']}$, with $ v'=\Gamma'(v)$ if $v\in\indNames$, and $ v'=v$ otherwise.
Moreover, we set $\phi=\phi'\setminus\{(v,\phi'(v))\}$ if $v\in\indNames$, and $\phi=\phi'$ otherwise.
We can see that $C[C_0]_\phi\trans{}C'[C_0']_{\phi'}$ and, for any $C_0''\diamond C_0$, if $C[C_0'']_\phi$ is well formed then  $C_0''\trans{\eta}$. 
Similarly, if  $\eta= \lopret{v}$,
we construct
$C=\conf{A'}{\Gamma'\setminus\{v\}}{K'}{s'}{ v'}$ and $\phi$, with $v'$ and $\phi$ defined as above.
\end{proof}

\begin{proof}[Proof of \cref{thm:complete}]
  It suffices to show that the following relation is a bisimulation.
  \begin{align*}
\bisim{R} \defeq \{ (C_1,C_2)\mid C_1,C_2\text{ lost} \lor(C_1\diamond C_2\land  \forall (C,\phi)\in\mathsf{Cxt}(C_i).\ C[C_1]_\phi\Downarrow\iff C[C_2]_\phi\Downarrow)\}
  \end{align*}
  Let $C_1\bisim{R}C_2$ and suppose $C_1\trans{\eta}C_1'$. If $C_1'$ is lost then $C_2\trans{\eta}\botconf$, as required. Otherwise:
  \begin{itemize}
  \item If $\eta=\tau$ then, for all $C,\phi$, $C[C_1]_\phi\Downarrow$ iff $C[C_1']_\phi\Downarrow$, so $C_1'\RR C_2$.
  \item If $\eta=\lpropapp{\alpha}{v}$ then let $(C',\phi')$ be a context such that $C'[C_1']_{\phi'}\Downarrow$. By \cref{thm:definty}, there is a context $(C,\phi)$ such that
$C[C_1]_\phi\Downarrow$ and,
for all $C_1''\diamond C_1$, if $C[C_1'']_\phi\Downarrow$ then $C_1''\wtrans{\eta'}$, for some $\eta\funred_P\eta'$. By hypothesis, $C[C_2]_\phi\Downarrow$, thus $C_2\wtrans{\eta'}C_2'$ and $\eta\funred_P\eta'$.
If $\eta'\neq\eta$ then we repeat the same argument (swapping the roles of $C_1$ and $C_2$) to conclude that $C_1\wtrans{\eta''}C_1''$ for some $\eta''\funred_P\eta'$, which contradicts determinacy of our LTS. Thus, $\eta=\eta'$ and,
again by hypothesis, $C_1'\bisim{R}C_2'$.
The case for $\eta=\lpropret{v}$ is treated similarly.
  \item If $\eta=\lopapp{i}{v}$ then let $(C',\phi')$ be a context such that $C'[C_1']_{\phi'}\Downarrow$. By \cref{thm:definty}, there is a context $(C,\phi)$ such that
$C[C_1]_\phi\Downarrow$ and,
for all $C_1''\diamond C_1$, if $C[C_1'']_\phi\Downarrow$ then $C_1''\wtrans{\eta}$. By hypothesis, $C[C_2]_\phi\Downarrow$, thus $C_2\wtrans{\eta}C_2'$. Now pick any context $(C_0',\phi_0')$ such that $C_0'[C_1']_{\phi_0'}\Downarrow$, say $C_0'=\conf{A_0}{\Gamma_0}{E,K_0}{s_0}{\noe}$. Taking $C_0'=\conf{A_0}{\Gamma_0\setminus\{\phi_0'(v)\}}{K_0}{s_0}{E[\phi_0'(i)v']}$, where $v'=v$ if the latter is a constant and $v'=\Gamma_0(\phi_0'(v))$ otherwise,
and $\phi_0=\phi$ if $v$ is a constant and $\phi_0=\phi_0'\setminus\{(v,\phi_0'(v))\}$ otherwise,
we have $C_0'[C_n]_{\phi_0}\trans{}C_0[C_n']_{\phi_0'}$, for $n=1,2$. Thus, $C_0'[C_n]_{\phi_0}\Downarrow$ and hence $C_0[C_2']_{\phi_0}\Downarrow$.
The case of $\eta=\lopret{v}$ is addressed similarly.
  \end{itemize}
\end{proof}

\cutout{
For any trace $t$, we let $\newz{t}$ be the set containing all $\alpha,i$ that appear in $t$ in moves $\lpropapp{\noe}{i},\lpropret{i},\lopapp{\noe}{\alpha}$ or $\lopret{\alpha}$.
For any set of traces $X$, we let $\ext{X}$ contain all \emph{fresh extensions} of $X$:
\[
  \ext{X} \defeq\{ X'\supseteq X\mid \forall t'\in X'\setminus X.\exists t\in X, . t'=t\pi_\alpha\pi_i\}
  \]
where for all $\alpha',i'\in\supp{t}\setminus\newz{t}$, $\pi_\alpha(\alpha')=\alpha',\pi_i(i')=i'$.
  
We call a finite partial bijection $\phi:\absNames\cup\indNames\overset{\cong}{\rightharpoonup}\absNames\cup\indNames$ a \emph{dualiser} if, for all $x\in\dom{\phi}$, $x\in\absNames\iff \phi(x)\notin\absNames$.
Given any semantic object $X$ (i.e.\ trace, configuration, or components thereof)
with set of abstract and index names $X_{\absNames,\indNames}$,
we let $\dualz{X}\defeq\{\phi\mid X_{\absNames,\indNames}\subseteq\dom{\phi}\}$.
For a trace $t$ and a dualiser $\phi\in\dualz{t}$, we define the dual trace $\dual{t}{\phi}$ to be $t\phi$ with all the O- and P-move notations swapped; e.g.\ the dual of $\lpropapp{\alpha}{i}\lopret{\beta}$ will be some trace $\lopapp{i'}{\alpha'}\lpropret{j'}$. Note that $\dual{\dual{t}{\phi}}{\phi^{-1}}=t$.

For any trace $t$, we say that $t$ is \emph{realisable} if there is a configuration $C$ such that $t\in\traces{C}$.
Given a realisable trace $t$ and some $\phi\in\dualz{t}$, we want to examine configurations that can produce $t\phi$. One obstacle here is that the dualisation process changes the last move of a trace from a $P$-return to an $O$-return, and our LTS does not categorise these move sequences as traces. We therefore set, for any $C$:
\[
\cotraces{C} \defeq \{ \, t\mid \tau,{\lterm}\notin t\land \exists A,\Gamma,s.\ C \wtrans{t}{\!\!}^*{}\conf{A}{\Gamma}{\emptyK}{s}{\true}\}.
\]
The choice of the value $\true$ is arbitrary (what is important is reaching a value).

%
%
%
%

Given traces $t,t'$ we write $t\funred_O t'$ if $t'$ can be obtained from $t$ by replacing some move $\lopapp{i}{\alpha}$ or $\lopret{\alpha}$, for some $i,\alpha$, with $\lopapp{i}{c}$ or $\lopret{c}$ respectively, for some constant $c$, provided that $\alpha$ is not applied in $t$.
We define $t\funred_P t'$ in the dual way, i.e.\ by replacing $\lpropapp{\alpha}{i}$ or $\lpropret{i}$, with $\lpropapp{\alpha}{c}$ or $\lpropret{c}$, provided $i$ is not applied in $t$.

\begin{theorem}\label{thm:definty}
For any realisable trace $t_0$ and dualiser $\phi\in\dualz{t_0}$ there is configuration $C=\langle A;\Gamma;\cdots\rangle$ such that $A\cup\dom{\Gamma}=\rng{\phi}\setminus\newz{t_0}$ and $\cotraces{C}\in\ext{\{\,t'\mid  \dual{t_0}{\phi}\funred_O^* t'\}}$. \todo{The ext business is messy, revisit if there $\exists$time}
\end{theorem}
\begin{proof}
Fix some $C_0$  such that $t_0\in\traces{C_0}$, and 
let us set $t\defeq \dual{t_0}\phi$ and $A,I$ such that $\rng{\phi}\setminus\newz{t_0}=A\cup I$.
  We do induction on $|t|$. For the base case, we select $C\defeq\conf{A}{\Gamma_0}{\emptyK}{\emptyK}{\true}$, where $\Gamma_0(i)=\lambda x.\bot$, for all $i\in I$. Thus, $\cotraces{C}=\{\epsilon\}$.

Suppose now $t=\eta\,t'$. Since $ t_0=\dual{t}{\phi^{-1}}$ is a trace of $ C_0$, we have $ C_0\wtrans{\dual{\eta}{\phi^{-1}}} C_0'$ and $\dual{t'}{\phi^{-1}}\in\traces{C_0'}$.

If $\eta= \lpropapp{\alpha}{v}$, we take $I'=I\cup\{v\}$ if $v\in\indNames$, and $I'=I$ otherwise. 
By IH, there is $C'=\conf{A}{\Gamma'}{K}{s}{\noe}$ such that
$\dom{\Gamma'}=I'$ and $\cotraces{C'}\in\ext{\{\,t''\mid t'\funred_O^* t''\}}$.
Moreover, $K$ cannot be empty as the fact that $ t_0\in\traces{ C_0}$ implies that there is a matching return for $\eta$ in $t'$; so say $K=E,K'$.
Therefore, taking $C\defeq\conf{A}{\Gamma}{K'}{s}{E[\alpha v']}$, with $ v'=\Gamma'(i)$ if $v\in\indNames$, and $ v'=v$ otherwise, we obtain $C\trans{\eta}C'$, as required.

If $\eta= \lpropret{v}$, by
IH we again obtain $C'=\conf{A}{\Gamma'}{K}{s}{\noe}$ with $\dom{\Gamma'}$ as above
and $\cotraces{C'}\in\ext{\{\,t''\mid t'\funred_O^* t''\}}$.
In this case, we take $C\defeq\conf{A}{\Gamma}{K}{s}{ v'}$, with $ v'$ as above. 

If $\eta= \lopapp{i}{c}$, by
 IH there is $C'=\conf{A}{\Gamma}{K}{s}{e}$ such that
 $\cotraces{C'}\in\ext{\{\,t''\mid t'\funred_O^* t''\}}$.
 We set $C\defeq\conf{A}{\Gamma'}{K'}{s\uplus [r\mapsto\true]}{\noe}$ where  for all $j\in I$ and assuming $\Gamma(j)=\lambda x.e_j$:
$$\Gamma'(j)\defeq \begin{cases}
     \lambda x. \cond{\deref r}{\bot}{e_j}& \text{if }j\neq i\\
     \lambda x. \cond{\deref r}{(r\asgn\false;\cond{x==v}{e}{\bot})}{e_j}& \text{if }j= i
     \end{cases}$$
     Moreover, if $K$ is empty then $K'\defeq K$; otherwise, if $K=E,K_1$ then \
$K' \defeq E[(\lambda x.\cond{\deref r}{\bot}{x})\hole],K_1$. 
By construction, the only possible transition for $C$ is
 $C\wtrans{\eta}C''=\conf{A}{\Gamma'}{K'}{s\uplus[r\mapsto\false]}{e}$. Moreover, as the only assignment of $r$ that appears in $C''$ is $r\asgn\false$, by \cref{lem:l1} we have $\cotraces{C'}=\cotraces{C''}$.
We therefore obtain $\cotraces{C}\in\ext{\{\,t'\mid t\funred_O^* t'\}}$.

If $\eta= \lopret{c}$, by IH there is
 $C'=\conf{A}{\Gamma}{K}{s}{e}$ such that
 $\cotraces{C'}\in\ext{\{\,t''\mid t'\funred_O^* t''\}}$ by IH.
 We set $C\defeq\conf{A}{\Gamma'}{E,K}{s\uplus[r\mapsto\true]}{\noe}$ where, for all $j\in I$, we set $\Gamma'(j)\defeq
     \lambda x. \cond{\deref r}{\bot}{e_j}$ (where $\Gamma(j)=\lambda x.e_j$) and:
\[
  E \defeq
    (\lambda x.r\asgn\false;\cond{x==v}{e}{\bot})\hole.
\]
Similarly as in the previous case, we deduce that $\cotraces{C}\in\ext{\{\,t'\mid t\funred_O^* t'\}}$.

If $\eta= \lopapp{i}{\alpha}$, we set $A'=A\cup\{\alpha\}$ and
by IH there is $C=\conf{A'}{\Gamma}{K}{s}{e}$ such that
$\cotraces{C'}\in\ext{\{\,t''\mid t'\funred_O^* t''\}}$. 
 We set $C\defeq\conf{A}{\Gamma'}{K'}{s\uplus[r\mapsto\true,l\mapsto\true]}{\noe}[\deref l/\alpha]$ where  for all $j\in I$ and assuming $\Gamma(j)=\lambda x.e_j$:
$$\Gamma'(j)\defeq \begin{cases}
     \lambda x. \cond{\deref r}{\bot}{e_j}& \text{if }j\neq i\\
     \lambda x. \cond{\deref r}{(r\asgn\false;\,l\asgn \lambda z.xz;\,{e})}{e_j} & \text{if }j=i
     \end{cases}$$
     Moreover, if $K$ is empty then $K'\defeq K$; and if $K=E,K_1$ then \
$K' \defeq E[(\lambda x.\cond{\deref r}{\bot}{x})\hole],K_1$.
Let us set $\Gamma'', K'', s'', e''$ to be $\Gamma'[\deref l/\alpha],K'[\deref l/\alpha],s[\deref l/\alpha]$ and $e[\deref l/\alpha]$ respectively.
Then, the transitions from $C$ are
$C\wtrans{\lopapp{i}{\alpha}}C''=\conf{A'}{\Gamma''}{ K''}{ s''\uplus[r\mapsto\false,l\mapsto\lambda z.\alpha z]}{ e''}$ (modulo renamings of $\alpha$)
and
$C\wtrans{\lopapp{i}{c}}C'_{c}=\conf{A}{\Gamma''}{ K''}{ s''\uplus[r\mapsto\false,l\mapsto\lambda z.cz]}{e''}$, for all constants $c$. By \cref{lem:l1,lem:l2}, $\cotraces{C''}=\cotraces{C'}$ and, for all $c$, 
$\cotraces{C'_c}\subseteq\cotraces{C'}$ by \cref{lem:l1,lem:l3}. Hence, $\cotraces{C}\in\ext{\{\,t'\mid t\funred_O^* t'\}}$.
\todo{change the minor lemmas to talk about bisimilarity}

If $\eta= \lopret{\alpha}$, we set $A'=A\cup\{a\}$ and obtain
by IH $C'=\conf{A'}{\Gamma}{K}{s}{e}$ such that
 $\cotraces{C'}\in\ext{\{\,t''\mid t'\funred_O^* t''\}}$.
 We set $C\defeq\conf{A}{\Gamma'}{E,K}{s\uplus[r\mapsto\true,l\mapsto\true]}{\noe}[\deref l/\alpha]$ where, for all $j\in I$, we set $\Gamma'(j)\defeq
     \lambda x. \cond{\deref r}{\bot}{e_j}$ (where $\Gamma(j)=\lambda x.e_j$) and:
\[
  E \defeq
    (\lambda x.r\asgn\false;\,l\asgn\lambda z.xz;\, e)\hole.
\]
Similarly as in the previous case, we deduce that $\cotraces{C}\in\ext{\{\,t'\mid t\funred_O^* t'\}}$.
\end{proof}

Definability tells us that, given any trace in $\traces{C}$, for some configuration $C$, there is a ``dual'' configuration producing its dual, with respect to a dualiser function.
Given $t$ and $\phi$ as in \cref{thm:definty}, we let $C_{t,\phi}$ be the context stipulated by the statement of the theorem.


We call a configuration $C$ \emph{lost} if either $C=\botconf$ or there is no context $(C',\phi)$ for $C$ such that $C'[C]_\phi\Downarrow$. We write $C_1\diamond C_2$ when $C_1$ and $C_2$ have the same contexts.

Given stacks $K_1,K_2$, we define their (left) composition $K_1\rcomp K_2$ recursively by:
\[
K_1\rcomp K_2 \defeq\begin{cases}
\hole & \text{if }K_1=K_2=\emptyK\\
(K_1\rcomp K_2')[E] &\text{if }K_2=E,K_2'\text{ and } |K_1|=|K_2|\\
(K_1'\rcomp K_2)[E] &\text{if }K_1=E,K_1'\text{ and } |K_1|=|K_2|+1\\
\text{undefined} &\text{otherwise}
\end{cases}
\]
Thus, if well defined, $K_1\rcomp K_2$ is an evaluation context.

\begin{definition}
  For any two configurations $C_1,C_2$, with $C_i=\conf{A_i}{\Gamma_i}{K_i}{s_i}{\hat e_i}$ and dualiser $\phi$, we say that \emph{$(C_1,\phi)$ is a context for $C_2$} if:
  \begin{itemize}
  \item $\phi:A_2\cup\dom{\Gamma_2}\overset{\cong}{\to}A_1\cup\dom{\Gamma_1}$
  \item $\dom{s_1}\cap\dom{s_2}=\emptyset$
  \item exactly one of $\hat e_1,\hat e_2$ is $\noe$
  \item $K_1\rcomp K_2$ is well defined
      \item taking $\psi$ to be the following map from $A_1\cup A_2$ to terms: \
  $
\psi = (\Gamma_2\circ (\phi^{-1}\upharpoonright A_1))\cup(\Gamma_1\circ (\phi\upharpoonright A_2))
   $, there is some $k$ such that $\psi^k=\psi^{k+1}$ \ (note: $\psi^1\defeq\psi$ and $\psi^{n+1}\defeq\alpha\mapsto (\psi^n(a))\psi$).
  \end{itemize}
  In such a case, we set $C_1[C_2]_\phi\defeq\redconf{s_1\cup s_2}{(K_1\rcomp K_2)[e]}\psi^*$, where $e\in\{\hat e_1,\hat e_2\}\setminus\{\noe\}$ and $\psi^*\defeq\psi^k$.
\end{definition}

We now focus on the following result.

\begin{theorem}\label{thm:definty2}
For any  $ C_0$ and $ t_0\in\traces{ C_0}$, there is a context $(C,\phi)$ for $C_0$ such that, for all $C'$ for which $(C,\phi)$ is a context, $C[C']_\phi\Downarrow$ iff  
$\traces{C'}\cap\{t'\mid  t_0\funred^*_P t'\}\not=\emptyset$.
\end{theorem}
\begin{proof}
  Suppose $C_0=\langle A;\Gamma;\cdots\rangle$ and pick some $\phi\in\dualz{t_0}$ such that $\dom{\phi}=A\cup\dom{\Gamma}\cup\newz{t_0}$, and
  let $C\defeq C_{t_0,\phi}$ be the configuration stipulated by \cref{thm:definty}. By permuting its location names, we can stipulate that the locations in $C$ and $C_0$ are disjoint.
  Thus, using also the fact that $C$ produces $\dual{t_0}\phi$, we have that $(C,\phi)$ is a context for $C_0$.
  Pick now some $C'$ such that $C[C']_\phi$ is well defined.

  If $C[C']_\phi\Downarrow$ then by \cref{lem:intext} we have that $(C,C')_\phi$ reaches some final configuration $(C_f,C_f')_{\phi'}$ with empty stacks. By definition, this reduction is due to some trace $t$ such that $C\wtrans{\dual{t}{\phi'}}C_f$ and $C'\wtrans{t}C_f'$. 
  By choosing the fresh names in $t$, we can have that $\phi=\phi'$ (i.e.\ only the names in $\phi$ are used).
  By construction of $C$,
  we have that $\dual{t}{\phi'}=\dual t\phi\in\{t'\mid \dual{t_0}\phi\funred_O^*t'\}$. 
  Hence, $t\in\{t'\mid t_0\funred_P^*t'\}$.
  
  Conversely, if $t\in\traces{C'}\cap\{t'\mid  t_0\funred^*_P t'\}$ then $\dual t\phi\in\{t'\mid  \dual{t_0}\phi\funred^*_O t'\}$ and thus $\dual t\phi\in\cotraces{C}$. We therefore have that $(C,C')_\phi$ reaches some final configuration $(C_f,C_f')$ and by \cref{lem:intext}, $C[C']_\phi\Downarrow$.
\end{proof}

\begin{definition}
  We define the following transition relation between triples $(C_1,C_2)_\phi$ such that $(C_1,\phi)$ is a context for $C_2$.
  \[
    \irule*{C_1\trans{\dual\eta{\phi'}}C_1'\\ C_2\trans{\eta}C_2'}
    {(C_1,C_2)_\phi\trans{}(C_1',C_2')_{\phi'}}\quad
    \irule*{C_1\trans{\tau}C_1'\\ \fl{C_1'}\cap\fl{C_2}=\emptyset}
    {(C_1,C_2)_\phi\trans{}(C_1',C_2)_\phi}\quad
    \irule*{C_2\trans{\tau}C_2'\\ \fl{C_2'}\cap\fl{C_1}=\emptyset}
    {(C_1,C_2)_\phi\trans{}(C_1,C_2')_\phi}
  \]
  In all cases above, $C_1,C_2,C_1',C_2'\neq\botconf$, and in the first case $\phi\subseteq\phi'$ and $\dom{\phi'}=\dom{\phi}\cup\newz{\eta}$.
\end{definition}

In our next result, the notion of (bi)simulation used is the same as the one defined earlier, assuming all labels are $\tau$.

\begin{lemma}\label{lem:intext}
For any $C_1,C_2,\phi$ such that $(C_1,\phi)$ is a context for $C_2$, $(C_1,C_2)_\phi\bisimil C_1[C_2]_\phi$.
\end{lemma}
\begin{proof}
  We show that the following relation is a (weak) bisimulation.
  \[
\bisim{R}\defeq\{ ((C_1,C_2)_\phi,C_1[C_2]_\phi) \mid (C_1,\phi)\text{ a context for $C_2$}\}
\]
Take $(C_1,C_2)_\phi\RR C_1[C_2]_\phi$. Suppose that $(C_1,C_2)_\phi\trans{}(C_1',C_2')_{\phi'}$, due to some $C_1\trans{\dual\eta{\phi'}}C_1'$ and $C_2\trans{\eta}C_2'$. Consider the case where $C_1=\conf{A_1}{\Gamma_1}{K_1}{s_1}{E[\alpha v]}$, $C_2=\conf{A_2}{\Gamma_2}{K_2}{s_2}{\noe}$ and $v$ is a function value (the other cases are similar/easier). Then, $\phi'=\phi\uplus[\beta\mapsto j]$, $\eta=\lopapp{i}{\beta}$, $\phi(i)=\alpha$ and:
$$
C_1[C_2]_{\phi}=\redconf{s_1\cup s_2}{(K_1\rcomp K_2)[E[\alpha v]]}\psi^*
\trans{}
\redconf{s_1\cup s_2}{(K_1\rcomp K_2)[E[e[v/x]]]}\psi'^*
$$
assuming $\psi^*(\alpha)=\lambda x.e$ and computing $\psi'$ from $\phi'$ and the new $\Gamma$'s.
On the other hand, we have that $C_1'[C_2']_{\phi'}\wtrans \cdots$ \todo{up to here, but perhaps there is a better way}
which is as required.
\end{proof}

Finally, we can prove our main result. 

\begin{lemma}\label{lem:similcong}
  For any two configurations $C_1,C_2$ having common contexts, $C_1\bisimil C_2$ iff, for all common contexts $(C,\phi)$,
  $C[C_1]_\phi\Downarrow\iff C[C_2]_\phi\Downarrow$.
\end{lemma}
\begin{proof}
From \cref{lem:similtraces}, it suffices to show that $\traces{C_1}=\traces{C_2}$. Suppose, for contradiction, that $t\in\traces{C_1}\setminus\traces{C_2}$, and let $(C,\phi)$ be the context for $C_1$ stipulated by \cref{thm:definty2}, so $C[C_1]_\phi\Downarrow$. By hypothesis, $C[C_2]_\phi\Downarrow$, and therefore by \cref{thm:definty2} there must be some $t'\in\traces{C_2}$ such that $t\funred^+_P t'$. But then, from \cref{thm:definty2} again, there is context $(C',\phi')$ for $C_2$ such that $C'[C_2]_{\phi'}\Downarrow$ and, since we must have $C'[C_1]_{\phi'}\Downarrow$ by hypothesis, there is a trace $t''\in\traces{C_1}$ such that $t'\funred^*_P t''$. But $t,t''\in\traces{C_1}$ contradicts determinacy of the LTS, as the first move these traces differ in must be a Proponent move.
\end{proof}

\begin{proof}[Proof of \cref{thm:complete}]
Follows from \cref{lem:similcong}. 
\end{proof}}


\end{technical}
  \section{Examples}\label{app:examples}

\subsection{Simple Invariants}

\textit{Up to invariants} states that values stored in references can be abstracted if they validate a predicate. Consider \cref{ex:inv}.
\begin{example}\label{ex:inv}
  ~\\\vspace{-1.3em}
  \begin{lstlisting}
    $M =$ ref x = 0 in fun () -> x++; !x > 0
    $N =$ fun () -> true
  \end{lstlisting}
\end{example}
This example is also like the previous two. This time, the reference $x$ is incremented in $M$ and the function checks whether $x$ holds a positive integer.
\begin{center}
\scalebox{0.7}{
\begin{tikzpicture}[node distance=2.2cm]
\node (s0) [pnode] {$M$};
\node (s1) [onode, right of=s0] {$x = 0$};
\node (s2) [pnode, right=1.5cm of s1] {$x = 0$};
\node (s3) [onode, right=1.5cm of s2] {$x = 1$};
\node (s4) [pnode, right=1.5cm of s3] {$x = 1$};
\node (s5) [onode, right=1.5cm of s4] {$x = 2$};
\node[draw=none] (end) [right=1.5cm of s5] {$\cdots$};

\draw [->] 
(s0) edge node[anchor=south]{$\pret{g}$} (s1)
(s1) edge node[anchor=south]{$\oapp{g}{()}$} (s2)
(s2) edge node[anchor=south]{$\pret{true}$} (s3)
(s3) edge node[anchor=south]{$\oapp{g}{()}$} (s4)
(s4) edge node[anchor=south]{$\pret{true}$} (s5)
(s5) edge node[anchor=south]{$\oapp{g}{()}$} (end)
;
\end{tikzpicture}
}
\end{center}
We can see above that each call to $g$ increments the state, which makes it hard to find cycles. We annotate the function with an invariant as follows:
\[M = \text{\lstinline{ref x = 0 in fun () \{ w | x as w | w >= 0 \} -> x++; !x > 0}}\]
The invariant---shown in the curly braces---states that the value in $x$ can be abstracted by any $w$ such that $w \geq 0$, so long as $x \geq 0$ is also valid.
\begin{center}
\scalebox{0.7}{
\begin{tikzpicture}[node distance=2.2cm]
\node (s0) [pnode] {$M$};
\node (s1) [onode, right of=s0] {$x = 0$};
\node (s2) [pnode, right=1.5cm of s1,label=north:$w \geq 0$,label=south:$s_1$] {$x = w$};
\node (s3) [onode, right=1.5cm of s2,label=north:$w \geq 0$] {$x = w$};
\node (s4) [pnode, right=1.5cm of s3,label=north:$w \geq 0$,label=south:$s_1$] {$x = w$};
\node[draw=none] (end) [right=0.8cm of s4] {};
\node[draw=none] (s5) [right=3.3cm of end] {};

\draw [->] 
(s0) edge node[anchor=south]{$\pret{g}$} (s1)
(s1) edge node[anchor=south]{$\oapp{g}{()}$} (s2)
(s2) edge node[anchor=south]{$\pret{true}$} (s3)
(s3) edge node[anchor=south]{$\oapp{g}{()}$} (s4)
;
\draw [->,strike through] (s4) -- (end);
\end{tikzpicture}
}
\end{center}
Shown above, we see that the states labelled $s_1$ are identical, which lets us prune it, e.g. via memoisation, to end the game.

\subsection{Landin's Fixpoint}
\begin{example}
  The following equivalence relates Landin's imperative fixpoint operator with a fixpoint with letrec.
  The type of the two expressions is $((\Int \arrow \Int) \arrow \Int \arrow \Int) \arrow \Int \arrow \Int$.

\noindent
    \begin{tabular}{llll}
      $M =\;$
      &\begin{minipage}[t]{0.47\textwidth}\vspace{-1.3em} \begin{lstlisting}[boxpos=t]
let landinsfixpoint f =
  ref x = fun z -> z in
  x:= (fun y {} -> f !x y); !x
in landinsfixpoint
\end{lstlisting}
    \end{minipage}
  &
      $N =\;$ 
      &\begin{minipage}[t]{0.3\textwidth}\vspace{-1.3em} \begin{lstlisting}[boxpos=t]
let rec fix f =
  (fun y -> f (fix f) y)
in fix
\end{lstlisting}
    \end{minipage}
    \end{tabular}

In this example, up to separation removes the outer functions from the $\Gamma$ environments, thus they are only applied once.
However the inner functions 
\lstinline|(fun y {} -> f !x y)|
and
\lstinline{(fun y -> f (fix f) y)},
which are provided as arguments to opponent function \lstinline{f},
cannot be removed from the $\Gamma$ environments by up to separation because of the access to location 
\lstinline{x}, and are arbitrarily nested in the bisimulation transition system.
The up to re-entry technique removes the need for this nesting.
The syntax \lstinline|{}| serves as the flag to apply this technique to these inner functions.
\qed
\end{example}

\subsection{Full Example~\ref{ex:meyer-sieber-e6}}\label{sec:meyer-sieber-e6-full}

\begin{example}
  The following is the full example by Meyer and Sieber~\cite{MeyerS88} featuring location passing,
  adapted to \lang where locations are local.
  \begin{lstlisting}
$M =$ let loc_eq loc1loc2 =
	  let (l1,l2) = loc1loc2 in
	  let (r1,w1) = l1 in
	  let (r2,w2) = l2 in
	  let val1 = r1 () in let val2 = r2 () in 
	  w2(val2+1);
	  let res = if r1() = val1+1 then true else false
	  in w1(val1); w2(val2); res in
	fun q ->
	  ref x = 0 in
	  let locx = (fun () -> !x) , (fun v -> x := v) in
	  let almostadd_2 locz {w | x as w | w mod 2 == 0} =
	    if loc_eq (locx,locz) then x := 1 else x := !x + 2
	  in q almostadd_2; if !x mod 2 = 0 then _bot_ else ()

$N =$ fun q -> _bot_
  \end{lstlisting}
%
%

\end{example}


\end{document}